\documentclass[11pt,a4paper]{article}

\usepackage{authblk}
\usepackage{amsmath,amssymb,amsfonts,amsthm}
\usepackage{geometry}
\usepackage{booktabs}
\usepackage{tcolorbox} 
\usepackage{hyperref}
\usepackage[x11names]{xcolor}
\usepackage{comment}

\newcounter{remarkcount}
\newenvironment{remark}{%
	\refstepcounter{remarkcount}%
	\vspace{\topsep}
	\noindent%
	{\bfseries Remark \theremarkcount.}~
	\small%
}{%
	\hfill$\blacktriangle$
	\vspace{\topsep}
	\par%
}

\usepackage[dvipsnames]{xcolor}
\definecolor{darkgreen}{HTML}{006400}
\definecolor{lightgreen}{HTML}{00C000}
\definecolor{lightgreen}{HTML}{00D000}

\definecolor{weightgreen}{HTML}{00C000}

\definecolor{brightgreen}{HTML}{00FF00}

\definecolor{darkgreen}{HTML}{006400}
\definecolor{lightgreen}{HTML}{E8F8F5}
\definecolor{primaryblue}{HTML}{1F618D}
\definecolor{accentorange}{HTML}{D35400}
\definecolor{accentpurple}{HTML}{7D3C98}
\definecolor{dangerred}{HTML}{C0392B}

\usepackage{tikz}
\usetikzlibrary{shapes.geometric, arrows.meta, positioning, fit, backgrounds, calc, patterns, decorations.pathreplacing, decorations.markings}
\usetikzlibrary{decorations.pathmorphing}

\hypersetup{
	colorlinks=true,
	linkcolor=blue,
	citecolor=blue,
	urlcolor=blue
}

\newtheorem{definition}{Definition}
\newtheorem{theorem}{Theorem}
\newtheorem{corollary}{Corollary}
\newtheorem{lemma}{Lemma}

\newtheorem{proposition}{Proposition}
\newtheorem{example}{Example}

\newtheorem{assumption}{Assumption}

\newtheorem{insight}{Insight}

\newcommand{\INITIALIZE}{\item[\textbf{Initialize:}]}

\usepackage{etoolbox}

\theoremstyle{remark}

\usepackage{algorithm,algorithmic}
\usepackage{booktabs}
\usepackage{geometry}
\usepackage{microtype}

\usepackage{tabularx}

\usepackage{listings} 
\usepackage{amsfonts}
\usepackage{booktabs}

\definecolor{jsonbg}{HTML}{F8F9FA}
\definecolor{jsonpunct}{HTML}{990000}
\definecolor{jsonstring}{HTML}{008000}
\definecolor{jsonkey}{HTML}{0000FF}

\lstdefinestyle{jsonstyle}{
	basicstyle=\normalfont\ttfamily\small,
	backgroundcolor=\color{jsonbg},
	numbers=left,
	numberstyle=\scriptsize\color{gray},
	stepnumber=1,
	numbersep=8pt,
	showstringspaces=false,
	breaklines=true,
	stringstyle=\color{jsonstring},
		keywords={true,false,null,STRING,DOUBLE,MANY_TO_MANY},
		keywordstyle=\color{jsonkey}\bfseries,
	}
	
	\title{\textbf{The SMG-Yau-Yau Filter and Lossless Data Assimilation: Resolving Infinite Lie Algebras via Statistical Fiber Theory}}
\author[1,2,8]{Bing Cheng}
\author[3]{Yi-Shuai Niu} 
\author[5,6,7]{Howell Tong}
\author[3,4]{Shing-Tung Yau}

\affil[1]{Academy of Mathematics and Systems Science, Chinese Academy of Sciences, Beijing, China;
	bc2@amss.ac.cn}
\affil[2]{AMSS Center for Forecasting Science, Chinese Academy of Sciences, Beijing, China}

\affil[3]{Beijing Institute of Mathematical Sciences and Applications (BIMSA), Beijing, China}
\affil[4]{Yau Mathematical Sciences Center, Tsinghua University, Beijing, China}

\affil[5]{Department of Statistics and Data Science, Tsinghua University, Beijing 100084, China}
\affil[6]{Paula and Gregory Chow Institute for the Studies in Economics, Xiamen University, Xiamen 361005, China}
\affil[7]{Department of Statistics, London School of Economics and Political Science, London WC2A 2AE, UK}

\affil[8] {State Key Laboratory of Mathematical Science, Academy of Mathematics and Systems Science, Chinese Academy of Sciences}

\date{\today}
	
\begin{document}
		
		\maketitle	
\begin{abstract}
	Classical continuous-time non-linear filtering—pioneered by Brockett, Yau, and Yau—relies on the structural duality of finite Lie algebraic closure ($\operatorname{dim}(\mathcal{E}) < \infty$) and statistical sufficiency to reduce the infinite-dimensional Duncan-Mortensen-Zakai (DMZ) stochastic partial differential equation to finite systems of ordinary differential equations via Wei-Norman operator factorization. In generic non-linear state spaces ($\operatorname{deg}(f) \ge 3$ or non-polynomial sensor metrics), this paradigm suffers a catastrophic breakdown: iterated Lie brackets act as a spatial differentiation engine, generating an infinite derivative explosion ($\operatorname{dim}(\mathcal{E}) = \infty$), structural fragility under infinitesimal model perturbations, and unmonitored null-space energy leakage ($v_{\text{OOD}} \in \operatorname{Null}(\mathcal{E}^*)$) under standard truncation, ultimately triggering stealth filter divergence.
	
To resolve this four-decade Lie algebra crisis, this paper constructs the SMG-Yau-Yau Non-Linear Dynamic Filter and Lossless Data Assimilation framework, built upon Statistical Fiber Theory on an infinite-dimensional Pistone-Sempi Orlicz statistical manifold $\mathcal{M} = L_0^\Phi(P_f)$ ({\it The SMG-Statistically Meaningful Geometry Theory  \cite{cheng2026smg}}). By equipping $\mathcal{M}$ with a smooth Riemannian submersion $\pi: \mathcal{M} \to \mathcal{B}$ onto an identifiable $d$-dimensional quotient base manifold $(\mathcal{B}, g_{\mathcal{B}})$ and installing an Ehresmann connection 1-form $\omega_f$, the unconstrained DMZ score velocity field $V_{\text{DMZ}}(t,x) \triangleq \sigma(t,x)^{-1}d\sigma(t,x)$ undergoes an exact, Fisher-orthogonal direct-sum decomposition:$$T_f\mathcal{M} = \operatorname{SVD}\chi_f \oplus_{\perp g_f} \operatorname{SID}_f$$into Statistically Verifiable Directions ($\operatorname{SVD}\chi_f$, horizontal signal) and Structural Internal Directions ($\operatorname{SID}_f = \ker(d\pi_f)$, vertical gauge noise).

Under this architecture, extra-algebraic operator variations and unclosed Lie commutators generated by system non-linearities are orthogonally quarantined inside the vertical gauge subbundle $\operatorname{SID}_f$, firewalling macroscopic base parameter updates $p(t) \in \mathcal{B}$ from spatial derivative pollution while preserving $100\%$ of probability score variance energy via exact Pythagorean conservation:$$\Vert{}V_{\text{DMZ}}(t)\Vert{}_{g_f}^2 = \Vert{}P_{f_t}^H(V_{\text{DMZ}}(t))\Vert{}_{g_f}^2 + \Vert{}\omega_{f_t}(V_{\text{DMZ}}(t))\Vert{}_{g_f}^2$$We prove that this vertical fiber pull-back is geometrically dual to a Semiparametric Bias-Correction ($\Delta_t^{\text{stat}}$) for misspecified working estimators, allowing the classical Yau-Yau solvers to be semiparametrically upgraded without altering core base propagators.

Furthermore, we unify the algebraic and sample-size asymptotics through a Dual-Axis Collapse Mechanism:
\begin{enumerate}
	\item {\bf The Algebraic Axis}: When $\operatorname{dim}(\mathcal{E}) < \infty$, the vertical gauge fiber trivializes ($\operatorname{SID}_f = \{0\}$), forcing $\omega_f \equiv 0$ and $\Omega_f \equiv 0$, whereupon the SMG-Yau-Yau filter collapses identically to classical Wei-Norman/Yau-Yau SDE dynamics.
	
	\item {\bf The Statistical Axis}: Under infinite Lie complexity ($\operatorname{dim}(\mathcal{E}) = \infty$), large-sample limits ($N \to \infty$) induce a thermodynamic quench governed by the Second Edge Theorem and Universal Tensor Valence Scaling Law ($\Vert{}T^{(N)}\Vert{}_{G^{(N)}} = \mathcal{O}_p(N^{1-r/2})$), causing Amari connection friction to dissolve ($\mathcal{O}_p(N^{-1/2})$) and curvature to annihilate ($\mathcal{O}_p(N^{-1})$), flattening the Orlicz bundle onto the canonical flat canvas of conventional statistics.
\end{enumerate}
Finally, we reframe data assimilation as a dynamic fiber-splitting process and introduce the Active Acausal Tension functional:
$$\mathcal{T}_{\text{AAT}}(t) \triangleq \int_0^t \Vert{}\omega_{f(\tau)}(V_{\text{DMZ}}(\tau))\Vert{}_{g_{f(\tau)}}^2 \, d\tau$$
as a ground-truth-free online monitor for accumulated model misspecification stress. When $\mathcal{T}_{\text{AAT}}(t)$ reaches the topological capacity threshold $\Theta_{\text{crit}} \triangleq \pi^2 / K_{\max}$, an automated Gauge Symmetry Breaking (GSB) phase transition is triggered: functional eigen-decomposition extracts the dominant vertical mode $v_{\text{dom}}$ and promotes it into an expanded base coordinate space ($d \to d+1$), instantly dissipating stored tension and guaranteeing non-asymptotic stability and convergence to the exact physical state density.
	
\end{abstract}
\newpage
		\tableofcontents
\newpage

\section{Epistemological Foundations: The Breakdown of Estimation Algebras and the Transition to Statistical Fiber Bundles}
\label{sec:epistemological_foundations}

\subsection{The Classical Paradigm: Lie Algebraic Sufficiency, Statistical Sufficiency, \& Wei-Norman Reduction}
\label{subsec:classical_estimation_algebra}

For decades, beautiful finite-dimensional non-linear filtering pioneered by Brockett \cite{brockett1981} and Yau and Yau \cite{yau1999, yau2014} relied on the structural duality of finite Lie algebraic closure and sample-space statistical sufficiency. Under this classical paradigm, the dynamic unnormalized conditional probability density $\sigma(t, x)$ governed by the Duncan-Mortensen-Zakai (DMZ) stochastic partial differential equation \cite{duncan1967, mortensen1967, zakai1969} is assumed to evolve strictly on a finite-dimensional manifold of probability distributions.

\subsubsection*{1. Lie Algebraic Sufficiency and Finite Closure}
The continuous-time system dynamics are characterized by the finite-dimensional \textbf{estimation algebra} $\mathcal{E}$ \cite{brockett1981, yau1999}:
\begin{equation}
	\mathcal{E} \triangleq \operatorname{Lie}\left( \mathcal{L}_0^*, h_1(x), h_2(x), \dots, h_m(x) \right) = \operatorname{span}_{\mathbb{R}}\left\{ E_1, E_2, \dots, E_d \right\}, \quad d < \infty, \label{eq:est_alg_def}
\end{equation}
which is closed under the differential operator Lie bracket commutator $[A, B]\phi \triangleq A(B\phi) - B(A\phi)$, with structure constants $C_{ij}^k$ satisfying $[E_i, E_j] = \sum_{k=1}^d C_{ij}^k E_k$.

\subsubsection*{2. The Statistical Sufficiency Assumption}
Parallel to algebraic closure, the classical filtering framework strictly requires the \textbf{Statistical Sufficiency Assumption}---the dynamic continuous-time extension of the Pitman-Koopman-Darmois Theorem \cite{darmois1935, koopman1936, pitman1936}. Specifically, the basis operators $\{E_1(x), E_2(x), \dots, E_d(x)\}$ are assumed to act as a finite set of minimal sufficient score generators that constrain the unnormalized conditional log-density to an exponential family form \cite{barndorff1978, amari1985}:
\begin{equation}
	\ln \sigma(t, x) = \sum_{k=1}^d g_k(t) E_k(x) + c(t), \label{eq:log_density_sufficiency}
\end{equation}
where $c(t)$ is a spatial constant independent of $x$. Under this assumption, the finite coefficient vector $g(t) = (g_1(t), \dots, g_d(t))^T \in \mathbb{R}^d$ constitutes an exact finite-dimensional sufficient statistic for the conditional density measure $\sigma(t, x) \mid \mathcal{F}_t^Y$, guaranteeing zero loss of statistical information across all probability moments and cumulants.

\subsubsection*{3. Wei-Norman Operator Factorization and Dimension Reduction}
Under the simultaneous guarantees of finite Lie algebraic closure ($\operatorname{dim}(\mathcal{E}) = d < \infty$) and statistical sufficiency, the Wei-Norman representation theorem \cite{wei1963} factorizes the solution operator of the DMZ equation into a finite product of operator exponentials:
\begin{equation}
	\sigma(t, x) = \exp\left(g_1(t) E_1\right) \exp\left(g_2(t) E_2\right) \cdots \exp\left(g_d(t) E_d\right) \sigma(0, x), \label{eq:wei_norman_factorization}
\end{equation}
reducing the infinite-dimensional DMZ SPDE without truncation error to a system of $d$ coupled, non-singular deterministic ordinary differential equations (ODEs) for the parameter trajectory $g(t) \in \mathbb{R}^d$ driven by the observation process $Y(t)$.

\subsection{Epistemological Breakdown: Genericity of Infinite Lie Algebra, Structural Fragility, and Null-Space Leakage}
\label{subsec:epistemological_breakdown}

While previous discussions have outlined the operational challenges of non-linear data assimilation, we now establish the precise differential-geometric mechanism responsible for the breakdown of finite estimation algebras $\mathcal{E} \triangleq \operatorname{Lie}(\mathcal{L}_0^*, h_1, \dots, h_m)$.

\subsubsection{Ascending Operator Filtration and Algebraic Generation}
\label{subsubsec:ascending_filtration}

To analyze why exact Wei-Norman reduction fails generically in realistic state-space models, we formalize the algebraic generation process as an ascending filtration of differential operator spaces.

\begin{definition}[Ascending Operator Filtration]
	\label{def:ascending_operator_filtration}
	Let $\mathcal{E}_0 \triangleq \operatorname{span}_{\mathbb{R}}\left\{ \mathcal{L}_0^*, h_1(x), \dots, h_m(x) \right\} \subset \mathcal{D}(\mathbb{R}^p)$ be the initial operator span acting on $C^\infty(\mathbb{R}^p)$. The ascending Lie filtration $\{\mathcal{E}_k\}_{k=0}^\infty$ is defined inductively for $k \ge 1$ by:
	\begin{equation}
		\mathcal{E}_k \triangleq \mathcal{E}_{k-1} + \operatorname{span}_{\mathbb{R}} \left\{ [A, B] \;\middle|\; A \in \mathcal{E}_0, \, B \in \mathcal{E}_{k-1} \right\}, \quad \mathcal{E} = \bigcup_{k=0}^\infty \mathcal{E}_k. \label{eq:operator_filtration}
	\end{equation}
	The estimation algebra is finite-dimensional ($\operatorname{dim}(\mathcal{E}) < \infty$) if and only if there exists a finite stabilization index $k^* < \infty$ such that $\mathcal{E}_{k^*} = \mathcal{E}_{k^*+1}$. Here:
	\begin{itemize}
		\item $\mathcal{D}(\mathbb{R}^p) = \left\{ \sum_{|\alpha| \le k} a_\alpha(x) \partial^\alpha \;\middle|\; k \in \mathbb{N}_0, \, a_\alpha \in C^\infty(\mathbb{R}^p) \right\}$ represents the space of linear differential operators with smooth coefficients.
		\item $C^\infty(\mathbb{R}^p)$ is the vector space of all infinitely differentiable functions $f: \mathbb{R}^p \to \mathbb{R}$, where spatial derivatives of all multi-indices $\alpha \in \mathbb{N}_0^p$ exist and are continuous.
	\end{itemize}
\end{definition}

\subsubsection{The Triad of Structural Pathologies}
\label{subsubsec:triad_pathologies}

In generic non-linear state-space systems, three structural pathologies prevent the existence of a finite stabilization index $k^*$:

\begin{enumerate}
	\item \textbf{Genericity of Degree Unboundedness ($\operatorname{dim}(\mathcal{E}) = \infty$)}: Any non-linear component in the drift $f(x)$ or observation mapping $h(x)$ possessing polynomial degree $d \ge 3$ or non-polynomial transcendental dependence generates commutators with strictly increasing polynomial orders in spatial derivatives ($x^a \partial_x^b$). Thus, $\operatorname{dim}(\mathcal{E}_k) \sim \mathcal{O}(k^p)$, forcing $k^* \to \infty$. Since taking Lie brackets $[A, B] = AB - BA$ acts like a continuous differentiation engine, the product rule repeatedly spawns higher-degree polynomial terms. The basis set explodes infinitely without ever cycling back or terminating.
	
	\item \textbf{Algebraic Fragility under Structural Perturbation}: Let $(f_0, g_0, h_0)$ yield a closed algebra $\operatorname{dim}(\mathcal{E}^{(0)}) = d < \infty$. For any smooth non-linear perturbation $\epsilon \cdot \eta(x) \nabla$ ($\epsilon > 0$), the perturbed estimation algebra $\mathcal{E}^{(\epsilon)} \triangleq \operatorname{Lie}(\mathcal{L}_0^* + \epsilon \eta \cdot \nabla, h)$ generically satisfies:
	\begin{equation}
		\operatorname{dim}\left(\mathcal{E}^{(\epsilon)}\right) = \infty, \quad \forall \epsilon \neq 0. \label{eq:fragility_collapse}
	\end{equation}
	Classical estimation algebra theory possesses no continuous topology or metric structure to bound the distortion caused by $\epsilon \to 0$. Because Lie algebra dimension exhibits a discrete jump ($d \to \infty$) rather than a smooth variation, classical Lie algebra theory has no concept of being ``almost closed,'' causing exact analytical solvers to collapse under infinitesimal model perturbations.
	
	\item \textbf{The Adjoint Null-Space as an Information Sink}: When an estimator artificially truncates an infinite filtration sequence at degree $k < \infty$, operator variations residing in $\mathcal{E} \setminus \mathcal{E}_k$ are projected into the adjoint null-space $\operatorname{Null}(\mathcal{E}_k^*)$. Because this subspace receives zero driving feedback in the truncated parameter ODEs, residual state variance accumulates unmonitored. Physical non-linearities continuously inject energy into these discarded modes, creating stealth error accumulation where the filter remains completely blind to its own growing divergence.
\end{enumerate}

\subsubsection{Wei-Norman Solvability and Finite Closure Guarantees}
\label{subsubsec:wei_norman_solvability}

We formalize the solvability and zero-loss conditions under exact finite Lie closure.

\begin{theorem}[Wei-Norman Solvability and Finite Closure]
	\label{thm:wei_norman_exact}
	Let system vector fields $(f, g, h)$ define the unnormalized conditional density $\sigma(t, x)$ governed by the Duncan-Mortensen-Zakai (DMZ) stochastic partial differential equation in Stratonovich form:
	\begin{equation}
		d\sigma(t, x) = \mathcal{L}_0^* \sigma(t, x)\,dt + \sum_{k=1}^m h_k(x) \sigma(t, x) \circ dY_k(t). \label{eq:dmz_spde_sec12}
	\end{equation}
	If the estimation algebra $\mathcal{E} \triangleq \operatorname{Lie}(\mathcal{L}_0^*, h_1, \dots, h_m) \subset \mathcal{D}(\mathbb{R}^p)$ is finite-dimensional with $\operatorname{dim}(\mathcal{E}) = d < \infty$ \cite{brockett1981, yau1999} (i.e., there exists a finite stabilization index $k^* < \infty$ such that $\mathcal{E}_{k^*} = \mathcal{E}_{k^*+1}$), then:
	\begin{enumerate}
		\item[\rm (i)] \textbf{Exact Lie-Group Manifold Evolution}: The infinite-dimensional solution $\sigma(t, x)$ evolves strictly on a $d$-dimensional submanifold of operators parameterizing exact conditional log-densities via the product of exponentials ansatz:
		\begin{equation}
			\sigma(t, x) = \exp\left(g_1(t) E_1\right) \exp\left(g_2(t) E_2\right) \cdots \exp\left(g_d(t) E_d\right) \sigma(0, x), \label{eq:wei_norman_product_sec12}
		\end{equation}
		where $g(t) = (g_1(t), \dots, g_d(t))^T \in \mathbb{R}^d$ satisfies a closed, non-singular system of ordinary/stochastic differential equations driven by $Y(t)$ \cite{wei1963, yau1999}.
		
		\item[\rm (ii)] \textbf{Rigorous Principle of Zero Information Loss}: The parameter representation $g(t)$ preserves the full conditional distribution without information loss in three precise structural senses:
		\begin{itemize}
			\item \textbf{Zero Model Truncation Residual}: The model reduction operator error vanishes identically:
			\begin{equation}
				\mathcal{P}_{\mathcal{E}^\perp}\left(\frac{d\sigma}{\sigma}\right) \equiv 0,
			\end{equation}
			meaning no spatial commutator modes are truncated or discarded.
			\item \textbf{Exact Finite Sufficient Statistics}: The vector $g(t) \in \mathbb{R}^d$ serves as an exact finite-dimensional sufficient statistic for $\sigma(t, x) \mid \mathcal{F}_t^Y$, enabling exact reconstruction of all probability moments.
			\item \textbf{Zero Null-Space Energy Leakage}: The dynamic projection residual $v_{\text{OOD}}(t)$ into the adjoint null-space $\operatorname{Null}(\mathcal{E}^*)$ is zero for all $t \ge 0$, ensuring zero probability mass leaks out of the parameter subspace.
		\end{itemize}
	\end{enumerate}
\end{theorem}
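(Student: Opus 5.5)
The approach is the classical Wei--Norman argument applied to the Stratonovich DMZ equation \eqref{eq:dmz_spde_sec12}. The three zero-loss statements in (ii) then follow as corollaries of the exact representation \eqref{eq:wei_norman_product_sec12}.

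\textbf{Step 1: write the DMZ generator in the basis.} Fix a basis $\{E_1,\dots,E_d\}$ of $\mathcal{E}$ with structure constants $C_{ij}^k$, where $d=\dim(\mathcal{E})<\infty$, guaranteed by the stabilization index $k^*$ of Definition~\ref{def:ascending_operator_filtration}. Rewrite \eqref{eq:dmz_spde_sec12} as
\[
d\sigma=\Big(\sum_{k=1}^d u_k(t)E_k\Big)\sigma ,
\]
where $u_k$ are the (Stratonovich) coefficients expressing $\mathcal{L}_0^*\,dt+\sum_j h_j\circ dY_j$ in this basis. These coefficients are linear in $dt$ and $dY$. Stratonovich calculus is used so that the ordinary chain rule applies pathwise; equivalently, one may work with a robust/pathwise form after a gauge transform $\exp(-\sum_j h_j Y_j)$.

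\textbf{Step 2: differentiate the ansatz.} Insert the ansatz \eqref{eq:wei_norman_product_sec12} and differentiate the product. The $i$-th factor contributes
\[
\Big(\prod_{j<i}e^{g_jE_j}\Big)E_i\Big(\prod_{j<i}e^{g_jE_j}\Big)^{-1}\circ dg_i .
\]
Use the adjoint identity $e^{gA}Be^{-gA}=e^{g\,\mathrm{ad}_A}B$. By closure $[E_i,E_j]=\sum_k C_{ij}^kE_k$, every $\mathrm{ad}_{E_j}$ preserves $\mathcal{E}$ and acts as a $d\times d$ matrix, so each conjugated term equals $\sum_k \Xi_{ki}(g)E_k$, with $\Xi(g)$ an analytic matrix function and $\Xi(0)=I$. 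Matching coefficients of $E_k$ gives the SDE system
\[
\Xi(g)\circ dg=u\,dt\text{-}/dY\text{-terms},
\]
which is non-singular near $g=0$. The main obstacle is therefore only local: $\Xi(g)$ may degenerate, so solvability is guaranteed on $[0,\tau)$ for a positive stopping time. I would state (i) as holding up to this time, with global extension for solvable $\mathcal{E}$ (ordering the basis along a composition series makes $\Xi$ triangular), as in \cite{wei1963,yau1999}. One must also check that the operator exponentials make sense on $\sigma(0,\cdot)$. For first/zeroth-order generators they are flows and multipliers, and for $\mathcal{L}_0^*$ one uses the diffusion semigroup. Uniqueness of DMZ solutions then identifies the ansatz with $\sigma$.

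\textbf{Step 3: derive (ii) from the representation.} Given the representation, $d\sigma/\sigma$ is expressed entirely via $\sum_k u_kE_k$ acting within $\mathcal{E}$, so its component along $\mathcal{E}^\perp$ vanishes, i.e. $\mathcal{P}_{\mathcal{E}^\perp}(d\sigma/\sigma)\equiv0$. For sufficiency, $\sigma(t,\cdot)$ is a deterministic functional of $(g(t),\sigma(0,\cdot))$ by \eqref{eq:wei_norman_product_sec12}. Its conditional law given $\mathcal{F}_t^Y$ therefore factors through $g(t)$, and by the factorization criterion $g(t)$ is sufficient, so all moments are integrals of this explicit density. Finally, $v_{\text{OOD}}(t)$ is defined as the residual of $d\sigma/\sigma$ outside $\mathcal{E}$, and Step 3's first identity shows it is zero. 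Hence no mass enters $\operatorname{Null}(\mathcal{E}^*)$.
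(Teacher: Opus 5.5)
Your proof of (i) is the paper's Part~A. You expand $\mathcal{L}_0^*$ and the $h_l$ in a basis of $\mathcal{E}$, differentiate the ordered product with the Stratonovich Leibniz rule, use $e^{gA}Be^{-gA}=e^{g\operatorname{ad}_A}B$ and closure to get an analytic matrix $M(g)$ (your $\Xi(g)$) with $M(0)=I$, and invert near $g=0$. Where you deviate, you are more careful than the paper:

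\textbf{Direction of the argument.} The paper assumes the ansatz and then ``equates coefficients'' of $E_k\sigma$. Read pointwise, that is not a valid deduction, because independent operators can become dependent on a particular state. For example, $(\partial_x+x)e^{-x^2/2}=0$, and $\partial_x$ and $x$ both lie in the scalar Kalman--Bucy algebra. Your order (solve for $g$, verify that the ansatz solves DMZ, invoke uniqueness) is the logically complete one.

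\textbf{Locality.} Your stopping time is exactly what the paper's Step~6 actually delivers.

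\textbf{Semigroup issue.} The problem you flag can be settled by taking $E_1=\mathcal{L}_0^*$ and $E_2,\dots,E_d$ a basis of the ideal $\mathcal{I}$ generated by the $h_j$, assuming $\mathcal{L}_0^*\notin\mathcal{I}$. Then $\Xi_i\in\mathcal{I}$ for $i\ge 2$, so the first equation is $dg_1=dt$ and $g_1=t\ge 0$. The conjugation identity can then be used in the inverse-free form $e^{tA}B=(e^{t\operatorname{ad}_A}B)\,e^{tA}$.

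For (ii), the routes genuinely differ on sufficiency. The paper writes $\ln\sigma(t,x)=\sum_k g_k(t)E_k(x)+c(t)+\ln\sigma(0,x)$ and applies Fisher--Neyman. This exponential-family form fails as soon as some $E_k$ is a differential operator. In the scalar Kalman--Bucy case the multiplication elements of $\mathcal{E}$ are spanned by $1$ and $x$, yet $\ln\sigma(t,\cdot)-\ln\sigma(0,\cdot)$ has an $x^2$ term whenever the conditional variance changes.

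Your argument is correct. Since $\sigma(t,\cdot)$ is a fixed function of $g(t)$, $\mathbb{P}(X_t\in\cdot\mid\mathcal{F}_t^Y)$ is measurable with respect to the $\sigma$-field generated by $g(t)$. By the tower property it therefore equals $\mathbb{P}(X_t\in\cdot\mid g(t))$ for $t<\tau$. This needs no factorization theorem, so drop that citation.

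For the other two bullets you share an imprecision with the paper. $\mathcal{E}$ consists of operators, so $\mathcal{P}_{\mathcal{E}^\perp}$ and $\operatorname{Null}(\mathcal{E}^*)$ require an explicit embedding into $L^2(P_f)$. The paper's identity $\sigma^{-1}E_k\sigma=E_k$ is false for differential operators. With the natural embedding $E\mapsto E\sigma/\sigma$, both bullets follow at once from $d\sigma/\sigma=(\mathcal{L}_0^*\sigma/\sigma)\,dt+\sum_j h_j\circ dY_j$ together with $\mathcal{L}_0^*,h_j\in\mathcal{E}$. State that embedding, and note that this step uses neither the Wei--Norman representation nor $\dim\mathcal{E}<\infty$. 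The substantive content of (ii) is (i) plus sufficiency.
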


\begin{remark}[Epistemological Insights of Theorem \ref{thm:wei_norman_exact} and the Imperative for SMG-Yau-Yau Filtering]
	\label{rem:theorem1_insights_smg_yau_yau}
	Theorem~\ref{thm:wei_norman_exact} (Wei-Norman Solvability and Finite Closure) plays a pivotal role in nonlinear state estimation theory by establishing the theoretical upper bound---the ideal ``Gold Standard''---for continuous-time non-linear data assimilation. To fully appreciate why generic non-linear systems mandate the introduction of the novel Statistically Meaningful Geometry (SMG)-Yau-Yau filtering framework, we dissect the profound structural insights and inherent dilemmas exposed by Theorem \ref{thm:wei_norman_exact} across three core dimensions:
	
	\begin{enumerate}
		\item \textbf{The Ideal Lie-Group Manifold Collapse:} 
		Theorem \ref{thm:wei_norman_exact} establishes that when an estimation algebra is finite-dimensional ($\operatorname{dim}(\mathcal{E}) = d < \infty$), the infinite-dimensional stochastic partial differential equation (DMZ SPDE) governing the unnormalized conditional log-density $\sigma(t, x) \in C^\infty(\mathbb{R}^p)$ undergoes an exact algebraic reduction. The infinite spatial degrees of freedom collapse completely onto a finite $d$-dimensional Lie-group orbit $\mathcal{M} = \Phi(\mathbb{R}^d) \subset C^\infty(\mathbb{R}^p)$. Under this regime, the parameter trajectory $g(t) \in \mathbb{R}^d$ acts as a loss-free finite-dimensional coordinate system, ensuring that the continuous density evolution is tracked with mathematical perfection via simple systems of stochastic differential equations.
		
		\item \textbf{The Duality of Information Integrity and Zero Leakage:}
		The three assertions in Part (ii) of Theorem \ref{thm:wei_norman_exact} formalize what constitutes a mathematically lossless state estimator:
		\begin{itemize}
			\item \emph{Spatial Mode Preservation:} The vanishing truncation residual $\mathcal{P}_{\mathcal{E}^\perp}(d\sigma/\sigma) \equiv 0$ guarantees that spatial commutators generated by observation vector fields $h_k(x)$ and drift generators $\mathcal{L}_0^*$ generate no higher-order spatial modes outside $\mathcal{E}$.
			\item \emph{Sufficient Statistics Integrity:} The parameter vector $g(t)$ captures the entirety of the conditional probability measure $\mathcal{F}_t^Y$, allowing exact closed-form reconstruction of all conditional moments without spatial discretization grid errors.
			\item \emph{Zero Null-Space Energy Leakage:} The identity $v_{\text{OOD}}(t) \equiv 0 \in \operatorname{Null}(\mathcal{E}^*)$ ensures that dynamic variance energy remains entirely confined within the parameter subspace $\operatorname{span}(\mathcal{E})$, preventing unmonitored drift into the adjoint null-space.
		\end{itemize}
		
		\item \textbf{The Generic Lie Algebra Crisis and the Pathological Dilemma:}
		The fundamental dilemma---and the core motivation for our research---is that the condition $\operatorname{dim}(\mathcal{E}) < \infty$ is {\bf strictly non-generic}. For virtually all realistic physical systems possessing non-linear drift ($d \ge 3$) or non-polynomial sensor metrics, iterated Lie brackets $[A, B] = AB - BA$ act as a differential engine that continuously inflates polynomial spatial derivative degrees ($x^a \partial_x^b$). This forces $k^* \to \infty$ and $\operatorname{dim}(\mathcal{E}) = \infty$. 
		
		Under this infinite Lie algebra expansion, classical estimation theory faces a critical impasse:
		\begin{itemize}
			\item If one attempts exact Wei-Norman reduction, an infinite sequence of parameters $\{g_i(t)\}_{i=1}^\infty$ is required, rendering computational implementation impossible.
			\item If one forcibly imposes an artificial finite truncation $\mathcal{E}_{\text{trunc}} \subsetneq \mathcal{E}$ (as done in classical EKF, UKF, or projection filters), every single guarantee in Theorem 1 collapses instantly: $\mathcal{P}_{\mathcal{E}^\perp}(d\sigma/\sigma) \neq 0$, $g(t)$ ceases to be a sufficient statistic, and unclosed commutator modes continuously pump variance energy into $v_{\text{OOD}}(t) \in \operatorname{Null}(\mathcal{E}_{\text{trunc}}^*)$. This results in \emph{stealth error accumulation} and inevitable filter divergence.
		\end{itemize}
		
		\item \textbf{The Conceptual Leap to the SMG-Yau-Yau Framework:}
		Theorem \ref{thm:wei_norman_exact} illuminates why classical Lie algebraic reduction fails in non-linear settings and establishes the exact mathematical criteria that any modern filter must satisfy. Because global operator-algebraic closure $\operatorname{dim}(\mathcal{E}) < \infty$ is unattainable in generic non-linear state spaces, we cannot rely on standard global Wei-Norman integration.
		
		This precise challenge necessitates the formulation of the {\bf SMG-Yau-Yau Filtering Model} developed in following sections. Rather than seeking global algebraic closure of unbounded differential operators on standard Euclidean spaces, the SMG-Yau-Yau framework shifts the paradigm from global Lie algebra generation to {\it localized geometric projection on Orlicz Statistical Fiber Bundles}. By defining statistically meaningful directions and orthogonal bundle decompositions, the SMG-Yau-Yau filter directly bounds and controls the null-space leakage residual $v_{\text{OOD}}(t)$, guaranteeing well-posed, finite-dimensional state estimation even when the underlying operator Lie algebra is intrinsically infinite-dimensional.
	\end{enumerate}
\end{remark}

\subsubsection{The Triad Pathology Theorem of Estimation Algebras and Unmonitored Filter Divergence}
\label{subsubsec:triad_pathology_and_unmonitored_divergence}

In classical continuous-time non-linear filtering theory \cite{brockett1981, yau1999, yau2014}, the operational feasibility of finite-dimensional state estimators relies fundamentally on the Lie-algebraic closure of the estimation algebra $\mathcal{E} \triangleq \operatorname{Lie}(\mathcal{L}_0^*, h_1, \dots, h_m)$ constructed via the ascending Lie filtration $\{\mathcal{E}_k\}_{k=0}^\infty$ \cite{duncan1967, mortensen1967, zakai1969}. However, in generic non-linear state-space models, exact Lie algebraic closure ($\operatorname{dim}(\mathcal{E}) < \infty$) is {\bf a singular mathematical anomaly rather than a generic property}. 

To formalize the systemic failure of classical algebraic reduction schemes under non-closed operator generators, we merge the structural analysis of infinite Lie algebra generation with the quantitative evaluation of truncated state estimation. We first establish the \emph{Triad Pathology Theorem of Estimation Algebras}, which proves that generic non-linearities induce an unbounded spatial derivative explosion, topological fragility under infinitesimal model perturbations, and an unmonitored information sink in the adjoint null-space.

\begin{theorem}[Generic Infinite-Dimensional Explosion and Triad Structural Pathologies of Estimation Algebras]
	\label{thm:triad_pathology_theorem}
	Let $(\mathcal{X}, \Sigma, \mu)$ be a measure space over the smooth state domain $\mathcal{X} = \mathbb{R}^p$, equipped with the Borel $\sigma$-algebra $\Sigma$ and a smooth reference measure $\mu$, and let $\mathcal{E}_0 \triangleq \operatorname{span}_{\mathbb{R}}\left\{ \mathcal{L}_0^*, h_1(x), \dots, h_m(x) \right\} \subset \mathcal{D}(\mathbb{R}^p)$ be the initial linear differential operator span acting on $C^\infty(\mathbb{R}^p)$ under the Duncan-Mortensen-Zakai (DMZ) stochastic partial differential equation \cite{duncan1967, mortensen1967, zakai1969}. Let $\{\mathcal{E}_k\}_{k=0}^\infty$ be the ascending Lie filtration defined in Definition~\ref{def:ascending_operator_filtration}.
	
	For generic non-linear continuous-time state-space systems, there does not exist any finite stabilization index $k^* < \infty$ such that $\mathcal{E}_{k^*} = \mathcal{E}_{k^*+1}$ (forcing $k^* = \infty$ and $\operatorname{dim}(\mathcal{E}) = \infty$). Specifically, the operator generator sequence suffers from three fundamental structural pathologies:
	
	\begin{enumerate}
		\item[\rm (i)] \textbf{Genericity of Degree Unboundedness (Spatial Derivative Explosion)}: 
		Any non-linear component in the drift vector field $f(x)$ or observation mapping $h(x)$ possessing polynomial degree $\operatorname{deg}(f) \ge 3$ (or non-polynomial transcendental dependence) induces a strictly monotonic growth of spatial derivative orders across the filtration:
		\begin{equation}
			\operatorname{deg}_x\left(\mathcal{E}_k\right) < \operatorname{deg}_x\left(\mathcal{E}_{k+1}\right), \quad \forall k \ge 0 \implies \mathcal{E}_k \subsetneq \mathcal{E}_{k+1}, \quad \forall k \ge 0.
			\label{eq:degree_unboundedness_chain}
		\end{equation}
		
		\item[\rm (ii)] \textbf{Algebraic Fragility under Structural Perturbation}: 
		Let $(f_0, g_0, h_0)$ be a baseline system yielding a closed finite Lie algebra $\operatorname{dim}(\mathcal{E}^{(0)}) = d_0 < \infty$. For any smooth non-linear perturbation operator $\epsilon \cdot \eta(x) \nabla$ ($\epsilon \neq 0$), the perturbed estimation algebra $\mathcal{E}^{(\epsilon)} \triangleq \operatorname{Lie}(\mathcal{L}_0^* + \epsilon \eta \cdot \nabla, h)$ generically satisfies:
		\begin{equation}
			\operatorname{dim}\left(\mathcal{E}^{(\epsilon)}\right) = \infty, \quad \forall \epsilon \neq 0.
			\label{eq:algebraic_fragility_collapse}
		\end{equation}
		
		\item[\rm (iii)] \textbf{Adjoint Null-Space Energy Leakage under Truncation}: 
		For any artificial finite truncation cutoff degree $k < \infty$, the dynamic score velocity projection residual $v_{\text{OOD}}(t)$ into the adjoint null-space $\operatorname{Null}(\mathcal{E}_k^*)$ is non-zero ($v_{\text{OOD}}(t) \in \operatorname{Null}(\mathcal{E}_k^*) \setminus \{0\}$), generating unmonitored cumulative error variance growth:
		\begin{equation}
			\int_0^t \|v_{\text{OOD}}(\tau)\|_{L^2(P_f)}^2 d\tau > 0, \quad \forall t > 0.
			\label{eq:null_space_leakage_integral}
		\end{equation}
	\end{enumerate}
\end{theorem}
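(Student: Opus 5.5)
The plan is to prove the three assertions separately, all inside the Weyl-algebra setting. Since $\mathcal{E}_0 \subset \mathcal{D}(\mathbb{R}^p)$, every element of $\mathcal{E}_k$ can be written as $\sum_\alpha a_\alpha(x)\partial^\alpha$. For polynomial data we grade such operators by the bi-degree (coefficient degree $a$, derivative order $b$) of the monomials $x^a\partial^b$. Throughout I take $\mathcal{L}_0^* = \tfrac12\Delta - f\cdot\nabla - \operatorname{div} f - \tfrac12|h|^2$, i.e.\ the Stratonovich-corrected adjoint operator in the DMZ equation \eqref{eq:dmz_spde_sec12}.

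\textbf{Part (i).} The idea is to track the top-degree symbol and find a bracket chain whose degree strictly increases. First, $[\mathcal{L}_0^*, h_j] = \nabla h_j\cdot\nabla + \tfrac12\Delta h_j - f\cdot\nabla h_j$, which is a first-order operator. Next, $[\mathcal{L}_0^*, X]$ for a vector field $X = u\cdot\nabla$ contains the term $[f\cdot\nabla, u\cdot\nabla]$, whose coefficient is $(f\cdot\nabla)u - (u\cdot\nabla)f$. When $\deg f = n\ge 3$ and $u$ is generic, the leading homogeneous part of this coefficient has degree $\deg u + n - 1 > \deg u$. The leading part can be computed via the Poisson bracket of principal symbols, $\{\xi\cdot f, \xi\cdot u\}$. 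It vanishes identically only if the top parts of $f$ and $u$ commute as vector fields, and the "generic" hypothesis is used precisely to exclude this. Induction on $k$ then gives an element of $\mathcal{E}_{k+1}$ whose coefficient degree exceeds every degree in $\mathcal{E}_k$. Since $\mathcal{E}_k$ is finite dimensional, $\deg_x(\mathcal{E}_k)$ is a well-defined maximum, so $\deg_x(\mathcal{E}_k) < \deg_x(\mathcal{E}_{k+1})$ and the inclusions are strict. For transcendental $f$ or $h$, I would argue that $\mathcal{E}$ finite dimensional would force the coefficients to satisfy a finite linear ODE system, i.e.\ to be exponential-polynomial. Genericity excludes this, and the Ocone / Wong–Yau classification (finite-dimensional $\mathcal{E}$ forces $h$ to be affine and $f$ to be gradient-like up to degree two) would be invoked here.

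\textbf{Part (ii).} Apply part (i) to the perturbed generator. The perturbation $\epsilon\eta\cdot\nabla$ enters the symbol linearly in $\epsilon$, and the iterated brackets $\operatorname{ad}^k_{\mathcal{L}_0^*+\epsilon\eta\cdot\nabla}(h)$ are polynomial in $\epsilon$. The leading new coefficient appears with a factor $\epsilon^{k}$ multiplied by a nonzero function whenever $\eta$ is generic, for instance of degree $\ge 3$ or non-polynomial. Hence for every $\epsilon\neq 0$ the strict degree chain of part (i) holds. The word "generically" is made precise as: this holds outside the zero set of a countable family of nontrivial polynomial conditions on the coefficients of $\eta$, which is a meagre set.

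\textbf{Part (iii).} Fix a truncation $\mathcal{E}_k$ and decompose $d\sigma/\sigma$ orthogonally in $L^2(P_f)$ into its projection onto the span of $\mathcal{E}_k$ (acting on $\sigma$) and a residual $v_{\text{OOD}}$. By part (i), the DMZ velocity contains a component generated by $\mathcal{E}_{k+1}\setminus\mathcal{E}_k$. This component is a function of strictly higher degree, so it is not in the closed span, and its orthogonal residual is nonzero at $t=0$. Continuity in $t$ of the solution, which holds for smooth positive $\sigma_0$ with finite moments, then makes $\|v_{\text{OOD}}\|^2$ positive on some interval $[0,\delta)$. That positivity gives the strict integral inequality \eqref{eq:null_space_leakage_integral} for all $t>0$.

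\textbf{Main obstacle.} I expect the hardest step to be part (i): proving rigorously that the leading symbols never cancel across all iterations, and pinning down what "generic" means. My first attempt would be to restrict to $p=1$, $f=x^n$, $h=x$, and compute $\operatorname{ad}^k$ explicitly to obtain degrees $k(n-1)+1$. I would then extend by a Zariski-openness argument in the coefficient space.
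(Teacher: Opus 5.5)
\textbf{The argument for part (iii) fails.} You project $d\sigma/\sigma$ onto the span of $\mathcal{E}_k$ acting on $\sigma$, i.e.\ onto $\{E\sigma/\sigma : E\in\mathcal{E}_k\}$. You then assert that the DMZ velocity has a component generated by $\mathcal{E}_{k+1}\setminus\mathcal{E}_k$. It has none. Indeed $d\sigma/\sigma=(\mathcal{L}_0^*\sigma/\sigma)\,dt+\sum_j h_j\circ dY_j$ with $\mathcal{L}_0^*,h_j\in\mathcal{E}_0\subseteq\mathcal{E}_k$. So under your projection $v_{\text{OOD}}\equiv 0$ for every $k$, even $k=0$, whatever $\dim\mathcal{E}$ is.

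Commutators never occur in the instantaneous velocity. They enter only through the time-ordered exponential (Magnus/BCH expansion) of the finite-time solution operator, i.e.\ through the fact that the orbit $t\mapsto\sigma(t)$ leaves every finite-dimensional family. The paper's proof makes the same unsupported assertion (``non-closed Lie commutators continuously generate score variations outside $\mathcal{E}_k$''), so following it does not help.

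A repair proceeds as follows.
\begin{itemize}
\item Fix an honest finite-dimensional family of densities, e.g.\ $p_\theta\propto\exp(\sum_i\theta_i\psi_i)$ with polynomials $\psi_i$ of degree at most $D$.
\item Evaluate the generator at $p_\theta$ rather than at the true $\sigma$, and project onto $\operatorname{span}\{\psi_i-\mathbb{E}_{p_\theta}\psi_i\}\subset L^2(p_\theta)$.
\item With $\psi=\ln p_\theta$ one has $\mathcal{L}_0^*p_\theta/p_\theta=\tfrac12(\Delta\psi+|\nabla\psi|^2)-f\cdot\nabla\psi-\operatorname{div}f-\tfrac12|h|^2$. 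This contains $f\cdot\nabla\psi$ of degree $n+D-1>D$.
\item Barring a cancellation of top-degree terms (e.g.\ against $\tfrac12|\nabla\psi|^2$ when $D=n+1$), the residual is a polynomial with nonzero top-degree part. It is therefore nonzero in $L^2(p_\theta)$, since $p_\theta>0$.
\end{itemize}
Only then does your continuity-in-$t$ step give the strict integral. This works for the $dt$-component only; the $\circ\,dY_j$ part is not a rate, so $\|v_{\text{OOD}}\|^2\,d\tau$ needs interpretation there.

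\textbf{Part (i) also contains a false step.} The per-step inequality $\deg_x(\mathcal{E}_k)<\deg_x(\mathcal{E}_{k+1})$ fails in your own test case. For $p=1$, $f=x^n$, $h=x$ one has $\mathcal{L}_0^*=\tfrac12\partial^2-x^n\partial-nx^{n-1}-\tfrac12x^2$ and $[\mathcal{L}_0^*,x]=\partial-x^n$. Hence $\deg_x(\mathcal{E}_0)=\deg_x(\mathcal{E}_1)=n$; the paper's example $f=ax+\epsilon x^d$, $h=cx$ has the same defect.

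Your induction breaks because the top-degree element of $\mathcal{E}_k$ need not grow under $\operatorname{ad}_{\mathcal{L}_0^*}$. Already $[\mathcal{L}_0^*,\mathcal{L}_0^*]=0$. In symbols, $\{x^n\xi,x^\alpha\xi^\beta\}=(\alpha-n\beta)x^{\alpha+n-1}\xi^\beta$ vanishes whenever $\alpha=n\beta$. This is a structural cancellation that no genericity hypothesis on $(f,h)$ removes.

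What your explicit computation does give is that the single chain $\operatorname{ad}^k_{\mathcal{L}_0^*}(h)$ has leading term $c_kx^{k(n-1)+1}$ with $c_{k+1}=-(k(n-1)+1)c_k\neq0$. To see this, grade by $\operatorname{wt}(x)=1$ and $0<\operatorname{wt}(\partial)<n$, so that $x^n\partial$ is the top part of $\mathcal{L}_0^*$ and leading terms of commutators are Poisson brackets. Hence $\dim\mathcal{E}=\infty$. Strict inclusion at every step then follows with no degree monotonicity at all. Since $\mathcal{E}_{k+1}=\mathcal{E}_k+[\mathcal{E}_0,\mathcal{E}_k]$, a single equality $\mathcal{E}_{k^*}=\mathcal{E}_{k^*+1}$ would propagate and force $\mathcal{E}=\mathcal{E}_{k^*}$.

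Finally, your transcendental step rests on a false claim: finite-dimensionality does not force exponential-polynomial coefficients. The Bene\v{s} system $f=\tanh x$, $h=x$ has $\mathcal{E}=\operatorname{span}\{\mathcal{L}_0^*,x,\partial-\tanh x,1\}$. Closure there imposes the nonlinear Riccati constraint $f'+f^2=\text{quadratic}$, not a linear ODE, and that is what any genericity condition must exclude.
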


\begin{remark}[Epistemological Necessity for the SMG-Yau-Yau Framework]
	\label{rem:triad_epistemological_necessity}
	Theorem~\ref{thm:triad_pathology_theorem} establishes why classical Lie-algebraic reduction \cite{wei1963, yau1999} fails universally in realistic non-linear state estimation:
	\begin{enumerate}
		\item Assertion~(i) proves that exact finite Lie closure ($\operatorname{dim}(\mathcal{E}) < \infty$) fails for virtually all non-linear systems with polynomial degree $d \ge 3$.
		\item Assertion~(ii) proves that even if a system is engineered to have a closed finite algebra, an infinitesimal physical model flaw instantly breaks closure ($\operatorname{dim}(\mathcal{E}) = \infty$).
		\item Assertion~(iii) proves that standard numerical truncation schemes are blind to their own divergence, as discarded modes leak unmonitored into the adjoint null-space $\operatorname{Null}(\mathcal{E}_k^*)$.
	\end{enumerate}
	This breakdown mandates the formulation of the \textbf{SMG-Yau-Yau Non-Linear Dynamic Filter} in Section~\ref{sec:smg_yau_yau_reconstructed}. Rather than forcing global Lie algebraic closure, SMG will utilize an Ehresmann connection $\omega_f$ on an Orlicz statistical fiber bundle $\mathcal{M}$ to orthogonally quarantine unclosed Lie brackets inside the gauge fiber $\operatorname{SID}_f \equiv \ker(d\pi_f)$ while actively tracking accumulated stress via the Active Acausal Tension metric $\mathcal{T}_{\text{AAT}}(t)$ \cite{cheng2026smg, cheng2026gsb}.
\end{remark}

\subsubsection{Dynamic Failure of Operator Truncation: Geometric Interpretation}
\label{subsec:summary_theorem3_divergence}

While Theorem~\ref{thm:triad_pathology_theorem} establishes the structural breakdown of finite Lie algebraic closure ($\operatorname{dim}(\mathcal{E}) = \infty$), the operational mechanics of truncation divergence under standard Hilbert space projection can be interpreted through three interconnected dynamic pathologies:

\begin{enumerate}
	\item \textbf{Adjoint Null-Space Annihilation ($\mathcal{P}_{\mathcal{E}_k}(v_{\text{OOD}}) \equiv 0$)}: 
	When non-linearities propel the true log-score velocity field $V_{\text{true}}(t) \in \mathcal{H} = L^2(\mathbb{R}^p, P_f)$ outside an artificially truncated $d_k$-dimensional operator subspace $\operatorname{span}(\mathcal{E}_k)$, standard orthogonal projection systematically erases the out-of-distribution (OOD) score velocity $v_{\text{OOD}}(t) \in \operatorname{Null}(\mathcal{E}_k^*)$:
	\begin{equation}
		\mathcal{P}_{\mathcal{E}_k}\left( v_{\text{OOD}}(t) \right) \equiv 0 \in \operatorname{span}(\mathcal{E}_k), \quad \forall t \ge 0.
		\label{eq:v_ood_annihilation_summary}
	\end{equation}
	
	\item \textbf{Zero Feedback Driving Force and Uncoupled Parameter SDEs}: 
	Because canonical projection forcibly maps $v_{\text{OOD}}(t)$ to zero, the stochastic differential equations governing the internal parameter updates $d\theta(t) \in \mathbb{R}^{d_k}$ receive zero driving feedback force from unmodeled spectral dimensions:
	\begin{equation}
		\frac{\partial (d\theta(t))}{\partial v_{\text{OOD}}(t)} \equiv 0.
		\label{eq:zero_feedback_force_summary}
	\end{equation}
	The filter updates its internal moments with deceptively high confidence, operating completely blind to residual out-of-distribution energy.
	
	\item \textbf{Monotonic Variance Accumulation and Exponential Error Divergence}: 
	Because physical non-linearities continuously inject energy into unclosed commutator modes while internal parameter dynamics receive zero warning, residual error variance accumulates monotonically ($\int_0^t \|v_{\text{OOD}}(\tau)\|_{\mathcal{H}}^2 d\tau > 0$). Consequently, the true state density error norm satisfies the exponential lower bound inequality:
	\begin{equation}
		\|\sigma_{\text{true}}(t) - \sigma_{\text{filter}}(t)\|_{\mathcal{H}}^2 \ge e^{2\lambda_0 t} \int_0^t e^{-2\lambda_0 \tau} \|v_{\text{OOD}}(\tau)\|_{\mathcal{H}}^2 \, d\tau,
		\label{eq:exponential_error_divergence_bound_summary}
	\end{equation}
	guaranteeing stealth, catastrophic filter divergence over time.
\end{enumerate}

\paragraph{Epistemological Bridge to the Incoming SMG-Yau-Yau Framework.}
These conclusions demonstrate that classical estimators (e.g., EKF, UKF, or standard projection filters on fixed parametric families) fail not merely due to approximation error, but because standard orthogonal projection actively destroys the feedback signal required to detect divergence. Resolving this fundamental blind spot necessitates abandoning rigid projection onto flat Euclidean subspaces in favor of an Ehresmann connection $\omega_f$ on an Orlicz statistical fiber bundle $\mathcal{M}$. Rather than discarding $v_{\text{OOD}}(t)$, the \textbf{SMG-Yau-Yau filter} orthogonally quarantines vertical extra-algebraic leaks via $P_f^H \triangleq \mathcal{I} - \omega_f$ and actively monitors stored stress via the Active Acausal Tension metric $\mathcal{T}_{\text{AAT}}(t)$ \cite{cheng2026gsb}.

\subsection{Epistemological Shift \& Fiber Bundle Intuition}
\label{subsec:epistemological_shift_intuition}

The transition from conventional Lie-algebraic filtering to Statistically Meaningful Geometry (SMG) represents a foundational epistemological shift in mathematical statistics and state estimation. 

In classical filtering, non-closure of Lie brackets ($[E_i, E_j] \notin \operatorname{span}(\mathcal{E})$) was viewed as a fatal breakdown of parameter sufficiency. Because classical estimators assume a flat Euclidean background space, unclosed commutator terms could only be handled by artificial truncation, which inevitably dumps unmonitored residual energy into the adjoint null-space and triggers filter divergence.

SMG resolves this crisis by shifting the paradigm from global Lie algebraic generation to local differential geometry on infinite-dimensional fiber bundles. Under the SMG perspective:
\begin{enumerate}
	\item \textbf{Lie Bracket Non-Closure as Fiber Connection Curvature}: The non-integrability of observable score directions is not numerical noise or lost information, but the intrinsic connection curvature $\Omega_f$ of an Orlicz statistical fiber bundle $(\mathcal{M}, \mathcal{B}, \pi, \omega_f, g_f)$.
	\item \textbf{Active Manifold Participation}: The statistical manifold is not a passive container for sample data; it actively participates in state dynamics by absorbing operator stress into its vertical gauge fibers.
	\item \textbf{Lossless Quarantining vs. Destructive Truncation}: By installing an Ehresmann connection 1-form $\omega_f$, unclosed operator variations are orthogonally isolated inside the vertical gauge subbundle $\operatorname{SID}_f \equiv \ker(d\pi_f)$, shielding macroscopic base parameters $p(t) \in \mathcal{B}$ from derivative explosion while fully preserving total score variance energy.
\end{enumerate}

This intuitive framework sets the exact stage for the formal Banach-Orlicz fiber bundle formulation in Section~\ref{sec:smg_yau_yau_reconstructed}.

\section{The SMG-Yau-Yau Non-Linear Dynamic Filter: Orlicz Bundle Geometry, Quarantining Mechanics, and Asymptotic Foundations}
\label{sec:smg_yau_yau_reconstructed}

In Section~\ref{sec:epistemological_foundations}, we established that classical Yau-Yau filtering theory relies on the global Lie algebraic closure of the estimation algebra $\mathcal{E} \triangleq \operatorname{Lie}\left( \mathcal{L}_0^*, h_1, \dots, h_m \right)$ ($\operatorname{dim}(\mathcal{E}) < \infty$), an assumption that universally breaks down ($\operatorname{dim}(\mathcal{E}) = \infty$) in generic physical state spaces.

In this section, we construct the complete mathematical foundation of the SMG-Yau-Yau Non-Linear Dynamic Filter. We centralize all Orlicz fiber bundle definitions, derive the projected dynamic SPDE equations, establish the quarantining mechanics and curvature bounds, link the model back to classical Yau-Yau filtering, construct the asymptotic perturbation approximation theory, and synthesize the structural theoretical and computational guarantees.

\subsection{Mathematical Foundations of the Orlicz Bundle}
\label{subsec:smg_fiber_bundle_foundation}

Let $(\mathcal{X}, \Sigma, \mu)$ be a standard $\sigma$-finite measure space representing the continuous physical state domain $\mathbb{R}^p$.

\begin{assumption}[Non-Parametric Pistone-Sempi Orlicz Manifold Structure]
	\label{ass:orlicz_structure_sec2}
	The total state space $\mathcal{M}$ is an infinite-dimensional non-parametric statistical manifold comprising all probability density functions $f: \mathcal{X} \to \mathbb{R}^+$ that are strictly positive and continuous with respect to $\mu$, satisfying $\int_{\mathcal{X}} f(x)\,\mu(dx) = 1$. Chart transitions of $\mathcal{M}$ are modeled on the centered Orlicz space $L^\Phi_0(P_f)$ under the convex Young function $\Phi(u) = \cosh(u) - 1$, equipped with the Luxemburg norm:
	\begin{equation}
		\|u\|_{L^\Phi(P_f)} \triangleq \inf \left\{ k > 0 \;\middle|\; \mathbb{E}_f \left[ \Phi\left( \frac{u(X)}{k} \right) \right] \le 1 \right\}. \label{eq:luxemburg_norm_sec2}
	\end{equation}
	For any state $f \in \mathcal{M}$, the tangent space $T_f\mathcal{M}$ is topologically isomorphic to the centered Orlicz space of zero-expectation logarithmic score functions:
	\begin{equation}
		T_f\mathcal{M} = \left\{ v \in L^\Phi_0(P_f) \;\middle|\; \mathbb{E}_f[v(X)] = \int_{\mathcal{X}} v(x) f(x)\,\mu(dx) = 0 \right\}. \label{eq:tangent_space_orlicz_sec2}
	\end{equation}
\end{assumption}

\begin{definition}[Non-Parametric Fisher-Rao Riemannian Metric]
	\label{def:fisher_rao_metric_sec2}
	The manifold $\mathcal{M}$ is equipped with the non-parametric Fisher-Rao Riemannian metric tensor $g_f: T_f\mathcal{M} \times T_f\mathcal{M} \to \mathbb{R}$, defined for tangent score vectors $u, v \in T_f\mathcal{M}$ by:
	\begin{equation}
		g_f(u, v) \triangleq \mathbb{E}_f [u(X) v(X)] = \int_{\mathcal{X}} u(x) v(x) f(x)\,\mu(dx). \label{eq:fisher_rao_metric_sec2}
	\end{equation}
	Because $g_f$ is symmetric, continuous, and strictly positive-definite on $L_0^\Phi(P_f)$, $(\mathcal{M}, g_f)$ forms a well-posed infinite-dimensional Riemannian statistical manifold.
\end{definition}

To ensure complete conceptual alignment across the theoretical framework, we explicitly anchor the geometric tuple of the SMG fiber bundle to its underlying axiomatic foundations: the System Set ($\mathcal{S}$), the Environment Set ($\mathcal{E}$), the Structural Mechanism ($\mathcal{F}$), and the Invariance Principle.

\begin{definition}[The SMG Differential Fiber Bundle and Axiomatic Alignment]
	\label{def:smg_fiber_bundle}
	The Statistically Meaningful Geometry (SMG) differential fiber bundle is defined by the tuple $\mathcal{B}_{\text{SMG}} \triangleq (\mathcal{M}, \mathcal{B}, \pi, \mathcal{V}, \mathcal{H}, \omega_f, g_f)$, whose components correspond explicitly to the core theoretical axioms:
	\begin{enumerate}
		\item \textbf{System Set ($\mathcal{S}$)}: Represented by the total infinite-dimensional Orlicz statistical manifold $\mathcal{M} = L_0^\Phi(P_f)$, which contains the full, unconstrained non-parametric state density dynamics $f(x) \in \mathcal{M}$.
		\item \textbf{Environment Set ($\mathcal{E}$)}: Represented by the identifiable, $d$-dimensional macroscopic statistical base manifold $(\mathcal{B}, g_{\mathcal{B}})$ ($d \ll \infty$), parameterizing observable state expectations and low-dimensional physical metrics.
		\item \textbf{Structural Mechanism ($\mathcal{F}$)}: Represented by the tuple $(\pi, \omega_f, P_f^H, \Omega_f)$, consisting of the smooth submersion mapping $\pi: \mathcal{M} \to \mathcal{B}$, the Ehresmann connection $1$-form $\omega_f$, the canonical horizontal projection operator $P_f^H$, and the vertical curvature $2$-form $\Omega_f$, governing parallel transport and fiber geometric updates.
		\item \textbf{Invariance Principle}: Governed by the Fisher-Rao Riemannian metric tensor $g_f$, which enforces strict metric invariance under coordinate re-parameterization and authorizes the canonical Fisher-orthogonal direct sum decomposition of score variations.
		\item \textbf{Vertical Subbundle / Structural Internal Directions ($\operatorname{SID}$)}: The kernel of the push-forward differential $d\pi_f: T_f\mathcal{M} \to T_{\pi(f)}\mathcal{B}$ defines the Structural Internal Directions ($\operatorname{SID}_f$):
		\begin{equation}
			\operatorname{SID}_f \equiv \mathcal{V}_f \triangleq \ker(d\pi_f) = \left\{ v \in T_f\mathcal{M} \;\middle|\; d\pi_f(v) = 0 \in T_{\pi(f)}\mathcal{B} \right\}. \label{eq:sid_def_sec2}
		\end{equation}
		\item \textbf{Horizontal Subbundle / Statistically Verifiable Directions ($\operatorname{SVD}\chi$)}: The Fisher-orthogonal complement of $\operatorname{SID}_f$ defines the Statistically Verifiable Directions ($\operatorname{SVD}\chi_f$):
		\begin{equation}
			\operatorname{SVD}\chi_f \equiv \mathcal{H}_f \triangleq \mathcal{V}_f^{\perp g_f} = \left\{ h \in T_f\mathcal{M} \;\middle|\; g_f(h, v) = 0, \;\forall v \in \operatorname{SID}_f \right\}. \label{eq:svd_def_sec2}
		\end{equation}
	\end{enumerate}
\end{definition}

\begin{definition}[Ehresmann Connection 1-Form, Horizontal Projector, and Curvature 2-Form]
	\label{def:ehresmann_connection}
	An Ehresmann connection on $\mathcal{B}_{\text{SMG}} = (\mathcal{M}, \mathcal{B}, \pi, \mathcal{V}, \mathcal{H}, \omega_f, g_f)$ is defined by a vector-valued differential $1$-form $\omega \in \Omega^1(\mathcal{M}; \operatorname{SID})$ satisfying:
	\begin{enumerate}
		\item[\rm (i)] $\omega_f(v) = v$ for all $v \in \operatorname{SID}_f$,
		\item[\rm (ii)] $\ker(\omega_f) = \operatorname{SVD}\chi_f$.
	\end{enumerate}
	The canonical \textbf{Horizontal Projection Operator} $P_f^H: T_f\mathcal{M} \to \operatorname{SVD}\chi_f$ is defined as:
	\begin{equation}
		P_f^H \triangleq \mathcal{I}_{T_f\mathcal{M}} - \omega_f, \label{eq:horizontal_projector_sec2}
	\end{equation}
	where $\mathcal{I}_{T_f\mathcal{M}}$ is the identity operator on $T_f\mathcal{M}$.
	
	The \textbf{Ehresmann Curvature 2-Form} $\Omega \in \Omega^2(\mathcal{M}; \operatorname{SID})$ assigns to any smooth vector fields $U, V \in \mathfrak{X}(\mathcal{M})$ the vertical vector field:
	\begin{equation}
		\Omega_f(U, V) \triangleq \omega_f\left( \left[ P_f^H U, \, P_f^H V \right] \right) \in \operatorname{SID}_f, \label{eq:ehresmann_curvature_sec2}
	\end{equation}
	where $[\cdot, \cdot]$ denotes the vector field Lie bracket on $\mathcal{M}$.
\end{definition}

\begin{lemma}[Metric-Orthogonal Tangent Space Decomposition]
	\label{lem:tangent_space_decomposition_sec2}
	At every density state $f \in \mathcal{M}$, the differential submersion $\pi$ and Fisher-Rao metric $g_f$ induce an exact Fisher-orthogonal direct sum decomposition of the Orlicz tangent space:
	\begin{equation}
		T_f\mathcal{M} = \operatorname{SVD}\chi_f \oplus_{\perp g_f} \operatorname{SID}_f. \label{eq:direct_sum_split_sec2}
	\end{equation}
\end{lemma}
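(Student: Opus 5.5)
The decomposition is a Hilbert-space orthogonal-complement argument. The one real subtlety is that $T_f\mathcal{M}$ is modeled on the Orlicz space $L_0^\Phi(P_f)$, which is not complete under the Fisher--Rao inner product $g_f$ of Definition~\ref{def:fisher_rao_metric_sec2}.

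\textbf{Step 1: closedness of the vertical space.} Since $\pi$ is a smooth submersion in Definition~\ref{def:smg_fiber_bundle}, the differential $d\pi_f : T_f\mathcal{M}\to T_{\pi(f)}\mathcal{B}\cong\mathbb{R}^d$ is a continuous, surjective linear map onto a $d$-dimensional space. Hence $\operatorname{SID}_f=\ker(d\pi_f)$ is a closed subspace of finite codimension $d$. Choose $w_1,\dots,w_d\in T_f\mathcal{M}$ with $d\pi_f(w_i)=e_i$. Then every $v$ can be written as $v=\sum_i c_i w_i + (v-\sum_i c_i w_i)$ with $c=d\pi_f(v)$, and the second summand is vertical. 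So $T_f\mathcal{M}=\operatorname{span}\{w_i\}\oplus\operatorname{SID}_f$ algebraically.

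\textbf{Step 2: realizing the finite codimension in $g_f$.} For the split to be $g_f$-orthogonal, I need $\operatorname{SID}_f$ to be the kernel of finitely many $g_f$-continuous functionals. Each component of $d\pi_f$ is a continuous linear functional on $L_0^\Phi(P_f)$. The SMG setup implicitly makes these components Fisher-continuous: the base coordinates are expectations, so $d\pi_f(v)_i=\mathbb{E}_f[\eta_i v]$ for centered statistics $\eta_i\in L^\Phi_0(P_f)\subset L^2(P_f)$. If that representation is not available, I would impose it explicitly, namely that $d\pi_f$ extends continuously to $L^2_0(P_f)$. By Riesz representation in $L^2_0(P_f)$, each component then has a representer $\eta_i$, and I must check that $\eta_i$ lies in $T_f\mathcal{M}$. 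With the representers in hand, $\operatorname{SID}_f=\{v: g_f(\eta_i,v)=0,\ i=1,\dots,d\}$, and I would show $\operatorname{SVD}\chi_f=\operatorname{span}\{\eta_1,\dots,\eta_d\}$. The inclusion $\supseteq$ is immediate. For $\subseteq$, a $g_f$-orthogonal complement of a finite-codimension kernel in a pre-Hilbert space is contained in the span of the defining functionals. I expect Step 2 to be the main obstacle, since it is where the lack of completeness of $L_0^\Phi$ under $g_f$ has to be bypassed.

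\textbf{Step 3: the direct sum.} The Gram matrix $G_{ij}=g_f(\eta_i,\eta_j)$ is invertible, because the $\eta_i$ are linearly independent by surjectivity of $d\pi_f$ and $g_f$ is positive-definite. Define
\[
P_f^H v=\sum_{i,j}\eta_i (G^{-1})_{ij}\, g_f(\eta_j,v),\qquad \omega_f=\mathcal{I}-P_f^H .
\]
Then $v=P_f^Hv+\omega_f v$, with $P_f^Hv\in\operatorname{SVD}\chi_f$ and $\omega_f v\in\operatorname{SID}_f$ by construction. The intersection is trivial: if $u\in\operatorname{SVD}\chi_f\cap\operatorname{SID}_f$, then $g_f(u,u)=0$, so $u=0$ by positive-definiteness. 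This gives the Fisher-orthogonal direct sum \eqref{eq:direct_sum_split_sec2}. It also shows that the projector $P_f^H$ of Definition~\ref{def:ehresmann_connection} is well defined and bounded, with $\ker\omega_f=\operatorname{SVD}\chi_f$ and $\omega_f|_{\operatorname{SID}_f}=\mathrm{id}$.
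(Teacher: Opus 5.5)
Your route differs from the paper's. The paper applies the orthogonal projection theorem in one stroke. It notes that $\ker d\pi_f$ is closed in $L_0^\Phi(P_f)$, treats that space as a ``reflexive Hilbert--Orlicz structure'' under $g_f$, asserts that a $g_f$-nearest point $v^*\in\operatorname{SID}_f$ exists, and concludes that $w-v^*$ is horizontal. You instead use the finite codimension of $\operatorname{SID}_f$. Once $\operatorname{SID}_f$ is cut out by $d$ functionals $g_f(\eta_i,\cdot)$ with $\eta_i\in T_f\mathcal{M}$, the Gram-matrix formula for $P_f^H$ works in a mere pre-Hilbert space, and your Steps 1 and 3 are correct as written. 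You have also located the difficulty correctly, and it is exactly the weak point of the paper's argument. For the non-atomic $P_f$ here, $L_0^\Phi(P_f)$ with $\Phi=\cosh-1$ is neither reflexive nor $g_f$-complete; its $g_f$-completion is $L_0^2(P_f)$. Luxemburg-closedness of $\ker d\pi_f$ therefore does not produce a $g_f$-minimizer. A side benefit of your route is that $P_f^H$ comes out in the Gram-matrix form the paper later uses for the base dynamics.

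The gap is in Step 2. You say the representation $d\pi_f(v)_i=g_f(\eta_i,v)$ with $\eta_i\in T_f\mathcal{M}$ is implicit in the setup. It is not, and it cannot be derived from ``$\pi$ is a smooth submersion''. The check you postpone can genuinely fail, even for expectation coordinates.

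Take $f$ to be the standard Gaussian on $\mathbb{R}$ and $\pi(g)=\mathbb{E}_g[|X|^3]$. This map is smooth near $f$, since $|x|^3$ has moments of all orders. Its differential is $d\pi_f(v)=\mathbb{E}_f[\eta v]$ with $\eta=|x|^3-\mathbb{E}_f|X|^3$. This $\eta$ lies in $L_0^2(P_f)$ but not in $L_0^\Phi(P_f)$, because it has no exponential moment. Since $L_0^\Phi(P_f)$ is $g_f$-dense in $L_0^2(P_f)$, the closure of $\operatorname{SID}_f$ in $L_0^2(P_f)$ is $\eta^{\perp}$. Hence any $h\in T_f\mathcal{M}$ that is $g_f$-orthogonal to $\operatorname{SID}_f$ is a multiple of $\eta$, and so $h=0$. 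Thus $\operatorname{SVD}\chi_f=\{0\}$ and the claimed decomposition fails. Equivalently, the would-be minimizer $w-\frac{g_f(\eta,w)}{g_f(\eta,\eta)}\eta$ lies outside $T_f\mathcal{M}$.

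So your condition is a necessary hypothesis, not a technicality, and you must assume it explicitly. Within the paper it is supplied by (A3) of Assumption~\ref{ass:smg_filtering_submersion}, read as the standard Riemannian-submersion requirement that $d\pi_f$ map $\operatorname{SVD}\chi_f$ isometrically \emph{onto} $T_{\pi(f)}\mathcal{B}$. Under that hypothesis the lemma is immediate. Lift $d\pi_f(v)$ to $h\in\operatorname{SVD}\chi_f$, observe that $v-h\in\operatorname{SID}_f$, and use positive-definiteness of $g_f$ for the trivial intersection. Your $\eta_i$ are then just a basis of $\operatorname{SVD}\chi_f$.
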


\subsection{Canonical Orlicz Inheritance, SMG Filtering Submersion, and Quotient Base Geometry}
\label{subsec:smg_submersion_and_base_geometry}

In this subsection, we model continuous-time state estimation by introducing an operational layer---the \textbf{SMG Filtering Submersion}---which describes how a finite-dimensional physical estimator observes and interacts with the canonical infinite-dimensional manifold $\mathcal{M}$ \cite{cheng2026entropy,cheng2026smg}.

\begin{definition}[Canonical Ambient Statistical Manifold]
	\label{def:canonical_ambient_space}
	Let $\mathcal{M}$ be the canonical non-parametric Orlicz statistical manifold composed of all strictly positive probability density functions $f$ with respect to a $\sigma$-finite reference measure $\mu$ on $\mathcal{X} = \mathbb{R}^p$. The space $\mathcal{M}$ is modeled globally on the centered Orlicz space $L_0^\Phi(P_f)$ associated with the Young function $\Phi(u) = \cosh(u) - 1$ under the Luxemburg norm:
	\begin{equation}
		\|u\|_{L^\Phi(P_f)} \triangleq \inf \left\{ k > 0 \;\middle|\; \mathbb{E}_f \left[ \Phi\left( \frac{u(X)}{k} \right) \right] \le 1 \right\}, \label{eq:luxemburg_norm_sec22}
	\end{equation}
	and equipped with the non-parametric Fisher-Rao Riemannian metric tensor $g_f(u, v) \triangleq \mathbb{E}_f[u(X) v(X)]$ and the Ehresmann connection $1$-form $\omega_f$. Throughout this manuscript, $(\mathcal{M}, g_f, \omega_f)$ remains fixed as the unique, canonical ambient state manifold.
\end{definition}

\subsubsection{The Operational Submersion Assumption and Inherited Parametric Geometry}
\label{subsubsec:operational_submersion_assumption}

To overcome the Lie algebra explosion crisis without resorting to ad-hoc operator truncation, the Statistically Meaningful Geometry (SMG) framework shifts the non-linear state estimation paradigm from global Lie-algebraic generation on Euclidean vector spaces to local differential geometric projection on Orlicz statistical fiber bundles. Rather than postulating a rigid parametric density family or imposing ad-hoc analytic regularity on score functions, we establish a primitive gauge-theoretic foundation: the existence of a smooth local section (local gauge choice) on the fiber bundle $\pi: \mathcal{M} \to \mathcal{B}$.

This operational layer models the physical extraction of observable signals from infinite-dimensional conditional density measures without constraining the underlying stochastic flow of the Duncan--Mortensen--Zakai (DMZ) partial differential equation \cite{duncan1967, mortensen1967, zakai1969}. We formalize this geometric interaction through the following restructured core assumption.

\begin{assumption}[SMG Filtering Submersion and Gauge Choice]
	\label{ass:smg_filtering_submersion}
	Let $(\mathcal{M}, g_f, \omega_f)$ be the canonical non-parametric Pistone--Sempi Orlicz statistical manifold defined in Definition~\ref{def:canonical_ambient_space}. We assume the existence of a smooth $C^\infty$ Riemannian submersion mapping
	\begin{equation}
		\pi: \mathcal{M} \longrightarrow \mathcal{B}
		\label{eq:submersion_mapping_def}
	\end{equation}
	from the ambient Orlicz manifold $\mathcal{M}$ onto an identifiable, $d$-dimensional $C^\infty$ quotient Riemannian base manifold $(\mathcal{B}, g_{\mathcal{B}})$ ($d < \infty$), representing the macroscopic state estimation parameter space, satisfying the following four geometric conditions:
	
	\begin{enumerate}
		\item[\rm (A1)] \textbf{Closed Fiber Leaf Submanifold Topology}: 
		For every macroscopic parameter base point $p = \pi(f) \in \mathcal{B}$, the inverse image fiber
		\begin{equation}
			\mathcal{F}_p \triangleq \pi^{-1}(p) = \left\{ f \in \mathcal{M} \;\middle|\; \pi(f) = p \right\} \subset \mathcal{M}
			\label{eq:fiber_leaf_submanifold_def}
		\end{equation}
		is a smooth, closed, infinite-dimensional submanifold of $\mathcal{M}$, representing the equivalence leaf of density states sharing identical macroscopic parameters $p \in \mathcal{B}$.
		
		\item[\rm (A2)] \textbf{Vertical Subbundle Identification with Structural Internal Directions ($\operatorname{SID}$)}: 
		At every state density $f \in \mathcal{M}$, the kernel of the differential push-forward operator $d\pi_f: T_f\mathcal{M} \to T_{\pi(f)}\mathcal{B}$ coincides identically with the Statistically Invariant / Structural Internal Directions subspace $\operatorname{SID}_f$ \cite{cheng2026smg}:
		\begin{equation}
			\ker(d\pi_f) \equiv \operatorname{SID}_f \triangleq \left\{ v \in T_f\mathcal{M} \;\middle|\; d\pi_f(v) = 0 \in T_{\pi(f)}\mathcal{B} \right\}.
			\label{eq:kernel_sid_identity_strict}
		\end{equation}
		
		\item[\rm (A3)] \textbf{Horizontal Subbundle Metric Isometry onto Statistically Verifiable Directions ($\operatorname{SVD}\chi$)}: 
		The Fisher-orthogonal complement of $\ker(d\pi_f)$ under the ambient Fisher-Rao metric tensor $g_f$ coincides identically with the Statistically Verifiable Directions subspace $\operatorname{SVD}\chi_f$ \cite{cheng2026smg}:
		\begin{equation}
			\operatorname{SVD}\chi_f \equiv \mathcal{H}_f \triangleq (\operatorname{SID}_f)^{\perp g_f} = \left\{ h \in T_f\mathcal{M} \;\middle|\; g_f(h, v) = 0, \;\forall v \in \operatorname{SID}_f \right\},
			\label{eq:svd_orthogonal_identity_strict}
		\end{equation}
		and the restricted push-forward differential $d\pi_f\big|_{\operatorname{SVD}\chi_f}: \operatorname{SVD}\chi_f \to T_{\pi(f)}\mathcal{B}$ acts as a linear metric isometry:
		\begin{equation}
			g_{\mathcal{B}, \pi(f)}\left( d\pi_f(h_1), \, d\pi_f(h_2) \right) = g_f(h_1, h_2), \quad \forall h_1, h_2 \in \operatorname{SVD}\chi_f.
			\label{eq:metric_isometry_identity}
		\end{equation}
		
		\item[\rm (A4)] \textbf{Smooth Local Section (Gauge Fixing Assumption)}: 
		For every base point $p_0 \in \mathcal{B}$, there exists an open neighborhood $U \subset \mathcal{B}$ containing $p_0$ and a smooth $C^\infty$ local section map $s: U \to \mathcal{M}$ such that:
		\begin{equation}
			\pi \circ s = \operatorname{id}_U. 
			\label{eq:smooth_local_section}
		\end{equation}
	\end{enumerate}
\end{assumption}

\begin{remark}[Semantic Distinction of Parameters in Assumption~\ref{ass:smg_filtering_submersion}]
	\label{rem:parameter_semantics}
	To prevent conceptual ambiguity, we clarify the interpretation of ``parameters'' and ``parameterization'' across disciplines and within our framework:
	\begin{itemize}
		\item \textbf{Geometers' Perspective:} A parameter $p = (p_1, \dots, p_d) \in \mathbb{R}^d$ is strictly a local coordinate vector on a $d$-dimensional smooth manifold $\mathcal{B}$, and a parameterization is a smooth local embedding or section $s: U \to \mathcal{M}$.
		\item \textbf{Statisticians' Perspective:} A parameter typically denotes an index for a finite-dimensional model family $\{f_p : p \in \Theta \subset \mathbb{R}^d\}$, implying that the true density is restricted to a parametric distribution.
		\item \textbf{Exact Meaning in Assumption~\ref{ass:smg_filtering_submersion}:} The vector $p$ adopts strictly the geometric definition as a \emph{macroscopic quotient coordinate} on the leaf space $\mathcal{B} \triangleq \mathcal{M} / \sim_{\operatorname{SID}}$. The physical density $f$ remains fully non-parametric on the infinite-dimensional Orlicz manifold $\mathcal{M} = L_0^\Phi(P_f)$. The local section $p \mapsto f_p$ serves solely as an operational local gauge choice for extracting observable features, leaving the underlying infinite-dimensional stochastic flow unconstrained.
	\end{itemize}
\end{remark}


We now establish that the combination of Assumption~\ref{ass:smg_filtering_submersion}~\rm{(A4)} and the ambient Orlicz manifold structure $(\mathcal{M}, g_f)$ rigorously induces score co-frame regularity, local parameterization of $\mathcal{B}$, and classical Information Geometry as mathematical theorems rather than artificial postulates.

\begin{proposition}[Inherited Score Co-Frame Regularity and Amari Isomorphism]
	\label{prop:inherited_score_regularity}
	Let $(\mathcal{M}, g_f)$ be the canonical Pistone--Sempi Orlicz statistical manifold modeled on $L_0^\Phi(P_f)$ with Young function $\Phi(u) = \cosh(u) - 1$, and let $\pi: \mathcal{M} \to \mathcal{B}$ be a smooth Riemannian submersion satisfying Assumptions~\rm{(A1)}--\rm{(A4)} in Assumption \ref{ass:smg_filtering_submersion}. Then:
	\begin{enumerate}
		\item[\rm (i)] \textbf{Local Parametric Section}: The image 
		\begin{equation}
			\mathcal{S}_U \triangleq s(U) = \left\{ f_p \in \mathcal{M} \;\middle|\; f_p = s(p), \; p \in U \right\}
			\label{eq:parametric_section_image}
		\end{equation}
		defines a $d$-dimensional smooth parametric submanifold embedded in $\mathcal{M}$, establishing a local parameterization $p \mapsto f_p(x)$ for the quotient leaf space $\mathcal{B}$.
		
		\item[\rm (ii)] \textbf{Inherited Score Co-Frame Regularity}: The logarithmic score co-frame functions $\phi_i(x; p) \triangleq \frac{\partial \ln f_p(x)}{\partial p_i}$ ($i = 1, \dots, d$), which strictly Fisher score functions, not Stein score functions here,  satisfy:
		\begin{enumerate}
			\item[\rm (a)] $\phi_i(\cdot; p) \in L_0^\Phi(P_{f_p}) \cap L^2(P_{f_p})$ for all $p \in U$.
			\item[\rm (b)] The set $\left\{\phi_1(\cdot; p), \dots, \phi_d(\cdot; p)\right\}$ forms a linearly independent frame in $T_{f_p}\mathcal{M}$.
			\item[\rm (c)] Spatial integration over $\mathcal{X}$ and parameter differentiation with respect to $p_i$ commute under expectation.
		\end{enumerate}
		
		\item[\rm (iii)] \textbf{Amari Information Geometry on $\mathcal{B}$}: The differential $ds_p: T_p\mathcal{B} \to \operatorname{SVD}\chi_{f_p}$ is a linear isometry, and the pulled-back metric tensor $s^* g_f$ on $T_p\mathcal{B}$ is identically Amari's classical Fisher Information Metric $G(p)$.
	\end{enumerate}
\end{proposition}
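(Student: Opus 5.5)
The proof splits along the three assertions, and the only analytic input is the Pistone--Sempi chart structure of $\mathcal{M}$ together with Assumption~\ref{ass:smg_filtering_submersion}.

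\textbf{Part (i): local parametric section.} Fix $p_0\in\mathcal{B}$ and the local section $s:U\to\mathcal{M}$ from (A4). Differentiating $\pi\circ s=\operatorname{id}_U$ gives
\[
d\pi_{s(p)}\circ ds_p=\operatorname{id}_{T_p\mathcal{B}}.
\]
Hence $ds_p$ is injective, so $s$ is a smooth immersion. Moreover $\pi|_{s(U)}$ is a continuous left inverse of $s$, so $s$ is a homeomorphism onto its image. An injective immersion with continuous inverse on its image is an embedding. Banach-manifold embeddings with finite-dimensional domain have split (closed, finite-dimensional) tangent images, so $\mathcal{S}_U=s(U)$ is a $d$-dimensional embedded submanifold and $p\mapsto f_p$ is a local parameterization of $\mathcal{B}$.

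\textbf{Part (ii): score co-frame regularity.} Work in the exponential chart centred at $f_{p_0}$, where $f=e^{u-K(u)}f_{p_0}$ with $u\in L_0^\Phi(P_{f_{p_0}})$. Smoothness of $s$ means $p\mapsto u_p$ is $C^\infty$ into the Orlicz space. Then
\[
\ln f_p=u_p-K(u_p)+\ln f_{p_0},
\qquad
\phi_i=\partial_i u_p-\partial_i K(u_p),
\]
and this is exactly the chart representation of the tangent vector $ds_p(e_i)$ transported to $T_{f_p}\mathcal{M}$. Two facts then give (a):
\begin{itemize}
\item the Pistone--Sempi result that $L^\Phi(P_{f_{p_0}})=L^\Phi(P_{f_p})$ as sets with equivalent norms for $f_p$ in the same connected exponential component;
\item the embedding $L^{\cosh-1}\hookrightarrow L^2$, which holds because $\cosh u-1\ge u^2/2$.
\end{itemize}
Part (b) follows from injectivity of $ds_p$ established in (i).

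For (c), the aim is to show that $p\mapsto \mathbb{E}_f[\psi f_p/f]$ is differentiable under the integral for test functions $\psi$. The plan is to obtain a local uniform domination: in a neighbourhood of $p$, bound $|\partial_i f_p|\le C\,e^{\varepsilon|v|}f_{p_0}$ for some $v$ in the Orlicz space, using the mean value theorem in the chart. Dominated convergence then applies, and differentiating $\int f_p\,d\mu=1$ yields $\mathbb{E}_{f_p}[\phi_i]=0$.

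This domination step is the \textbf{main obstacle}. It requires the exponential moment condition $\mathbb{E}e^{\alpha|v|}<\infty$ for small $\alpha$, which membership in the Orlicz class supplies, together with a careful choice of a neighbourhood on which the cumulant functional $K$ stays finite, i.e. staying inside the interior of its domain.

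\textbf{Part (iii): Amari geometry on $\mathcal{B}$.} Decompose $ds_p(e_i)=h+v$ using Lemma~\ref{lem:tangent_space_decomposition_sec2}. The vertical component $v$ is invisible to $d\pi$, since $d\pi(v)=0$. However, $ds_p$ need not land in $\operatorname{SVD}\chi$ for an arbitrary section. So I would first replace $s$ by the horizontal lift: the horizontal part $P^H_{f_p}\circ ds_p$ is the unique lift of $e_i$ into $\operatorname{SVD}\chi_{f_p}$, with $d\pi$-image $e_i$. Alternatively, I would read the claim as being about a section chosen horizontal at $p$, which is possible locally by the connection.

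For such a section, (A3) gives
\[
g_f\bigl(ds_p e_i,\,ds_p e_j\bigr)=g_{\mathcal{B}}(e_i,e_j).
\]
Expanding the left side with the Fisher--Rao metric of Definition~\ref{def:fisher_rao_metric_sec2},
\[
s^*g_f(e_i,e_j)=\mathbb{E}_{f_p}[\phi_i\phi_j]=G_{ij}(p),
\]
which is Amari's Fisher information. The expectation is finite by (ii)(a).

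For a general, non-horizontal section, the same expansion still shows $s^*g_f=G(p)$, but $G(p)$ then exceeds $g_{\mathcal{B}}$ by the vertical energy $\|\omega_f(ds_p e_i)\|^2$. I would flag this explicitly and state the isometry claim for the horizontal gauge.
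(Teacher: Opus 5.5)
For (i) and (ii) you follow the paper's route: the chain rule on $\pi\circ s=\operatorname{id}_U$, identifying $ds_p(e_i)$ with $\phi_i$, the bound $\cosh u-1\ge u^2/2$, and injectivity of $ds_p$. You are more careful than the paper at two points.

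\begin{itemize}
\item The paper declares $s(U)$ embedded merely because $s$ is an injective immersion. Your continuous left inverse $\pi|_{s(U)}$, together with the splitting of the finite-dimensional tangent image, is what actually makes it an embedding.
\item For (ii)(c), the paper passes from Fr\'echet convergence of the \emph{log}-difference quotients in $L^\Phi$ to $L^1$ convergence and then cites dominated convergence. That skips the nonlinearity in $(f_{p+he_i}-f_p)/(hf_p)=(e^{\Delta\ell}-1)/h$, and your exponential-moment domination supplies exactly what is missing.
\end{itemize}

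To make your bound $|\partial_i f_q|\le Ce^{\varepsilon|v|}f_{p_0}$ literally valid, centre the chart at the point where you differentiate. With a fixed centre $p_0$, a factor $e^{u_p}$ survives that is not of the form $e^{\varepsilon|v|}$. Then get a single dominating function from the integral form of the mean value theorem, using $C^2$ smoothness of $q\mapsto u_q$. Alternatively, skip pointwise domination: $u\mapsto e^{u-K(u)}$ is Fr\'echet differentiable into $L^1(P_{f_{p_0}})$ on the open domain of $K$, by H\"older applied to $e^{u}(e^{h}-1-h)$. Composing with the continuous functional $\int\psi\,\cdot\,d\mu$ then gives (c).

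On (iii) you depart from the paper, and you are right to. The paper argues that because $s(p)\in\pi^{-1}(p)$, the image $ds_p(T_p\mathcal B)$ is transversal to $\operatorname{SID}_{f_p}$ and therefore \emph{equals} $\operatorname{SVD}\chi_{f_p}$. Transversality only makes it \emph{some} complement of $\ker d\pi_{f_p}$, not the $g_f$-orthogonal one. Your splitting $ds_pe_i=h_i+v_i$, with $h_i$ the horizontal lift of $e_i$, gives exactly
\[
(s^*g_f)_{ij}=G_{ij}(p)=g_{\mathcal B}(e_i,e_j)+g_{f_p}(v_i,v_j).
\]
So $s^*g_f=G(p)$ is tautological for every section, while the isometry holds if and only if $ds_p$ is horizontal. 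As stated, (iii) is false for a general section of (A4) as soon as $\operatorname{SID}_{f_p}\neq\{0\}$.

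Two refinements to your repair:

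\begin{itemize}
\item Replacing $ds_p$ by $P^H_{f_p}\circ ds_p$ makes the isometry trivial but detaches it from $G(p)$, which is computed from $ds_p$ itself. Only the horizontal-section reading delivers both halves of (iii) at once.
\item Such a section can be arranged at one prescribed point $p_0$: in a submersion chart, take a section whose differential at $p_0$ is the horizontal lift. It cannot in general be arranged on an open set. If $ds_p(T_p\mathcal B)=\operatorname{SVD}\chi_{s(p)}$ for all $p\in U$, the horizontal fields $ds(\partial_i)$ are $s$-related to commuting fields, so their brackets vanish. Since $\Omega_f$ is tensorial, this forces $\Omega_f\equiv 0$ along $s(U)$ (cf.\ Lemma~\ref{lem:lemma3_rigorous}(i)).
\end{itemize}

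So your phrase ``possible locally by the connection'' overstates it. Outside the flat case, the corrected (iii) is a pointwise statement at the chosen base point.
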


\begin{insight}[Architectural Coexistence: Orlicz Ambient Space vs. Base Amari Geometry]
	\label{ins:architectural_coexistence}
	Proposition~\ref{prop:inherited_score_regularity} clarifies the precise operational boundaries between Amari's classical parametric Information Geometry and the Orlicz-SMG framework:
	
	\begin{enumerate}
		\item \textbf{Quotient Base Manifold $(\mathcal{B}, G(p))$}: Governed by Amari's classical geometry through the local gauge section $s: U \to \mathcal{M}$. Here, parameter scores $\phi_i(x; p) = \frac{\partial \ln f_p}{\partial p_i}$, the Fisher matrix $G(p) = s^* g_f$, dual $\alpha$-connections $\nabla^{(\alpha)}$, and natural gradient updates operate natively on $d$-dimensional parameter vectors $p \in \mathcal{B}$.
		
		\item \textbf{Ambient Fiber Bundle Space $(\mathcal{M}, g_f)$}: Houses the exact, untruncated stochastic flow of the Duncan--Mortensen--Zakai (DMZ) SPDE on $L_0^\Phi(P_f)$. The Orlicz-SMG connection $1$-form $\omega_f$ executes the orthogonal decomposition $T_f\mathcal{M} = \operatorname{SVD}\chi_f \oplus_{\perp g_f} \operatorname{SID}_f$, projecting horizontal parameter dynamics onto $(\mathcal{B}, G(p))$ via $P_f^H(X)$ while quarantining unclosed Lie bracket operator modes within the vertical fiber $X_V \in \operatorname{SID}_f \equiv \ker(d\pi_f)$.
	\end{enumerate}
\end{insight}

\subsubsection{The SMG-Yau-Yau Filtering Lemma}
\label{subsubsec:smg_yau_yau_filtering_lemma}

We now prove how the operator flow of the unnormalized conditional state density decomposes orthogonally on the canonical manifold $\mathcal{M}$ without truncating spatial differential modes.

\begin{lemma}[Orthogonal Filtering Decomposition on Canonical $\mathcal{M}$]
	\label{lem:orthogonal_filtering_decomposition}
	Let $(\mathcal{M}, g_f, \omega_f)$ be the canonical Orlicz statistical manifold satisfying Assumption~\ref{ass:smg_filtering_submersion}. Let $\mathcal{L}_{\mathrm{DMZ}}$ denote the unconstrained infinite-dimensional differential operator governing the Stratonovich DMZ SPDE on $\mathcal{M}$:
	\begin{equation}
		d\sigma(t, x) = \mathcal{L}_0^* \sigma(t, x) \, dt + \sum_{j=1}^m h_j(x) \sigma(t, x) \circ dY_j(t) \equiv \mathcal{L}_{\mathrm{DMZ}}(\sigma(t, \cdot)). \label{eq:dmz_operator_def_sec22}
	\end{equation}
	Then, the unconstrained score velocity operator field $V_{\mathrm{DMZ}}(t, x) \triangleq \sigma(t, x)^{-1} d\sigma(t, x) \in T_{\sigma(t, \cdot)}\mathcal{M}$ decomposes uniquely via the SMG connection $1$-form $\omega_f$ as:
	\begin{equation}
		V_{\mathrm{DMZ}}(t, x) = X_H(t, x) + X_V(t, x), \label{eq:dmz_score_split_lemma}
	\end{equation}
	where:
	\begin{enumerate}
		\item[\rm (i)] $X_H(t, x) \triangleq P_{f_t}^H\left( V_{\mathrm{DMZ}}(t, x) \right) \in \operatorname{SVD}\chi_{f_t} \cong T_{\pi(f_t)}\mathcal{B}$ is the horizontal score vector field whose push-forward $d\pi_{f_t}(X_H)$ generates a finite-dimensional, well-posed parameter update SDE on $\mathcal{B}$.
		\item[\rm (ii)] $X_V(t, x) \triangleq \omega_{f_t}\left( V_{\mathrm{DMZ}}(t, x) \right) \in \operatorname{SID}_{f_t} = \ker(d\pi_{f_t})$ is the vertical gauge vector field that absorbs all infinite-dimensional, extra-algebraic operator modes without polluting $\mathcal{B}$.
	\end{enumerate}
\end{lemma}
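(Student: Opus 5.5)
The argument is essentially linear-algebraic and pointwise in $t$: once $V_{\mathrm{DMZ}}(t,\cdot)$ is known to be a legitimate tangent vector at $f_t$, the splitting is forced by Lemma~\ref{lem:tangent_space_decomposition_sec2} and the defining properties of $\omega_f$ in Definition~\ref{def:ehresmann_connection}. I will work on the normalized density $f_t \triangleq \sigma(t,\cdot)/\int\sigma(t,x)\,\mu(dx)$, so that the base point is actually in $\mathcal{M}$. I will then identify the relevant tangent vector with the centered score velocity $\tilde V(t) \triangleq V_{\mathrm{DMZ}}(t) - \mathbb{E}_{f_t}[V_{\mathrm{DMZ}}(t)]$. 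Since $d\ln f_t = d\ln\sigma(t,\cdot) - d\ln\int\sigma$, the subtracted term is a spatial constant, which is exactly the normalization direction. This is how I will read the statement's $V_{\mathrm{DMZ}}\in T_{\sigma(t,\cdot)}\mathcal{M}$.

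\textbf{Step 1 (membership in the tangent space).} I will show $\tilde V(t)\in L_0^\Phi(P_{f_t})$. Formally, in Stratonovich form,
\[
V_{\mathrm{DMZ}}\,dt\text{-part} = \sigma^{-1}\mathcal{L}_0^*\sigma\,dt + \textstyle\sum_j h_j\circ dY_j .
\]
This is to be read as a semimartingale differential. I will therefore treat the $dt$ coefficient and each $dY_j$ coefficient separately as elements of $T_{f_t}\mathcal{M}$. The needed hypothesis is that $\sigma^{-1}\mathcal{L}_0^*\sigma$ and the $h_j$ have finite exponential moments under $f_t$, i.e.\ lie in the Orlicz space for $\Phi=\cosh-1$. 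I expect this to be the main obstacle. The paper gives no growth conditions on $f,g,h$, so I will state it as a standing regularity requirement, namely that the DMZ flow stays in the Pistone--Sempi manifold. I will then note that centering preserves $L^\Phi$ membership and that $L^\Phi\subset L^2(P_f)$, so $g_f$ is defined on these coefficients.

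\textbf{Step 2 (existence and uniqueness of the split).} Apply Lemma~\ref{lem:tangent_space_decomposition_sec2} at $f=f_t$ to each coefficient, and hence by linearity to $\tilde V$. This gives a unique pair with $\tilde V = h + v$, where $h\in\operatorname{SVD}\chi_{f_t}$ and $v\in\operatorname{SID}_{f_t}$. Uniqueness holds because $\operatorname{SVD}\chi_{f_t}\cap\operatorname{SID}_{f_t}=\{0\}$: an element of the intersection is $g_f$-orthogonal to itself and $g_f$ is positive definite. Next, $\omega_{f_t}(\tilde V)=\omega_{f_t}(h)+\omega_{f_t}(v)=0+v$ by properties (i)--(ii) of Definition~\ref{def:ehresmann_connection}. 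Hence $v = X_V$, and then $h=\tilde V-\omega_{f_t}(\tilde V)=P^H_{f_t}\tilde V = X_H$ by \eqref{eq:horizontal_projector_sec2}. This proves \eqref{eq:dmz_score_split_lemma}. As a by-product, it gives the Pythagorean identity $\|\tilde V\|^2_{g_f}=\|X_H\|^2_{g_f}+\|X_V\|^2_{g_f}$.

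\textbf{Step 3 (properties (i) and (ii)).} For (ii), $X_V\in\operatorname{SID}_{f_t}=\ker d\pi_{f_t}$ by (A2). Thus $d\pi_{f_t}(V_{\mathrm{DMZ}})=d\pi_{f_t}(X_H)$, which means the vertical component contributes nothing to the base motion. For (i), (A3) gives that $d\pi_{f_t}|_{\operatorname{SVD}\chi}$ is an isometry onto $T_{\pi(f_t)}\mathcal{B}$, which yields $\operatorname{SVD}\chi_{f_t}\cong T_{\pi(f_t)}\mathcal{B}$. Using the local section from (A4) and Proposition~\ref{prop:inherited_score_regularity}(iii), I write
\[
X_H=\textstyle\sum_i \dot p_i\, P^H\phi_i ,
\]
with coefficients obtained by inverting the Fisher matrix $G(p)$ against $g_f(\tilde V,\phi_i)$. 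This produces a finite-dimensional Stratonovich SDE $dp = G(p)^{-1}\big(\cdots\big)dt + \sum_j G(p)^{-1}\mathbb{E}_{f_t}[\phi\,h_j]\circ dY_j$. Well-posedness would follow from local Lipschitz dependence of these coefficients on $p$. I will state this as following from smoothness of $\pi$ and $s$, which is the second place where an assumption on the regularity of the DMZ data is needed.
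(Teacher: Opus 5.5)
Your Steps 1--2 and the first half of Step 3 follow the paper's argument. That argument is: the orthogonal splitting of Lemma~\ref{lem:tangent_space_decomposition_sec2}, identification of the two pieces through the defining properties of $\omega_f$ in Definition~\ref{def:ehresmann_connection}, then $d\pi_{f_t}(X_V)=0$ by (A2) and $\operatorname{SVD}\chi_{f_t}\cong T_{\pi(f_t)}\mathcal{B}$ by the isometry in (A3). You are more careful than the paper in two places. First, the paper applies $P^H_{f_t}$ and $\omega_{f_t}$ directly to the uncentered $V_{\mathrm{DMZ}}$. Its $dY_j$-coefficients $h_j$ are not $f_t$-centered, so it is not in $L^\Phi_0(P_{f_t})$. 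Your subtraction of $\mathbb{E}_{f_t}[V_{\mathrm{DMZ}}]=d\ln\int\sigma\,d\mu$ produces the actual tangent vector $d\ln f_t$. Second, you state the exponential-moment hypothesis on $\sigma^{-1}\mathcal{L}_0^*\sigma$ and $h_j$, which the paper uses silently.

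The coordinate formula at the end of Step 3 is where the proposal goes wrong, and you do not need it. The scores $\phi_i(\cdot;p_t)=\partial_{p_i}\ln f_p|_{p=p_t}$ are tangent vectors at the section point $s(p_t)$. But $X_H\in T_{f_t}\mathcal{M}$, and $f_t$ is in general a different point of the fiber $\pi^{-1}(p_t)$. Indeed, if $f_t$ stayed on $s(U)$, Proposition~\ref{prop:inherited_score_regularity}(iii), which you invoke, would make its velocity horizontal and force $X_V\equiv 0$. So in precisely the regime the lemma is about, the flow leaves the section.

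This has three consequences:
\begin{itemize}
\item $\phi_i(\cdot;p_t)$ is not centered under $f_t$.
\item Even after recentering, $d\pi_{f_t}$ need not send it to $\partial/\partial p_i$.
\item ``Inverting $G(p)$ against $g_{f_t}(\tilde V,\phi_i)$'' mixes a Fisher matrix computed under $f_{p_t}$ with moments taken under $f_t$.
\end{itemize}

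The correct coordinate version uses the horizontal lifts $e_i^H(f_t)$ of $\partial/\partial p_i$ at $f_t$. These exist uniquely by (A3) and have Gram matrix $g_{\mathcal{B}}(p_t)$, which gives $dp_k=\sum_i[g_{\mathcal{B}}(p_t)^{-1}]_{ki}\,g_{f_t}(\tilde V,e_i^H(f_t))$. Alternatively, as the paper does, write simply $dp_t=d\pi_{f_t}(X_H)$ with no section at all.

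Likewise, the Lipschitz-in-$p$ route to well-posedness is beside the point. The coefficients depend on all of $\sigma(t,\cdot)$, not on $p$ alone. Once the DMZ solution is given, $p_t=\pi(f_t)$ is determined and its equation is just the chain rule, which is all the paper's proof claims.
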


\subsubsection{Quotient Base Geometry, Fisher Metric Equivalence, and Intrinsic Curvature}
\label{subsubsec:quotient_base_geometry_and_curvature}

Having established the submersion mapping $\pi: \mathcal{M} \to \mathcal{B}$, we characterize the differential geometry of the quotient base manifold $(\mathcal{B}, g_{\mathcal{B}})$, prove its equivalence to the classical Fisher Information Metric, derive its intrinsic Riemann-Christoffel curvature tensor, and demonstrate the geometric incompatibility of spatial Stein score operators on $T_p\mathcal{B}$.

\begin{theorem}[Quotient Base Geometry, Fisher Metric Identification, and Intrinsic Curvature]
	\label{thm:quotient_base_geometry_and_curvature}
	Let $(\mathcal{M}, g_f)$ be the canonical Orlicz statistical manifold equipped with the Fisher-Rao metric $g_f(u, v) = \mathbb{E}_f[u v]$. Let $\pi: \mathcal{M} \to \mathcal{B} \triangleq \mathcal{M}/\sim_{\operatorname{SID}}$ be the smooth Riemannian submersion satisfying Assumption~\ref{ass:smg_filtering_submersion}, parameterized locally by base coordinates $p = (p_1, \dots, p_d)^T \in \mathbb{R}^d$. Then:
	\begin{enumerate}
		\item[\rm (i)] \textbf{Canonical Tangent Representation and Isometry}: The horizontal tangent space $\operatorname{SVD}\chi_{f_p} = (\ker d\pi_{f_p})^{\perp g_f}$ is uniquely spanned by the Fisher parameter score basis
\begin{equation}
\{\phi_1(\cdot; p), \dots, \phi_d(\cdot; p)\}.
\label{eq:base_manifold_Fisher_scores}
\end{equation}		 
		  The induced quotient Riemannian metric $g_{\mathcal{B}, p}$ on $T_p\mathcal{B}$ satisfies:
		\begin{equation}
			\small 
			g_{\mathcal{B}, p}\left( \frac{\partial}{\partial p_i}, \frac{\partial}{\partial p_j} \right) = [G(p)]_{ij} \triangleq \mathbb{E}_{f_p}\left[ \phi_i(X; p) \phi_j(X; p) \right] = \int_{\mathbb{R}^p} \frac{\partial \ln f_p(x)}{\partial p_i} \frac{\partial \ln f_p(x)}{\partial p_j} f_p(x) \, dx, \label{eq:fisher_metric_equivalence_sec22}
		\end{equation}
		identifying $g_{\mathcal{B}, p}$ uniquely with the classical Fisher Information Metric matrix $G(p)$.
		
		\item[\rm (ii)] \textbf{Intrinsic Riemann-Christoffel Curvature Tensor}: The quotient base manifold $(\mathcal{B}, G(p))$ is in general non-flat. Its intrinsic Riemann-Christoffel curvature tensor $R_{ijk\ell}(p) \triangleq g_{\mathcal{B}, p}\left( R\left( \frac{\partial}{\partial p_k}, \frac{\partial}{\partial p_\ell} \right) \frac{\partial}{\partial p_j}, \frac{\partial}{\partial p_i} \right)$ is given explicitly by:
		\begin{equation}
			\small 
			R_{ijk\ell}(p) = \frac{1}{4} \int_{\mathbb{R}^p} f_p(x) \left( \phi_{ik} \phi_{j\ell} - \phi_{i\ell} \phi_{jk} \right) dx + \sum_{m=1}^d \sum_{n=1}^d \left( \Gamma_{ik, m} [G^{-1}]_{mn} \Gamma_{j\ell, n} - \Gamma_{i\ell, m} [G^{-1}]_{mn} \Gamma_{jk, n} \right), \label{eq:riemann_curvature_explicit_sec22}
		\end{equation}
		where $\phi_{ij}(x; p) \triangleq \frac{\partial^2 \ln f_p(x)}{\partial p_i \partial p_j}$ denotes the second logarithmic derivative and $\Gamma_{ij, k}(p) \triangleq \mathbb{E}_{f_p}\left[ \left( \phi_{ij} + \frac{1}{2} \phi_i \phi_j \right) \phi_k \right]$ are the Christoffel symbols of the first kind under the Levi-Civita connection $\nabla^{(0)}$.
		
		\item[\rm (iii)] \textbf{Geometric Incompatibility of Spatial Stein Scores}: The spatial Stein score vector field $S_x(\cdot; p) \triangleq \nabla_x \ln f_p(\cdot) \in L^2(f_p; \mathbb{R}^p)$ generates spatial translation vector fields on physical coordinate space $\mathcal{X} = \mathbb{R}^p$ and does not belong to $T_p\mathcal{B} \cong \operatorname{SVD}\chi_{f_p}$. Projecting dynamic parameter updates along $S_x(x; p)$ violates coordinate covariance on $\mathcal{B}$ and fails to yield a well-defined Riemannian metric tensor on $T_p\mathcal{B}$.
	\end{enumerate}
\end{theorem}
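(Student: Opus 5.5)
The plan is to derive all three parts from Proposition~\ref{prop:inherited_score_regularity} and the Riemannian submersion axioms (A2)--(A4). For each part we pull the relevant object back along the local section $s$ and compute in the base coordinates $p$.

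\textbf{Part (i).} Fix the local section $s:U\to\mathcal{M}$ from (A4) and write $f_p=s(p)$. Differentiating $\pi\circ s=\operatorname{id}_U$ gives $d\pi_{f_p}\circ ds_p=\operatorname{id}_{T_p\mathcal{B}}$. In the Orlicz chart the vector $ds_p(\partial/\partial p_i)$ is represented by the logarithmic derivative $\phi_i(\cdot;p)=\partial_{p_i}\ln f_p$, and it has zero mean by Proposition~\ref{prop:inherited_score_regularity}(ii)(c). By Proposition~\ref{prop:inherited_score_regularity}(iii), $ds_p$ takes values in $\operatorname{SVD}\chi_{f_p}$. By Proposition~\ref{prop:inherited_score_regularity}(ii)(b), the $\phi_i$ are linearly independent. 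Since $\dim\operatorname{SVD}\chi_{f_p}=\dim T_p\mathcal{B}=d$ (the restriction of $d\pi_{f_p}$ to it is an isometry by (A3)), the scores span it, and uniqueness of the spanning set follows from uniqueness of coordinates. The metric identity is then a one-line consequence of (A3):
\[
g_{\mathcal{B},p}(\partial_i,\partial_j)=g_{\mathcal{B},p}\bigl(d\pi_{f_p}ds_p\partial_i,\,d\pi_{f_p}ds_p\partial_j\bigr)=g_{f_p}(\phi_i,\phi_j)=\mathbb{E}_{f_p}[\phi_i\phi_j].
\]
This step depends on the section being horizontal, which Proposition~\ref{prop:inherited_score_regularity}(iii) supplies. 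I would state this dependence explicitly, since an arbitrary gauge would give only $P^H_{f_p}\phi_i$.

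\textbf{Part (ii).} I would use the square-root (Hellinger) embedding $\xi(p)=2\sqrt{f_p}\in L^2(\mu)$. It satisfies $\langle\partial_i\xi,\partial_j\xi\rangle_{L^2}=G_{ij}$, so by part (i) $(\mathcal{B},G)$ is isometric to a $d$-dimensional submanifold of the sphere of radius $2$ in $L^2(\mu)$. A direct computation, using (ii)(c) to differentiate under the integral, gives
\[
\partial_i\partial_k\xi=\sqrt{f_p}\bigl(\phi_{ik}+\tfrac12\phi_i\phi_k\bigr).
\]
This yields $\langle\partial_i\partial_k\xi,\partial_m\xi\rangle=\Gamma_{ik,m}$, which recovers the stated Levi-Civita symbols of the first kind. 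Next I would apply the Gauss equation. The curvature is the ambient term plus the normal parts of the second derivatives:
\[
\langle\xi_{ik},\xi_{j\ell}\rangle-\langle\xi_{i\ell},\xi_{jk}\rangle
\]
minus the tangential contributions $\Gamma_{ik,m}G^{mn}\Gamma_{j\ell,n}-\Gamma_{i\ell,m}G^{mn}\Gamma_{jk,n}$, which come from projecting onto $\operatorname{span}\{\xi_m\}$ with inverse Gram matrix $G^{-1}$. Expanding $\langle\xi_{ik},\xi_{j\ell}\rangle=\mathbb{E}_{f_p}[(\phi_{ik}+\tfrac12\phi_i\phi_k)(\phi_{j\ell}+\tfrac12\phi_j\phi_\ell)]$ then produces the integral term.

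\textbf{Main obstacle.} I expect the hardest step to be matching the paper's exact normalization: the factor $\tfrac14$ in front of $\int f_p(\phi_{ik}\phi_{j\ell}-\phi_{i\ell}\phi_{jk})$ and the $+$ sign on the $\Gamma G^{-1}\Gamma$ terms. Both are sensitive to the index-ordering convention for $R_{ijk\ell}$ and to which cross terms cancel under antisymmetrization in $(k,\ell)$. My first attempt would fix the convention $R_{ijk\ell}=g(R(\partial_k,\partial_\ell)\partial_j,\partial_i)$ stated in the theorem, antisymmetrize, and track which mixed terms $\phi_{ik}\phi_j\phi_\ell$ cancel. If the displayed constants do not come out, I would instead verify the formula on a model case (for example, the univariate Gaussian family, of constant curvature $-\tfrac12$) to pin down the correct coefficients. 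For the general non-flatness claim, that same example suffices.

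\textbf{Part (iii).} The Stein score $S_x=\nabla_x\ln f_p$ is $\mathbb{R}^p$-valued and indexed by the physical coordinate $x$, not by $p$. Under a reparameterization $p\mapsto\tilde p(p)$, the Fisher scores transform as a covector, $\phi_i=\sum_j(\partial\tilde p_j/\partial p_i)\tilde\phi_j$, which is what makes $G$ a tensor. By contrast, $S_x$ is invariant under changes of $p$ and transforms instead under diffeomorphisms of $\mathcal{X}$. In general $S_x$ also does not lie in $\operatorname{span}\{\phi_i\}$; the Gaussian scale family, where $S_x$ is linear in $x$ while the scale score is quadratic, gives an explicit counterexample. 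Hence $S_x\notin\operatorname{SVD}\chi_{f_p}$. Consequently, any bilinear form built from $S_x$ on $T_p\mathcal{B}$ fails the tensorial transformation law required of a Riemannian metric.
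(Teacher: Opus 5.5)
\textbf{The gap is in Part (ii).} Your final step fails. Expanding the Gauss equation does \emph{not} ``produce the integral term'' of \eqref{eq:riemann_curvature_explicit_sec22}, and no index or sign convention will make it do so, because that displayed formula is false.

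Take your own set-up with the flat $L^2(\mu)$ as ambient space, so that the tangential part of $\xi_{ik}$ is $\sum_{m,n}\Gamma_{ik,m}[G^{-1}]_{mn}\xi_n$. (The radius-$2$ sphere gives the same result: its curvature term $\tfrac14(G_{ik}G_{j\ell}-G_{i\ell}G_{jk})$ is exactly the contribution of the radial part of $\xi_{ik}$.) With $\psi_{ab}\triangleq\phi_{ab}+\tfrac12\phi_a\phi_b$, the Gauss equation gives
\[
R_{ijk\ell}=\mathbb{E}_{f_p}\bigl[\psi_{ik}\psi_{j\ell}-\psi_{i\ell}\psi_{jk}\bigr]-\sum_{m,n=1}^d[G^{-1}]_{mn}\bigl(\Gamma_{ik,m}\Gamma_{j\ell,n}-\Gamma_{i\ell,m}\Gamma_{jk,n}\bigr).
\]
This differs from the displayed formula in three ways:
\begin{enumerate}
\item In the first term the quartic pieces cancel, but the cubic pieces $\tfrac12\mathbb{E}_{f_p}[\phi_{ik}\phi_j\phi_\ell+\phi_i\phi_k\phi_{j\ell}-\phi_{i\ell}\phi_j\phi_k-\phi_i\phi_\ell\phi_{jk}]$ survive in general.
\item The coefficient of $\mathbb{E}_{f_p}[\phi_{ik}\phi_{j\ell}-\phi_{i\ell}\phi_{jk}]$ is $1$, not $\tfrac14$.
\item The Christoffel block carries a minus sign, as you yourself wrote.
\end{enumerate}

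Your proposed Gaussian test settles the matter. For $N(\mu,\sigma^2)$ in coordinates $(\mu,\sigma)$ we have $G=\mathrm{diag}(\sigma^{-2},2\sigma^{-2})$ and $K=-\tfrac12$, so $R_{1212}=K\det G=-\sigma^{-4}$.
\begin{itemize}
\item The formula above gives $-3\sigma^{-4}-(-2\sigma^{-4})=-\sigma^{-4}$, which is correct.
\item \eqref{eq:riemann_curvature_explicit_sec22} gives $\tfrac14(2-4)\sigma^{-4}-2\sigma^{-4}=-\tfrac52\sigma^{-4}$. That corresponds to sectional curvature $-\tfrac54$, which is not $\pm\tfrac12$.
\end{itemize}
So your method proves the corrected identity and the non-flatness claim. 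The displayed curvature identity cannot be proved by any route.

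\textbf{Where the paper's proof goes wrong.} It breaks at the two corresponding places.
\begin{itemize}
\item It quotes the coordinate expression for $R_{ijk\ell}$ with the wrong sign on the quadratic block. For $R_{ijk\ell}=g(R(\partial_k,\partial_\ell)\partial_j,\partial_i)$ that block is $+\sum[G^{-1}]_{mn}(\Gamma_{i\ell,m}\Gamma_{jk,n}-\Gamma_{ik,m}\Gamma_{j\ell,n})$.
\item The third-derivative terms in $\tfrac12(\partial_j\partial_kG_{i\ell}+\partial_i\partial_\ell G_{jk}-\partial_j\partial_\ell G_{ik}-\partial_i\partial_kG_{j\ell})$ do cancel. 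But what remains is $\mathbb{E}_{f_p}[\psi_{ik}\psi_{j\ell}-\psi_{i\ell}\psi_{jk}]$, which is $-3\sigma^{-4}$ for the Gaussian. It is not $\tfrac14\mathbb{E}_{f_p}[\phi_{ik}\phi_{j\ell}-\phi_{i\ell}\phi_{jk}]$, which is $-\tfrac12\sigma^{-4}$.
\end{itemize}
Your Hellinger route is the same computation, organized better. Both $\Gamma_{ik,m}=\langle\xi_{ik},\xi_m\rangle$ and the identity (second-derivative block) $=\langle\xi_{ik},\xi_{j\ell}\rangle-\langle\xi_{i\ell},\xi_{jk}\rangle$ come for free. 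The Gauss equation also forces the sign of the tangential correction, which is exactly where the paper slips.

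\textbf{Part (i).} Here you coincide with the paper, which tacitly assumes that $t\mapsto f_{\gamma(t)}$ is the horizontal lift. You are right to flag horizontality, but Proposition~\ref{prop:inherited_score_regularity}(iii) does not really supply it: its proof shows only that $ds_p(T_p\mathcal{B})$ is transversal to $\operatorname{SID}_{f_p}$. In fact, suppose $ds_p(T_p\mathcal{B})=\operatorname{SVD}\chi_{s(p)}$ for all $p\in U$. Then $s(U)$ is an integral manifold of the horizontal distribution, so $[\partial_i^H,\partial_j^H]=ds\bigl([\partial_i,\partial_j]\bigr)=0$ along it, and $\Omega$ vanishes there by Lemma~\ref{lem:lemma3_rigorous}(i). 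Horizontality is therefore a genuine extra hypothesis. Without it you only get $g_{\mathcal{B},p}(\partial_i,\partial_j)=\mathbb{E}_{f_p}[(P^H_{f_p}\phi_i)(P^H_{f_p}\phi_j)]$.

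\textbf{Part (iii).} Your covector-transformation argument and the scale-family counterexample address the claim more directly than the paper does. The paper's spatial-Jacobian argument concerns diffeomorphisms of $\mathcal{X}$, not reparameterizations of $\mathcal{B}$. Note that for the full location--scale family $\partial_x\ln f_p=-\partial_\mu\ln f_p$ does lie in the span, so non-membership can only hold generically, as you phrase it.
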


\begin{remark}[Fisher-Stein Duality and Leaf Space Geometry]
	\label{rem:fisher_stein_duality_sec22}
	Theorem~\ref{thm:quotient_base_geometry_and_curvature} provides a fundamental clarification regarding the dual roles of Fisher parameter scores $\phi_i(x; p) = \frac{\partial \ln f_p(x)}{\partial p_i}$ and spatial Stein scores $S_x(x; p) = \nabla_x \ln f_p(x)$:
	\begin{enumerate}
		\item \textbf{Fisher Score Co-Frame on $T_p\mathcal{B}$}: The horizontal tangent space $\operatorname{SVD}\chi_{f_p} \cong T_p\mathcal{B}$ is generated by parametric score variations along macroscopic coordinate directions on the quotient leaf space $\mathcal{B} \triangleq \mathcal{M}/\sim_{\operatorname{SID}}$. By Chentsov's Theorem \cite{Chentsov1982}, the Fisher Information Metric matrix $G(p)$ is the unique Riemannian metric tensor on $\mathcal{B}$ that remains invariant under probabilistic Markovian transformations and coordinate reparameterizations $p \mapsto \psi(p)$. Pre-conditioning parameter velocity updates by $[G(p)]^{-1}$ executes the exact natural gradient flow on $(\mathcal{B}, G(p))$ \cite{amari1985, amari2000}.
		
		\item \textbf{Spatial Stein Score Operator on $\mathcal{X}$}: In contrast, the spatial Stein score $S_x(x; p) \in L^2(f_p; \mathbb{R}^p)$ acts on physical space coordinates $x \in \mathcal{X} = \mathbb{R}^p$. It governs optimal mass transport, Stein variational gradient descent (SVGD), and Langevin diffusion dynamics on physical state space $\mathcal{X}$. Attempting to substitute $S_x(x; p)$ for $\phi_i(x; p)$ on $T_p\mathcal{B}$ introduces spatial Jacobian determinant terms under state diffeomorphisms $y = T(x)$, breaking parameter-coordinate covariance and failing to induce a well-defined Riemannian metric tensor on the leaf space $\mathcal{B}$.

In applications, the spatial Stein score operator $S_x(x; p) = \nabla_x \log p(x)$ acts as the fundamental vector field driving reverse SDEs and Langevin dynamics in diffusion-based image generation. Neural networks learn to approximate this score via Denoising Score Matching, using it to continuously transport noise vectors toward high-density regions of the data manifold. Unlike parameter scores that act on model parameters, the spatial Stein score operates directly on physical state coordinates $\mathcal{X} = \mathbb{R}^p$ to govern spatial mass transport.
	\end{enumerate}
\end{remark}

\begin{remark}[Curvature as Geometric Friction and Non-Linear Obstruction]
	\label{rem:curvature_friction_sec22}
	The non-vanishing Riemann-Christoffel curvature tensor $R_{ijk\ell}(p)$ on $(\mathcal{B}, G(p))$ established in Part~(ii) of Theorem~\ref{thm:quotient_base_geometry_and_curvature} reflects the intrinsic non-flatness of statistical leaf spaces generated by generic non-linear state-space models:
	\begin{enumerate}
		\item \textbf{Geometric Holonomy Defect}: When updating parameters along a closed loop in observation/control space, non-zero curvature $R_{ijk\ell}(p) \neq 0$ induces a non-trivial holonomy shift along the vertical gauge fiber $\operatorname{SID}_f = \ker(d\pi_f)$. This holonomy defect represents the exact geometric obstruction preventing global Euclidean parallel transport of conditional state estimates.
		
		\item \textbf{Divergence of Na\"ive Euclidean Solvers}: Classical state estimators (e.g., Extended Kalman Filters or Unscented Kalman Filters) implicitly assume a flat background connection ($R_{ijk\ell} \equiv 0$). When applied to intrinsically curved quotient base spaces, these flat solvers incur continuous path-dependent estimation errors, which accumulate unmonitored and trigger catastrophic filter divergence. The SMG-Yau-Yau framework directly incorporates $R_{ijk\ell}(p)$ into its Jacobi field error bounds, ensuring continuous stability under arbitrary curvature stress.
	\end{enumerate}
\end{remark}

\subsection{Formulation, Optimality, and Coupled Dynamics of the SMG-Yau-Yau Filter}
\label{subsec:smg_yau_yau_formulation_and_optimality}


The continuous stochastic update of the state density $\sigma(t, x)$ governed by the DMZ SPDE represents an unconstrained logarithmic score velocity field 
\begin{equation}
V_{\text{DMZ}}(t) \triangleq \sigma(t, x)^{-1} d\sigma(t, x) \in T_{f_t}\mathcal{M}.
\label{eq:V_DMZ_def_equation}
\end{equation}
 By the Fisher-Rao Metric Invariance Principle (formalized by Chentsov's Theorem), total score variance energy is conserved across $\mathcal{M}$.

Applying the identity operator $\mathcal{I}_{T_f\mathcal{M}} = P_f^H + \omega_f$ (authorized by the Structural Mechanism $\mathcal{F}$ under Lemma~\ref{lem:tangent_space_decomposition_sec2}) splits the velocity field into two orthogonal components:
\begin{equation}
	V_{\text{DMZ}}(t) = P_f^H\left( V_{\text{DMZ}}(t) \right) + \omega_f\left( V_{\text{DMZ}}(t) \right).
\end{equation}

To extract the macroscopic trajectory $p(t) = \pi(f_t) \in \mathcal{B}$ on the Environment Set ($\mathcal{E}$), we apply the push-forward differential $d\pi_f$:
\begin{equation}
	d\pi_f(V_{\text{DMZ}}(t)) = d\pi_f \circ P_f^H(V_{\text{DMZ}}(t)) + d\pi_f(\omega_f(V_{\text{DMZ}}(t))).
\end{equation}
Since $\operatorname{Im}(\omega_f) \subset \operatorname{SID}_f \equiv \ker(d\pi_f)$, $d\pi_f(\omega_f(V_{\text{DMZ}}(t))) \equiv \mathbf{0}$. Evaluating the remaining horizontal push-forward against the coframe $\theta^k \in T_p^*\mathcal{B}$ yields the exact macroscopic update ODE:
\begin{equation}
	dp_k(t) = \left\langle d\pi_{f(t)} \circ P_{f(t)}^H\left( \mathcal{L}_0^* \sigma \right), \theta^k \right\rangle dt + \sum_{j=1}^m \left\langle d\pi_{f(t)} \circ P_{f(t)}^H\left( h_j \sigma \right), \theta^k \right\rangle dY_j(t).
\end{equation}

Simultaneously, the vertical residual velocity $d\sigma_{\mathcal{V}}(t, x) = \omega_{f(t)}(V_{\text{DMZ}}(t))$ isolates extra-algebraic operator updates within $\operatorname{SID}_f$, generating the exact vertical gauge SDE:
\begin{equation}
	d\sigma_{\mathcal{V}}(t, x) = \omega_{f(t)}\left( \mathcal{L}_0^* \sigma(t, x) \right) dt + \sum_{j=1}^m \omega_{f(t)}\left( h_j(x) \sigma(t, x) \right) dY_j(t).
\end{equation}

\begin{definition}[SMG-Yau-Yau Filter System]
	\label{def:smg_yau_yau_filter_sys_reconstructed}
	Let $\{e_1(p), e_2(p), \dots, e_d(p)\}$ be a local orthonormal frame for $T_p\mathcal{B}$ with dual coframe $\{\theta^1, \dots, \theta^d\}$, and let $e_i^H(f) \in \operatorname{SVD}\chi_f$ denote the unique horizontal lift of $e_i(p)$ satisfying $d\pi_f(e_i^H(f)) = e_i(\pi(f))$. The SMG-Yau-Yau Filter is defined as the coupled differential system:
	\begin{align}
		dp_k(t) &= \left\langle d\pi_{f(t)} \circ P_{f(t)}^H \left( \mathcal{L}_0^* \sigma(t, \cdot) \right), \theta^k \right\rangle dt + \sum_{j=1}^m \left\langle d\pi_{f(t)} \circ P_{f(t)}^H \left( h_j \sigma(t, \cdot) \right), \theta^k \right\rangle dY_j(t), \label{eq:horizontal_filter_ode_sec2} \\
		d\sigma_\mathcal{V}(t, x) &= \omega_{f(t)} \left( \mathcal{L}_0^* \sigma(t, x) \right) dt + \sum_{j=1}^m \omega_{f(t)} \left( h_j(x) \sigma(t, x) \right) dY_j(t), \label{eq:vertical_gauge_sde_sec2}
	\end{align}
	where $p(t) = \pi(\sigma(t, \cdot)) \in \mathcal{B}$ parameterizes the macroscopic conditional density estimate, and $\sigma_\mathcal{V}(t, x) \in \operatorname{SID}_{f(t)}$ absorbs all higher-order, unclosed Lie bracket variations.
\end{definition}

\begin{insight}[Geometric Intuition and Structural Insights of Definition \ref{def:smg_yau_yau_filter_sys_reconstructed}]
	\label{insight:smg_yau_yau_intuition}
	To clarify how the Statistical Fiber SMG-Yau-Yau Filter circumvents the classical finite-dimensional Lie algebra closure obstruction, we summarize the key structural mechanics into four core geometric principles:
	
	\begin{enumerate}
		\item \textbf{Dual-Dynamical Decoupling of DMZ Flow:} 
		The infinite-dimensional stochastic flow governed by the Duncan--Mortensen--Zakai (DMZ) equation on the Orlicz space is globally factorized into a finite-dimensional horizontal evolution on the base manifold $\mathcal{M}_k$ and a vertical gauge field evolution along the fiber $\mathcal{V}_p$, establishing the full fiber bundle system in Definition~\ref{def:smg_yau_yau_filter_sys_reconstructed}.
		
		\item \textbf{Coordinate Extraction via Dual Coframe Pairing:} 
		The horizontal coordinate update rate $dp_k(t)$ is directly extracted by pairing the driving statistical forces with the dual coframe $\theta^k$. This projects the primary drift and diffusion dynamics cleanly onto the base space, as specified in Equation~\eqref{eq:horizontal_filter_ode_sec2}
		
		\item \textbf{Quarantining Infinite-Dimensional Operator Stress:} 
		Unclosed Lie commutators and nonlinear operational residues generated during continuous updating are no longer forced into approximate finite-dimensional truncation. Instead, all higher-order operator stress is absorbed into the vertical gauge variable $d\sigma_\mathcal{V}(t,x)$, as governed by Equation~\eqref{eq:vertical_gauge_sde_sec2}
		
		\item \textbf{Lossless Score Energy Conservation:} 
		By maintaining the orthogonal decomposition between the tangent space $T_p \mathcal{M}_k$ and the vertical space $V_p \mathcal{E}$, total score energy is exact and conservative across the bundle:
		\[
		\|\nabla \ln p_t\|_{\text{Orlicz}}^2 = \|\nabla_{\mathcal{M}} \ln p_t\|_{\text{Base}}^2 + \|\nabla_{\mathcal{V}} \sigma_\mathcal{V}\|_{\text{Fiber}}^2
		\]
		ensuring that no statistical information is discarded during non-linear filtering.
	\end{enumerate}
\end{insight}

\begin{theorem}[SMG-Yau-Yau SPDE Operator Decomposition]
	\label{thm:smg_yau_yau_spde_decomposition}
	Let $\sigma(t, x)$ be the unnormalized conditional state density governed by the DMZ SPDE in Stratonovich form:
	\begin{equation}
		d\sigma(t, x) = \mathcal{L}_0^* \sigma(t, x) \, dt + \sum_{j=1}^m h_j(x) \sigma(t, x) \circ dY_j(t). \label{eq:dmz_stratonovich_repeat}
	\end{equation}
	Let 
	\begin{equation}
		f_t(x) = \frac{\sigma(t,x)}{\int \sigma(t,z)\mu(dz)} \in \mathcal{M}
		\label{eq:normalized_active_state_density}
	\end{equation}	
	be the normalized active state density, and $V_{\text{DMZ}}(t, x) \equiv \sigma(t, x)^{-1} d\sigma(t, x) \in T_{f_t}\mathcal{M}$ be the log-score velocity field.
	
	Under the Horizontal Projection $P_{f_t}^H$ and Connection $1$-form $\omega_{f_t}$, the DMZ score evolution decomposes uniquely into:
	\begin{equation}
		V_{\text{DMZ}}(t, x) = \underbrace{P_{f_t}^H\left( \frac{\mathcal{L}_0^*\sigma}{\sigma} \right) dt + \sum_{j=1}^m P_{f_t}^H(h_j) \circ dY_j(t)}_{\text{Horizontal Score Vector Field } \in \operatorname{SVD}\chi_{f_t}} \;+\; \underbrace{\omega_{f_t}\left( \frac{\mathcal{L}_0^*\sigma}{\sigma} \right) dt + \sum_{j=1}^m \omega_{f_t}(h_j) \circ dY_j(t)}_{\text{Vertical Gauge Residual Field } \in \operatorname{SID}_{f_t}}. \label{eq:score_decomposition_full}
	\end{equation}
\end{theorem}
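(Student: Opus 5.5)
The proof will divide the DMZ equation by $\sigma$ and then apply the linear splitting $\mathcal{I} = P^H + \omega$ term by term. There are three steps.

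\textbf{Step 1: the logarithmic velocity.} Since $\sigma(t,x)>0$, dividing the Stratonovich DMZ equation \eqref{eq:dmz_stratonovich_repeat} by $\sigma$ gives
\[
V_{\text{DMZ}}(t,x) = \frac{\mathcal{L}_0^*\sigma}{\sigma}\,dt + \sum_{j=1}^m h_j(x)\circ dY_j(t).
\]
This step uses that the Stratonovich calculus obeys the ordinary chain rule, so no It\^o correction appears; equivalently, it is $d\ln\sigma$.

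\textbf{Step 2: centering and normalization.} To view this as an element of $T_{f_t}\mathcal{M}$, note that $\ln f_t = \ln\sigma - \ln\int\sigma\,d\mu$. The normalization only subtracts the spatially constant term $\mathbb{E}_{f_t}[\cdot]$ from each coefficient. This is exactly the centering that puts $\mathcal{L}_0^*\sigma/\sigma$ and $h_j$ into $L_0^\Phi(P_{f_t})$ from Assumption~\ref{ass:orlicz_structure_sec2}. I will interpret $P^H_{f_t}(h_j)$ and $\omega_{f_t}(h_j)$ as acting on the centered representatives, and I will assume enough integrability of $h_j$ and $\mathcal{L}_0^*\sigma/\sigma$ (exponential moments) for these to lie in the Orlicz space.

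\textbf{Step 3: linear decomposition.} By Lemma~\ref{lem:tangent_space_decomposition_sec2} and Definition~\ref{def:ehresmann_connection}, $P^H_{f_t}$ and $\omega_{f_t}$ are complementary bounded linear projections with $P^H+\omega=\mathcal{I}$. Their images are $\operatorname{SVD}\chi_{f_t}$ and $\operatorname{SID}_{f_t}$ respectively, and they are $g_f$-orthogonal. Applying $\mathcal{I}$ to each coefficient and using linearity over the $dt$ and $\circ dY_j$ differentials, which are scalars in $x$, yields \eqref{eq:score_decomposition_full}. The two brackets lie in the horizontal and vertical subspaces because each subspace is a linear subspace, and $f_t$ is fixed at time $t$ while the increments act as scalar coefficients.

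\textbf{Uniqueness.} This follows from the directness of the sum in Lemma~\ref{lem:tangent_space_decomposition_sec2}. Suppose $H_1+V_1=H_2+V_2$. Then $H_1-H_2 = V_2-V_1$ lies in $\operatorname{SVD}\chi\cap\operatorname{SID}=\{0\}$, because $g_f$ is positive definite. Applying this coefficientwise for the $dt$ part and each $dY_j$ part gives uniqueness of the semimartingale decomposition, using uniqueness of the coefficients in the decomposition of a continuous semimartingale.

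\textbf{Main obstacle.} The hard part is making the stochastic interpretation rigorous. The projectors $P^H_{f_t}$ depend on the random point $f_t$, so commuting them with the Stratonovich differential requires them to be adapted and smooth in $f$. Otherwise a projection of $\circ dY$ would generate extra correction terms when converted to It\^o form. I would handle this by defining the decomposition pointwise in time on the coefficients, which is all the statement asserts. I would also note that the projections are evaluated at $f_t$ and not differentiated, so no cross-variation term arises. A secondary issue is verifying that $\mathcal{L}_0^*\sigma/\sigma$ belongs to $L^\Phi$; I would impose this as a standing regularity hypothesis.
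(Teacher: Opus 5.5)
Your proposal is correct and follows the paper's proof essentially step for step: divide the Stratonovich DMZ equation by $\sigma$ to get $V_{\text{DMZ}}=\frac{\mathcal{L}_0^*\sigma}{\sigma}\,dt+\sum_{j} h_j\circ dY_j$, then apply the complementary projections $P^H_{f_t}$ and $\omega_{f_t}=\mathcal{I}-P^H_{f_t}$ from Lemma~\ref{lem:tangent_space_decomposition_sec2} to each coefficient and add. Your centering via $\ln f_t=\ln\sigma-\ln\int\sigma\,d\mu$, the uniqueness argument from $\operatorname{SVD}\chi_{f_t}\cap\operatorname{SID}_{f_t}=\{0\}$, and the integrability hypotheses are refinements the paper passes over silently. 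One small correction: each piece $P^H_{f_t}(h_j)\circ dY_j$ generally does carry an It\^o cross-variation term, because it depends on $f_t$. The identity is unaffected, however, since the Stratonovich integral is linear in its integrand.
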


\begin{theorem}[Characterization, Optimality, and Coupled Dynamics of the SMG-Yau-Yau Filter]
	\label{thm:smg_yau_yau_filter_optimality_and_dynamics}
	Let $\mathcal{B}_{\text{SMG}} = (\mathcal{M}, \mathcal{B}, \pi, \mathcal{V}, \mathcal{H}, \omega_f, g_f)$ be the non-parametric Orlicz fiber bundle, where $(\mathcal{M}, g_f)$ is the infinite-dimensional statistical manifold of density states $f_t = \sigma(t, \cdot) \in L_0^\Phi(P_f)$, $(\mathcal{B}, g_{\mathcal{B}})$ is an identifiable $d$-dimensional Riemannian statistical base manifold parameterized by $p = (p_1, \dots, p_d)^T \in \mathcal{B}$, and $\pi: \mathcal{M} \to \mathcal{B}$ is a smooth Riemannian submersion. 
	
	Let 
	\begin{equation}
		V_{\text{DMZ}}(t) \triangleq \frac{d\sigma(t, \cdot)}{\sigma(t, \cdot)} = \frac{\mathcal{L}_0^*\sigma}{\sigma} dt + \sum_{j=1}^m h_j \circ dY_j(t) \in T_{f_t}\mathcal{M}
		\label{eq:unconstrained_DMZ_score_velocity_field}
	\end{equation}	
	be the unconstrained Duncan-Mortensen-Zakai (DMZ) score velocity field. Then:
	
	\begin{enumerate}
		\item[\rm \textbf{(I)}] \textbf{Existence, Uniqueness, and Optimality of the Filter ($p(t) = \pi(f_t) \in \mathcal{B}$):}\\
		The macroscopic state estimator trajectory $p(t) = \pi(f_t) \in \mathcal{B}$ is the \textbf{unique, optimal SMG-Yau-Yau non-linear filter} on the base manifold $\mathcal{B}$. Specifically, its tangent velocity $\dot{p}(t) \in T_{p(t)}\mathcal{B}$ is the unique minimizer of the instantaneous Fisher-Rao score variance discrepancy between the unconstrained DMZ flow $V_{\text{DMZ}}(t)$ and the horizontal lift of base parameter velocities:
		\begin{equation}
			\dot{p}(t) = \arg\min_{v \in T_{p(t)}\mathcal{B}} \mathcal{E}_{g_{f_t}}(v), \quad \text{where} \quad \mathcal{E}_{g_{f_t}}(v) \triangleq \left\| V_{\text{DMZ}}(t) - v^H \right\|_{g_{f_t}}^2,
			\label{eq:variational_principle_filter}
		\end{equation}
		and $v^H \in \operatorname{SVD}\chi_{f_t}$ denotes the unique horizontal lift satisfying $d\pi_{f_t}(v^H) = v$.
		
		\item[\rm \textbf{(II)}] \textbf{Explicit Coupled Base-Tension Evolution Equations:}\\
		Under the optimal horizontal projection $v^H = P_{f_t}^H(V_{\text{DMZ}}(t))$, the filter state trajectory $p(t) \in \mathcal{B}$ and the vertical Active Acausal Tension $\mathcal{T}_{\text{AAT}}(t)$ satisfy the closed coupled system:
		\begin{enumerate}
			\item[\rm (i)] \textbf{Horizontal Base State Evolution Equation:}
			\begin{equation}
				dp_k(t) = \sum_{i=1}^d \left[ G_p^{-1} \right]_{ki} \left( \mathbb{E}_{f_t} \left[ \frac{\mathcal{L}_0^*\sigma}{\sigma} \phi_i(f_t) \right] dt + \sum_{j=1}^m \mathbb{E}_{f_t} \left[ h_j \phi_i(f_t) \right] \circ dY_j(t) \right),
				\label{eq:horizontal_base_evolution_thm}
			\end{equation}
			where  $p(t) = (p_k(t), k=1,\dots,d)^T$ is its local coordinate vector representation, and $\phi_i(f_t) \triangleq \frac{\partial \ln f_t}{\partial p_i} \in \operatorname{SVD}\chi_{f_t}$ $(i = 1, \dots, d)$ forms the local score basis for $T_{p}\mathcal{B}$, and $[G_p]_{ij} = g_{f_t}(\phi_i, \phi_j) = \mathbb{E}_{f_t}[\phi_i \phi_j]$ is the localized Riemannian Fisher metric matrix on $\mathcal{B}$.
			
			\item[\rm (ii)] \textbf{Active Acausal Tension Accumulation Rate ($\mathcal{T}_{\text{AAT}}$):}
			\begin{equation}
				\frac{d\mathcal{T}_{\text{AAT}}(t)}{dt} = \left\| \omega_{f_t} \left( \frac{\mathcal{L}_0^*\sigma(t, \cdot)}{\sigma(t, \cdot)} \right) \right\|_{g_{f_t}}^2 = \mathbb{E}_{f_t} \left[ \left( \frac{\mathcal{L}_0^*\sigma}{\sigma} - P_{f_t}^H \left( \frac{\mathcal{L}_0^*\sigma}{\sigma} \right) \right)^2 \right].
				\label{eq:tension_accumulation_thm}
			\end{equation}
		\end{enumerate}
	\end{enumerate}
\end{theorem}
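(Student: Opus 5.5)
The plan is to reduce everything to the Hilbert-space geometry of the Fisher-orthogonal splitting in Lemma~\ref{lem:tangent_space_decomposition_sec2}, using the isometry (A3) and the score-frame description of $\operatorname{SVD}\chi_{f_p}$ from Theorem~\ref{thm:quotient_base_geometry_and_curvature}(i).

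\textbf{Part (I).} For $v\in T_{p(t)}\mathcal{B}$, the horizontal lift $v^H$ ranges over all of $\operatorname{SVD}\chi_{f_t}$, since $d\pi_{f_t}|_{\operatorname{SVD}\chi_{f_t}}$ is a linear isometry onto $T_{p(t)}\mathcal{B}$ by (A3). Write $V_{\text{DMZ}} = P^H_{f_t}V_{\text{DMZ}} + \omega_{f_t}V_{\text{DMZ}}$ by Theorem~\ref{thm:smg_yau_yau_spde_decomposition}. By orthogonality we get the Pythagorean identity
\[
\mathcal{E}_{g_{f_t}}(v)=\|P^H_{f_t}V_{\text{DMZ}}-v^H\|_{g_{f_t}}^2+\|\omega_{f_t}V_{\text{DMZ}}\|_{g_{f_t}}^2 .
\]
The second term does not depend on $v$, and the first is strictly convex in $v^H$. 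Hence the unique minimizer is $v^H=P^H_{f_t}V_{\text{DMZ}}$, i.e. $v=d\pi_{f_t}P^H_{f_t}V_{\text{DMZ}}$. Because $\omega_{f_t}V_{\text{DMZ}}\in\ker d\pi_{f_t}$, this equals $d\pi_{f_t}V_{\text{DMZ}}$, which is exactly $\dot p(t)$ for $p(t)=\pi(f_t)$ by the chain rule. Existence of $p(t)$ comes from the existence of the DMZ solution composed with the smooth map $\pi$.

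\textbf{Part (II)(i).} Expand the horizontal component in the basis $\{\phi_i\}$ of Theorem~\ref{thm:quotient_base_geometry_and_curvature}(i): $P^H V=\sum_k c_k\phi_k$. The normal equations of the orthogonal projection give $\mathbb{E}_{f_t}[V\phi_i]=\sum_k c_k [G_p]_{ki}$, hence $c=G_p^{-1}(\mathbb{E}_{f_t}[V\phi_i])_i$. Since $\phi_k=\partial_{p_k}\ln f$, we have $d\pi(\phi_k)=\partial/\partial p_k$, so $dp_k=c_k$. Substituting $V_{\text{DMZ}}=\frac{\mathcal{L}_0^*\sigma}{\sigma}dt+\sum_j h_j\circ dY_j$ and using linearity of the projection (legitimate in Stratonovich form, where the ordinary chain rule holds) yields \eqref{eq:horizontal_base_evolution_thm}. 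Two centering issues need care here. First, $V_{\text{DMZ}}$ is the unnormalized score, so it differs from a centered tangent vector by the spatial constant $d\ln\int\sigma$. Second, the scores $\phi_i$ have mean zero by Proposition~\ref{prop:inherited_score_regularity}(ii)(c), so these constants drop out of every pairing $\mathbb{E}[\,\cdot\,\phi_i]$. I will state this centering reduction explicitly before computing.

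\textbf{Part (II)(ii).} Define $\mathcal{T}_{\text{AAT}}$ through its integrand and differentiate. The second equality in \eqref{eq:tension_accumulation_thm} is just $\omega=\mathcal{I}-P^H$ together with the definition of $g_f$. The main obstacle is the first equality. The full vertical component $\omega(V_{\text{DMZ}})$ contains stochastic terms $\omega(h_j)\circ dY_j$, whose squared norm is of order $dt$ in quadratic variation and contributes $\sum_j\|\omega(h_j)\|^2dt$. To obtain the stated drift-only rate, I would interpret $\mathcal{T}_{\text{AAT}}$ as accumulating the squared norm of the time-rate (drift) part of the vertical velocity, i.e. $\|\omega_{f_t}(V_{\text{DMZ}})\|^2$ read per unit $dt$ with the observation increments treated pathwise. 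Failing that, I would note explicitly that the observation-driven vertical energy adds $\sum_j\|\omega_{f_t}(h_j)\|_{g_{f_t}}^2$. I would state this convention and then conclude by the fundamental theorem of calculus, using continuity of $t\mapsto f_t$ in the Orlicz topology so that the integrand is continuous.
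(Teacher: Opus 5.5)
Your proposal is correct and follows essentially the same route as the paper. Part (I) uses the Pythagorean split $\|V_{\text{DMZ}}-v^H\|_{g_{f_t}}^2=\|P^H_{f_t}V_{\text{DMZ}}-v^H\|_{g_{f_t}}^2+\|\omega_{f_t}V_{\text{DMZ}}\|_{g_{f_t}}^2$, and Part (II)(i) uses the normal equations against the score basis $\{\phi_i\}$. For (II)(ii), the paper, just as you propose, simply \emph{defines} the accumulation rate as the squared norm of the vertical drift part $\omega_{f_t}(\mathcal{L}_0^*\sigma/\sigma)$ and then expands $\omega_{f_t}=\mathcal{I}-P^H_{f_t}$.

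Your refinements are sound:
\begin{itemize}
\item you show via $d\pi_{f_t}\circ\omega_{f_t}=0$ and the chain rule that $\dot p=d\pi_{f_t}(V_{\text{DMZ}})$ is the minimizer, whereas the paper's proof just identifies the two;
\item you discard the centering constant $d\ln\int\sigma$;
\item you flag that the drift-only rate clashes with the paper's other definition $\mathcal{T}_{\text{AAT}}=\int_0^t\|\omega_{f(\tau)}(V_{\text{DMZ}}(\tau))\|_{g_{f(\tau)}}^2\,d\tau$.
\end{itemize}
One small correction to your fallback remark: read as a quadratic variation, $\|\omega_{f_t}(V_{\text{DMZ}})\|_{g_{f_t}}^2$ equals $\sum_j\|\omega_{f_t}(h_j)\|_{g_{f_t}}^2\,dt$ to leading order, because the drift enters only at order $dt^2$. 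So the observation term replaces the drift term rather than adding to it.
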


\begin{remark}[Epistemological Insight: Why $p(t) = \pi(f_t)$ Solves the Filtering Crisis]
	\label{rem:epistemological_insight_theorem}
	Theorem~\ref{thm:smg_yau_yau_filter_optimality_and_dynamics} establishes why $p(t) = \pi(f_t) \in \mathcal{B}$ is not merely an arbitrary heuristic approximation, but the unique mathematically rigorous continuous-time filter in infinite dimensions:
	\begin{enumerate}
		\item \textbf{Natural Gradient Geometry via $[G_p^{-1}]$}: Equation \eqref{eq:horizontal_base_evolution_thm} shows that the base update $dp_k(t)$ is automatically pre-conditioned by the inverse Fisher matrix $[G_p^{-1}]_{ki}$. This proves that $p(t)$ moves along the natural Riemannian gradient of maximum likelihood update on $\mathcal{B}$, preserving coordinate invariance across arbitrary non-linear parameter transformations.
		\item \textbf{Exact Energy Conservation \& Zero Null-Space Leakage}: Equation \eqref{eq:pythagorean_objective} proves that score variance energy obeys exact Pythagorean conservation:
		\begin{equation}
			\underbrace{\left\| V_{\text{DMZ}}(t) \right\|_{g_{f_t}}^2}_{\text{Total DMZ Flow Energy}} = \underbrace{\left\| P_{f_t}^H(V_{\text{DMZ}}(t)) \right\|_{g_{f_t}}^2}_{\text{Base Filter Kinetic Energy on } \mathcal{B}} + \underbrace{\frac{d\mathcal{T}_{\text{AAT}}(t)}{dt}}_{\text{Quarantined Fiber Energy}}.
			\label{eq:pythagorean_energy_law}
		\end{equation}
		Classical filters truncate extra-algebraic modes into the adjoint null-space, destroying probability conservation and causing filter divergence. Theorem~\ref{thm:smg_yau_yau_filter_optimality_and_dynamics} proves that SMG retains 100\% of score energy---storing unclosed Lie bracket stress in $\mathcal{T}_{\text{AAT}}(t)$ while maintaining an exact, unpolluted horizontal state trajectory $p(t) \in \mathcal{B}$.
	\end{enumerate}
\end{remark}

\begin{remark}[Continuous Topological Stability under Perturbations]
	Unlike classical estimation algebra solvers, which collapse discontinuously under infinitesimal non-linear model flaws ($\epsilon > 0 \implies \operatorname{dim}(\mathcal{E}) = \infty$), Theorem \ref{thm:smg_yau_yau_filter_optimality_and_dynamics} guarantees that the SMG-Yau-Yau filter is continuously stable. An infinitesimal model flaw $\epsilon \cdot \eta(x)$ generates a bounded vertical gauge residual $\|\omega_{f}(\epsilon \eta)\|_{\Phi, P_f} \sim \mathcal{O}(\epsilon)$, which accumulates smoothly in $\mathcal{T}_{\text{AAT}}(t)$ without causing numerical explosion on the base manifold $\mathcal{B}$.
\end{remark}

\subsubsection{A Simple Structural Comparison between the Two Filterings}
\label{subsec:simple_structural_comparison}

To highlight the fundamental structural and geometric differences between classical non-linear filtering schemes (including classical Yau-Yau filtering, Extended Kalman Filters, and standard projection/truncation algorithms) and the Statistically Meaningful Geometry (SMG) Yau-Yau Orlicz fiber bundle framework, we summarize their mathematical mechanics, structural assumptions, and computational complexities in Table~\ref{tab:structural_comparison}.

\begin{table}[htbp]
	\centering
	\small
	\caption{Structural Comparison: Classical Non-Linear Filtering vs.\ SMG-Yau-Yau Fiber Framework}
	\label{tab:structural_comparison}
	\begin{tabular}{p{3.6cm} p{5.5cm} p{5.7cm}}
		\toprule
		\textbf{Structural Dimension} & \textbf{Classical Projection / Truncation Filters} & \textbf{SMG-Yau-Yau Orlicz Fiber Filter} \\
		\midrule
		\textbf{Lie Algebra Assumption} & Requires $\dim(\mathcal{E}) < \infty$; collapses if $\dim(\mathcal{E}) = \infty$. & Operates seamlessly under $\dim(\mathcal{E}) = \infty$. \\
		\addlinespace
		\textbf{Extra-Algebraic Modes} & Artificially truncated ($[E_i, E_j] \mapsto 0$). & Quarantined in vertical subbundle $\mathrm{SID}_f = \ker(d\pi_f)$. \\
		\addlinespace
		\textbf{Null-Space Leakage} & Severe; energy leaks into $\mathrm{Null}(\mathcal{E}^*)$, causing filter divergence. & Zero ($v_{\mathrm{OOD}} \equiv 0$); base $\mathcal{B}$ is geometrically firewalled. \\
		\addlinespace
		\textbf{Score Energy Conservation} & Violated; score information is discarded. & $100\%$ metric conservation via Pythagorean split ($g_f$). \\
		\addlinespace
		\textbf{Curvature Tracking} & Assumes flat connection ($\Omega_f \equiv 0$). & Explicitly tracks Ehresmann curvature $\Omega_f$ via $\mathcal{T}_{\mathrm{AAT}}(t)$. \\
		\addlinespace
		\textbf{Online Complexity} & Exponential explosion $\mathcal{O}(\infty)$ or numerical grid crash. & Strict finite complexity $\mathcal{O}(d^3 + d^2 N_{\mathrm{quad}})$ on $\mathcal{B}$. \\
		\bottomrule
	\end{tabular}
\end{table}
\newpage

The primary paradigm shift lies in how non-closed Lie bracket commutators are treated. Classical filtering models force global algebraic closure by truncating higher-order operator variations, which systematically dumps unmonitored probability score energy into the adjoint null-space $\mathrm{Null}(\mathcal{E}^*)$ and results in stealth, catastrophic filter divergence. 

In contrast, the SMG-Yau-Yau framework equips the non-parametric statistical manifold $\mathcal{M}$ with an Ehresmann connection 1-form $\omega_f$. This structure orthogonally quarantines extra-algebraic operator stress within the vertical gauge fiber $\mathrm{SID}_f = \ker(d\pi_f)$ while driving exact, low-dimensional natural gradient updates on the identifiable base manifold $\mathcal{B}$. Consequently, the filter guarantees strict score energy conservation and full computational tractability even under infinite algebraic complexity ($\dim(\mathcal{E}) = \infty$).

\subsection{Boundary Collapse to Classical Yau-Yau (Finite-Dimensional Closure)}
\label{subsec:boundary_collapse_classical_yau_yau}

We first prove that the Statistically Meaningful Geometry (\textsf{SMG}) framework generalizes classical Yau-Yau filtering, reducing identically to classical Wei-Norman dynamics whenever the vertical fiber curvature vanishes ($\Omega_f \equiv 0$).

\begin{theorem}[Finite-Dimensional Boundary Collapse and Classical Yau-Yau Equivalence]
	\label{thm:classical_equivalence}
	Let $\mathcal{B}_{\mathrm{SMG}} = (\mathcal{M}, \mathcal{B}, \pi, \mathcal{V}, \mathcal{H}, \omega_f, g_f)$ be the non-parametric Orlicz statistical fiber bundle over the identifiable $d$-dimensional statistical base manifold $(\mathcal{B}, g_{\mathcal{B}})$. Suppose the continuous-time system dynamics defined by the drift generator $\mathcal{L}_0^*$ and observation functions $\{h_1, \dots, h_m\}$ generate a finite-dimensional estimation Lie algebra:
	\begin{equation}
		\mathcal{E} \triangleq \mathrm{Lie}(\mathcal{L}_0^*, h_1, \dots, h_m) = \mathrm{span}_{\mathbb{R}}\{E_1, E_2, \dots, E_d\}, \quad \dim(\mathcal{E}) = d < \infty.
	\end{equation}
	Then, the continuous \textsf{SMG-Yau-Yau} filter system collapses identically and completely into the classical Yau-Yau (Wei-Norman) non-linear filter. Specifically:
	\begin{enumerate}
		\item \textbf{Trivialization of Gauge Fiber and Connection:} The vertical structural internal space vanishes identically across the active density state space:
		\begin{equation}
			\operatorname{SID}_f \equiv \ker(d\pi_f) = \{0\}, \quad \forall f \in \mathcal{M},
		\end{equation}
		forcing the Ehresmann connection 1-form $\omega_f \equiv 0$, the Ehresmann curvature 2-form $\Omega_f \equiv 0$, and the canonical horizontal projector to reduce to the identity map: $P_f^H \equiv \mathcal{I}_{T_f\mathcal{M}}$.
		
		\item \textbf{Vanishing Active Tension and Energy Leakage:} The rate of Active Acausal Tension ($\mathcal{T}_{\mathrm{AAT}}$) accumulation evaluates strictly to zero:
		\begin{equation}
			\frac{d\mathcal{T}_{\mathrm{AAT}}(t)}{dt} = \|\omega_{f_t}(V_{\mathrm{DMZ}}(t))\|_{g_{f_t}}^2 \equiv 0 \implies \mathcal{T}_{\mathrm{AAT}}(t) \equiv 0, \quad \forall t \ge 0,
		\end{equation}
		guaranteeing zero dynamic energy leakage into the adjoint null-space ($v_{\mathrm{OOD}}(t) \equiv 0$).
		
		\item \textbf{Exact Reduction to Wei-Norman Parameter SDEs:} The horizontal base parameter dynamics $p(t) = \pi(f_t) \in \mathcal{B}$ decouple from all fiber metrics and reduce identically to the classical non-singular Wei-Norman system of stochastic differential equations for the Lie coefficient vector $g(t) \in \mathbb{R}^d$:
		\begin{equation}
			M(g(t)) \, dg(t) = \alpha^0 \, dt + \sum_{l=1}^m \alpha^l \circ dY_l(t),
		\end{equation}
		where $M(g) \in \mathbb{R}^{d \times d}$ is the real-analytic Wei-Norman solvability matrix generated by the structure constants $C_{ij}^k$ of $\mathcal{E}$, and $\alpha^l \in \mathbb{R}^d$ (for $l = 0, \dots, m$) is the constant expansion coefficient vector of the $l$-th observation multiplier function $h_l(x)$ expressed in terms of the basis of the finite-dimensional estimation Lie algebra $\mathcal{E}$.
	\end{enumerate}
\end{theorem}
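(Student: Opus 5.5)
The plan is to read the hypothesis $\dim(\mathcal{E}) = d < \infty$ through Theorem~\ref{thm:wei_norman_exact}, together with the Statistical Sufficiency Assumption \eqref{eq:log_density_sufficiency}. This puts the active state space on the $d$-dimensional Lie-group orbit $\mathcal{M}_{\mathcal{E}} \triangleq \{\exp(g_1E_1)\cdots\exp(g_dE_d)\sigma(0,\cdot)\}$ from Remark~\ref{rem:theorem1_insights_smg_yau_yau}. I will then show that, on this orbit, the submersion $\pi$ of Assumption~\ref{ass:smg_filtering_submersion} restricts to a local diffeomorphism onto $\mathcal{B}$. The three claims of the theorem then follow almost formally from the definitions of $\omega_f$, $P^H_f$ and $\Omega_f$ in Definition~\ref{def:ehresmann_connection}.

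\textbf{Step 1: the tangent space of the orbit.} By \eqref{eq:log_density_sufficiency}, every velocity $V_{\mathrm{DMZ}}(t) = d\sigma/\sigma$ is a combination of the score generators. Centering under $f_t$ gives
\[
V_{\mathrm{DMZ}}(t) \in \operatorname{span}\{E_k - \mathbb{E}_{f_t}[E_k]\}_{k=1}^d \triangleq \mathcal{T}_{f_t}.
\]
This is the $d$-dimensional tangent space of $\mathcal{M}_{\mathcal{E}}$ at $f_t$, and the zero truncation residual of Theorem~\ref{thm:wei_norman_exact}(ii) confirms that nothing lies outside it. I will choose the gauge section $s$ of (A4) to be the Wei-Norman parametrization $g \mapsto f_{g}$, with $p=g$. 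Proposition~\ref{prop:inherited_score_regularity}(ii)(b) then makes the Fisher scores $\phi_i = \partial_{g_i}\ln f_g$ a basis of $\mathcal{T}_f$. Theorem~\ref{thm:quotient_base_geometry_and_curvature}(i) identifies this span with $\operatorname{SVD}\chi_f$. Hence $\mathcal{T}_f = \operatorname{SVD}\chi_f$, and the restriction of $d\pi_f$ to $\mathcal{T}_f$ is an isometric isomorphism onto $T_p\mathcal{B}$ by (A3). Its kernel is therefore trivial, which is the meaning I will give to $\operatorname{SID}_f = \{0\}$ on the active state space.

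\textbf{Step 2: the formal consequences.} By (i) of Definition~\ref{def:ehresmann_connection}, $\omega_f$ takes values in the trivial space, so $\omega_f \equiv 0$. Then $P^H_f = \mathcal{I} - \omega_f = \mathcal{I}$, and $\Omega_f(U,V) = \omega_f([P^HU, P^HV]) = 0$. Substituting into Theorem~\ref{thm:smg_yau_yau_filter_optimality_and_dynamics}(II)(ii) gives $d\mathcal{T}_{\mathrm{AAT}}/dt \equiv 0$, and $\mathcal{T}_{\mathrm{AAT}}(0)=0$ yields $\mathcal{T}_{\mathrm{AAT}} \equiv 0$. The statement $v_{\mathrm{OOD}} \equiv 0$ is quoted directly from Theorem~\ref{thm:wei_norman_exact}(ii).

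For item 3, I will start from the horizontal equation \eqref{eq:horizontal_base_evolution_thm} in the coordinates $p=g$. Since $P^H = \mathcal{I}$, the right-hand side is just the $G^{-1}$-weighted expansion of $\mathcal{L}_0^*\sigma/\sigma$ and $h_j$ in the score basis $\phi_i$. I will then differentiate the product-of-exponentials ansatz \eqref{eq:wei_norman_product_sec12}, using the adjoint identity $e^{gE_i}E_je^{-gE_i} = \exp(g\,\mathrm{ad}_{E_i})E_j$. This writes each $\phi_i$ as $\sum_k M_{ki}(g)\,E_k$ plus a constant, where the entries of $M(g)$ are real-analytic in $g$ and built from the structure constants $C^k_{ij}$. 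Expanding $\mathcal{L}_0^*$ and $h_l$ in the basis $E_k$ gives the constant vectors $\alpha^l$. Matching coefficients in \eqref{eq:horizontal_base_evolution_thm}, with the Fisher factor $G = M^{\top}\operatorname{Cov}(E)M$ cancelling against the projection, yields $M(g)\,dg = \alpha^0dt + \sum_l \alpha^l\circ dY_l$. Non-singularity of $M$ near $g=0$ is inherited from $M(0)=I$, as in \cite{wei1963}.

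\textbf{Main obstacle.} The main obstacle is reconciling item 1 as literally stated, with $\operatorname{SID}_f=\{0\}$ for all $f\in\mathcal{M}$, with Assumption~\ref{ass:smg_filtering_submersion}. Since $\mathcal{M}$ is infinite-dimensional and $\dim\mathcal{B}=d$, the kernel of a submersion cannot vanish on all of $T_f\mathcal{M}$. I would therefore restrict the bundle to $\mathcal{M}_{\mathcal{E}}$, that is, pull it back along the inclusion of the orbit, and prove the statement there. This needs an explicit justification that the DMZ flow never leaves $\mathcal{M}_{\mathcal{E}}$, which is exactly Theorem~\ref{thm:wei_norman_exact}(i). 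A secondary check is that the cancellation in Step 2 is exact, since $\operatorname{Cov}_{f}(E)$ must be invertible. This holds by the linear independence in Proposition~\ref{prop:inherited_score_regularity}(ii)(b).
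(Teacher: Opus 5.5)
Your outline matches the paper's proof in structure. The Wei--Norman orbit places $V_{\mathrm{DMZ}}(t)$ in $\operatorname{span}\{E_k\}$, that span is identified with $\operatorname{SVD}\chi_{f_t}$, and the vertical space is declared trivial. Items 1--2 then follow formally, with $v_{\mathrm{OOD}}\equiv 0$ quoted from Theorem~\ref{thm:wei_norman_exact}, and item 3 is coefficient matching in the basis $E_k$.

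The substantive difference is in item 1. The paper argues that no nonzero vector of the full $T_{f_t}\mathcal{M}$ is annihilated by $d\pi_{f_t}$, and concludes that $\operatorname{SID}_{f_t}=(\operatorname{SVD}\chi_{f_t})^{\perp g_f}=\{0\}$ there. As you note, this cannot coexist with (A1): the fibres are infinite-dimensional, and $T_f\mathcal{M}\cong L_0^\Phi(P_f)$ is not $d$-dimensional. Pulling the bundle back to the orbit is what gives item 1 real content. The cost is that you prove it along the active trajectory, not for every $f\in\mathcal{M}$ as literally stated.

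For item 3 the routes also differ. The paper substitutes $P^H_f=\mathcal{I}$ into the coframe form \eqref{eq:horizontal_filter_ode_sec2} and reruns the operator-coefficient matching of Part A of Theorem~\ref{thm:wei_norman_exact}. You go through the natural-gradient form \eqref{eq:horizontal_base_evolution_thm} with $G=M^{\top}\operatorname{Cov}_f(E)\,M$. This is more explicit about why the Fisher factor drops out. It also exposes a condition the paper never states: $1,E_1,\dots,E_d$ must be linearly independent in $L^2(P_f)$.

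One caution concerns your Step 1, where you ``choose'' $s$ to be the Wei--Norman map. (A4) only asserts that some section exists, so this choice presupposes $\pi(f_g)=g$ on the orbit, i.e.\ that the orbit is a local section of $\pi$. The paper makes the same assumption silently when it writes $\operatorname{span}\{E_k\}\equiv\operatorname{SVD}\chi_{f_t}$. The assumption is not automatic. If the identity operator lies in $\mathcal{E}$, as in the Kalman--Bucy algebra $\operatorname{span}\{\mathcal{L}_0^*,x,\partial_x,1\}$, the centred generators are dependent. The orbit of normalized densities then has dimension below $d$. You should state this as a hypothesis of the collapse theorem rather than present it as a gauge choice.
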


\begin{remark}[Epistemological Bridge to Classical Control Theory]
	\label{rem:epistemological_bridge_classical}
	Theorem~\ref{thm:classical_equivalence} clarifies the precise mathematical role of classical Yau-Yau filtering within modern information geometry. Classical estimation algebra theory relies on global operator Lie-group symmetries to force a flat background connection ($\Omega_f \equiv 0$). When a physical system possesses exact finite Lie closure ($\dim(\mathcal{E}) < \infty$), the Orlicz fiber bundle trivializes, and the \textsf{SMG-Yau-Yau} filter recovers the exact classical ODE solution with zero information loss. 
	
	When generic physical non-linearities drive $\dim(\mathcal{E}) \to \infty$, classical filtering collapses due to null-space leakage ($v_{\mathrm{OOD}} \neq 0$). In contrast, the \textsf{SMG-Yau-Yau} framework smoothly activates its vertical fiber space $\operatorname{SID}_f$, utilizing the connection $\omega_f$ to orthogonally quarantine extra-algebraic modes while preserving 100\% of score variance energy in $\mathcal{T}_{\mathrm{AAT}}(t)$.
\end{remark}

\begin{corollary}[Euclidean Geometric Collapse of the SMG-Yau-Yau Filter]
	\label{lem:collapse_SMG_to_Yau_Yau}
	Let the statistical base manifold $\mathcal{B}$ be a flat Euclidean space $\mathbb{R}^d$ with standard Cartesian coordinates $x = (x^1, \dots, x^d)$. Assume the statistical fiber bundle admits a trivial vertical gauge space ($\mathcal{V}_f = \{0\}$ for all $f \in \mathcal{M}$). Under these conditions, the SMG-Yau-Yau Filter System reduces exactly to the classical Yau-Yau (Duncan-Mortensen-Zakai) SPDE.
\end{corollary}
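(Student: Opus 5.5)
The plan is to read the corollary as the degenerate case of the coupled system in Definition~\ref{def:smg_yau_yau_filter_sys_reconstructed} and to push the hypothesis $\mathcal{V}_f=\{0\}$ through every operator in it. The first step is purely algebraic. Since $\operatorname{SID}_f=\ker(d\pi_f)=\{0\}$ and the connection $1$-form takes values in $\operatorname{SID}_f$, we get $\omega_f\equiv 0$. Then \eqref{eq:horizontal_projector_sec2} gives $P_f^H=\mathcal{I}_{T_f\mathcal{M}}$, Lemma~\ref{lem:tangent_space_decomposition_sec2} gives $\operatorname{SVD}\chi_f=T_f\mathcal{M}$, and \eqref{eq:ehresmann_curvature_sec2} gives $\Omega_f\equiv 0$. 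Substituting into \eqref{eq:vertical_gauge_sde_sec2} makes the vertical gauge SDE read $d\sigma_{\mathcal{V}}\equiv 0$, so it drops out, and $\mathcal{T}_{\mathrm{AAT}}\equiv 0$. This is exactly the mechanism already used for item~1 of Theorem~\ref{thm:classical_equivalence}, and I would cite it rather than redo it.

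The second step uses the trivial kernel. By (A3), $d\pi_f$ restricted to $\operatorname{SVD}\chi_f=T_f\mathcal{M}$ is a linear isometry onto $T_{\pi(f)}\mathcal{B}$. A trivial kernel makes $d\pi_f$ injective, so $\pi$ is a local diffeomorphism. Together with the local section $s$ of (A4) and Proposition~\ref{prop:inherited_score_regularity}(i), this gives $\mathcal{M}$ locally equal to $s(U)$, and the map $p\mapsto f_p$ becomes a genuine coordinate chart on the density state space. With $\mathcal{B}=\mathbb{R}^d$ flat and Cartesian, I would take the frame $e_i=\partial/\partial x^i$ with coframe $\theta^k=dx^k$. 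Flatness is used only to make this frame global and to remove Christoffel corrections, consistent with Theorem~\ref{thm:quotient_base_geometry_and_curvature}(ii).

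The third step is to rewrite \eqref{eq:horizontal_filter_ode_sec2} in this chart. With $P^H=\mathcal{I}$, it becomes $dx^k=\langle d\pi_f(\mathcal{L}_0^*\sigma/\sigma),dx^k\rangle dt+\sum_j\langle d\pi_f(h_j),dx^k\rangle\circ dY_j$. Applying the horizontal lift $ds_x=(d\pi_f)^{-1}$ gives $\sum_k\phi_k\,dx^k=\mathcal{L}_0^*\sigma/\sigma\,dt+\sum_j h_j\circ dY_j=V_{\mathrm{DMZ}}$ as an identity in $T_f\mathcal{M}$; note that $\sum_k\phi_k\,dx^k$ is $d\ln f_{x(t)}$. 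Multiplying by $\sigma$ then recovers \eqref{eq:dmz_stratonovich_repeat}, so the filter is literally the DMZ SPDE written in base coordinates. Conversely, any DMZ solution stays in $s(U)$ and projects to a solution of the horizontal ODE, which gives equivalence in both directions.

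The main obstacle is normalization. $V_{\mathrm{DMZ}}=d\sigma/\sigma$ is not centered, whereas $T_f\mathcal{M}$ consists of zero-mean scores, so the splitting must be applied to the centered velocity $d\ln f_t=V_{\mathrm{DMZ}}-\mathbb{E}_{f_t}[V_{\mathrm{DMZ}}]$. I would handle this by tracking the scalar $\ln\int\sigma\,d\mu$ separately as a spatially constant term $c(t)$, exactly as in \eqref{eq:log_density_sufficiency}, and then verify that reinserting it reproduces the unnormalized Zakai equation rather than the Kushner--Stratonovich equation.

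A secondary concern is that $\dim T_f\mathcal{M}=d$ is forced once the kernel is trivial. This is consistent only because the hypothesis is strong, and it should be remarked upon rather than proved.
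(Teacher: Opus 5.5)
Your proposal is correct and follows the same route as the paper's proof: set $\theta^k=dx^k$, note $\omega_f\equiv 0$ so that $d\sigma_{\mathcal{V}}\equiv 0$ and $P_f^H=\mathcal{I}$, then reassemble the horizontal equations into the DMZ SPDE. Your lift $ds=(d\pi_f)^{-1}$, which gives $\sum_k\phi_k\,dx^k=d\ln f_t$, together with your tracking of the normalizing constant, makes rigorous what the paper compresses into the single phrase ``summing over all $d$ dimensions.'' One correction to your closing remark: under Assumption~\ref{ass:orlicz_structure_sec2}, $T_f\mathcal{M}\cong L_0^\Phi(P_f)$ is infinite-dimensional, so $\ker(d\pi_f)=\{0\}$ with $d<\infty$ is not merely a strong hypothesis but an incompatible one. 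The corollary (in the paper as well) therefore only has content once $\mathcal{M}$ is replaced by the $d$-dimensional image $s(U)$, which your Step~2 does implicitly.
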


\begin{proof}[Proof of Corollary \ref{lem:collapse_SMG_to_Yau_Yau}]
	Since $\mathcal{B} = \mathbb{R}^d$, the coframe elements simplify to standard exact differentials $\theta^k = dx^k$. Triviality of the vertical space ($\mathcal{V}_f = \{0\}$) implies that the connection 1-form vanishes identically ($\omega_{f(t)} \equiv 0$), forcing $d\sigma_{\mathcal{V}}(t, x) \equiv 0$ in \eqref{eq:vertical_gauge_sde_sec2}. Consequently, $P_{f(t)}^H = \mathcal{I}_{T_f\mathcal{M}}$, and $d\pi_{f(t)}$ acts as the identity mapping on coordinate components. Summing \eqref{eq:horizontal_filter_ode_sec2} over all $d$ dimensions reproduces the unconstrained DMZ SPDE:
	\begin{equation}
		d\sigma(t, x) = \mathcal{L}_0^*\sigma(t, x) \, dt + \sum_{j=1}^m h_j(x)\sigma(t, x) \, dY_j(t).
	\end{equation}
\end{proof}

\subsection{Quarantining Mechanics Beyond Classical \& Curvature Bounds}
\label{subsec:quarantining_mechanics_beyond_classical}

We now detail the fundamental dynamic state equations governing the SMG-Yau-Yau filter on $\mathcal{B}_{\text{SMG}}$ under generic infinite Lie algebraic expansion ($\operatorname{dim}(\mathcal{E}) = \infty$). The unconstrained Duncan-Mortensen-Zakai (DMZ) flow $\sigma(t, x) \in \mathcal{M}$ is projected via $P_f^H = \mathcal{I} - \omega_f$ into a finite-dimensional horizontal base trajectory $p(t) \in \mathcal{B}$ and an infinite-dimensional vertical gauge drift $\sigma_{\mathcal{V}}(t, x) \in \operatorname{SID}_{f(t)}$.

\begin{lemma}[Vertical Connection Identity]
	\label{lem:vertical_connection_identity}
	Let $\mathcal{B}_{\text{SMG}} = (\mathcal{M}, \mathcal{B}, \pi, \mathcal{V}, \mathcal{H}, \omega_f, g_f)$ be the SMG statistical fiber bundle. For any vertical score vector fields $E_i, E_j \in \Gamma(\operatorname{SID})$ and any density state section $\sigma \in \mathcal{M}$, the Lie bracket $[E_i, E_j]\sigma \in \operatorname{SID}_\sigma$ and satisfies:
	\begin{equation}
		\omega_f([E_i, E_j]\sigma) = [E_i, E_j]\sigma. \label{eq:vertical_connection_identity}
	\end{equation}
\end{lemma}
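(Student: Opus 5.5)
The claim has two parts. First, the bracket $[E_i,E_j]$ of two vertical vector fields is again vertical. Second, once that is known, the identity $\omega_f([E_i,E_j]\sigma)=[E_i,E_j]\sigma$ is immediate from axiom (i) of Definition~\ref{def:ehresmann_connection}, since $\omega_f$ restricts to the identity on $\operatorname{SID}_f$. So the whole argument reduces to closure of $\Gamma(\operatorname{SID})$ under the Lie bracket, i.e.\ integrability of the vertical distribution. I will take the bracket to be the vector-field Lie bracket on $\mathcal{M}$ appearing in the curvature formula of Definition~\ref{def:ehresmann_connection}, with $[E_i,E_j]\sigma$ meaning its value at the state $\sigma$.

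\textbf{Step 1: $\pi$-relatedness.} Let $E_i,E_j$ be smooth sections with $E_i(f),E_j(f)\in\ker(d\pi_f)$ for all $f$, which is (A2) of Assumption~\ref{ass:smg_filtering_submersion}. Then each $E_i$ is $\pi$-related to the zero vector field on $\mathcal{B}$, because $d\pi_f(E_i(f))=0=0_{\pi(f)}$.

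\textbf{Step 2: naturality of the bracket.} Brackets of $\pi$-related fields are $\pi$-related, so $d\pi_f([E_i,E_j](f))=[0,0]_{\pi(f)}=0$. To avoid relying on an abstract naturality theorem in the Banach--Orlicz setting, I will check this directly. Take a test function $\psi\in C^\infty(\mathcal{B})$, or simply the local coordinate functions $p_k$ supplied by (A4) and Proposition~\ref{prop:inherited_score_regularity}(i). Then
\begin{equation*}
d\pi([E_i,E_j])\psi=E_i\bigl(E_j(\psi\circ\pi)\bigr)-E_j\bigl(E_i(\psi\circ\pi)\bigr).
\end{equation*}
Since $E_j(\psi\circ\pi)=d\psi\circ d\pi(E_j)\equiv 0$ as a function on $\mathcal{M}$, both terms vanish. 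Applying this to $\psi=p_k$ for $k=1,\dots,d$ gives $d\pi_f([E_i,E_j](f))=0$, hence $[E_i,E_j]\sigma\in\operatorname{SID}_\sigma$. An alternative route is (A1): the fibers $\mathcal{F}_p$ are smooth closed submanifolds whose tangent spaces equal $\operatorname{SID}_f$. Vertical fields are then tangent to the fibers, and the bracket of fields tangent to a submanifold stays tangent.

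\textbf{Step 3: conclusion.} Invoke axiom (i) of Definition~\ref{def:ehresmann_connection}, which gives $\omega_f(v)=v$ for $v\in\operatorname{SID}_f$, with $v=[E_i,E_j]\sigma$. Lemma~\ref{lem:tangent_space_decomposition_sec2} confirms consistency: the $\operatorname{SVD}\chi_f$-component of this vector is zero, so $P_f^H([E_i,E_j]\sigma)=0$.

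\textbf{Main obstacle.} The difficulty is Step 2 in infinite dimensions. The Lie bracket of vector fields on a manifold modeled on $L_0^\Phi$ requires the fields to be $C^1$ (better, $C^2$ on the relevant domains), so that the identity $[E_i,E_j]\psi=E_iE_j\psi-E_jE_i\psi$ holds for scalar test functions. For this I would assume smoothness of the sections, matching the smoothness of $\pi$ in Assumption~\ref{ass:smg_filtering_submersion}. If some $E_i$ are only densely defined differential operators acting on $\sigma$, as in the DMZ setting, I would restrict to the smooth core $C^\infty$ densities and argue on that domain.
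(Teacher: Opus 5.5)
Your proposal is correct and takes the same route as the paper: vertical fields are $\pi$-related to the zero field, so naturality of the bracket gives $d\pi([E_i,E_j])=[0,0]=0$, and axiom (i) of the Ehresmann connection then yields $\omega_f([E_i,E_j]\sigma)=[E_i,E_j]\sigma$. Your test-function computation with $\psi\circ\pi$ just makes explicit the naturality fact the paper cites without proof, which is a worthwhile addition in the Banach--Orlicz setting (the coordinates $p_k$ come from the manifold structure of $\mathcal{B}$ itself, so you do not need (A4) for them).
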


\begin{proof}[Proof of Lemma \ref{lem:vertical_connection_identity}]
	Let $E_i, E_j \in \Gamma(\operatorname{SID})$. By definition of $\operatorname{SID} = \ker(d\pi)$, $d\pi(E_i) = \mathbf{0}$ and $d\pi(E_j) = \mathbf{0}$. By naturality of Lie brackets under submersions, $d\pi([E_i, E_j]) = [d\pi(E_i), d\pi(E_j)] = [\mathbf{0}, \mathbf{0}] = \mathbf{0}$. Thus, $[E_i, E_j] \in \Gamma(\operatorname{SID})$. Since $\omega_f$ restricts to the identity map on $\operatorname{SID}_f$, $\omega_f([E_i, E_j]\sigma) = [E_i, E_j]\sigma$.
\end{proof}

\begin{theorem}[Quarantining Theorem: Bounded Base Distortion under Infinite Lie Expansion]
	\label{thm:quarantining_theorem_sec2}
	Let $\mathcal{E} = \operatorname{Lie}(\mathcal{L}_0^*, h_1, \dots, h_m)$ be an infinite-dimensional estimation algebra ($\operatorname{dim}(\mathcal{E}) = \infty$). Suppose the generator updates contain infinite Lie bracket commutators $[E_i, E_j] \notin \operatorname{SVD}\chi_f$. Then under the SMG-Yau-Yau filter dynamics \eqref{eq:horizontal_filter_ode_sec2}--\eqref{eq:vertical_gauge_sde_sec2}:
	\begin{enumerate}
		\item[\rm (i)] The non-closed operator variations $[E_i, E_j]$ are mapped identically into the kernel of the horizontal projector:
		\begin{equation}
			P_f^H \left( [E_i, E_j] \sigma \right) \equiv 0, \quad \forall [E_i, E_j] \in \operatorname{SID}_f. \label{eq:quarantine_zero_projection_sec2}
		\end{equation}
		\item[\rm (ii)] The horizontal state vector $p(t) \in \mathcal{B}$ evolves strictly within the finite $d$-dimensional manifold $\mathcal{B}$ with zero truncation error or numerical explosion stemming from higher-order operator derivatives in $\operatorname{SID}_f$.
	\end{enumerate}
\end{theorem}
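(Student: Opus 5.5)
The plan is to derive both assertions directly from the algebraic structure of the connection in Definition~\ref{def:ehresmann_connection}, the orthogonal splitting of Lemma~\ref{lem:tangent_space_decomposition_sec2}, and the Vertical Connection Identity of Lemma~\ref{lem:vertical_connection_identity}. No estimate on the Lie algebra $\mathcal{E}$ is needed; the statement is a structural consequence of the bundle geometry once the unclosed commutators are known to lie in $\operatorname{SID}_f$. I will read the hypothesis $[E_i,E_j]\notin\operatorname{SVD}\chi_f$ together with the quantifier in \eqref{eq:quarantine_zero_projection_sec2} as asserting that the commutator modes under consideration are vertical, i.e.\ $[E_i,E_j]\sigma\in\operatorname{SID}_f$.

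For (i), the first step is to apply Lemma~\ref{lem:vertical_connection_identity}, which gives $\omega_f([E_i,E_j]\sigma)=[E_i,E_j]\sigma$. When $E_i,E_j$ are vertical fields this uses naturality of the bracket under $d\pi$; otherwise it uses the standing assumption that the commutator is vertical together with property (i) of $\omega_f$. The second step substitutes this into $P_f^H=\mathcal{I}-\omega_f$ from \eqref{eq:horizontal_projector_sec2}, which yields $P_f^H([E_i,E_j]\sigma)=[E_i,E_j]\sigma-[E_i,E_j]\sigma=0$. Equivalently, $P_f^H$ is the $g_f$-orthogonal projector onto $\operatorname{SVD}\chi_f=\operatorname{SID}_f^{\perp g_f}$, so it annihilates $\operatorname{SID}_f$.

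For (ii), I will take the horizontal filter equation \eqref{eq:horizontal_filter_ode_sec2}, or its coordinate form \eqref{eq:horizontal_base_evolution_thm} from Theorem~\ref{thm:smg_yau_yau_filter_optimality_and_dynamics}. The drift and diffusion coefficients $\mathcal{L}_0^*\sigma/\sigma$ and $h_j$ split by Theorem~\ref{thm:smg_yau_yau_spde_decomposition} into horizontal and vertical parts. The vertical part, which contains every commutator mode in $\operatorname{SID}_f$, is removed twice over: once by $P_f^H$ using (i), and once by $d\pi_f$ because $\operatorname{SID}_f=\ker d\pi_f$ by (A2). In the coordinate version, the pairing $\mathbb{E}_{f_t}[X_V\phi_i]=g_f(X_V,\phi_i)$ vanishes, since the scores $\phi_i$ span $\operatorname{SVD}\chi_{f_t}$ by Theorem~\ref{thm:quotient_base_geometry_and_curvature}(i). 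The evolution of $p(t)$ is therefore a $d$-dimensional SDE whose coefficients are $d$ inner products against the finite score frame, preconditioned by $G_p^{-1}$, and no higher-derivative operator enters. Well-posedness would follow from smoothness of $\pi$ and $s$ via (A4) and Proposition~\ref{prop:inherited_score_regularity}, together with standard SDE existence for locally Lipschitz coefficients.

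The main obstacle is (ii), specifically justifying the expectations $\mathbb{E}_{f_t}[(\mathcal{L}_0^*\sigma/\sigma)\,\phi_i]$. They must be finite and locally Lipschitz in $p$ even though $\mathcal{L}_0^*\sigma/\sigma$ involves second spatial derivatives and the full score velocity need not lie in $L^\Phi_0$. I would first try to require $V_{\mathrm{DMZ}}\in T_{f_t}\mathcal{M}$, which the paper already assumes in Theorem~\ref{thm:smg_yau_yau_spde_decomposition}. Then Cauchy--Schwarz in $L^2(P_{f_t})$, using $\phi_i\in L^2$ from Proposition~\ref{prop:inherited_score_regularity}(ii)(a), bounds each coefficient by $\|V_{\mathrm{DMZ}}\|_{g_f}\|\phi_i\|_{g_f}$, which is the sense in which the base distortion is bounded.
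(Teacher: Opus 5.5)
Your proposal is correct and follows essentially the same route as the paper. Part (i) substitutes the Vertical Connection Identity (Lemma~\ref{lem:vertical_connection_identity}) into $P_f^H=\mathcal{I}-\omega_f$, under the same reading of the hypothesis as $[E_i,E_j]\sigma\in\operatorname{SID}_f$ that the paper's quantifier assumes; part (ii) uses $\omega_f(\cdot)\in\operatorname{SID}_f=\ker(d\pi_f)$, so that $d\pi_f$ of the DMZ velocity depends only on its horizontal projection, and your coordinate check $g_f(X_V,\phi_i)=0$ and the Cauchy--Schwarz/well-posedness remarks go beyond what the paper's proof uses without conflicting with it.
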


\begin{proof}[Proof of Theorem \ref{thm:quarantining_theorem_sec2}]
	To prove (i), recall $P_f^H = \mathcal{I}_{T_f\mathcal{M}} - \omega_f$. By Lemma~\ref{lem:vertical_connection_identity}, any unclosed commutator $[E_i, E_j] \in \operatorname{SID}_f = \ker(d\pi_f)$ satisfies $\omega_f([E_i, E_j]\sigma) = [E_i, E_j]\sigma$. Applying $P_f^H$ yields:
	\begin{equation}
		P_f^H([E_i, E_j]\sigma) = (\mathcal{I} - \omega_f)([E_i, E_j]\sigma) = [E_i, E_j]\sigma - [E_i, E_j]\sigma \equiv 0.
	\end{equation}
	
	To prove (ii), applying $d\pi_f$ to the total DMZ operator $\mathcal{L}^* \sigma = \mathcal{L}_0^* \sigma dt + \sum h_k \sigma dY_k$:
	\begin{equation}
		d\pi_f(\mathcal{L}^* \sigma) = d\pi_f \left( P_f^H(\mathcal{L}^* \sigma) + \omega_f(\mathcal{L}^* \sigma) \right) = d\pi_f \left( P_f^H(\mathcal{L}^* \sigma) \right) + \mathbf{0},
	\end{equation}
	since $\omega_f(\mathcal{L}^* \sigma) \in \operatorname{SID}_f = \ker(d\pi_f)$. Thus, the dynamics of $p(t) = \pi(f(t))$ on $\mathcal{B}$ depend strictly on $P_f^H(\mathcal{L}^* \sigma)$, completely isolated from the infinite operator tails in $\operatorname{SID}_f$.
\end{proof}

\begin{remark}[Geometric Intuition and Structural Insights of Theorem~\ref{thm:quarantining_theorem_sec2}]
	\label{rem:quarantining_intuition}
	 The Quarantining Theorem provides the foundational geometric mechanism that resolves the four-decade-old infinite-dimensional Lie algebra crisis in continuous-time non-linear filtering. Its physical, structural, and computational intuition can be understood through four core dimensions:
	
	\begin{enumerate}
		\item \textbf{Elimination of Null-Space Leakage vs. Na\"ive Truncation}: 
		Classical continuous-time filtering approaches (such as Extended Kalman Filters, Taylor expansions, or ad-hoc projection filters) attempt to force Lie algebraic closure by artificially truncating higher-order operator commutators ($[E_i, E_j] \mapsto 0$). Discarding these extra-algebraic modes projects residual probability velocity into the unmonitored adjoint null-space $\operatorname{Null}(\mathcal{E}^*)$---a phenomenon known as \emph{null-space leakage} that induces variance underestimation, systemic bias, and catastrophic numerical divergence. In contrast, SMG never truncates or distorts the true operator dynamics; instead, it uses the Ehresmann connection $\omega_f$ to orthogonally separate observable base updates from unclosed operator variations.
		
		\item \textbf{Gauge-Theoretic Firewalling of Base Observables}: 
		The projection identity $P_f^H \left( [E_i, E_j]\sigma \right) \equiv 0$ operates as a strict geometric firewall on the statistical fiber bundle $\mathcal{B}_{\text{SMG}}$. Because the vertical subbundle $\operatorname{SID}_f \equiv \ker(d\pi_f)$ consists of all structural internal directions that leave macroscopic physical observables invariant, any infinite Lie algebraic expansion generated by system non-linearities is mapped identically to zero along the horizontal tangent space $\operatorname{SVD}\chi_f$. Consequently, the observable state vector $p(t) \in \mathcal{B}$ evolves cleanly without suffering from operator derivative pollution.
		
		\item \textbf{Exact Finite-Dimensional Computability in Infinite Dimensions}: 
		Although the complete state density $\sigma(t, x)$ lives on an infinite-dimensional Orlicz statistical manifold $\mathcal{M} = L_0^\Phi(P_f)$, the horizontal filtering differential equation for $p(t)$ is executed strictly on the identifiable, low-dimensional Riemannian base manifold $(\mathcal{B}, g_{\mathcal{B}})$ ($d \ll \infty$). The Quarantining Theorem guarantees that computing $dp_k(t)$ requires no infinite-dimensional matrix inversions or numerical grid discretizations of $\operatorname{SID}_f$, reducing the online computational complexity to $\mathcal{O}(d^3)$ operations on $\mathcal{B}$.
		
		\item \textbf{Conservation of Score Energy and Active Acausal Tension}: 
		The quarantined vertical variations inside $d\sigma_{\mathcal{V}}(t, x)$ are not lost or treated as numerical waste. Under the Fisher-Rao metric $g_f$, total score velocity obeys exact Pythagorean energy conservation:
		\begin{equation}
			\|V_{\text{DMZ}}(t)\|_{g_f}^2 = \|V_{\mathcal{H}}(t)\|_{g_f}^2 + \|V_{\mathcal{V}}(t)\|_{g_f}^2.
		\end{equation}
		The vertical energy stored in $\operatorname{SID}_f$ manifests geometrically as the Ehresmann connection curvature 2-form $\Omega_f$ (representing Active Acausal Tension $\mathcal{T}_{\text{AAT}}$). Rather than causing filter instability, this stored curvature energy acts as the fundamental physical catalyst for downstream dynamic manifold expansion and Gauge Symmetry Breaking (GSB) when system non-linearities breach topological capacity limits.
	\end{enumerate}
\end{remark}

\begin{lemma}[Curvature Obstruction and Geometric Holonomy Defect]
	\label{lem:lemma3_rigorous}
~

	Let $\mathcal{B}_{\text{SMG}} = (\mathcal{M}, \mathcal{B}, \pi, \mathcal{V}, \mathcal{H}, \omega_f, g_f)$ be the SMG fiber bundle. Let $\omega \in \Omega^1(\mathcal{M}; \operatorname{SID})$ be the Ehresmann connection 1-form, and $P^H = \mathcal{I} - \omega$ be the horizontal projector onto $\operatorname{SVD}\chi = \ker(\omega)$. Then:
	\begin{enumerate}
		\item[\rm (i)] \textbf{Exterior Derivative Identity:} For any smooth vector fields $U, V \in \Gamma(T\mathcal{M})$:
		\begin{equation}
			d\omega_f\left( P_f^H U, \, P_f^H V \right) = -\omega_f\left( \left[ P_f^H U, \, P_f^H V \right] \right) = -\Omega_f(U, V). \label{eq:exterior_derivative_identity}
		\end{equation}
		In particular, the curvature 2-Form $\Omega_f(U, V) \equiv 0$ if and only if the horizontal distribution $\operatorname{SVD}\chi$ is integrable.
		\item[\rm (ii)] \textbf{Geometric Holonomy Expansion along a Closed Base Loop:} Let $u, v \in \Gamma(T\mathcal{B})$ be commuting vector fields ($[u, v] = 0$) generating a closed loop $\gamma_\epsilon \subset \mathcal{B}$ of scale $\epsilon > 0$ around $p_0 = \pi(f_0)$. Let $\tilde{f}(\epsilon) \in \mathcal{M}$ be the endpoint of the horizontal lift of $\gamma_\epsilon$ starting at $f_0$. The vertical holonomy displacement $\Delta f(\epsilon) \triangleq \tilde{f}(\epsilon) - f_0 \in \operatorname{SID}_{f_0}$ satisfies:
		\begin{equation}
			\Delta f(\epsilon) = \epsilon^2 \Omega_{f_0}(u^H, v^H) + \mathcal{O}(\epsilon^3) = \epsilon^2 \omega_{f_0}\left( [u^H, v^H] \right) + \mathcal{O}(\epsilon^3). \label{eq:holonomy_expansion_lemma3}
		\end{equation}
	\end{enumerate}
\end{lemma}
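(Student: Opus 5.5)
For part (i), the plan is to use the invariant (Cartan) formula for the exterior derivative of the vector-valued 1-form $\omega$:
\[
d\omega(X,Y) = X\big(\omega(Y)\big) - Y\big(\omega(X)\big) - \omega([X,Y]),
\]
applied to the horizontal fields $X = P^H U$ and $Y = P^H V$. These lie in $\operatorname{SVD}\chi = \ker(\omega)$ by Definition~\ref{def:ehresmann_connection}(ii), so $\omega(X)\equiv 0$ and $\omega(Y)\equiv 0$ identically as functions on $\mathcal{M}$. Hence their directional derivatives vanish and $d\omega(P^H U, P^H V) = -\omega([P^H U, P^H V])$. By the definition \eqref{eq:ehresmann_curvature_sec2} this equals $-\Omega(U,V)$.

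The derivative terms are only meaningful on the infinite-dimensional Orlicz manifold because $\omega$ takes values in the vector bundle $\operatorname{SID}$. I would therefore interpret $X(\omega(Y))$ via the trivialisation supplied by the ambient affine chart $L_0^\Phi$, or equivalently via any linear connection on $\operatorname{SID}$. The result does not depend on this choice, since the terms being differentiated vanish identically.

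For the integrability equivalence I would proceed as follows:
\begin{itemize}
\item If $\Omega \equiv 0$, then for horizontal $X,Y$ we get $\omega([X,Y]) = \Omega(X,Y) = 0$, so $[X,Y]$ is horizontal; the distribution is involutive, hence integrable by Frobenius.
\item Conversely, integrability of $\operatorname{SVD}\chi$ gives involutivity, so $\omega([P^H U, P^H V]) = 0$.
\end{itemize}
The main caveat is that Frobenius on Banach manifolds requires $\operatorname{SVD}\chi$ to be a complemented smooth subbundle. Lemma~\ref{lem:tangent_space_decomposition_sec2} together with (A3) provides this, and $\operatorname{SVD}\chi$ is finite rank $d$.

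For part (ii), I would realise $\gamma_\epsilon$ as the boundary of the coordinate parallelogram obtained by composing the flows $\phi^u_\epsilon$, $\phi^v_\epsilon$, $\phi^u_{-\epsilon}$, $\phi^v_{-\epsilon}$. This closes exactly because $[u,v]=0$. The horizontal lift of this loop is then the corresponding composition of the flows of the horizontal lifts $u^H$ and $v^H$. Using (A4) together with the smoothness of $P^H$, these lifted fields are smooth and have local flows on $\mathcal{M}$.

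The standard second-order expansion of the group commutator of flows, obtained by Taylor expanding in a chart of $L_0^\Phi$, gives
\[
\tilde f(\epsilon) = f_0 + \epsilon^2 [u^H,v^H]_{f_0} + \mathcal{O}(\epsilon^3).
\]
The two remaining steps are:
\begin{itemize}
\item Since $d\pi([u^H,v^H]) = [u,v] = 0$ by $\pi$-relatedness of the lifts, the leading term is vertical. It therefore equals $\omega_{f_0}([u^H,v^H])$, which is $\Omega_{f_0}(u^H,v^H)$ by \eqref{eq:ehresmann_curvature_sec2} (using $P^H u^H = u^H$).
\item The endpoint lies exactly on the fiber $\pi^{-1}(p_0)$, so reading the displacement in $\operatorname{SID}_{f_0}$ only incurs errors of order $\mathcal{O}(\epsilon^3)$. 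This uses that the fiber is a smooth submanifold with tangent space $\operatorname{SID}_{f_0}$, by (A1)--(A2).
\end{itemize}

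I expect the main obstacle to be making the $\mathcal{O}(\epsilon^3)$ remainder rigorous in the Orlicz setting. This needs uniform $C^3$ bounds on the lifted flows near $f_0$, and an interpretation of the difference $\tilde f(\epsilon) - f_0$ as a chart difference, i.e.\ in exponential/log coordinates, rather than a literal subtraction of densities. I would handle this by working entirely in the Pistone–Sempi exponential chart centred at $f_0$, where $\mathcal{M}$ is an open set in $L_0^\Phi(P_{f_0})$. In that chart, Taylor's theorem for $C^\infty$ maps between Banach spaces applies directly.
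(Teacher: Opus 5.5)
Your proposal is correct and follows the paper's own argument. For (i) you apply Cartan's formula to the horizontal fields $P^H U, P^H V$, on which $\omega$ vanishes identically; for (ii) you expand the flow commutator $\Phi^{v^H}_{-\epsilon}\circ\Phi^{u^H}_{-\epsilon}\circ\Phi^{v^H}_{\epsilon}\circ\Phi^{u^H}_{\epsilon}(f_0)$ to second order and use $d\pi([u^H,v^H])=[u,v]=0$. Your extras are sound refinements of the same route: the Frobenius argument for the integrability equivalence (which the paper only asserts), independence of the chosen connection when evaluating $d\omega$ on horizontal pairs, and reading $\tilde f(\epsilon)-f_0$ in the exponential chart at $f_0$.
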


\begin{proof}[Proof of Lemma \ref{lem:lemma3_rigorous}]
	To prove (i), let $X = P_f^H U \in \Gamma(\mathcal{H})$ and $Y = P_f^H V \in \Gamma(\mathcal{H})$. Since $X, Y \in \ker(\omega)$, $\omega(X) \equiv 0$ and $\omega(Y) \equiv 0$. Applying Cartan's formula for the exterior derivative:
	\begin{equation}
		d\omega(X, Y) = X(\omega(Y)) - Y(\omega(X)) - \omega([X, Y]) = 0 - 0 - \omega([X, Y]) = -\Omega_f(U, V).
	\end{equation}
	To prove (ii), expanding the composite flow of horizontal lifts $\tilde{f}(\epsilon) = \Phi_{-\epsilon}^{v^H} \circ \Phi_{-\epsilon}^{u^H} \circ \Phi_{\epsilon}^{v^H} \circ \Phi_{\epsilon}^{u^H}(f_0)$ via the Baker-Campbell-Hausdorff (BCH) formula \cite{varadarajan1984lie} yields $\tilde{f}(\epsilon) = f_0 + \epsilon^2 [u^H, v^H]_{f_0} + \mathcal{O}(\epsilon^3)$. Since $[u, v] = 0$, $d\pi([u^H, v^H]) = [u, v] = 0$, proving $[u^H, v^H]_{f_0} \in \operatorname{SID}_{f_0}$. Evaluating $\omega_{f_0}$ confirms $[u^H, v^H]_{f_0} = \Omega_{f_0}(u^H, v^H)$.
\end{proof}

\begin{definition}[SMG Horizontal Filter Dynamics and Riemannian Geodesic Distance]
	\label{def:filter_dynamics_and_distance}
	Let $p_{\text{true}}(t) \triangleq \pi(f_{\text{true}}(t)) \in \mathcal{B}$ be the true projected state trajectory, and let $p_{\text{filter}}(t) \in \mathcal{B}$ be the macroscopic estimate calculated by \eqref{eq:horizontal_filter_ode_sec2}. The intrinsic Riemannian estimation error $\mathcal{E}_R(t)$ on $(\mathcal{B}, g_{\mathcal{B}})$ is defined as:
	\begin{equation}
		\mathcal{E}_R(t) \triangleq d_{\mathcal{B}}^2\Big(p_{\text{filter}}(t), \, p_{\text{true}}(t)\Big) = \inf_{\gamma(0)=p_{\text{filter}}, \gamma(1)=p_{\text{true}}} \int_0^1 g_{\mathcal{B}, \gamma(s)}\big(\dot{\gamma}(s), \dot{\gamma}(s)\big) \, ds, \label{eq:riemannian_error_def}
	\end{equation}
		where the critical path that achieves this infimum is a constant-speed geodesic connecting $p_{\text{filter}}(t)$ and $p_{\text{true}}(t)$ on the base $\mathcal{B}$.   
\end{definition}

\begin{definition}[Horizontal Velocity Lift and Pointwise Curvature Operator Norm]
	\label{def:velocity_lift_and_operator_norm}
	Let $V_{\mathcal{B}}(t) \triangleq \frac{d p_{\text{filter}}}{dt}(t) \in T_{p_{\text{filter}}(t)}\mathcal{B}$. Its horizontal lift $V_{\mathcal{B}}^H(t) \in \operatorname{SVD}\chi_{f_{\text{filter}}(t)}$ satisfies $d\pi_{f_{\text{filter}}(t)}(V_{\mathcal{B}}^H(t)) = V_{\mathcal{B}}(t)$. For any horizontal vector $X^H \in \operatorname{SVD}\chi_f$, the pointwise operator norm of $\Omega_f(X^H, \cdot)$ is:
	\begin{equation}
		\left\| \Omega_f(X^H, \cdot) \right\|_{g_f}^2 \triangleq \sup_{Y^H \in \operatorname{SVD}\chi_f, \|Y^H\|_{g_f}=1} \left\| \Omega_f(X^H, Y^H) \right\|_{g_f}^2.  \label{eq:curvature_operator_norm_def}
	\end{equation}

\end{definition}
\begin{remark}
	\label{rem:motivation_def7}
	Definition~\ref{def:velocity_lift_and_operator_norm} is introduced to bridge macroscopic parameter dynamics on the $d$-dimensional base manifold $(\mathcal{B}, g_{\mathcal{B}})$ with the infinite-dimensional connection curvature of the Orlicz statistical fiber bundle $(\mathcal{M}, g_f)$. Specifically, it fulfills three mathematical objectives:
	\begin{enumerate}
		\item \textbf{Lifting Base Trajectories to Score Space:} The filter estimate $p_{\mathrm{filter}}(t)$ evolves on $\mathcal{B}$ with velocity $V_{\mathcal{B}}(t) = \frac{dp_{\mathrm{filter}}}{dt} \in T_p\mathcal{B}$. Constructing its unique horizontal lift $V_{\mathcal{B}}^H(t) \in \mathrm{SVD}\chi_f$ allows macroscopic filter velocity to be evaluated directly within the infinite-dimensional tangent space $T_f\mathcal{M}$ of logarithmic score functions.
		
		\item \textbf{Quantifying Pointwise Curvature Stress:} The curvature 2-form $\Omega_f$ maps pairs of horizontal score vectors to vertical gauge variations in $\mathrm{SID}_f$. Taking the supremum over all unit horizontal directions $Y^H$ converts abstract non-integrability of the horizontal distribution ($\mathrm{SVD}\chi$) into a sharp, scalar measure $\|\Omega_f(X^H, \cdot)\|_{g_f}$ of worst-case extra-algebraic stress generated along direction $X^H$.
		
		\item \textbf{Driving Term for Geodesic Error Bounds:} This operator norm provides the exact scalar driving force required to solve the Jacobi field differential equation along minimizing geodesics on $\mathcal{B}$. It directly enables the sharp Riemannian estimation error bound $\mathcal{E}_R(t) \le C_1 e^{\lambda t} \int_0^t \|\Omega_{f(\tau)}(V_{\mathcal{B}}^H(\tau), \cdot)\|_{g_f}^2 d\tau$ in the following Theorem~\ref{thm:theorem5_rigorous}, proving that filter divergence under infinite Lie algebraic complexity ($\dim(\mathcal{E})=\infty$) is strictly controlled by integrated connection curvature.
	\end{enumerate}
\end{remark}

\begin{lemma}[Metric Invariance under Horizontal Lift]
	\label{lem:metric_invariance_under_horizontal_lift}
	Let $\pi: (\mathcal{M}, g_f) \to (\mathcal{B}, g_{\mathcal{B}})$ be a Riemannian submersion. For any smooth curve $\gamma: [0, 1] \to \mathcal{B}$ and its horizontal lift $\tilde{\gamma}: [0, 1] \to \mathcal{M}$ ($\dot{\tilde{\gamma}}(s) \in \operatorname{SVD}\chi_{\tilde{\gamma}(s)}$):
	\begin{equation}
		g_{\mathcal{B}, \gamma(s)}\big( \dot{\gamma}(s), \, \dot{\gamma}(s) \big) = g_{\tilde{\gamma}(s)}\big( \dot{\tilde{\gamma}}(s), \, \dot{\tilde{\gamma}}(s) \big), \quad \forall s \in [0, 1]. \label{eq:lift_isometry_identity}
	\end{equation}
\end{lemma}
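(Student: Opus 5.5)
The plan is to reduce the identity \eqref{eq:lift_isometry_identity} to a pointwise statement about tangent vectors, and then apply the isometry axiom \rm{(A3)} of Assumption~\ref{ass:smg_filtering_submersion}. The argument has three steps.

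\textbf{Step 1: the lift projects onto the base velocity.} By definition of the horizontal lift, $\pi \circ \tilde{\gamma} = \gamma$ on $[0,1]$. I differentiate this identity in $s$ using the chain rule for the smooth map $\pi$ between Banach (Orlicz-modelled) manifolds. This gives
\[
d\pi_{\tilde{\gamma}(s)}\big(\dot{\tilde{\gamma}}(s)\big) = \dot{\gamma}(s) \quad \text{for all } s \in [0,1].
\]
The only point needing care is that $\tilde{\gamma}$ is differentiable as a curve in $\mathcal{M}$. That holds because the lift is obtained by integrating the horizontal vector field $\dot{\gamma}^H$. Existence of that field follows from the splitting of Lemma~\ref{lem:tangent_space_decomposition_sec2}, together with the fact that $d\pi_f$ restricted to $\operatorname{SVD}\chi_f$ is a linear isomorphism (injective by \rm{(A2)}, surjective since $\pi$ is a submersion). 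I take existence of the lift as part of the hypothesis, as the statement does.

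\textbf{Step 2: apply the isometry pointwise.} By hypothesis, $\dot{\tilde{\gamma}}(s) \in \operatorname{SVD}\chi_{\tilde{\gamma}(s)}$. I apply \eqref{eq:metric_isometry_identity} with $h_1 = h_2 = \dot{\tilde{\gamma}}(s)$ at the point $f = \tilde{\gamma}(s)$. This gives
\[
g_{\mathcal{B},\pi(\tilde{\gamma}(s))}\big(d\pi(\dot{\tilde{\gamma}}), d\pi(\dot{\tilde{\gamma}})\big) = g_{\tilde{\gamma}(s)}\big(\dot{\tilde{\gamma}}, \dot{\tilde{\gamma}}\big).
\]
Substituting Step 1, and using $\pi(\tilde{\gamma}(s)) = \gamma(s)$, yields \eqref{eq:lift_isometry_identity} for every $s$.

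\textbf{Step 3: the length and energy consequence.} Integrating the pointwise identity over $[0,1]$ shows that the horizontal lift has the same energy and the same length as $\gamma$. This is the form in which the lemma is presumably used with Definition~\ref{def:filter_dynamics_and_distance}.

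\textbf{Main obstacle.} The only genuinely nontrivial point is Step 1 in infinite dimensions, namely guaranteeing that the horizontal lift exists and is a $C^1$ curve in the Orlicz manifold. If this needs justification, my first approach would be a local one. I would work in a chart given by the local section $s$ of \rm{(A4)} and write $\dot{\gamma}^H$ as the composition $(d\pi_f|_{\operatorname{SVD}\chi_f})^{-1}$ applied to $\dot{\gamma}$. Smoothness of the orthogonal projection $P_f^H$ in $f$ then gives a smooth time-dependent vector field, and the Picard--Lindelöf theorem on Banach spaces supplies local solutions. Compactness of $[0,1]$ extends these to the whole interval, provided the closed-fiber property \rm{(A1)} prevents escape.
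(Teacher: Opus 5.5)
Your Steps 1--2 are exactly the paper's proof: differentiate $\pi\circ\tilde{\gamma}=\gamma$ to get $d\pi_{\tilde{\gamma}(s)}(\dot{\tilde{\gamma}}(s))=\dot{\gamma}(s)$, then apply the isometry of $d\pi_f|_{\operatorname{SVD}\chi_f}$ from (A3) with $h_1=h_2=\dot{\tilde{\gamma}}(s)$. The discussion of existence and regularity of the lift is unnecessary, because the lemma takes the lift as given; if you did want to prove existence, closedness of the fibers in (A1) would not by itself prevent finite-time escape of the lift, and you would need a completeness-type hypothesis.
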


\begin{proof}[Proof of Lemma \ref{lem:metric_invariance_under_horizontal_lift}]
	Differentiating $\pi(\tilde{\gamma}(s)) = \gamma(s)$ yields $d\pi_{\tilde{\gamma}(s)}(\dot{\tilde{\gamma}}(s)) = \dot{\gamma}(s)$. Since $\pi$ is a Riemannian submersion, $d\pi_p|_{\operatorname{SVD}\chi_p}: \operatorname{SVD}\chi_p \to T_{\pi(p)}\mathcal{B}$ is an isometry. Evaluating $g_{\mathcal{B}}(d\pi(\dot{\tilde{\gamma}}), d\pi(\dot{\tilde{\gamma}}))$ gives $g_f(\dot{\tilde{\gamma}}, \dot{\tilde{\gamma}})$.
\end{proof}
The following Lemma derives the modified Jacobi evolution equation to model how estimation errors—represented by variations between the filter trajectory $p_{\text{filter}}(\tau)$ and the ground-truth state $p_{\text{true}}(\tau)$—propagate along horizontal geodesics on the statistical fiber bundle.

\begin{lemma}[Jacobi Equation on Statistical Fiber Bundles]
	\label{lem:modified_jacobi_equation}
	Let $(\mathcal{M}, g_f)$ be the statistical fiber bundle over the base manifold $\mathcal{B}$, and let $\alpha: I \times [0,1] \to \mathcal{M}$, $(\tau, s) \mapsto \alpha(\tau, s)$ be a smooth two-parameter variation of horizontal geodesics satisfying:
	\begin{enumerate}
		\item \textbf{Boundary Conditions:} For each fixed time/parameter $\tau \in I$, the map $s \mapsto \alpha(\tau, s)$ parametrizes a horizontal geodesic in $\mathcal{M}$ connecting the filter state estimate $p_{\mathrm{filter}}(\tau) = \pi(\alpha(\tau, 0))$ to the ground truth state $p_{\mathrm{true}}(\tau) = \pi(\alpha(\tau, 1))$.
		\item \textbf{Horizontal Velocity Field:} $T(\tau, s) \triangleq \frac{\partial \alpha}{\partial s} \in \mathrm{SVD}\chi_{\alpha(\tau,s)}$ is the horizontal tangent vector field along the geodesic (satisfying the geodesic equation $\nabla_T T = 0$).
		\item \textbf{Variational Jacobi Field:} $J(\tau, s) \triangleq \frac{\partial \alpha}{\partial \tau} \in T_{\alpha(\tau,s)}\mathcal{M}$ is the vector field measuring the point-wise variation between neighboring horizontal geodesics over time $\tau$.
	\end{enumerate}
	Then, along each geodesic parameter $s \in [0, 1]$, the variational field $J(s) \equiv J(\tau, s)$ obeys the modified Jacobi evolution equation:
	\begin{equation}
		\nabla_T^2 J(s) + R^{\mathcal{M}}\big(J(s), T(s)\big)T(s) = \Omega_{f(s)}\left(V_{\mathcal{B}}^H(\tau), T(s)\right)
		\label{eq:modified_jacobi_pde}
	\end{equation}
	where:
	\begin{itemize}
		\item $\nabla_T$ is the Levi-Civita covariant derivative along $T(s)$ associated with the Riemannian metric $g_f$.
		\item $R^{\mathcal{M}}(X, Y)Z = \nabla_X \nabla_Y Z - \nabla_Y \nabla_X Z - \nabla_{[X,Y]} Z$ is the Riemann curvature tensor on $(\mathcal{M}, g_f)$.
		\item $\Omega_{f(s)}$ represents the gauge curvature $2$-form of the connection on the fiber bundle, measuring the non-integrability of the horizontal distribution $\mathrm{SVD}\chi$.
		\item $V_{\mathcal{B}}^H(\tau) \in \mathrm{SVD}\chi_{\alpha(\tau,0)}$ is the unique horizontal lift of the base velocity vector field driving $p_{\mathrm{filter}}(\tau)$.
	\end{itemize}
\end{lemma}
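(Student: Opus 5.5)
Start from the classical derivation of the Jacobi equation for a two-parameter map $\alpha(\tau,s)$. Because $\partial_\tau$ and $\partial_s$ are coordinate fields on $I\times[0,1]$, their push-forwards commute, and the torsion-free Levi-Civita connection of $g_f$ gives the symmetry lemma $\nabla_T J = \nabla_J T$. Differentiating once more,
\begin{equation*}
\nabla_T\nabla_T J = \nabla_T\nabla_J T = \nabla_J\nabla_T T + R^{\mathcal{M}}(T,J)T ,
\end{equation*}
so that $\nabla_T^2 J + R^{\mathcal{M}}(J,T)T = \nabla_J(\nabla_T T)$. If each $s\mapsto\alpha(\tau,s)$ were an unconstrained $g_f$-geodesic, the right-hand side would vanish and we would recover the classical Jacobi equation. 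The whole content of the lemma therefore lies in identifying the source term $\nabla_J(\nabla_T T)$ with the gauge curvature $\Omega$.

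To do this, I would interpret ``horizontal geodesic'' as a curve that stays in $\operatorname{SVD}\chi$ and is geodesic for the constrained (horizontal) connection $\nabla^H_T T \triangleq P^H\nabla_T T = 0$. Then $\nabla_T T = \omega(\nabla_T T)$ is purely vertical.

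I would compute this vertical part using the O'Neill-type identity for Riemannian submersions. Lemma~\ref{lem:metric_invariance_under_horizontal_lift} and (A3) of Assumption~\ref{ass:smg_filtering_submersion} supply the isometry needed. For horizontal $X,Y$ one has $\omega(\nabla_X Y) = \tfrac12\,\omega([X,Y]) = \tfrac12\,\Omega(X,Y)$ by Definition~\ref{def:ehresmann_connection}. Since $\Omega$ is antisymmetric, $\omega(\nabla_T T)=0$ along a horizontal lift, which confirms that horizontal lifts of base geodesics are ambient geodesics.

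Differentiating in the $J$ direction, however, the field $T$ is only horizontal along the variation, not horizontal to second order. I would split $J = J^H + J^V$ and write
\begin{equation*}
\nabla_J(\nabla_T T) = \nabla_J\big(\omega(\nabla_T T)\big)=\omega\big(\nabla_J\nabla_T T\big)+(\nabla_J\omega)(\nabla_T T).
\end{equation*}
The second term vanishes because $\omega(\nabla_T T)\equiv 0$ along the family. The first term I would rewrite, using $\nabla_J T = \nabla_T J$ and the O'Neill identity again, as $\omega$ applied to the bracket of the horizontal parts.

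Evaluated at $s=0$, the horizontal part of $J$ is exactly the lifted filter velocity $V^H_{\mathcal B}(\tau)$, since $\pi(\alpha(\tau,0)) = p_{\mathrm{filter}}(\tau)$. Transporting it along the geodesic by the horizontal lift gives the source $\Omega_{f(s)}(V^H_{\mathcal B}(\tau),T(s))$. The normalization of the factor $\tfrac12$ against Definition~\ref{def:ehresmann_connection} has to be tracked. I expect to absorb it, together with the sign, into the convention of Lemma~\ref{lem:lemma3_rigorous}(i), namely $d\omega(X,Y) = -\Omega(X,Y)$.

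The main obstacle is this identification step. The term $\nabla_J\nabla_T T$ naturally produces $\Omega(J^H,T)$ evaluated at the moving point $\alpha(\tau,s)$, rather than $\Omega(V^H_{\mathcal B}(\tau),T)$ with the base velocity frozen at $s=0$. The difference involves derivatives of $J^H$ along $s$ and vertical components $J^V$ that feed back through the A-tensor.

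My first attempt would be to choose the variation $\alpha$ so that $J^H(\tau,s)$ is the horizontal lift of a base field extending $V_{\mathcal B}(\tau)$ parallel along the base geodesic. This is legitimate because only the endpoints of $\alpha$ are constrained. With that choice, the remaining $J^V$ contributions would be shown to be $\omega$-annihilated or of lower order. If that fails, the fallback is to state the equation modulo terms quadratic in $J^V$, which suffices for the subsequent Gronwall-type error bound in Theorem~\ref{thm:theorem5_rigorous}. The infinite-dimensional setting causes no additional difficulty, since all operations are pointwise on the Hilbert completion of $T_f\mathcal{M}$ under $g_f$, where the Levi-Civita connection exists by the Koszul formula.
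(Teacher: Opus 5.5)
Your first step is exactly the paper's: the symmetry lemma $\nabla_T J=\nabla_J T$ plus one commutation of covariant derivatives gives
\[
\nabla_T^2 J + R^{\mathcal{M}}(J,T)T = \nabla_J\bigl(\nabla_T T\bigr).
\]
The gap is identifying the right-hand side with $\Omega$, and it is not a bookkeeping issue: under the lemma's own hypotheses this term is identically zero.

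\begin{itemize}
\item Hypothesis~2 imposes $\nabla_T T=0$ for every $\tau$. Under your reading ($P^H\nabla_T T=0$), your O'Neill computation $\omega(\nabla_T T)=A_T T=0$ forces the same.
\item Hence $\nabla_T T$ is the zero section of $\alpha^{*}T\mathcal{M}$ over the whole two-dimensional domain $I\times[0,1]$, and its covariant derivative in the $\tau$-direction vanishes.
\item Both terms of your expansion $\omega(\nabla_J\nabla_T T)+(\nabla_J\omega)(\nabla_T T)$ are therefore zero. The worry that $T$ is ``not horizontal to second order'' is moot, since horizontality and the geodesic equation hold at every $(\tau,s)$.
\item Rewriting $\nabla_J\nabla_T T$ via $\nabla_J T=\nabla_T J$ only commutes the derivatives back. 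It regenerates the curvature term you already moved to the left.
\end{itemize}

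What your argument actually proves is the classical homogeneous equation $\nabla_T^2J+R^{\mathcal{M}}(J,T)T=0$. The displayed identity is then equivalent to $\Omega\bigl(V^H_{\mathcal{B}}(\tau),T(s)\bigr)=0$, which fails whenever the horizontal distribution is non-integrable. At $s=0$ both arguments sit at $\alpha(\tau,0)$. Take the variation by geodesics issuing from a horizontal curve $c(\tau)$ with horizontal initial velocities; these stay horizontal. Then $V^H_{\mathcal{B}}(\tau_0)=c'(\tau_0)$ and $T(\tau_0,0)$ can be any prescribed pair of horizontal vectors, so the lemma would force $\Omega\equiv 0$. Choosing $J^H$ as a lifted parallel field, or working modulo $O(\|J^V\|^2)$, cannot repair this, because the discrepancy is linear in $V^H_{\mathcal{B}}$.

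The paper offers no mechanism for this step either. After the same first line, its proof simply declares $\nabla_J\nabla_T T=\Omega_{f(s)}(V^H_{\mathcal{B}}(\tau),T(s))$ as a ``non-holonomic variation''. That contradicts its own hypothesis $\nabla_T T=0$, and, as you noticed, it evaluates $\Omega$ on vectors based at the different points $\alpha(\tau,0)$ and $\alpha(\tau,s)$. So the statement as written is not provable by any route unless $\Omega\equiv 0$.

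If you want the connection curvature to enter honestly, split $J=J^H+J^V$. Along a horizontal geodesic one has, pointwise, $\omega(\nabla_T J^H)=A_T J^H=\tfrac12\Omega(T,J^H)$ and $P^H(\nabla_T J^V)=A_T J^V$. So $\Omega$ appears as an O'Neill $A$-tensor coupling between the components of a genuine Jacobi field, not as a forcing term for $J$. The error bound of Theorem~\ref{thm:theorem5_rigorous} would have to be rebuilt from that coupling.

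Separately, your remark that infinite dimensions cause no difficulty is too quick. $g_f$ is only a weak metric on the Orlicz model space, so the Koszul formula does not by itself produce a Levi-Civita connection on $T\mathcal{M}$. The $L^2$ completion is not the tangent bundle.
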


\begin{remark}[Bridge Function of the Jacobi Field $J(s)$]
	\label{rem:epistemological_role_lemma5}	
	Lemma~\ref{lem:modified_jacobi_equation} derives the modified Jacobi evolution equation to model how estimation errors---represented by variations between the filter trajectory $p_{\mathrm{filter}}(\tau) = \pi(\alpha(\tau,0))$ and the ground-truth state $p_{\mathrm{true}}(\tau) = \pi(\alpha(\tau,1))$---propagate along horizontal geodesics on the statistical fiber bundle $\mathcal{B}_{\mathrm{SMG}}$.
	
	In the SMG-Yau-Yau framework, the variational Jacobi field $J(s) \triangleq \frac{\partial \alpha}{\partial \tau} \in T_{\alpha(\tau,s)}\mathcal{M}$ acts as the core mathematical bridge connecting infinite-dimensional fiber bundle geometry to macroscopic filter estimation error across three operational dimensions:
	\begin{enumerate}
		\item \textbf{Bridging Temporal Dynamics and Spatial Error Geometry:} It translates endpoint macroscopic parameter velocities ($J(0) = V_{\mathcal{B}}^H$ and $J(1) = \dot{p}_{\mathrm{true}}^H$) into a continuous, point-wise geometric deformation along the connecting horizontal geodesic segment $s \in [0,1]$.
		\item \textbf{Converting Gauge Curvature into Base Divergence:} Through the non-holonomic forcing term $\Omega_{f(s)}(V_{\mathcal{B}}^H(\tau), T(s))$, $J(s)$ explicitly models how extra-algebraic operator stress (vertical Ehresmann connection curvature) acts as a physical driving force accelerating horizontal state trajectory separation.
		\item \textbf{Transforming Vector Geometry into Solvable Analytical Bounds:} Evaluating the Fisher-Rao metric norm $\|J(s)\|_{g_f}$ converts vector field differential equations into a scalar differential inequality $\frac{d^2}{ds^2}\|J(s)\|_{g_f}^2$, providing the necessary analytical foundation to establish the sharp Riemannian estimation error bound $\mathcal{E}_R(t)$ in the following Theorem~\ref{thm:theorem5_rigorous}.
	\end{enumerate}
\end{remark}

\begin{proof}[Proof of Lemma \ref{lem:modified_jacobi_equation}]
	Since $[J, T] = 0$, $\nabla_T J = \nabla_J T$. Evaluating $\nabla_T^2 J = \nabla_T \nabla_J T = R^M(T, J)T + \nabla_J \nabla_T T$. Skew-symmetry gives $R^M(T, J)T = -R^M(J, T)T$. The non-holonomic variation of horizontal velocity along $J = V_{\mathcal{B}}^H$ injects the Ehresmann curvature tensor $\nabla_J \nabla_T T = \Omega_{f(s)}(V_{\mathcal{B}}^H(\tau), T(s))$, establishing \eqref{eq:modified_jacobi_pde}.
\end{proof}

\begin{lemma}[Energy Norm Differential Inequality for Jacobi Fields]
	\label{lem:jacobi_energy_bound}
	Let $J(s)$ be governed by \eqref{eq:modified_jacobi_pde}. If the sectional curvature of $(\mathcal{B}, g_{\mathcal{B}})$ is bounded above by $\lambda \ge 0$, then:
	\begin{equation}
		\frac{d^2}{ds^2} \| J(s) \|_{g_f}^2 \ge -2\lambda \| \dot{\gamma}(s) \|_{g_{\mathcal{B}}}^2 \| J(s) \|_{g_f}^2 - 2 \left\| \Omega_{f(s)}\!\left(V_{\mathcal{B}}^H(\tau), T(s)\right) \right\|_{g_f} \| J(s) \|_{g_f}. \label{eq:jacobi_energy_inequality}
	\end{equation}
\end{lemma}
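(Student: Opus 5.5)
The plan is to differentiate the energy $E(s) \triangleq \|J(s)\|_{g_f}^2 = g_f(J,J)$ twice along the horizontal geodesic, using metric compatibility of the Levi-Civita connection $\nabla$ of $g_f$. First,
\[
\frac{d}{ds} E = 2\, g_f(\nabla_T J, J).
\]
Differentiating again gives
\[
\frac{d^2}{ds^2} E = 2\|\nabla_T J\|_{g_f}^2 + 2\, g_f(\nabla_T^2 J, J).
\]
The first term is non-negative and will simply be discarded.

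Next, substitute the modified Jacobi equation \eqref{eq:modified_jacobi_pde} of Lemma~\ref{lem:modified_jacobi_equation}, namely
\[
\nabla_T^2 J = -R^{\mathcal{M}}(J,T)T + \Omega_{f(s)}\big(V_{\mathcal{B}}^H(\tau),T\big).
\]
This yields the lower bound
\[
\frac{d^2}{ds^2}E \ge -2\, g_f\big(R^{\mathcal{M}}(J,T)T,J\big) + 2\, g_f\big(\Omega_{f(s)}(V_{\mathcal{B}}^H,T),J\big).
\]
The forcing term is handled by Cauchy--Schwarz in $g_f$:
\[
g_f(\Omega,J) \ge -\|\Omega\|_{g_f}\|J\|_{g_f},
\]
which produces exactly the second term on the right of \eqref{eq:jacobi_energy_inequality}.

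The remaining work, and the main obstacle, is bounding the curvature term $g_f(R^{\mathcal{M}}(J,T)T,J) \le \lambda\|T\|^2\|J\|^2$. The hypothesis only bounds the sectional curvature of the base $(\mathcal{B},g_{\mathcal{B}})$, not that of $\mathcal{M}$. I would transfer the bound through O'Neill's formula for the Riemannian submersion of Assumption~\ref{ass:smg_filtering_submersion}. For horizontal $X,Y$,
\[
K_{\mathcal{M}}(X,Y) = K_{\mathcal{B}}(d\pi X,d\pi Y) - \tfrac34\,\|[X,Y]^{V}\|^2/|X\wedge Y|^2 .
\]
Since $[X,Y]^V = \omega([X,Y]) = \Omega(X,Y)$ by Lemma~\ref{lem:lemma3_rigorous}, the correction is non-positive. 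Hence $K_{\mathcal{M}} \le K_{\mathcal{B}} \le \lambda$ on horizontal planes. Using $\|T\|_{g_f} = \|\dot\gamma\|_{g_{\mathcal{B}}}$ from Lemma~\ref{lem:metric_invariance_under_horizontal_lift}, for horizontal $J$ one gets
\[
g_f(R^{\mathcal{M}}(J,T)T,J) = K_{\mathcal{M}}(J,T)\,\big(\|J\|^2\|T\|^2 - g_f(J,T)^2\big) \le \lambda \|\dot\gamma\|^2_{g_{\mathcal{B}}}\|J\|^2_{g_f},
\]
where $\lambda \ge 0$ is used to drop the $g_f(J,T)^2$ term.

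The delicate point is that $J$ need not be horizontal. Its vertical component would require the O'Neill $A$- and $T$-tensor contributions to mixed sectional curvatures $K_{\mathcal{M}}(V,T)$, which are not a priori controlled by $\lambda$. I would first try to reduce to the horizontal case. One route is to decompose $J = J^H + J^V$ and argue, via the variation through horizontal geodesics in Lemma~\ref{lem:modified_jacobi_equation}, that the curvature term only sees $J^H$. Failing that, I would state as an explicit standing assumption, consistent with the quarantining picture of Theorem~\ref{thm:quarantining_theorem_sec2}, that the vertical part has been absorbed into the $\Omega$ forcing term. Combining the three bounds then gives \eqref{eq:jacobi_energy_inequality}.
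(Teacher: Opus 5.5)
Your skeleton is exactly the paper's. You differentiate $\|J\|_{g_f}^2$ twice, discard $2\|\nabla_T J\|_{g_f}^2\ge 0$, substitute \eqref{eq:modified_jacobi_pde}, apply Cauchy--Schwarz to the $\Omega$ term, and convert $\|T\|_{g_f}$ to $\|\dot\gamma\|_{g_{\mathcal{B}}}$ by Lemma~\ref{lem:metric_invariance_under_horizontal_lift}. The paper, however, simply asserts $-2g_f(R^{\mathcal{M}}(J,T)T,J)\ge-2\lambda\|T\|_{g_f}^2\|J\|_{g_f}^2$ from the hypothesis on $\mathcal{B}$, with no transfer argument. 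You correctly identify this as the real content, and your O'Neill argument settles it when $J$ is horizontal.

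The vertical part is a genuine gap, and neither of your two routes closes it. The first route is false: the curvature term sees $J^V$ through mixed curvatures, which a bound on $K_{\mathcal{B}}$ does not control. A counterexample:
\begin{itemize}
\item Let $\mathcal{M}=\mathbb{R}^2\times\mathbb{R}$ carry $dx_1^2+dx_2^2+\phi(x)^2\,dy^2$ with $\phi=2+\cos x_1$, and let $\pi(x,y)=x$ map onto flat $\mathbb{R}^2$, so $\lambda=0$ is admissible.
\item The horizontal distribution $\operatorname{span}\{\partial_{x_1},\partial_{x_2}\}$ is integrable, so $\Omega\equiv 0$.
\item The curves $s\mapsto((s,0),y_0+\tau)$ are horizontal geodesics. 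Their variation field $J=\partial_y$ is a Jacobi field, hence satisfies \eqref{eq:modified_jacobi_pde}.
\item Yet $\|J\|^2=(2+\cos s)^2$ has second derivative $-6$ at $s=0$, while the right side of \eqref{eq:jacobi_energy_inequality} vanishes.
\end{itemize}
So the Riemannian-submersion structure alone does not yield the lemma for non-horizontal $J$. Your fallback, declaring the vertical part absorbed into $\Omega$, changes the statement rather than proving it.

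The missing ingredient is the specific geometry of $(\mathcal{M},g_f)$, which your argument never uses. The map $f\mapsto 2\sqrt{f}$ sends a score $u\in T_f\mathcal{M}$ to $\sqrt{f}\,u$, and $\|\sqrt{f}\,u\|_{L^2(\mu)}^2=\mathbb{E}_f[u^2]=g_f(u,u)$. So it is an isometric immersion into the sphere of radius $2$ in $L^2(\mu)$, and formally, at the paper's level of rigor, $K_{\mathcal{M}}\equiv\tfrac14$. Reading O'Neill's formula in the other direction then gives $K_{\mathcal{B}}\ge\tfrac14$ on every $2$-plane. Hence whenever $d\ge2$ (otherwise the hypothesis is vacuous), any admissible $\lambda$ satisfies $\lambda\ge\tfrac14$. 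Then for arbitrary, not necessarily horizontal, $J$:
\[
g_f\big(R^{\mathcal{M}}(J,T)T,J\big)=\tfrac14\big(\|J\|_{g_f}^2\|T\|_{g_f}^2-g_f(J,T)^2\big)\le\lambda\,\|T\|_{g_f}^2\,\|J\|_{g_f}^2 .
\]
The only alternative repair is to restrict the lemma to horizontal $J$. The paper's derivation of \eqref{eq:modified_jacobi_pde} tacitly does this when it sets $J=V_{\mathcal{B}}^H$, and in that case your O'Neill step suffices.
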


\begin{proof}[Proof of Lemma \ref{lem:jacobi_energy_bound}]
	Differentiating $\|J(s)\|_{g_f}^2$ twice yields $\frac{1}{2}\frac{d^2}{ds^2}\|J\|_{g_f}^2 = \langle \nabla_T^2 J, J \rangle_{g_f} + \|\nabla_T J\|_{g_f}^2 \ge \langle \nabla_T^2 J, J \rangle_{g_f}$. Substituting \eqref{eq:modified_jacobi_pde}, the Riemann term is bounded by $-2\langle R^M(J, T)T, J \rangle_{g_f} \ge -2\lambda \|T\|_{g_f}^2 \|J\|_{g_f}^2 = -2\lambda \|\dot{\gamma}\|_{g_{\mathcal{B}}}^2 \|J\|_{g_f}^2$ via Lemma~\ref{lem:metric_invariance_under_horizontal_lift}. Applying Cauchy-Schwarz to the curvature inner product yields \eqref{eq:jacobi_energy_inequality}.
\end{proof}

\begin{theorem}[Curvature-Bound Robustness and Error Orthogonality]
	\label{thm:theorem5_rigorous}
~

	Let $\mathcal{B}_{\text{SMG}} = (\mathcal{M}, \mathcal{B}, \pi, \mathcal{V}, \mathcal{H}, \omega_f, g_f)$ be the SMG fiber bundle over the $d$-dimensional base manifold $(\mathcal{B}, g_{\mathcal{B}})$ with sectional curvature $K_{\mathcal{B}} \le \lambda$ ($\lambda \ge 0$). Then:
	\begin{enumerate}
		\item[\rm (i)] \textbf{Strict Fisher-Metric Orthogonality:} Horizontal score updates and vertical gauge fluctuations are strictly Fisher-orthogonal across $\mathcal{M}$:
		\begin{equation}
			g_f\Big(e_i^H(f), \, v_{\mathcal{V}}\Big) \equiv 0, \quad \forall e_i^H(f) \in \operatorname{SVD}\chi_f, \;\; \forall v_{\mathcal{V}} \in \operatorname{SID}_f. \label{eq:fisher_orthogonality_theorem5}
		\end{equation}
		\item[\rm (ii)] \textbf{Sharp Curvature Error Bound:} The intrinsic Riemannian estimation error $\mathcal{E}_R(t) = d_{\mathcal{B}}^2\big(p_{\text{filter}}(t), p_{\text{true}}(t)\big)$ on $\mathcal{B}$ is strictly bounded by:
		\begin{equation}
			\mathcal{E}_R(t) \le C_1 e^{\lambda t} \int_0^t \left\| \Omega_{f(\tau)}\!\left( V_{\mathcal{B}}^H(\tau), \, \cdot \right) \right\|_{g_{f(\tau)}}^2 d\tau, \label{eq:sharp_curvature_error_bound}
		\end{equation}
		where $C_1 > 0$ is a constant depending on curvature bounds and initial error, and $V_{\mathcal{B}}^H(\tau) \in \operatorname{SVD}\chi_{f(\tau)}$ is the horizontal lift of the base filter velocity $V_{\mathcal{B}}(\tau) = \frac{d p_{\text{filter}}}{d\tau}$.
	\end{enumerate}
	\label{thm:curvature_bound_robustness}
\end{theorem}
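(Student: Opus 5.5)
Part (i) follows directly from the definitions, and I will dispose of it first. By Definition~\ref{def:smg_fiber_bundle} and Assumption~\ref{ass:smg_filtering_submersion}~(A3), $\operatorname{SVD}\chi_f = (\operatorname{SID}_f)^{\perp g_f}$. For any base frame vector $e_i$, its horizontal lift $e_i^H(f)$ lies in $\operatorname{SVD}\chi_f$ by construction, since it is the unique preimage of $e_i$ under the isometry $d\pi_f|_{\operatorname{SVD}\chi_f}$. So $g_f(e_i^H(f), v_{\mathcal{V}}) = 0$ for every $v_{\mathcal{V}} \in \operatorname{SID}_f$, and this is exactly the statement of Lemma~\ref{lem:tangent_space_decomposition_sec2}. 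I will also note here that $P_f^H$ and $\omega_f$ are the complementary $g_f$-orthogonal projectors. This fact is needed later, because it lets vertical error be discarded when passing to base quantities.

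For part (ii) I will follow the route set up by the preceding lemmas, working in three steps.

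\textbf{Step 1: express the error through a Jacobi field.} Fix $\tau$, take the minimizing geodesic $\gamma_\tau$ on $\mathcal{B}$ from $p_{\text{filter}}(\tau)$ to $p_{\text{true}}(\tau)$, and lift it horizontally to obtain the variation $\alpha(\tau,s)$ of Lemma~\ref{lem:modified_jacobi_equation}. By Lemma~\ref{lem:metric_invariance_under_horizontal_lift}, $\mathcal{E}_R(\tau) = \|T\|_{g_f}^2$, which is constant in $s$. Differentiating $\mathcal{E}_R$ in $\tau$ with the first variation formula gives
\[
\tfrac{d}{d\tau}\mathcal{E}_R = 2\big[\langle J, T\rangle_{g_f}\big]_{s=0}^{s=1}.
\]
This reduces the problem to controlling $\|J\|_{g_f}$ along $s\in[0,1]$.

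\textbf{Step 2: bound $\|J\|$ in terms of the curvature forcing.} Apply Lemma~\ref{lem:jacobi_energy_bound} with $\|\dot\gamma\|^2 = \mathcal{E}_R$. Write $u(s) = \|J(s)\|_{g_f}$, and abbreviate the forcing as $\Omega_\tau := \|\Omega_{f(\tau)}(V_{\mathcal{B}}^H(\tau),\cdot)\|_{g_f}$; the operator norm of Definition~\ref{def:velocity_lift_and_operator_norm} bounds the forcing term because $\|T\|_{g_f}$ is constant along $s$. A comparison (Sturm/Gr\"onwall) argument on $[0,1]$ for the resulting second-order inequality should then give
\[
\sup_s u(s) \le c\,(u(0)+u(1)) + c\,\Omega_\tau \|T\|_{g_f}.
\]

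\textbf{Step 3: close with Gr\"onwall in $\tau$.} Combine Steps 1 and 2 using Young's inequality in the form $2ab \le \lambda a^2 + \lambda^{-1}b^2$. This should produce
\[
\tfrac{d}{d\tau}\mathcal{E}_R \le \lambda\,\mathcal{E}_R + C\,\Omega_\tau^2,
\]
where the horizontal, unforced velocity discrepancy is absorbed into the linear term using the filter's optimality (Theorem~\ref{thm:smg_yau_yau_filter_optimality_and_dynamics}(I)). Gr\"onwall's inequality then gives $\mathcal{E}_R(t) \le e^{\lambda t}\big(\mathcal{E}_R(0) + C\int_0^t \Omega_\tau^2\, d\tau\big)$. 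The initial error is absorbed into $C_1$; equivalently, one can assume matched initialization, $\mathcal{E}_R(0)=0$.

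The main obstacle is Step 3. I must justify that the only source of separation between $p_{\text{filter}}$ and $p_{\text{true}}$ is the curvature forcing: both trajectories are driven by the same horizontally projected DMZ field, and they differ only through holonomy. Holonomy is controlled by $\Omega$ through Lemma~\ref{lem:lemma3_rigorous}(ii). Even so, the endpoint terms $\langle J, T\rangle$ at $s=0,1$ have to be matched to $V_{\mathcal{B}}$ and $\dot p_{\text{true}}$ carefully.

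I would first try to prove the bound in the regime where $\mathcal{E}_R$ stays bounded, so that the factor $\|\dot\gamma\|^2$ in Lemma~\ref{lem:jacobi_energy_bound} is uniformly controlled, and then absorb that control into $\lambda$ and $C_1$. Step 2 is a routine ODE comparison.
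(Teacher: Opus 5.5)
Part (i) is fine and is exactly the paper's argument. For (ii) you follow the paper's route: Lemma~\ref{lem:jacobi_energy_bound} along the minimizing geodesic, then Gr\"onwall in $\tau$. The step you call the main obstacle, however, is a real gap, and none of the results you invoke can fill it. Your own Step~1 shows why. Since $T$ is horizontal, only $d\pi J$ enters $\langle J,T\rangle_{g_f}$. Since $\dot\gamma_\tau(1)$ is the parallel transport of $\dot\gamma_\tau(0)$, you get
\[
\tfrac{d}{d\tau}\mathcal{E}_R = 2\bigl\langle \dot p_{\text{true}} - P_{\gamma_\tau}\dot p_{\text{filter}},\, \dot\gamma_\tau(1)\bigr\rangle_{g_{\mathcal{B}}} \le 2\sqrt{\mathcal{E}_R}\,\bigl\|\dot p_{\text{true}} - P_{\gamma_\tau}\dot p_{\text{filter}}\bigr\|_{g_{\mathcal{B}}},
\]
where $P_{\gamma_\tau}$ is parallel transport along $\gamma_\tau$ on $\mathcal{B}$. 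This involves only endpoint velocities on the base. Your Step~2 bounds interior values of $\|J\|$ in terms of $u(0),u(1)$, which are exactly these unknowns, so its $\Omega$ term never reaches Step~3. That forcing term is also spurious: for a variation through geodesics, $\nabla_T T\equiv 0$ for every $\tau$, so $\nabla_J\nabla_T T=0$ and $J$ obeys the unforced Jacobi equation. The only way $\Omega$ could enter is through an estimate $\|\dot p_{\text{true}} - P_{\gamma_\tau}\dot p_{\text{filter}}\| \le L\sqrt{\mathcal{E}_R} + C\,\Omega_\tau$, and nothing you cite supplies one. Holonomy (Lemma~\ref{lem:lemma3_rigorous}(ii)) produces only vertical displacement, which $d\pi$ annihilates. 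Theorem~\ref{thm:smg_yau_yau_filter_optimality_and_dynamics}(I) is an instantaneous least-squares property and gives no Lipschitz constant.

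What actually separates the two base trajectories is how $d\pi_f\bigl(P_f^H V_{\mathrm{DMZ}}(f)\bigr)$ varies as $f$ moves within a fiber, i.e.\ the failure of the DMZ field to be $\pi$-projectable. The curvature of the connection does not control this.
\begin{itemize}
\item If the filter is run on the true density, Theorem~\ref{thm:quarantining_theorem_sec2}(ii) gives $\dot p_{\text{filter}}=d\pi(V_{\mathrm{DMZ}})=\dot p_{\text{true}}$. The bound is then trivial once $\mathcal{E}_R(0)=0$ and needs no Jacobi analysis.
\item If it is run on a surrogate such as $s(p_{\text{filter}})$, take a flat connection ($\Omega\equiv 0$, e.g.\ an integrable product splitting) with a base drift that depends on the fiber coordinate. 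Nothing in the hypotheses excludes this, and the right-hand side of \eqref{eq:sharp_curvature_error_bound} vanishes while $\mathcal{E}_R$ grows.
\end{itemize}

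Your constants also fail. The Gr\"onwall rate must be a Lipschitz constant of the base dynamics, not the curvature bound $\lambda$. Your Young step with $\lambda^{-1}$ breaks at the admissible value $\lambda=0$, and ``absorbing the control into $\lambda$'' changes the hypothesis. Likewise, $\mathcal{E}_R(0)>0$ cannot be absorbed into $C_1$, because the right-hand side is homogeneous in $\Omega$. Matched initialization is therefore an extra hypothesis, not an equivalent reformulation. The paper's two-line proof asserts the same Gr\"onwall step without this endpoint estimate, so following it more closely will not close the gap.
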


\begin{proof}[Proof of Theorem \ref{thm:theorem5_rigorous}]
	Assertion (i) follows directly from Lemma~\ref{lem:tangent_space_decomposition_sec2}, since $\operatorname{SVD}\chi_f \equiv \operatorname{SID}_f^{\perp g_f}$.
	
	To prove (ii), integrate the Jacobi energy inequality from Lemma~\ref{lem:jacobi_energy_bound} along the minimizing geodesic $\gamma(s)$ connecting $p_{\text{filter}}(t)$ to $p_{\text{true}}(t)$. Applying Grönwall's inequality over $\tau \in [0, t]$ establishes that geodesic error growth is exponentially controlled by $\lambda t$ and driven strictly by the integrated norm of the Ehresmann curvature operator $\left\| \Omega_{f(\tau)}( V_{\mathcal{B}}^H(\tau), \cdot ) \right\|_{g_f}^2$, yielding \eqref{eq:sharp_curvature_error_bound}.
\end{proof}

\begin{remark}[Practical Implications, Computational Usages, and Empirical Realization of Theorem~\ref{thm:curvature_bound_robustness}]
	\label{rem:practical_implications_theorem7}
	This Theorem provides both the computational foundation for fast {\bf SMG-Yau-Yau Filtering} algorithmic execution and a data-driven diagnostic framework for real-time error tracking under infinite Lie algebraic complexity ($\dim(\mathcal{E})=\infty$). Its practical usages and empirical implications are structured across four core dimensions:
	
	\begin{enumerate}
		\item \textbf{Computational Decoupling via Metric Orthogonality:} 
		Assertion~(i) guarantees that horizontal score updates $\mathrm{SVD}\chi_f$ and vertical gauge fluctuations $\mathrm{SID}_f$ are strictly Fisher-orthogonal ($g_f(e_i^H, v_\mathcal{V}) \equiv 0$). In numerical implementation, this ensures zero cross-talk between macroscopic base updates $dp_k(t)$ and vertical score residuals $S_{\mathrm{res}}(x)$. Consequently, the localized Fisher metric matrix $[G_p]$ on $\mathcal{B}$ can be inverted independently of infinite-dimensional fiber modes, reducing the online baseline computational complexity to $\mathcal{O}(d^3 + d^2 N_{\mathrm{quad}})$ operations on the low-dimensional base space.
		
		\item \textbf{Empirical Calculation of Curvature Stress via Ensembles:} 
		In practical filtering applications, the pointwise curvature operator norm $\|\Omega_{f(\tau)}(V_{\mathcal{B}}^H(\tau), \cdot)\|_{g_f}^2$ in Assertion~(ii) is evaluated empirically over a sample ensemble $\{X_s\}_{s=1}^N \sim f_t$. By isolating the vertical residual score field $\hat{S}_{\mathrm{res}}^{(N)}(X_s) = A_0(X_s) - \hat{P}^H A_0(X_s)$, the curvature stress is computed in real time as the sample variance of the unmodeled drift score:
		\[
		\|\Omega_{f(t)}(V_{\mathcal{B}}^H(t), \cdot)\|_{g_{f(t)}}^2 \approx \frac{1}{N} \sum_{s=1}^N \left[\hat{S}_{\mathrm{res}}^{(N)}(X_s)\right]^2 = \frac{d\hat{\mathcal{T}}_{\mathrm{AAT}}^{(N)}(t)}{dt},
		\]
		where $\hat{\mathcal{T}}_{\mathrm{AAT}}^{(N)}(t)$ is the empirical Active Acausal Tension accumulator.
		
		\item \textbf{Ground-Truth-Free Filter Health Diagnostic:} 
		Crucially, the upper bound in \eqref{eq:sharp_curvature_error_bound} depends entirely on the integrated curvature norm $\int_0^t \|\Omega_{f(\tau)}(V_{\mathcal{B}}^H(\tau), \cdot)\|_{g_f}^2 d\tau$ and base curvature bound $\lambda$, requiring \emph{no knowledge} of the unobservable ground-truth state $p_{\mathrm{true}}(t)$. This provides an online, data-driven ``filter health monitor'': tracking the accumulated curvature norm allows the filter to quantify its own worst-case estimation error $\mathcal{E}_R(t)$ in real time.
		
		\item \textbf{Algorithmic Trigger for Dynamic Manifold Expansion (GSB) \cite{cheng2026gsb}:} 
		When physical non-linearities continuously inject stress into vertical fiber modes, the integrated curvature norm grows monotonically. Reaching the critical capacity limit $\int_0^t \|\Omega_f\|^2 d\tau \ge \Theta_{\mathrm{crit}}$ serves as an automated empirical trigger for Gauge Symmetry Breaking (GSB). This instructs the algorithm to perform functional eigen-decomposition over $\hat{S}_{\mathrm{res}}^{(N)}$, promote the dominant vertical mode $v_{\mathrm{dom}}$, and expand the base dimension $d \to d+1$, instantaneously dissipating accumulated error growth.
	\end{enumerate}
\end{remark}

\subsection{Empirical Construction of the Standalone SMG-Yau-Yau Filter and Algorithm}
\label{subsec:empirical_construction_smg_yau_yau}

Given discrete observation data $X_1, \dots, X_N$, a general empirical SMG-Yau-Yau filter estimator $\hat{p}_N(t) \in \mathcal{B}$ is rigorously constructed by lifting the Orlicz statistical fiber bundle dynamics onto the empirical measure $\mathbb{P}_N = \frac{1}{N} \sum_{s=1}^N \delta_{X_s}$.

\subsubsection{Mathematical Formulation of the Empirical SMG-Yau-Yau Estimator}
\label{subsubsec:mathematical_formulation_empirical_smg_yau_yau}

\paragraph{1. Empirical Score Space and Sample Fisher Metric ($\hat{G}^{(N)}$)}
Let $\mathcal{B}$ be the $d$-dimensional macroscopic base manifold parameterized by $p = (p_1, \dots, p_d)^T \in \mathbb{R}^d$. The local logarithmic score basis functions $\phi_i(x; p) = \frac{\partial \ln f_p(x)}{\partial p_i}$ evaluated over the sample $X_1, \dots, X_N$ define the empirical score basis:
\begin{equation}
	\hat{\phi}_i(X_s; p) = E_i(X_s) - \frac{1}{N} \sum_{r=1}^N E_i(X_r), \quad i = 1, \dots, d,
	\label{eq:empirical_score_basis}
\end{equation}
where $E_i(x)$ (or $E_i$) represents the $i$-th basis operator/function spanning the estimation Lie algebra $\mathcal{E} = \text{span}\{E_1(x), E_2(x), \dots, E_d(x)\}$. 

The sample-induced Riemannian Fisher Information Metric matrix $\hat{G}^{(N)}(p) \in \mathbb{R}^{d \times d}$ is computed directly via sample expectations:
\begin{equation}
	\left[\hat{G}^{(N)}(p)\right]_{ij} = \frac{1}{N} \sum_{s=1}^N \hat{\phi}_i(X_s; p) \hat{\phi}_j(X_s; p).
	\label{eq:sample_fisher_metric_matrix}
\end{equation}

\paragraph{2. Empirical Horizontal Projection and Vertical Gauge Field}
The unconstrained drift score generator $A_0(x) = \frac{\mathcal{L}_0^* f_p(x)}{f_p(x)}$ splits orthogonally into the horizontal Statistically Verifiable Directions ($\operatorname{SVD}\chi_f$) and vertical Structural Internal Directions ($\operatorname{SID}_f$). The empirical horizontal projection operator $\widehat{P^H A_0}$ yields:
\begin{equation}
	\widehat{P^H A_0}(X_s) = \sum_{k=1}^d \left( \sum_{i=1}^d \left[\hat{G}^{(N)-1}\right]_{ki} \hat{\mathcal{M}}_i^{0, (N)} \right) \hat{\phi}_k(X_s; p),
	\label{eq:empirical_horizontal_projection}
\end{equation}
where $\hat{\mathcal{M}}_i^{0, (N)} = \frac{1}{N} \sum_{r=1}^N A_0(X_r) \hat{\phi}_i(X_r; p)$. Equation \eqref{eq:empirical_horizontal_projection} is derived via an empirical Galerkin projection of $A_0$ onto the subspace spanned by the score basis $\{\hat{\phi}_k\}_{k=1}^d$ under the empirical inner product $\langle u, v \rangle_N = \frac{1}{N} \sum_{r=1}^N u(X_r) v(X_r)$. Enforcing the orthogonality condition ${\small \left\langle A_0 - \widehat{P^H A_0}, \hat{\phi}_i \right\rangle_{\!N} = 0}$ for each $i \in \{1, \dots, d\}$ generates the matrix normal system $\hat{G}^{(N)} c = \hat{\mathcal{M}}^{0,(N)}$, where $\hat{G}^{(N)}$ is the empirical Gram matrix. Inverting $\hat{G}^{(N)}$ uniquely resolves the coefficients ${\small c_k = \sum_{i=1}^d \left[\hat{G}^{(N)-1}\right]_{ki} \hat{\mathcal{M}}_i^{0,(N)}}$, yielding the explicit projection in \eqref{eq:empirical_horizontal_projection}.

The vertical score residual $\hat{S}_{\mathrm{res}}^{(N)}(X_s) \in \operatorname{SID}_f$ isolates and quarantines extra-algebraic operator variations:
\begin{equation}
	\hat{S}_{\mathrm{res}}^{(N)}(X_s) = A_0(X_s) - \widehat{P^H A_0}(X_s).
	\label{eq:empirical_vertical_score_residual}
\end{equation}
Equation \eqref{eq:empirical_vertical_score_residual} follows directly from the orthogonal direct-sum decomposition of the underlying operator Hilbert space into horizontal and vertical components, $\mathcal{H} = \text{SVD}_{\chi_f} \oplus \text{SID}_f$. By subtracting the empirical horizontal projection $\widehat{P^H A_0}(X_s)$ from the unconstrained drift operator $A_0(X_s)$, the residual $\hat{S}_{\text{res}}^{(N)}(X_s)$ explicitly extracts the orthogonal complement residing entirely within the vertical gauge fiber $\text{SID}_f$.

\paragraph{3. Empirical Estimator SDE System}
The macroscopic parameter estimator trajectory $\hat{p}_N(t) \in \mathcal{B}$ and the sample Active Acausal Tension $\hat{\mathcal{T}}_{\mathrm{AAT}}^{(N)}(t)$ evolve via the coupled SDE system:
\begin{itemize}
	\item \textbf{Macroscopic Natural Gradient SDE:}
	\begin{equation}
		d\hat{p}_k^{(N)}(t) = \sum_{i=1}^d \left[\hat{G}^{(N)-1}\left(\hat{p}_N(t)\right)\right]_{ki} \left( \hat{\mathcal{M}}_i^{0, (N)}(t) \, dt + \sum_{j=1}^m \hat{\mathcal{M}}_{ij}^{1, (N)}(t) \circ dY_j(t) \right),
		\label{eq:macroscopic_natural_gradient_sde}
	\end{equation}
	where $\hat{\mathcal{M}}_{ij}^{1, (N)}(t) = \frac{1}{N} \sum_{s=1}^N h_j(X_s) \hat{\phi}_i(X_s; \hat{p}_N(t))$.
	
	\item \textbf{Empirical Tension Accumulation:}
	\begin{equation}
		\frac{d\hat{\mathcal{T}}_{\mathrm{AAT}}^{(N)}(t)}{dt} = \frac{1}{N} \sum_{s=1}^N \left[ \hat{S}_{\mathrm{res}}^{(N)}(X_s) \right]^2.
		\label{eq:empirical_tension_accumulation_rate}
	\end{equation}
Equation \eqref{eq:empirical_tension_accumulation_rate} defines the empirical accumulation rate of Active Acausal Tension ($\hat{\mathcal{T}}_{\mathrm{AAT}}^{(N)}$) as the sample-averaged squared norm of the vertical score residual $\hat{S}_{\text{res}}^{(N)}(X_s)$ across the $N$ sample particles. This discretizes the continuous Fisher-Rao metric norm $\Vert{}\hat{S}_{\text{res}}\Vert{}_{g_f}^2$ on the Orlicz tangent space, measuring the instantaneous rate at which unmodeled operator stress and extra-algebraic Lie bracket variations are quarantined within the vertical gauge fiber $\text{SID}_f$. By continuously integrating this non-negative kinetic score energy over time, $\hat{\mathcal{T}}_{\mathrm{AAT}}^{(N)}(t)$ serves as an online diagnostic monitor for cumulative model misspecification stress and provides the exact scalar criterion required to trigger dynamic Gauge Symmetry Breaking (GSB) when the topological capacity limit of the base manifold is breached.
\end{itemize}

\paragraph{4. Finite-Sample Error Bounds and Dual-Axis Convergence}
Under infinite algebraic complexity ($\operatorname{dim}(\mathcal{E}) = \infty$), the Universal Tensor Valence Scaling Law establishes that the sample bundle connection curvature decays at rate $\|\Omega_f^{(N)}\|_{g_f}^2 = \mathcal{O}_p(N^{-2})$ \cite{cheng2026entropy, cheng2026smg}. The mean-square estimation error of $\hat{p}_N(t)$ is strictly bounded by:
\begin{equation}
	\mathbb{E}\left[ \left\| p_{\text{true}}(t) - \hat{p}_N(t) \right\|_{g_f}^2 \right] \le C_1 e^{\lambda t} \int_0^t \left\| \Omega_{f(\tau)}^{(N)} \right\|_{g_f}^2 d\tau = \mathcal{O}_p\left(N^{-2}\right).
	\label{eq:finite_sample_error_bound_quadratic}
\end{equation}
This proves that $\hat{p}_N(t)$ provides a non-asymptotically exact finite-sample estimator that stores unmodeled Lie bracket stress in $\hat{\mathcal{T}}_{\mathrm{AAT}}^{(N)}(t)$ while converging quadratically to the true conditional distribution as $N \to \infty$.

The upper bound in equation \eqref{eq:finite_sample_error_bound_quadratic} is established according to Theorem \ref{thm:error_bound} (Filtering Error Bound under Geometry), which is built upon the geodesic Jacobi field curvature bound in Theorem \ref{thm:curvature_bound_robustness} (also tagged as Theorem \ref{thm:theorem5_rigorous}, Curvature-Bound Robustness and Error Orthogonality).

\subsubsection{Algorithmic Realization of the Empirical SMG-Yau-Yau Estimator}
\label{subsubsec:algorithmic_realization_empirical_smg_yau_yau}

To render the empirical SMG-Yau-Yau filter estimator $\hat{p}_N(t) \in \mathcal{B}$ computationally operational over continuous-discrete observation updates $\Delta Y_j(t_n) = Y_j(t_{n+1}) - Y_j(t_n)$, the continuous differential projections on the Orlicz fiber bundle $\mathcal{B}_{\text{SMG}}$ are evaluated over the empirical measure $\mathbb{P}_N = \frac{1}{N}\sum_{s=1}^N \delta_{X_s}$. Algorithm~\ref{alg:empirical_smg_yau_yau_estimator} provides the complete discrete-time execution pipeline for updating the macroscopic parameter state vector $\hat{p}_N(t_n) \in \mathbb{R}^d$ and accumulating the sample Active Acausal Tension scalar $\hat{\mathcal{T}}_{\text{AAT}}^{(N)}(t_n)$.

\begin{algorithm}[H]
	\caption{Empirical SMG-Yau-Yau Filter State and Tension Estimator}
	\label{alg:empirical_smg_yau_yau_estimator}
	\small 
	\begin{algorithmic}[1]
		\REQUIRE Initial base parameter vector $\hat{p}_N(0) \in \mathbb{R}^d$, particle ensemble sample $\{X_s(0)\}_{s=1}^N \sim f_{\hat{p}_N(0)}$, observation increments $\Delta Y_n = Y(t_{n+1}) - Y(t_n)$, time step $\Delta t$, regularization parameter $\lambda_{\text{reg}} > 0$.
		\ENSURE Macroscopic estimator trajectory $\{\hat{p}_N(t_n)\}_{n=0}^K$ and empirical tension trajectory $\{\hat{\mathcal{T}}_{\text{AAT}}^{(N)}(t_n)\}_{n=0}^K$.
		\STATE Initialize empirical tension accumulator: $\hat{\mathcal{T}}_{\text{AAT}}^{(N)}(0) \leftarrow 0$.
		\FOR{$n = 0, 1, \dots, K-1$}
		\STATE \COMMENT{\textbf{Step 1: Empirical Score Basis and Sample Fisher Metric Evaluation}}
		\FOR{$s = 1, \dots, N$}
		\STATE Evaluate logarithmic score basis functions: 
		$\hat{\phi}_i(X_s(t_n); \hat{p}_N(t_n)) \leftarrow E_i(X_s(t_n)) - \frac{1}{N}\sum_{r=1}^N E_i(X_r(t_n)), \quad \forall i = 1, \dots, d.$
		\ENDFOR
		\STATE Compute empirical Fisher Information Matrix entries: 
		$[\hat{G}^{(N)}]_{ij} \leftarrow \frac{1}{N}\sum_{s=1}^N \hat{\phi}_i(X_s(t_n); \hat{p}_N(t_n)) \hat{\phi}_j(X_s(t_n); \hat{p}_N(t_n)), \quad \forall i, j = 1, \dots, d.$
		\STATE Apply Tikhonov matrix regularization: $\hat{G}_{\text{reg}}^{(N)} \leftarrow \hat{G}^{(N)} + \lambda_{\text{reg}} I_{d \times d}$.
		\STATE Invert regularized metric tensor: $[\hat{G}_{\text{reg}}^{(N)-1}] \leftarrow \text{CholeskySolve}(\hat{G}_{\text{reg}}^{(N)})$.
		
		\STATE \COMMENT{\textbf{Step 2: Empirical Score Moments Calculation}}
		\FOR{$i = 1, \dots, d$}
		\STATE Evaluate drift score moment: 
		$\hat{\mathcal{M}}_i^{0, (N)}(t_n) \leftarrow \frac{1}{N}\sum_{s=1}^N \left( \frac{\mathcal{L}_0^* f_{\hat{p}_N(t_n)}(X_s(t_n))}{f_{\hat{p}_N(t_n)}(X_s(t_n))} \right) \hat{\phi}_i(X_s(t_n); \hat{p}_N(t_n)).$
		\FOR{$j = 1, \dots, m$}
		\STATE Evaluate observation score moment: 
		$\hat{\mathcal{M}}_{ij}^{1, (N)}(t_n) \leftarrow \frac{1}{N}\sum_{s=1}^N h_j(X_s(t_n)) \hat{\phi}_i(X_s(t_n); \hat{p}_N(t_n)).$
		\ENDFOR
		\ENDFOR
		
		\STATE \COMMENT{\textbf{Step 3: Horizontal Projection and Vertical Residual Quarantining}}
		\FOR{$s = 1, \dots, N$}
		\STATE Evaluate unconstrained drift velocity: $A_0(X_s(t_n)) \leftarrow \frac{\mathcal{L}_0^* f_{\hat{p}_N(t_n)}(X_s(t_n))}{f_{\hat{p}_N(t_n)}(X_s(t_n))}.$
		\STATE Compute empirical horizontal projection: 
		$\widehat{P^H A_0}(X_s(t_n)) \leftarrow \sum_{k=1}^d \left( \sum_{i=1}^d [\hat{G}_{\text{reg}}^{(N)-1}]_{ki} \hat{\mathcal{M}}_i^{0, (N)}(t_n) \right) \hat{\phi}_k(X_s(t_n); \hat{p}_N(t_n)).$
		\STATE Isolate vertical score residual in $\operatorname{SID}_f$: 
		$\hat{S}_{\text{res}}^{(N)}(X_s(t_n)) \leftarrow A_0(X_s(t_n)) - \widehat{P^H A_0}(X_s(t_n)).$
		\ENDFOR
		
		\STATE \COMMENT{\textbf{Step 4: Natural Gradient Propagator and Tension Accumulation}}
		\FOR{$k = 1, \dots, d$}
		\STATE Compute natural gradient drift increment: 
		$\delta p_k^0 \leftarrow \sum_{i=1}^d [\hat{G}_{\text{reg}}^{(N)-1}]_{ki} \hat{\mathcal{M}}_i^{0, (N)}(t_n) \Delta t.$
		\STATE Compute innovation diffusion increment: 
		$\delta p_k^1 \leftarrow \sum_{i=1}^d [\hat{G}_{\text{reg}}^{(N)-1}]_{ki} \sum_{j=1}^m \hat{\mathcal{M}}_{ij}^{1, (N)}(t_n) \Delta Y_j(t_n).$
		\STATE Update base coordinate estimate: $\hat{p}_{N, k}(t_{n+1}) \leftarrow \hat{p}_{N, k}(t_n) + \delta p_k^0 + \delta p_k^1.$
		\ENDFOR
		\STATE Accumulate empirical Active Acausal Tension: 
		$$\hat{\mathcal{T}}_{\text{AAT}}^{(N)}(t_{n+1}) \leftarrow \hat{\mathcal{T}}_{\text{AAT}}^{(N)}(t_n) + \left( \frac{1}{N}\sum_{s=1}^N \left[ \hat{S}_{\text{res}}^{(N)}(X_s(t_n)) \right]^2 \right) \Delta t.$$
		
		\STATE \COMMENT{\textbf{Step 5: Particle Cloud Re-propagation}}
		\STATE Resample/update particle ensemble $\{X_s(t_{n+1})\}_{s=1}^N \sim f_{\hat{p}_N(t_{n+1})}$ via Langevin or Markov Chain Monte Carlo dynamics.
		\ENDFOR
		\RETURN Estimator trajectories $\{\hat{p}_N(t_n)\}_{n=0}^K$ and $\{\hat{\mathcal{T}}_{\text{AAT}}^{(N)}(t_n)\}_{n=0}^K$.
	\end{algorithmic}
\end{algorithm}

	
\subsection{Paradigm Discontinuity: Non-Existence of Direct Geometric Adjustment $\Delta$ Between SMG-Yau-Yau Filtering and the Classical Yau-Yau filtering}
\label{sec:paradigm_discontinuity_non_existence}

A central theoretical question in continuous-time non-linear filtering is whether the relationship between the classical Yau-Yau filter \cite{yau1999, yau2014} and the Statistically Meaningful Geometry SMG-Yau-Yau filter developed in this paper based on \cite{cheng2026entropy, cheng2026smg} can be captured by a simple additive geometric adjustment vector field $\Delta_t^{\text{geom}}$ acting on a shared state manifold. In this section, we establish that \textbf{a direct geometric difference vector $\Delta_t^{\text{geom}}$ between the two filters does not exist}, because the two filtering paradigms inhabit fundamentally incompatible mathematical and topological spaces.

\subsubsection{Geometric Incommensurability of State Spaces}
\label{subsec:geometric_incommensurability}

The absence of a direct geometric adjustment stems from the structural mismatch between the underlying spaces on which the respective state representations evolve:

\begin{enumerate}
	\item \textbf{Classical Yau-Yau Paradigm (Flat Finite-Dimensional Lie Orbit):}
	Classical Yau-Yau filtering assumes exact finite Lie algebraic closure ($\dim(\mathcal{E}) = d < \infty$). Under the Wei-Norman representation theorem \cite{wei1963}, the unnormalized conditional state density $\sigma(t,x)$ is constrained to evolve strictly along a finite $d$-dimensional Lie-group orbit $\Sigma_d \subset C^\infty(\mathbb{R}^p)$:
	\begin{equation}
		\Sigma_d \triangleq \left\{ \sigma(t,x) = \exp\left( \sum_{k=1}^d g_k(t) E_k(x) \right) \sigma(0,x) \;\middle|\; g(t) \in \mathbb{R}^d \right\}.
		\label{eq:classical_orbit_manifold}
	\end{equation}
	The geometry of $\Sigma_d$ is intrinsically flat ($\Omega_f \equiv 0$). Its tangent space at any point $f \in \Sigma_d$ is isomorphic to the finite Euclidean vector space $\mathbb{R}^d$, with zero vertical fiber components ($\operatorname{SID}_f \equiv \{0\}$).
	
	\item \textbf{SMG-Yau-Yau Paradigm (Infinite-Dimensional Banach-Orlicz Fiber Bundle):}
	Conversely, the SMG-Yau-Yau filter operates on the full non-parametric Orlicz statistical manifold $\mathcal{M} = L_0^\Phi(P_f)$ structured as an Ehresmann statistical fiber bundle \cite{cheng2026entropy}:
	\begin{equation}
		\mathcal{B}_{\text{SMG}} = (\mathcal{M}, \mathcal{B}, \pi, \mathcal{V}, \mathcal{H}, \omega_f, g_f),
		\label{eq:smg_bundle_structure_recap}
	\end{equation}
	where $\mathcal{B}$ is an identifiable $d$-dimensional macroscopic base manifold, and $\mathcal{V}_f \equiv \operatorname{SID}_f = \ker(d\pi_f) \subset L_0^\Phi(P_f)$ is an infinite-dimensional vertical gauge subbundle encoding higher-order, extra-algebraic operator variations under non-zero connection curvature $\Omega_f \neq 0$.
\end{enumerate}

\begin{theorem}[Non-Existence of Direct Geometric Vector Difference]
	\label{thm:non_existence_geometric_delta}
	Let $V_{\text{classical}}(t) \in T_{g(t)}\Sigma_d \cong \mathbb{R}^d$ be the velocity field of the classical Yau-Yau filter, and let $V_{\text{SMG}}(t) \in T_{f_t}\mathcal{M} = \operatorname{SVD}\chi_{f_t} \oplus_{\perp g_f} \operatorname{SID}_{f_t}$ be the logarithmic score velocity field of the SMG-Yau-Yau filter. 
	
	Then, there exists no well-defined geometric vector field $\Delta_t^{\text{geom}} \in T_{f_t}\mathcal{M}$ such that $V_{\text{SMG}}(t) = V_{\text{classical}}(t) + \Delta_t^{\text{geom}}$ on $\mathcal{M}$.
\end{theorem}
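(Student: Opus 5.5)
The plan is to argue by type mismatch: the sum $V_{\text{classical}}(t) + \Delta_t^{\text{geom}}$ is not defined on $T_{f_t}\mathcal{M}$ unless $V_{\text{classical}}(t)$ is first carried into $T_{f_t}\mathcal{M}$ by a canonical, coordinate-free linear map. I will show no such map exists in the regime of interest, so any candidate $\Delta_t^{\text{geom}}$ depends on arbitrary choices and fails to be well defined.

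\textbf{Step 1: reduce to a transport map.} By construction, $V_{\text{classical}}(t)$ lives in $T_{g(t)}\Sigma_d \cong \mathbb{R}^d$, the tangent space of the Wei--Norman orbit \eqref{eq:classical_orbit_manifold}. Meanwhile $V_{\text{SMG}}(t)$ lives in $T_{f_t}\mathcal{M}$, which splits as in Lemma~\ref{lem:tangent_space_decomposition_sec2}. The expression $V_{\text{SMG}} - V_{\text{classical}}$ therefore only makes sense after choosing a linear map $\iota_t: T_{g(t)}\Sigma_d \to T_{f_t}\mathcal{M}$. The natural candidates are:
\begin{itemize}
\item[(a)] the differential of the orbit map $g \mapsto \sigma_g$, which requires $f_t \in \Sigma_d$;
\item[(b)] the horizontal lift through Assumption~\ref{ass:smg_filtering_submersion}, which requires identifying $\Sigma_d$ with a section $s(U)$.
\end{itemize}
A well-defined $\Delta_t^{\text{geom}}$ requires $\iota_t$ to be independent of the chosen basis $\{E_k\}$ of $\mathcal{E}$, of the ordering in the Wei--Norman product, and of the local gauge $s$ in (A4).

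\textbf{Step 2: show (a) is unavailable when $\dim(\mathcal{E}) = \infty$.} By Theorem~\ref{thm:triad_pathology_theorem}, the DMZ trajectory leaves every finite orbit, so $f_t \notin \Sigma_d$ generically. Moreover, $\Sigma_d$ is only an orbit of a finite Lie group when the algebra closes, so there is no base point at which (a) can be evaluated.

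\textbf{Step 3: show (b) is gauge-dependent.} By (A4), sections $s$ are determined only up to vertical reparametrization $s' = \Psi \circ s$ with $\pi\circ\Psi = \pi$. Lifting $V_{\text{classical}}$ through $ds$ versus $ds'$ changes the lift by a vertical vector $d\Psi(\cdot) - (\cdot) \in \operatorname{SID}_{f_t}$. This difference is nonzero whenever $\operatorname{SID}_{f_t} \neq \{0\}$ and the curvature $\Omega_f \neq 0$; Lemma~\ref{lem:lemma3_rigorous} gives nontrivial holonomy, so no global flat trivialization can fix the gauge. Consequently $\Delta_t^{\text{geom}} = V_{\text{SMG}} - \iota_t V_{\text{classical}}$ changes under admissible gauge changes. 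That is, it is not a tensorial vector field on $\mathcal{M}$, which is the claim.

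\textbf{Step 4: dimension count for the vertical part.} The vertical component $\omega_{f_t}(V_{\text{SMG}})$ lies in the infinite-dimensional $\operatorname{SID}_{f_t}$. The image of any $\iota_t$ is at most $d$-dimensional, and there is no canonical choice of which vertical directions it should hit. This reinforces the conclusion of Step 3.

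\textbf{Main obstacle.} The hard point is making the statement non-vacuous. Trivially, one can always define $\Delta := V_{\text{SMG}} - \iota V_{\text{classical}}$ for some fixed $\iota$. So "well-defined" must be formalized as invariance under change of Lie basis and change of gauge section. The crux is then exhibiting one explicit gauge transformation that moves the lift by a nonzero vertical vector when $\Omega_f \neq 0$. I would try to build it from the holonomy displacement in Lemma~\ref{lem:lemma3_rigorous}(ii), which supplies a nonzero element of $\operatorname{SID}_{f_0}$ at order $\epsilon^2$.

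In the degenerate case $\dim(\mathcal{E}) < \infty$, Theorem~\ref{thm:classical_equivalence} gives $\operatorname{SID} = \{0\}$, so a difference does exist there, namely $\Delta = 0$. The theorem therefore has to be read in the regime $\dim(\mathcal{E}) = \infty$, and I would state that restriction explicitly.
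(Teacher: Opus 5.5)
Your Steps 1, 2 and 4 reproduce the paper's proof almost verbatim. The paper argues three things. First, $V_{\text{classical}}(t)$ could act in $T_{f_t}\mathcal{M}$ only through a canonical embedding $\iota:\Sigma_d\hookrightarrow\mathcal{M}$. Second, for $\dim(\mathcal{E})=\infty$ the DMZ flow leaves $\iota(\Sigma_d)$. Third, at $f_t\notin\iota(\Sigma_d)$ the subtraction is ill-posed, because $\operatorname{SID}_{f_t}$ is infinite-dimensional while $T\Sigma_d$ has no vertical part. The paper never formalizes ``well-defined''; its requirement of an embedding ``invariant under arbitrary non-closed Lie operator generators'' is postulated, not derived. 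So your remark that the statement is vacuous without such a formalization is fair, as is your explicit restriction to $\dim(\mathcal{E})=\infty$. The trouble is Step 3, which the paper does not contain and which you single out as the crux: as stated, it does not work.

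Step 3 conflates the differential $ds_p$ of a local section with the horizontal lift.
\begin{itemize}
\item The horizontal lift is fixed by the connection alone. By (A3) in Assumption~\ref{ass:smg_filtering_submersion}, $d\pi_{f_t}|_{\operatorname{SVD}\chi_{f_t}}$ is an isometric isomorphism onto $T_{\pi(f_t)}\mathcal{B}$. So $v\mapsto v^H$ uses no section and is untouched by a gauge change $s'=\Psi\circ s$.
\item What does change is $ds_p$ for sections through $f_t$, but this ambiguity has nothing to do with curvature. It already occurs in a flat product bundle. Conversely, when $\Omega_f\neq0$ you can still choose a section through $f_t$ with $ds_p(v)=v^H$ at the single point $p$; curvature only obstructs horizontality on an open set.
\item The holonomy expansion of Lemma~\ref{lem:lemma3_rigorous}(ii) concerns lifts of loops at order $\epsilon^2$. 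It produces no pointwise ambiguity of lifts, so it cannot supply the gauge transformation you want.
\item Inside the paper's framework your premise is contradicted outright: Proposition~\ref{prop:inherited_score_regularity}(iii) asserts that $ds_p$ maps isometrically onto $\operatorname{SVD}\chi_{f_p}$ for every admissible section.
\end{itemize}

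Consequently, compose $v\mapsto v^H$ with any intrinsic way of bringing $V_{\text{classical}}(t)$ into $T_{\pi(f_t)}\mathcal{B}$ (e.g.\ $d\pi$ of the velocity of the classical density curve, when it lies over the same base point). This yields a $\Delta_t^{\text{geom}}$ that is gauge invariant in exactly your sense. Likewise, the dimension count in Step 4 obstructs nothing, because the connection prescribes hitting no vertical direction at all.

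Any genuine obstruction must therefore come from the base-point mismatch between the classical state $\sigma_{\text{cl}}(t)$ and $f_t$, i.e.\ from your Step 2, which is the paper's argument. Even there, the orbit differential (a) can be evaluated at $\sigma_{\text{cl}}(t)$; it merely lands in $T_{\sigma_{\text{cl}}(t)}\mathcal{M}$ rather than $T_{f_t}\mathcal{M}$. A rigorous version would still have to exclude canonical transports such as the flat, path-independent exponential transport $u\mapsto u-\mathbb{E}_{f_t}[u]$ of the Pistone--Sempi manifold. Note also that in Definition~\ref{def:vertical_residual_score} the paper itself subtracts the classical velocity $\sum_k\dot\theta_k^{\text{YY}}\phi_k$ from an unconstrained score velocity inside a single tangent space.
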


\begin{proof}[Proof of Theorem \ref{thm:non_existence_geometric_delta}]
	Suppose for contradiction that such a geometric vector field $\Delta_t^{\text{geom}} \in T_{f_t}\mathcal{M}$ exists.
	
	\begin{enumerate}
		\item \textbf{Domain Mismatch:} $V_{\text{classical}}(t)$ is defined as a tangent vector on the $d$-dimensional Lie-group parameter chart $\mathbb{R}^d$. For $V_{\text{classical}}(t)$ to act as an element of $T_{f_t}\mathcal{M}$, there must exist a canonical isometric embedding $\iota: \Sigma_d \hookrightarrow \mathcal{M}$ that is invariant under arbitrary non-closed Lie operator generators $\mathcal{L}_0^* \in \mathcal{E}$.
		
		\item \textbf{Breakdown of Embeddability under $\dim(\mathcal{E}) = \infty$:} For generic non-linear state spaces ($\dim(\mathcal{E}) = \infty$), iterated Lie brackets $[E_i, E_j] \notin \operatorname{span}(E_1, \dots, E_d)$ continuously drive the true density flow $V_{\text{DMZ}}(t)$ outside the image $\iota(\Sigma_d)$. Thus, $\iota(\Sigma_d)$ is not an invariant submanifold of $\mathcal{M}$.
		
		\item \textbf{Incommensurability of Tangent Spaces:} At any point $f_t \notin \iota(\Sigma_d)$, the vector $V_{\text{classical}}(t)$ is undefined in $T_{f_t}\mathcal{M}$. Furthermore, $T_{f_t}\mathcal{M}$ possesses infinite spatial degrees of freedom in $\operatorname{SID}_{f_t} = \ker(d\pi_f)$, whereas $T_{g(t)}\Sigma_d$ has zero vertical fiber projection. Consequently, subtracting $V_{\text{classical}}(t)$ from $V_{\text{SMG}}(t)$ within the same vector space is ill-posed, establishing that no direct geometric vector $\Delta_t^{\text{geom}}$ can exist.
	\end{enumerate}
\end{proof}

\subsubsection{The Conceptual Shift: From Geometric Difference to Statistical Equivalence}
\label{subsec:conceptual_shift_statistical_equivalence}

The non-existence of a direct geometric adjustment $\Delta_t^{\text{geom}}$ clarifies a fundamental conceptual distinction in modern estimation theory:

\begin{itemize}
	\item \textbf{Geometrically}, classical Yau-Yau and SMG-Yau-Yau represent two distinct topological regimes. Classical Yau-Yau imposes a forced algebraic projection onto a flat, uncurved subspace, discarding vertical operator stress. SMG-Yau-Yau preserves total score energy via horizontal projection $P_f^H = \mathcal{I} - \omega_f$ and vertical fiber quarantining inside $\operatorname{SID}_f$.
	\item \textbf{Statistically}, however, the vertical fiber pull-back operation executed by the Ehresmann connection $\omega_f$ in SMG corresponds precisely to a \textbf{Semiparametric Bias-Correction for Misspecified Working Estimators}.
\end{itemize}

When the classical Yau-Yau filter is viewed not as a global geometric truth, but as a \emph{parametric working model} operating under structural model misspecification, the statistical moments of the vertical residual score field induce an estimable correction term on the Fisher information manifold. We formalize this statistical duality in the following  Section~\ref{sec:semiparametric_bias_correction}.

\subsection{Semiparametric Bias-Correction for Misspecified Estimators: Deriving the Statistical Difference Formula}
\label{sec:semiparametric_bias_correction}

Although a direct \emph{geometric} vector difference $\Delta_t^{\text{geom}}$ is ill-defined due to paradigm incommensurability (Section~\ref{sec:paradigm_discontinuity_non_existence}), a well-defined \textbf{statistical discrepancy formula} $\Delta_t^{\text{stat}}$ exists when classical Yau-Yau filtering is formulated within the framework of \textbf{Semiparametric Bias-Correction for Misspecified Working Estimators} \cite{barndorff1978, amari1985}.

In this subsection, we treat the classical finite-dimensional Yau-Yau filter as a misspecified parametric working estimator $\mathcal{M}_{\text{work}} \subset \mathcal{M}$. We prove that the geometric vertical fiber pull-back in SMG is statistically equivalent to an efficient semiparametric bias correction, and we derive the explicit approximating formula for the statistical adjustment $\Delta_t^{\text{stat}}$.

\subsubsection{The Classical Yau-Yau Filter as a Misspecified Working Estimator}
\label{subsec:working_estimator_setup}

Let $\mathcal{M}_{\text{work}} \triangleq \left\{ p(x; \theta) \;\middle|\; \theta \in \Theta \subset \mathbb{R}^d \right\}$ be a $d$-dimensional parametric exponential working family constructed by taking a finite subset of basis functions $\{E_1(x), \dots, E_d(x)\}$ from an unclosed estimation algebra $\mathcal{E}$ ($\dim(\mathcal{E}) = \infty$):
\begin{equation}
	p(x; \theta) = \exp\left( \sum_{k=1}^d \theta_k E_k(x) - \psi(\theta) \right),
	\label{eq:working_exponential_family}
\end{equation}
where $\psi(\theta) = \ln \int_{\mathcal{X}} \exp\left( \sum_{k=1}^d \theta_k E_k(x) \right) \mu(dx)$ is the cumulant generating function.

In continuous-time filtering, the true unnormalized conditional state density $\sigma(t,x)$ evolves along an infinite-dimensional stochastic flow governed by the Duncan-Mortensen-Zakai (DMZ) equation. When model misspecification occurs (i.e., when the estimation algebra is unclosed, $\dim(\mathcal{E}) = \infty$), this true score flow cannot be embedded into any finite-dimensional Lie group orbit. To construct a computationally tractable filter, the classical Yau-Yau framework restricts the state representation to a $d$-dimensional exponential working family $\mathcal{M}_{\text{work}} = \{p(x;\theta) \mid \theta \in \Theta \subset \mathbb{R}^d\}$.

To rigorously couple the true infinite-dimensional DMZ flow $d\ln \sigma(t,x)$ to the finite-dimensional working trajectory $d\ln p(x;\theta_t^{\text{YY}})$, we enforce a Galerkin projection \cite{quarteroni2015reduced} under the Fisher-Rao metric. The difference between these score flows defines the \textbf{Misspecified Residual Score Field}:
\begin{equation}
	S_{\text{res}}(t,x) \triangleq d\ln \sigma(t,x) - d\ln p(x;\theta_t^{\text{YY}}) \in (T_{p_\theta}\mathcal{M}_{\text{work}})^\perp.
	\label{eq:Misspecified_Residual_Score_Field}
\end{equation}

By construction, $S_{\text{res}}(t,x)$ isolates the exact velocity component of the continuous-time stochastic process that the working manifold $\mathcal{M}_{\text{work}}$ fails to capture. Evaluating and comparing this residual score field serves four primary theoretical and operational objectives:

\begin{enumerate}
	\item \textbf{Constructing Semiparametric Bias Corrections ($\Delta_t^{\text{stat}}$):} 
	By accumulating the cross-moments of $S_{\text{res}}(t,x)$ with the working score basis functions $\phi_i(x;\theta)$ under the inverse Fisher metric $G_\theta^{-1}$, one derives the explicit statistical parameter adjustment vector $\Delta_t^{\text{stat}} \in \mathbb{R}^d$. This correction eliminates accumulated parameter drift bias without requiring an expansion of the parameter dimension $d$.
	
	\item \textbf{Decoupling Base Evolution from Vertical Fiber Stress:} 
	Within the framework of Statistical Fiber Theory, $S_{\text{res}}(t,x)$ represents the vertical velocity vector evaluated by the Ehresmann statistical connection $\omega_f$. It cleanly orthogonally decomposes the true DMZ score generator into horizontal base updates ($P^H V_{\text{true}} \in T_{p_\theta}\mathcal{M}_{\text{work}}$) and vertical fiber stress ($S_{\text{res}} \in \operatorname{SID}_{p_\theta} = \ker(d\pi_f)$).
	
	\item \textbf{Providing a Real-Time Filter Health Diagnostic ($\mathcal{T}_{\text{AAT}}$):} 
	The squared norm of $S_{\text{res}}(t,x)$ quantifies instantaneous model misspecification stress, integrating to form the Active Acausal Tension scalar $\mathcal{T}_{\text{AAT}}(t)$. Tracking $\mathcal{T}_{\text{AAT}}(t)$ in real time provides an automated safety boundary to warn against filter divergence.
	
	\item \textbf{Enabling Non-Invasive Software Upgrades:} 
	Empirically estimating $S_{\text{res}}(t,x)$ over lightweight particle ensembles enables legacy C++/Fortran Yau-Yau solvers to remain unmodified while achieving exact semiparametrically corrected accuracy via an external patch module, rather a direct new SMG-Yau-Yau filtering algorithm.
\end{enumerate}

The following proposition establishes the precise mathematical mechanics of this Fisher-orthogonal projection, demonstrating how enforcing zero metric projection error along $T_{p_\theta}\mathcal{M}_{\text{work}}$ uniquely generates the classical Yau-Yau working SDE while giving rise to the residual score field $S_{\text{res}}(t,x)$.

\begin{theorem}[Fisher-Orthogonal Tangent Projection of the DMZ Score Flow]
	\label{prop:yau_yau_working_sde}
	Let $\mathcal{M}_{\text{work}} \triangleq \{p(x;\theta) = \exp(\sum_{k=1}^d \theta_k E_k(x) - \psi(\theta)) \mid \theta \in \Theta \subset \mathbb{R}^d\}$ be a $d$-dimensional parametric exponential working family equipped with Fisher score basis functions $\phi_i(x;\theta) \triangleq \frac{\partial \ln p(x;\theta)}{\partial \theta_i} = E_i(x) - \mathbb{E}_\theta[E_i]$ and Riemannian Fisher information matrix $G_\theta \in \mathbb{R}^{d \times d}$. 
	
	Assume the unconstrained continuous-time state density dynamics are governed by the Stratonovich Duncan-Mortensen-Zakai (DMZ) equation:
	\begin{equation}
		d\sigma(t,x) = \mathcal{L}_0^* \sigma(t,x) \, dt + \sum_{j=1}^m h_j(x) \sigma(t,x) \circ dY_j(t),
		\label{eq:dmz_stratonovich_prop}
	\end{equation}
	where $\mathcal{L}_0^*$ is the forward Kolmogorov operator and $h_j(x)$ are measurement functions.
	
	Then, under structural model misspecification ($\dim(\mathcal{E}) = \infty$), the classical Yau-Yau parameter trajectory $\theta_t^{\text{YY}} \in \mathbb{R}^d$ obtained by enforcing a Fisher-orthogonal projection of the true unconstrained DMZ score generator onto the tangent space $T_{p_\theta}\mathcal{M}_{\text{work}} = \operatorname{span}\{\phi_1, \dots, \phi_d\}$ evolves uniquely according to:
	\begin{equation}
		\small 
		d\theta_k^{\text{YY}}(t) = \sum_{i=1}^d \left[G_\theta^{-1}\right]_{ki} \left( \mathbb{E}_{p(\cdot; \theta_t^{\text{YY}})}\left[ \left(\frac{\mathcal{L}_0^* p}{p}\right) \phi_i(x; \theta_t^{\text{YY}})\right] dt + \sum_{j=1}^m \mathbb{E}_{p(\cdot; \theta_t^{\text{YY}})}\left[ h_j \phi_i(x; \theta_t^{\text{YY}})\right] \circ dY_j(t) \right).
		\label{eq:classical_working_sde}
	\end{equation}
	Furthermore, this constrained parameter flow generates a non-zero residual score field $S_{\text{res}}(t,x) \in (T_{p_\theta}\mathcal{M}_{\text{work}})^\perp$, inducing persistent semiparametric bias relative to the true conditional density $f_{\text{true}}(t,x)$.
\end{theorem}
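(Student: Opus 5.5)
The proof is a Galerkin (least-squares) projection in the Fisher--Rao Hilbert space $L^2_0(P_{p_\theta})$, followed by an argument that the projection residual does not vanish. First we compute the log-score velocity of the working family. Since $\ln p(x;\theta)=\sum_k\theta_kE_k(x)-\psi(\theta)$, along any parameter path
\[
d\ln p(x;\theta_t)=\sum_{k=1}^d \phi_k(x;\theta_t)\circ d\theta_k(t),
\]
because $\partial_{\theta_k}\psi=\mathbb{E}_\theta[E_k]$. Stratonovich calculus obeys the ordinary chain rule, so no It\^o correction appears here. Next we write the true DMZ log-score velocity, evaluated at the working density $\sigma=p(\cdot;\theta_t)$, as
\[
V_{\mathrm{DMZ}}=\frac{\mathcal{L}_0^*p}{p}\,dt+\sum_{j=1}^m h_j\circ dY_j.
\]
Both pieces are taken modulo constants, i.e.\ centered under $P_{p_\theta}$, so they lie in $T_{p_\theta}\mathcal{M}$ as in Assumption~\ref{ass:orlicz_structure_sec2}. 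Centering does not affect inner products with the centered $\phi_i$.

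The core step is to impose the Fisher-orthogonality condition $g_{p_\theta}(V_{\mathrm{DMZ}}-d\ln p,\phi_i)=0$ for $i=1,\dots,d$. This is exactly the variational principle of Theorem~\ref{thm:smg_yau_yau_filter_optimality_and_dynamics}(I), with $\mathcal{B}$ replaced by the working family and $P^H$ by the orthogonal projector onto $\operatorname{span}\{\phi_i\}$. Pairing the expansion of $d\ln p$ with $\phi_i$ produces $\sum_k [G_\theta]_{ik}\,d\theta_k$. Pairing $V_{\mathrm{DMZ}}$ with $\phi_i$ produces $\mathbb{E}_\theta[(\mathcal{L}_0^*p/p)\phi_i]\,dt+\sum_j\mathbb{E}_\theta[h_j\phi_i]\circ dY_j$. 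The $dt$ and $dY_j$ components are matched separately, since the noise increments are linearly independent.

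Inverting $G_\theta$ then gives \eqref{eq:classical_working_sde}. Invertibility holds because the $\phi_i$ are linearly independent, by Proposition~\ref{prop:inherited_score_regularity}(ii)(b) applied to the exponential family with affinely independent $E_k$. Uniqueness follows because an orthogonal projection onto a closed finite-dimensional subspace of a Hilbert space is unique, and because the resulting coefficient SDE has locally Lipschitz coefficients. Those coefficients are smooth in $\theta$ by Proposition~\ref{prop:inherited_score_regularity}(ii)(c), which licenses differentiating under the integral. This identifies the residual $S_{\mathrm{res}}=V_{\mathrm{DMZ}}-\sum_k\phi_k\circ d\theta_k$ as lying in $(T_{p_\theta}\mathcal{M}_{\mathrm{work}})^\perp$.

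The main obstacle is the final claim that $S_{\mathrm{res}}\neq 0$. The plan is a contradiction argument. Suppose $S_{\mathrm{res}}\equiv 0$ on a time interval. Then both $h_j$ and $\mathcal{L}_0^*p/p$ lie in $\operatorname{span}\{1,E_1,\dots,E_d\}$ along the flow, so the working manifold is invariant under the DMZ flow. Differentiating this invariance in $\theta$ shows that every Lie bracket of the generators, acting on the family, stays in that span. That would make $\operatorname{Lie}(\mathcal{L}_0^*,h)$ act finite-dimensionally, contradicting $\dim(\mathcal{E})=\infty$ through the filtration argument of Theorem~\ref{thm:triad_pathology_theorem}(i).

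The delicate point is passing from invariance of a single orbit to closure of the operator algebra. I would handle it by restricting to nondegenerate generic initial conditions, where the states $p(\cdot;\theta)$ separate the relevant differential operators. Alternatively, it would suffice to prove only the weaker statement that $S_{\mathrm{res}}\neq 0$ for almost every $t$, in the sense of \eqref{eq:null_space_leakage_integral}. Once nonvanishing is established, persistent bias relative to $f_{\mathrm{true}}$ follows: the working flow discards the component $S_{\mathrm{res}}$ at each instant, so the log-densities diverge by $\int S_{\mathrm{res}}$, which has positive Fisher norm.
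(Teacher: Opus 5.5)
Your derivation of \eqref{eq:classical_working_sde} follows the paper's proof. You expand $d\ln p(x;\theta_t)=\sum_k\phi_k\circ d\theta_k$, evaluate the DMZ log-score at the working state, and impose Fisher/Galerkin orthogonality against each $\phi_i$ to get $G_\theta\,d\theta$ equal to the score moments. You then invert $G_\theta$ by minimality.

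One citation slip: Proposition~\ref{prop:inherited_score_regularity} concerns a section of $\pi$, not an arbitrary working family. For invertibility and smoothness of the coefficients, use instead:
\begin{itemize}
\item $G_\theta=\nabla^2\psi(\theta)\succ 0$ for a minimal family;
\item analyticity of $\psi$ on the interior of the natural parameter space;
\item an integrability assumption on $\mathcal{L}_0^*p_\theta/p_\theta$.
\end{itemize}

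The genuine gap is the claim $S_{\text{res}}\neq 0$ and the resulting ``persistent bias''. The paper's proof only asserts this in one sentence about unclosed brackets, so there is nothing there to borrow. Your contradiction argument is more ambitious but does not close, for three reasons.

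\textbf{The hypothesis only holds along one path.} $S_{\text{res}}\equiv 0$ on an interval along the given trajectory does force $h_j\in\operatorname{span}\{1,E_1,\dots,E_d\}$. But it gives $\mathcal{L}_0^*p_\theta/p_\theta\in\operatorname{span}\{1,E_1,\dots,E_d\}$ only for the $\theta$'s on that one path. So you cannot differentiate this condition in arbitrary $\theta$-directions, and retreating to ``generic initial conditions'' changes the theorem rather than proving it.

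\textbf{Invariance of the whole family would not contradict $\dim(\mathcal{E})=\infty$.} This is the more serious problem. Take $N=\{e^{c+\sum_k\theta_kE_k}\}$ invariant. Tangency gives, for every $A\in\mathcal{E}$,
\[
A\sigma_\theta=\bigl(b^A_0(\theta)+\textstyle\sum_k b^A_k(\theta)E_k\bigr)\sigma_\theta
\]
with $\theta$-dependent coefficients. So restriction maps $\mathcal{E}$ into the Lie algebra of vector fields on the $(d+1)$-dimensional manifold $N$, which is not a finite-dimensional action.
\begin{itemize}
\item The field induced by $\mathcal{L}_0^*$ is quadratic in $\theta$, coming from $\tfrac12|\nabla\ln\sigma|^2$.
\item The fields induced by the $h_j$ are constant.
\item Brackets of distinct quadratic fields can be cubic.
\end{itemize}
Hence nothing forces the image to be finite-dimensional, and the restriction map need not be injective. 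Turning invariance into a contradiction needs extra structure, for example a simplicity and non-realizability-by-vector-fields argument of Hazewinkel--Marcus--Sussmann type, as for the cubic sensor. Theorem~\ref{thm:triad_pathology_theorem}(i) does not provide this.

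\textbf{The bias step fails as written.} $S_{\text{res}}$ is the generator evaluated at $p_{\theta_t}$, not along $f_{\text{true}}$. Modulo constants,
\[
d(\ln\sigma-\ln p_{\theta_t})=\bigl(\mathcal{L}_0^*\sigma/\sigma-\mathcal{L}_0^*p_{\theta_t}/p_{\theta_t}\bigr)\,dt+S_{\text{res}},
\]
so the log-density discrepancy is not $\int S_{\text{res}}$. Moreover, a nonvanishing integrand living in moving tangent spaces need not integrate to something of positive norm.
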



\subsubsection{Derivation of the Statistical Difference Formula $\Delta_t^{\text{stat}}$}
\label{subsec:derivation_statistical_difference}

To derive the statistical difference formula, we quantify the score residual field generated by model misspecification and apply semiparametric score projection theory \cite{amari1985, cheng2026entropy}.

\begin{definition}[Vertical Residual Score Field]
	\label{def:vertical_residual_score}
	At parameter state $\theta_t \in \Theta$, the \textbf{Vertical Residual Score Field} $V_{\text{res}}(t, x) \in L_0^\Phi(P_{p_\theta})$ is defined as the difference between the unconstrained continuous drift score generator and its working model projection:
	\begin{equation}
		V_{\text{res}}(t, x) \triangleq \frac{\mathcal{L}_0^* p(x; \theta_t)}{p(x; \theta_t)} - \sum_{k=1}^d \dot{\theta}_k^{\text{YY}}(t) \, \phi_k(x; \theta_t).
		\label{eq:vertical_residual_score_def}
	\end{equation}
where, for a parametric statistical working model $\mathcal{M}_{\text{work}} = \{ p(x; \theta) \mid \theta \in \Theta \subset \mathbb{R}^d \}$, the $k$-th base Fisher score function $\phi_k(x; \theta)$ (not the Stein score function used in \cite{cheng2026smg, cheng2026gsb}) is defined as the partial derivative of the log-density with respect to the model parameter $\theta_k$:$$\phi_k(x; \theta) \triangleq \frac{\partial \ln p(x; \theta)}{\partial \theta_k} = \frac{1}{p(x; \theta)} \frac{\partial p(x; \theta)}{\partial \theta_k}, \quad k \in \{1, 2, \dots, d\}$$
\end{definition}

Notice that $V_{\text{res}}(t, x)$ measures the exact instantaneous rate at which the true stochastic flow deviates from the parametric working family $\mathcal{M}_{\text{work}}$.

\begin{lemma}[Equivalence of Residual Fields]
	\label{lem:Equivalence_Residual_Fields}
	Let $\mathcal{B}_{\text{SMG}} = (\mathcal{M}, \mathcal{B}, \pi, \mathcal{V}, \mathcal{H}, \omega_f, g_f)$ be the Orlicz statistical fiber bundle, where $\mathcal{M}_{\text{work}} \subset \mathcal{M}$ represents the $d$-dimensional parametric working manifold. Let $U_{\text{drift}}(t, x) \triangleq \frac{\mathcal{L}_0^* p(x; \theta_t)}{p(x; \theta_t)} \in T_{p_{\theta_t}}\mathcal{M}$ be the unconstrained continuous drift score generator associated with the Duncan-Mortensen-Zakai (DMZ) flow, and let $\dot{\theta}^{YY}(t) = (\dot{\theta}_1^{YY}(t), \dots, \dot{\theta}_d^{YY}(t))^T \in \mathbb{R}^d$ be the parameter velocity vector generated by the classical Yau-Yau filter.
	
	Then, the statistical residual score field $S_{\text{res}}(t, x)$ and the geometric vertical residual score field $V_{\text{res}}(t, x)$ represent the exact same underlying residual field in $L_0^\Phi(P_{p_{\theta_t}})$, satisfying:
	\begin{equation}
		S_{\text{res}}(t, x) \equiv V_{\text{res}}(t, x) \triangleq \frac{\mathcal{L}_0^* p(x; \theta_t)}{p(x; \theta_t)} - \sum_{k=1}^d \dot{\theta}_k^{YY}(t) \phi_k(x; \theta_t).
		\label{eq:equation_154}
	\end{equation}
\end{lemma}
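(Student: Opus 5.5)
The plan is to compute $S_{\text{res}}$ directly from its definition in \eqref{eq:Misspecified_Residual_Score_Field} and show that it reduces to the expression defining $V_{\text{res}}$ in Definition~\ref{def:vertical_residual_score}. The comparison is made on the drift ($dt$) component. The observation component is handled separately: in the working family the multiplicative terms $h_j \circ dY_j$ enter both $d\ln\sigma$ and the projected flow, so their contributions either cancel or are relegated to the $dY$-part. The lemma is an identity for the drift-rate residual.

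\textbf{Step 1: expand the true score flow.} Write the true score flow as the Stratonovich log-derivative, as in \eqref{eq:unconstrained_DMZ_score_velocity_field}:
\[
d\ln\sigma = \frac{\mathcal{L}_0^*\sigma}{\sigma}\,dt + \sum_j h_j \circ dY_j .
\]
Evaluate it at the working state, which means taking $\sigma$ proportional to $p(\cdot;\theta_t)$ at the instant $t$ where the comparison is made. This gives the drift generator $U_{\text{drift}} = \mathcal{L}_0^* p/p$. The normalization constant contributes only a spatially constant term. That term is removed once we pass to the centered space $L_0^\Phi(P_{p_\theta})$, since tangent vectors are defined modulo constants, as in \eqref{eq:tangent_space_orlicz_sec2}.

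\textbf{Step 2: expand the working score flow.} By the chain rule,
\[
d\ln p(x;\theta_t^{\text{YY}}) = \sum_k \phi_k(x;\theta_t)\, d\theta_k^{\text{YY}}(t),
\]
using $\phi_k = \partial_{\theta_k}\ln p$. For the exponential family this is legitimate, because $\psi$ is smooth and $\phi_k = E_k - \mathbb{E}_\theta[E_k]$. Inserting the drift part $\dot\theta^{\text{YY}}$ of \eqref{eq:classical_working_sde} from Theorem~\ref{prop:yau_yau_working_sde} and subtracting gives
\[
S_{\text{res}} = U_{\text{drift}} - \sum_k \dot\theta_k^{\text{YY}}\phi_k ,
\]
which is exactly \eqref{eq:equation_154}.

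\textbf{Step 3: consistency check.} I will verify that the resulting field lies in $(T_{p_\theta}\mathcal{M}_{\text{work}})^\perp$. Pair it with $\phi_i$ under $g_{p_\theta}$: the drift coefficients are $G_\theta^{-1}\mathbb{E}[U_{\text{drift}}\phi]$, so the pairing vanishes by the Galerkin normal equations, exactly as in the derivation of \eqref{eq:empirical_horizontal_projection}. This shows $V_{\text{res}} = \omega_f(U_{\text{drift}})$ when $\operatorname{SVD}\chi$ is identified with $\operatorname{span}\{\phi_i\}$ via Theorem~\ref{thm:quotient_base_geometry_and_curvature}(i) and Lemma~\ref{lem:tangent_space_decomposition_sec2}.

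\textbf{Step 4: membership in $L_0^\Phi$.} Centering of $U_{\text{drift}}$ follows from $\int \mathcal{L}_0^* p\,d\mu = 0$, i.e. conservation of mass. Orlicz integrability follows from Proposition~\ref{prop:inherited_score_regularity}(ii)(a) for the $\phi_k$, together with an integrability assumption on $U_{\text{drift}}$.

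\textbf{Main obstacle.} The delicate point is Step~1. The definition of $S_{\text{res}}$ compares $d\ln\sigma$ along the true flow, whereas $V_{\text{res}}$ uses $\mathcal{L}_0^* p/p$ evaluated at the working density. The identity therefore holds only at the instantaneous matching point $\sigma(t,\cdot)\propto p(\cdot;\theta_t)$, with the observation terms treated consistently. My first approach is to state this matching explicitly as the interpretation of the Galerkin coupling. Constants and Stratonovich $dY$-terms are then absorbed by projecting onto the centered drift component.
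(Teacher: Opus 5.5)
Your proposal is correct and follows essentially the paper's route: the drift generator evaluated at the working density, the chain rule $d\ln p=\sum_k\phi_k\,d\theta_k^{\text{YY}}$, subtraction, and identification with $\omega_f(U_{\text{drift}})$. It is more careful than the paper's proof, whose last step only asserts that both fields ``measure'' $\omega_f(U_{\text{drift}})$, whereas you verify this through the Galerkin normal equations and state explicitly the Stratonovich $dt$-coefficient and evaluation-at-the-working-state conventions the paper relies on tacitly (here and in deriving the working SDE).

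One small correction: the $dY$-parts do not cancel but leave the generally nonzero vertical term $\sum_j\omega_f(h_j)\circ dY_j$ in $S_{\text{res}}$, which is exactly why the identity must be read on the $dt$-coefficient.
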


\begin{remark}[Dual Geometric-Statistical Perspective of Residual Fields]
	Mathematically, $S_{\text{res}}(t,x)$ and $V_{\text{res}}(t,x)$ represent the \textbf{exact same underlying residual field}:
	\begin{equation}
		S_{\text{res}}(t, x) \equiv V_{\text{res}}(t, x) \triangleq \frac{\mathcal{L}_0^* p(x; \theta_t)}{p(x; \theta_t)} - \sum_{k=1}^d \dot{\theta}_k^{\text{YY}}(t) \, \phi_k(x; \theta_t).
		\label{eq:residual_identity_remark}
	\end{equation}
	
	The notation distinction serves a dual geometric-statistical perspective within Statistical Fiber Theory:
	\begin{itemize}
		\item \textbf{Statistical Perspective ($S_{\text{res}}$):} The symbol $S$ emphasizes the \emph{Statistical Score Residual} under model misspecification. It isolates the unmodeled score variation in $L_0^\Phi(P_{p_\theta})$ that cannot be captured by the working exponential family $\mathcal{M}_{\text{work}}$, serving as the core score moment driver in semiparametric influence function theory.
		\item \textbf{Geometric Perspective ($V_{\text{res}}$):} The symbol $V$ emphasizes the \emph{Vertical Score Vector Field} on the Orlicz statistical fiber bundle $\mathcal{B}_{\text{SMG}}$. It represents the vertical velocity field constrained to the structural internal direction fiber $\operatorname{SID}_{p_\theta} \equiv \ker(d\pi_f)$, which is Fisher-orthogonal to the horizontal base space $\operatorname{SVD}\chi_{p_\theta}$.
	\end{itemize}
\end{remark}


\begin{definition}[Classical Yau-Yau Working Estimator $\theta_t^{\text{YY}}$]
	\label{def:theta_yy_strict}
	Let $\mathcal{M}_{\text{work}} = \{p(x; \theta) \mid \theta \in \Theta \subset \mathbb{R}^d\}$ be the $d$-dimensional parametric exponential working family. The \textbf{Classical Yau-Yau Working Estimator} $\theta_t^{\text{YY}} \in \mathbb{R}^d$ is the parameter trajectory generated by enforcing a continuous Fisher-orthogonal Galerkin projection of the true Duncan--Mortensen--Zakai (DMZ) log-score velocity onto the working tangent space $T_{p_\theta}\mathcal{M}_{\text{work}} = \operatorname{span}\{\phi_1, \dots, \phi_d\}$, satisfying the Stratonovich SDE system \eqref{eq:classical_working_sde}.
\end{definition}

\begin{definition}[SMG Semiparametrically Optimal Estimator $\theta_t^{\text{SMG}}$]
	\label{def:theta_smg_strict}
	Let $f_{\text{true}}(t, x) \in \mathcal{M}$ be the true unconstrained conditional state density governed by the continuous-time state process. The \textbf{SMG Semiparametrically Optimal Estimator} $\theta_t^{\text{SMG}} \in \mathbb{R}^d$ is defined as the minimizer of the Kullback--Leibler (KL) divergence from $f_{\text{true}}(t, \cdot)$ to the working density family $\mathcal{M}_{\text{work}}$:
	\begin{equation}
		\theta_t^{\text{SMG}} \triangleq \arg\min_{\theta \in \Theta} D_{\text{KL}}\left( f_{\text{true}}(t) \,\parallel\, p(\cdot; \theta) \right) = \arg\min_{\theta \in \Theta} \int_{\mathcal{X}} f_{\text{true}}(t, x) \ln \left( \frac{f_{\text{true}}(t, x)}{p(x; \theta)} \right) \mu(dx),
		\label{eq:kl_minimization_strict}
	\end{equation}
	or equivalently, the unique solution to the population estimating equation:
	\begin{equation}
		\Psi\left(\theta_t^{\text{SMG}}\right) \triangleq \mathbb{E}_{f_{\text{true}}(t)}\left[ \nabla_\theta \ln p\left(X; \theta_t^{\text{SMG}}\right) \right] = \mathbf{0} \in \mathbb{R}^d.
		\label{eq:estimating_eq_strict}
	\end{equation}
\end{definition}

\begin{remark}[Empirical Computability of $\theta_t^{\text{SMG}}$]
	\label{rem:empirical_computability_smg}
	Definition \ref{def:theta_smg_strict} specifies $\theta_t^{\text{SMG}}$ as a theoretical population estimand (the Kullback--Leibler projection target) rather than a closed-form static formula, because $f_{\text{true}}(t, x)$ is analytically unknown. However, $\theta_t^{\text{SMG}}$ can be computed in practice via empirical approximation. Specifically:
	
	\begin{itemize}
		\item \textbf{Empirical Expectation via Ensembles:} The unknown population expectation $\mathbb{E}_{f_{\text{true}}(t)}[\cdot]$ in the population estimating equation $\Psi(\theta_t^{\text{SMG}}) = \mathbf{0}$ is replaced in implementation by an empirical sample expectation $\hat{\mathbb{E}}[\cdot]$ over a particle ensemble or quadrature grid $\{x^{(s)}\}_{s=1}^{N_{\text{sample}}} \sim f_{\text{true}}(t)$ updated by incoming observation data.
		
		\item \textbf{Semiparametric Bias Correction ($\Delta_t^{\text{stat}}$):} In practical implementations, the estimator is computed recursively by the following Theorem \ref{thm:semiparametric_bias_correction} via:
		\begin{equation*}
			\theta_t^{\text{SMG}} = \theta_t^{\text{YY}} + \Delta_t^{\text{stat}}.
		\end{equation*}
		The classical Yau--Yau trajectory $\theta_t^{\text{YY}}$ is calculated from the deterministic working model SDE, while the statistical adjustment vector $\Delta_t^{\text{stat}}$ accumulates the empirical moments of the residual score field $S_{\text{res}}$ over the particle samples. Given observation data, we will provide algorithm to calculate $\Delta_t^{\text{stat}}$ in following subsection.
	\end{itemize}
	
	While $f_{\text{true}}$ is not known in closed form, $\theta_t^{\text{SMG}}$ is fully accessible in practice through empirical score-moment integration over data-driven particle representations.
\end{remark}

\begin{theorem}[Semiparametric Bias-Correction Formula for Misspecified Filters]
	\label{thm:semiparametric_bias_correction}
	Let $p(x; \theta_t^{\text{YY}})$ be the misspecified working density generated by the classical Yau-Yau filter \eqref{eq:classical_working_sde}. The optimal semiparametrically bias-corrected parameter trajectory $\theta_t^{\text{SMG}} \in \mathbb{R}^d$ satisfies:
	\begin{equation}
		\theta_t^{\text{SMG}} = \theta_t^{\text{YY}} + \Delta_t^{\text{stat}},
		\label{eq:parameter_bias_correction_split}
	\end{equation}
	where the \textbf{Statistical Difference Approximating Vector} $\Delta_t^{\text{stat}} \in \mathbb{R}^d$ is given explicitly by the empirical Fisher-projected score moment integral:
	\begin{equation}
		\left[ \Delta_t^{\text{stat}} \right]_k = \int_0^t \sum_{i=1}^d \left[ G_{\theta_\tau}^{-1} \right]_{ki} \mathbb{E}_{f_{\text{true}}(\tau)}\left[ V_{\text{res}}(\tau, X) \cdot \phi_i(X; \theta_\tau^{\text{YY}}) \right] d\tau + \mathcal{O}\left(\|\theta_t^{\text{SMG}} - \theta_t^{\text{YY}}\|^2\right).
		\label{eq:statistical_difference_formula}
	\end{equation}
\end{theorem}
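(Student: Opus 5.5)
The plan is to obtain $\Delta_t^{\text{stat}}$ by linearizing the population estimating equation of Definition~\ref{def:theta_smg_strict} around the working trajectory $\theta_t^{\text{YY}}$, and then differentiating in time so that the correction appears as an integral of residual score moments.

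\textbf{Step 1 (linearization).} Set $\delta_t \triangleq \theta_t^{\text{SMG}} - \theta_t^{\text{YY}}$ and expand $\Psi_t(\theta) = \mathbb{E}_{f_{\text{true}}(t)}[\nabla_\theta \ln p(X;\theta)]$ to first order at $\theta_t^{\text{YY}}$. For the exponential working family one has $\phi_i(x;\theta) = E_i(x) - \mathbb{E}_\theta[E_i]$. Hence $\nabla_\theta \phi_i = -G_\theta$, deterministically in $x$. Therefore
\[
\Psi_t(\theta_t^{\text{YY}} + \delta_t) = \Psi_t(\theta_t^{\text{YY}}) - G_{\theta_t^{\text{YY}}}\,\delta_t + \mathcal{O}(\|\delta_t\|^2),
\]
with the quadratic remainder controlled by the third cumulant of $\psi$. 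Setting this to zero gives
\[
\delta_t = G_{\theta_t^{\text{YY}}}^{-1}\,\mathbb{E}_{f_{\text{true}}(t)}[\phi(X;\theta_t^{\text{YY}})] + \mathcal{O}(\|\delta_t\|^2).
\]
The remaining task is to show that $\mathbb{E}_{f_{\text{true}}(t)}[\phi_i(X;\theta_t^{\text{YY}})]$ equals the time integral of the residual moments.

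\textbf{Step 2 (time differentiation).} Assume both trajectories start matched, so that $f_{\text{true}}(0) = p(\cdot;\theta_0)$ and $\delta_0 = 0$. Differentiate $m_i(t) \triangleq \mathbb{E}_{f_{\text{true}}(t)}[\phi_i(X;\theta_t^{\text{YY}})]$. This produces two contributions.
\begin{itemize}
\item[(a)] The density variation contributes $\mathbb{E}_{f_{\text{true}}}[\phi_i \, d\ln\sigma]$, where $d\ln \sigma$ is the DMZ log-score velocity $V_{\text{DMZ}}$ of Theorem~\ref{thm:smg_yau_yau_spde_decomposition}, centered by normalization.
\item[(b)] The parameter motion contributes $-\sum_k (G_\theta)_{ik}\, d\theta_k^{\text{YY}}$.
\end{itemize}
Write $\sum_k (G_\theta)_{ik}\,d\theta^{\text{YY}}_k$ as $\mathbb{E}_{p_\theta}[\phi_i \sum_k \phi_k\, d\theta_k^{\text{YY}}]$. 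To leading order, replace $\mathbb{E}_{p_\theta}$ by $\mathbb{E}_{f_{\text{true}}}$, with error $\mathcal{O}(\|\delta\|)$ times the increment. The drift part of $dm_i$ then becomes $\mathbb{E}_{f_{\text{true}}}\bigl[\phi_i\,(\mathcal{L}_0^*p/p - \sum_k \dot\theta_k^{\text{YY}}\phi_k)\bigr]$, which is exactly $\mathbb{E}_{f_{\text{true}}}[V_{\text{res}}\phi_i]$ by Definition~\ref{def:vertical_residual_score} and Lemma~\ref{lem:Equivalence_Residual_Fields}.

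The observation terms $h_j \circ dY_j$ are handled separately. They cancel to leading order, because the Yau-Yau SDE \eqref{eq:classical_working_sde} uses exactly the Fisher projection of $h_j$. By Theorem~\ref{prop:yau_yau_working_sde}, the orthogonal residual of $h_j$ has zero $\phi_i$-moment under $p_\theta$, so its $f_{\text{true}}$-moment is $\mathcal{O}(\|\delta\|)$.

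\textbf{Step 3 (integration).} Integrate $dm_i$ from $0$ to $t$ and multiply by $G_{\theta_t}^{-1}$. Then move the inverse Fisher matrix inside the integral, replacing $G_{\theta_t}^{-1}$ by $G_{\theta_\tau}^{-1}$. The commutation error is $\int_0^t (G_{\theta_t}^{-1}-G_{\theta_\tau}^{-1})\,dm(\tau)$. This yields \eqref{eq:statistical_difference_formula}.

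\textbf{Main obstacle.} The hard step is justifying that every accumulated error is genuinely $\mathcal{O}(\|\delta_t\|^2)$, rather than $\mathcal{O}(\|\delta\|)$ times the integrated increment. This involves two error sources.
\begin{itemize}
\item[(i)] The swap $\mathbb{E}_{p_\theta} \leftrightarrow \mathbb{E}_{f_{\text{true}}}$.
\item[(ii)] The $G^{-1}$ commutation.
\end{itemize}
I would first try to control both by a Gr\"onwall-type bound, using smoothness of $\psi$ and Orlicz-integrability of $\phi_i$ and $V_{\text{res}}$ from Proposition~\ref{prop:inherited_score_regularity}. If that fails, I would state the result as an identity up to $\mathcal{O}(\sup_{\tau\le t}\|\delta_\tau\|^2 + t\sup_\tau\|\delta_\tau\|\,\|\dot\theta_\tau\|)$ and absorb the second term under a small-time or slow-drift assumption.
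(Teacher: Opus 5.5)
Your Step 1 is sound and is the same linearization the paper uses. Yours is even sharper: for the exponential working family $\nabla_\theta\Psi_t=-G_\theta$ exactly, so $\Delta_t^{\text{stat}}=G_{\theta_t^{\text{YY}}}^{-1}m(t)+\mathcal{O}(\|\Delta_t^{\text{stat}}\|^2)$ with $m(t)\triangleq\mathbb{E}_{f_{\text{true}}(t)}[\phi(X;\theta_t^{\text{YY}})]$. The gap is in Step 2, where you make three replacements. You replace $\mathbb{E}_{p_\theta}[\phi_i\phi_k]$ by $\mathbb{E}_{f_{\text{true}}}[\phi_i\phi_k]$. You silently replace $\mathcal{L}_0^*\sigma/\sigma$ (which is what $d\ln\sigma$ contains) by $\mathcal{L}_0^*p/p$. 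And you transfer the $p_\theta$-orthogonality of the $h_j$-residual to $f_{\text{true}}$. The error of each is governed by the full discrepancy between $f_{\text{true}}(\tau)$ and $p(\cdot;\theta_\tau^{\text{YY}})$, not by $\|\delta_\tau\|$. That discrepancy contains the vertical misspecification, which persists when $\delta=0$. The KL projection onto an exponential family matches only the first moments $\mathbb{E}[E_k]$. Hence $\mathbb{E}_{f_{\text{true}}}[\phi_i\phi_k]\neq[G_\theta]_{ik}$ and $\operatorname{Cov}_{f_{\text{true}}}(\phi_i,h_j)\neq\operatorname{Cov}_{p_\theta}(\phi_i,h_j)$ in general, even at $\delta=0$. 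These errors are first order in the misspecification. Neither a Gr\"onwall argument nor your fallback remainder, which is still expressed through $\delta$, can absorb them.

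In fact no bookkeeping can close this step, because the identity it targets is false. Take $h\equiv0$; a weak sensor $h=\kappa x$, which makes $\dim\mathcal{E}=\infty$, only adds terms vanishing as $\kappa\to0$. Use the Gaussian working family $E_1=x$, $E_2=x^2$, true dynamics $dX=-\epsilon X^3\,dt+dW$, and $f_{\text{true}}(0)=p(\cdot;\theta_0)=\mathcal{N}(0,1)$. Then $G_{\theta_0}=\operatorname{diag}(1,2)$, $\dot\theta_0^{\text{YY}}=(0,\tfrac12-3\epsilon)$, and $V_{\text{res}}(0,\cdot)=S_{\text{res}}=-\epsilon(x^4-6x^2+3)$. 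Write $\mathcal{L}_0$ for the backward generator (the formal adjoint of $\mathcal{L}_0^*$). One finds $m(0)=\dot m(0)=0$ and $\ddot m(0)=\mathbb{E}_p[(\mathcal{L}_0E)\,S_{\text{res}}]$, so $\ddot m_2(0)=48\epsilon^2$. Meanwhile $J(\tau)\triangleq\mathbb{E}_{f_{\text{true}}(\tau)}[V_{\text{res}}(\tau,X)\phi(X;\theta_\tau^{\text{YY}})]$ satisfies $J(0)=0$ and $\dot J(0)=\mathbb{E}_p[\phi\,S_{\text{res}}^2]$, so $\dot J_2(0)=192\epsilon^2$. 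Hence $[\Delta_t^{\text{stat}}]_2=12\epsilon^2t^2+\mathcal{O}(t^3)$, whereas the claimed integral equals $48\epsilon^2t^2+\mathcal{O}(t^3)$. The allowed remainder is only $\mathcal{O}(\epsilon^4t^4)$, and your fallback term $t\sup_\tau\|\delta_\tau\|\,\|\dot\theta_\tau\|$ is only $\mathcal{O}(\epsilon^2t^3)$. The paper's proof offers no way around this: its Step 3 simply asserts $\mathbb{E}_{f_{\text{true}}(t)}[\phi_i(X;\theta_t^{\text{YY}})]=\int_0^t\mathbb{E}_{f_{\text{true}}(\tau)}[V_{\text{res}}\phi_i]\,d\tau$ without derivation, and that is exactly the identity this example refutes. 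What survives is your Step 1 together with the exact increment $dm_i=[\operatorname{Cov}_{f_{\text{true}}}(\phi_i,\mathcal{L}_0^*\sigma/\sigma)-\operatorname{Cov}_{p_\theta}(\phi_i,\mathcal{L}_0^*p/p)]\,dt+\sum_j[\operatorname{Cov}_{f_{\text{true}}}(\phi_i,h_j)-\operatorname{Cov}_{p_\theta}(\phi_i,h_j)]\circ dY_j$. This is not the $V_{\text{res}}$-moment appearing in the statement.
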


\begin{remark}
[Analysis of Discrepancy $\mathcal{O}\left(\|\theta_t^{\text{SMG}} - \theta_t^{\text{YY}}\|^2\right)$ in Equation \eqref{eq:statistical_difference_formula}]
	
	If the parameter discrepancy $\|\theta_t^{\text{SMG}} - \theta_t^{\text{YY}}\|$ becomes large, the second-order Taylor expansion error $\mathcal{O}\left(\|\theta_t^{\text{SMG}} - \theta_t^{\text{YY}}\|^2\right)$ grows significantly, which appears to threaten the validity of the first-order linear approximation formula:
	\begin{equation}
		\left[ \Delta_t^{\text{stat}} \right]_k = \int_0^t \sum_{i=1}^d \left[ G_{\theta_\tau}^{-1} \right]_{ki} \mathbb{E}_{f_{\text{true}}(\tau)}\left[ S_{\text{res}}(\tau, X) \cdot \phi_i(X; \theta_\tau^{\text{YY}}) \right] d\tau + \mathcal{O}\left(\|\theta_t^{\text{SMG}} - \theta_t^{\text{YY}}\|^2\right).
		\label{eq:statistical_difference_formula_analysis}
	\end{equation}
	
	However, \textbf{this formula remains mathematically rigorous and operationally meaningful} due to three core mechanisms in continuous-time filtering theory:
	
	\begin{enumerate}
		\item \textbf{Continuous Recurrent Updating ($\Delta t \to 0$):}\\
		In continuous-time data processing, bias correction is applied recursively at infinitesimal time steps $t_{n+1} = t_n + \Delta t$. Over a single time step $\Delta t$, the single-step parameter drift deviation is bounded by $\|\theta_{t+\Delta t}^{\text{SMG}} - \theta_{t+\Delta t}^{\text{YY}}\| = \mathcal{O}(\Delta t)$. Consequently, the second-order term collapses to $\mathcal{O}(\Delta t^2)$, ensuring that the local linear approximation is exact as $\Delta t \to 0$.
		
		\item \textbf{Iterative Newton--Raphson Correctative Process:}\\
		If evaluated at a fixed time $t$ where the initial working parameter $\theta_t^{\text{YY}}$ suffers from large misspecification bias, $\Delta_t^{\text{stat}}$ acts as the exact \emph{first-order Newton--Raphson step} on the statistical manifold:
		\begin{equation}
			\theta^{(m+1)} = \theta^{(m)} + \left[ G_{\theta^{(m)}}^{-1} \right] \Psi\left(\theta^{(m)}\right).
		\end{equation}
		Iterating this update map rapidly contracts the residual error quadratically to zero, even when starting from a large initial discrepancy.
		
		\item \textbf{Diagnostic Role for Gauge Symmetry Breaking (GSB):}\\
		If $\|\theta_t^{\text{SMG}} - \theta_t^{\text{YY}}\|$ grows persistently large over time despite iterative correction, it signals that the $d$-dimensional working manifold $\mathcal{M}_{\text{work}}$ has suffered a \emph{topological capacity breach} due to high Active Acausal Tension $\mathcal{T}_{\text{AAT}}(t)$. Rather than invalidating the theory, a large residual mathematically triggers {\it Gauge Symmetry Breaking (GSB)}, instructing the system to expand the base manifold dimension $d \to d+1$ to restore local linearizability.
	\end{enumerate}
\end{remark}


\begin{theorem}[Geometric-Statistical Bias Correction Duality]
	\label{thm:Geometric_Statistical_Bias_Correction_Duality}
	Let $V_{\text{DMZ}}(t) \in T_{f_t}\mathcal{M}$ be the unconstrained DMZ log-score velocity field on the infinite-dimensional Orlicz bundle $\mathcal{B}_{\text{SMG}}$. The geometric vertical projection via the Ehresmann connection 1-form $\omega_{f(t)}$ induces an instantaneous semiparametric bias correction velocity $\frac{d}{dt} \Delta_t^{\text{stat}} \in \mathbb{R}^d$ on the base space manifold $(\mathcal{B}, g_{\mathcal{B}})$, given component-wise by:
	\begin{equation}
		\frac{d}{dt} \left[ \Delta_t^{\text{stat}} \right]_k = \sum_{i=1}^d \left[ G_\theta^{-1}(\theta_t) \right]_{ki} \left\langle \omega_{f(t)}\left( V_{\text{DMZ}}(t) \right), \, \phi_i(\cdot; \theta_t) \right\rangle_{g_{f_t}}
		\label{eq:corollary2}
	\end{equation}
	for all $k \in \{1, \dots, d\}$.
\end{theorem}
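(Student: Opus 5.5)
The plan is to obtain \eqref{eq:corollary2} by differentiating the integral formula of Theorem~\ref{thm:semiparametric_bias_correction} in $t$ and then recognizing the integrand geometrically. Differentiating \eqref{eq:statistical_difference_formula} in $t$ and discarding the second-order remainder (justified, as in the remark following that theorem, by the local increment being $\mathcal{O}(\Delta t)$ so the remainder contributes $\mathcal{O}(\Delta t^2)$) gives
\[
\frac{d}{dt}\left[\Delta_t^{\text{stat}}\right]_k = \sum_{i=1}^d \left[G_{\theta_t}^{-1}\right]_{ki}\, \mathbb{E}_{f_t}\left[V_{\text{res}}(t,X)\,\phi_i(X;\theta_t)\right].
\]
The expectation on the right is, by Definition~\ref{def:fisher_rao_metric_sec2}, the Fisher--Rao inner product $\langle V_{\text{res}}(t),\phi_i\rangle_{g_{f_t}}$. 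It therefore remains to show that $\langle V_{\text{res}}(t),\phi_i\rangle_{g_{f_t}} = \langle \omega_{f(t)}(V_{\text{DMZ}}(t)),\phi_i\rangle_{g_{f_t}}$ for each $i$.

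To identify the residual, I place the working manifold inside the SMG bundle. By Proposition~\ref{prop:inherited_score_regularity} and Theorem~\ref{thm:quotient_base_geometry_and_curvature}(i), I take the local section to be $s(\theta)=p(\cdot;\theta)$, so that $\operatorname{SVD}\chi_{f}=\operatorname{span}\{\phi_1,\dots,\phi_d\}$. Lemma~\ref{lem:tangent_space_decomposition_sec2} then gives the split $V_{\text{DMZ}} = P^H_{f}V_{\text{DMZ}} + \omega_f(V_{\text{DMZ}})$. By Theorem~\ref{prop:yau_yau_working_sde}, the Yau--Yau velocity $\sum_k \dot\theta^{\text{YY}}_k\phi_k$ is exactly the Fisher-orthogonal (Galerkin) projection of the DMZ generator onto this span, namely $P^H_f$. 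Combining this with Lemma~\ref{lem:Equivalence_Residual_Fields}, I will identify $V_{\text{res}} = S_{\text{res}} = (\mathcal{I}-P^H_f)V_{\text{DMZ}} = \omega_f(V_{\text{DMZ}})$, with the drift and observation parts treated in parallel.

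\textbf{Main obstacle.} The difficulty is that the expectation in Theorem~\ref{thm:semiparametric_bias_correction} is taken under $f_{\text{true}}$, whereas $g_{f_t}$ and the projection are evaluated at the working point. Taken literally, $\omega_f(V)\perp\phi_i$ would make the right-hand side of \eqref{eq:corollary2} vanish under the working-point metric. So the argument must read $\langle\cdot,\cdot\rangle_{g_{f_t}}$ at the true density $f_t$, with $\omega_{f(t)}$ the vertical projection relative to the working horizontal space. The nonzero cross-moment then measures exactly the change of base measure between the two points.

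I will make this precise by writing $\mathbb{E}_{f_t}[\,\cdot\,] = \mathbb{E}_{p_\theta}[\,\cdot\,(f_t/p_\theta)]$ and arguing to first order in $\theta^{\text{SMG}}-\theta^{\text{YY}}$, consistent with the remainder already present in \eqref{eq:statistical_difference_formula}. Fixing this convention of evaluation point is the step I expect to need the most care. The remaining steps are substitution and linearity of $G^{-1}$.
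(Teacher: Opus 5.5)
This is essentially the paper's proof: differentiate \eqref{eq:statistical_difference_formula}, read the $f_{\text{true}}$-expectation as a Fisher--Rao pairing, and substitute $V_{\text{res}}=S_{\text{res}}=\omega(V_{\text{DMZ}})$; your evaluation-point worry is also well founded, since with $\omega$ and $g$ taken at one point where the $\phi_i$ are horizontal the right-hand side of \eqref{eq:corollary2} vanishes identically, and the paper's proof uses your convention only tacitly (its $S_{\text{res}}$ is the working-point Galerkin residual, orthogonal to the $\phi_i$ only under $p(\cdot;\theta_t)$, yet it is paired under $f_t=f_{\text{true}}(t)$ in passing from \eqref{eq:bias_integral} to \eqref{eq:inner_prod_step}), after which both sides are $f_{\text{true}}$-expectations of the same function, so your change-of-measure expansion is unnecessary. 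One caveat, shared with the paper, which drops the remainder silently: the remark's $\mathcal{O}(\Delta t^2)$ heuristic presumes resetting $\theta^{\text{YY}}$ to $\theta^{\text{SMG}}$ at every step, whereas the remainder in \eqref{eq:statistical_difference_formula} is in the cumulative discrepancy, and the time derivative of an $\mathcal{O}(\|\Delta_t^{\text{stat}}\|^2)$ term is at best $\mathcal{O}(\|\Delta_t^{\text{stat}}\|)$, not zero, so \eqref{eq:corollary2} holds only to leading order in $\|\Delta_t^{\text{stat}}\|$.
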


\begin{remark}[Architectural and Statistical Implications of Theorem \ref{thm:Geometric_Statistical_Bias_Correction_Duality}]
	The identity established in Theorem \ref{thm:Geometric_Statistical_Bias_Correction_Duality} carries four fundamental consequences for non-linear filtering, geometric statistics, and computational design:
	\begin{enumerate}
		\item \textbf{Geometric-Statistical Duality}: The identity explicitly unites abstract infinite-dimensional differential geometry with operational semiparametric inference. It proves that movement along the vertical fiber $\operatorname{SID}_{f_t}$ under the action of the connection $\omega_f$ is directly dual to instantaneous parameter drift correction on the finite-dimensional base manifold $(\mathcal{B}, g_{\mathcal{B}})$.
		
		\item \textbf{Real-Time Algorithmic Computability}: While the infinite-dimensional pull-back velocity $\omega_{f(t)}(V_{\text{DMZ}}(t))$ resides in a formally infinite-dimensional function space, Theorem  \ref{thm:Geometric_Statistical_Bias_Correction_Duality} demonstrates that its projection onto the base space reduces to finite-dimensional moments $\mathbb{E}_{f_t}[S_{\text{res}} \cdot \phi_i]$. These moments can be evaluated in real time using discrete particle cloud representations or quadrature nodes.
		
		\item \textbf{Non-Invasive Legacy Software Integration}: The analytical decoupling guaranteed by Equation~\eqref{eq:corollary2} justifies modular software architectures. Commercial legacy filtering engines (e.g., standard Yau--Yau or Kalman-type solvers) can remain entirely unmodified as base-space propagators, while an external lightweight patch module independently computes $\frac{d}{dt} \Delta_t^{\text{stat}}$ to achieve exact infinite-dimensional performance.
		
		\item \textbf{Lossless Information Harvest}: In classical algebraic filtering, higher-order modes generated by unclosed Lie brackets ($[E_i, E_j] \notin \operatorname{SVD}\chi$) are discarded to maintain finite-dimensional closure. Theorem  \ref{thm:Geometric_Statistical_Bias_Correction_Duality} proves that these extra-algebraic modes are continuously harvested by the connection $\omega_f$ and converted directly into parameter corrections, preventing information loss without expanding the algebraic state dimension.
	\end{enumerate}
\end{remark}
\subsection{Empirical Construction of the Semiparametric Bias-Correction $\Delta_t^{\text{stat}}$ Estimator and Architecture and Algorithm}
\label{subsec:empirical_construction_semiparametric_bias_correction}

The classical Yau-Yau filtering algorithm \cite{yau1999, yau2014} has been widely adopted in real-time signal processing and industrial control due to its reduction of infinite-dimensional SPDEs to low-dimensional, finite systems of ODEs. However, when deployed in generic non-linear operational regimes, commercial Yau-Yau solvers suffer from unmonitored parameter drift caused by infinite Lie algebra expansion ($\operatorname{dim}(\mathcal{E}) = \infty$).

The semiparametric bias-correction framework established in Section~\ref{sec:epistemological_foundations} provides a practical breakthrough: popular Yau-Yau software architectures do not need to be discarded or rewritten from scratch. Instead, by appending an add-on \textbf{SMG Semiparametric Bias-Adjustment Module} that continuously evaluates the empirical vertical score residual $S_{\text{res}}(t, x)$,  filtering engines can be upgraded to achieve exact SMG-Yau-Yau performance with zero breaking changes to base-space propagators.

In this subsection, we formulate the complete step-by-step mathematical procedure for estimating the full bias-corrected state estimator $\theta_{n+1}^{\text{SMG}}$, provide explicit explanations for all operational equations, and construct the corrected algorithmic execution pipeline.

\subsubsection{Full Set of Procedures for Estimating the Bias-Corrected State Estimator $\theta_{n+1}^{\text{SMG}}$}
\label{subsec:full_procedure_smg_estimator}

To construct the optimal semiparametrically corrected parameter state $\theta_{n+1}^{\text{SMG}} \in \mathbb{R}^d$ across discrete time steps $t_n \to t_{n+1} = t_n + \Delta t$, the continuous-time score projection dynamics on the Orlicz fiber bundle $\mathcal{B}_{\text{SMG}}$ are lifted onto an empirical measure $\mathbb{P}_N = \frac{1}{N_{\text{sample}}} \sum_{s=1}^{N_{\text{sample}}} \delta_{x^{(s)}}$ defined over an ensemble of particle samples $\{x^{(s)}\}_{s=1}^{N_{\text{sample}}} \sim f_{\text{true}}(t_n)$.

The estimation procedure decomposes into six sequential mathematical steps.

\subsubsection*{Step 1: Nominal Working Parameter Propagation ($\theta_{n+1}^{\text{YY}}$)}
The baseline parameter vector propagates along the misspecified working manifold $\mathcal{M}_{\text{work}} = \{p(x;\theta) \mid \theta \in \Theta \subset \mathbb{R}^d\}$ via the standard Yau-Yau Stratonovich update derived in Theorem \ref{prop:yau_yau_working_sde} (The Fisher-Orthogonal Tangent Projection of the DMZ Score Flow Theorem)
\begin{equation}
	\small 
	d\theta_k^{\text{YY}}(t) = \sum_{i=1}^d [G_\theta^{-1}]_{ki} \left( \mathbb{E}_{p(\cdot;\theta_t^{\text{YY}})} \left[ \left( \frac{\mathcal{L}_0^* p}{p} \right) \phi_i(x;\theta_t^{\text{YY}}) \right] dt + \sum_{j=1}^m \mathbb{E}_{p(\cdot;\theta_t^{\text{YY}})} \left[ h_j \phi_i(x;\theta_t^{\text{YY}}) \right] \circ dY_j(t) \right).
	\label{eq:nominal_yau_yau_sde_step1}
\end{equation}
In discrete time over step size $\Delta t$, given observation increments $\Delta Y_j(t_n) = Y_j(t_{n+1}) - Y_j(t_n)$, the nominal working parameter update $\theta_{n+1}^{\text{YY}}$ is evaluated as:
\begin{equation}
	\theta_{n+1, k}^{\text{YY}} = \theta_{n, k}^{\text{YY}} + \sum_{i=1}^d [\widehat{G}_\theta^{-1}]_{ki} \left( \widehat{\mathcal{M}}_i^{0, (N)}(t_n) \Delta t + \sum_{j=1}^m \widehat{\mathcal{M}}_{ij}^{1, (N)}(t_n) \Delta Y_j(t_n) \right),
	\label{eq:nominal_yau_yau_discrete_step}
\end{equation}
where $\widehat{\mathcal{M}}_i^{0, (N)}(t_n)$ and $\widehat{\mathcal{M}}_{ij}^{1, (N)}(t_n)$ represent the empirical drift and measurement score moments evaluated over the sample particles. Equation~\eqref{eq:nominal_yau_yau_discrete_step} executes the unadjusted Galerkin projection of the DMZ flow onto $T_{p_\theta}\mathcal{M}_{\text{work}}$. Under infinite Lie complexity ($\operatorname{dim}(\mathcal{E}) = \infty$), this update incurs persistent parameter drift that must be compensated for in subsequent steps.

\subsubsection*{Step 2: Empirical Score Basis \& Regularized Fisher Metric Evaluation}
At the current nominal parameter state $\theta_n^{\text{YY}}$, the working logarithmic Fisher score basis functions $\phi_i(x;\theta_n^{\text{YY}}) \triangleq \frac{\partial \ln p(x;\theta_n^{\text{YY}})}{\partial \theta_i} = E_i(x) - \mathbb{E}_{\theta_n^{\text{YY}}}[E_i]$ are evaluated over the state samples  $\{x^{(s)}\}_{s=1}^{N_{\text{sample}}}$:
\begin{equation}
	\phi_i(x^{(s)}; \theta_n^{\text{YY}}) = E_i(x^{(s)}) - \frac{1}{N_{\text{sample}}} \sum_{r=1}^{N_{\text{sample}}} E_i(x^{(r)}), \quad i = 1, \dots, d,
	\label{eq:score_coframe_particles}
\end{equation}
where $\{E_1(x), \dots, E_d(x)\}$ are the basis functions spanning the working parameter subspace.

The sample-induced Fisher Information Metric matrix $\widehat{G}_\theta \in \mathbb{R}^{d \times d}$ is constructed via sample expectations:
\begin{equation}
	[\widehat{G}_\theta]_{ij} = \frac{1}{N_{\text{sample}}} \sum_{s=1}^{N_{\text{sample}}} \phi_i(x^{(s)}; \theta_n^{\text{YY}}) \phi_j(x^{(s)}; \theta_n^{\text{YY}}).
	\label{eq:empirical_fisher_matrix_def}
\end{equation}
To ensure numerical inversion stability in regions where state density concentration causes near-singularities, Tikhonov regularization is applied:
\begin{equation}
	\widehat{G}_\theta^{\text{reg}} \triangleq \widehat{G}_\theta + \lambda_{\text{reg}} I_{d \times d}, \quad \text{with } \lambda_{\text{reg}} = \max\left(10^{-8}, \, 10^{-4} \cdot \lambda_{\min}(\widehat{G}_\theta)\right).
	\label{eq:tikhonov_fisher_reg}
\end{equation}
Inverting $\widehat{G}_\theta^{\text{reg}}$ via Cholesky decomposition provides $[\widehat{G}_\theta^{\text{reg}}]^{-1}$, which defines the local inner product metric tensor on the $d$-dimensional base manifold $\mathcal{B}$.

\subsubsection*{Step 3: Unconstrained Drift Operator \& Empirical Horizontal Projection}
The unconstrained drift score generator $A_0(x) \triangleq \frac{\mathcal{L}_0^* p(x;\theta_n^{\text{YY}})}{p(x;\theta_n^{\text{YY}})}$ represents the exact, unconstrained Fokker-Planck score velocity field. For each state $x^{(s)}$, it is evaluated pointwise:
\begin{equation}
	A_0(x^{(s)}) = \frac{\mathcal{L}_0^* p(x^{(s)}; \theta_n^{\text{YY}})}{p(x^{(s)}; \theta_n^{\text{YY}})}.
	\label{eq:unconstrained_drift_score_particle}
\end{equation}
The horizontal projection $\mathcal{P}^H A_0(x^{(s)}) \in \operatorname{SVD}\chi_{f}$ extracts the component of $A_0$ that lies strictly within the working tangent space $T_{p_\theta}\mathcal{M}_{\text{work}} = \operatorname{span}\{\phi_1, \dots, \phi_d\}$. Applying empirical Galerkin projection under the sample inner product $\langle u, v \rangle_N = \frac{1}{N_{\text{sample}}} \sum_{s=1}^{N_{\text{sample}}} u(x^{(s)}) v(x^{(s)})$ yields:
\begin{equation}
	\mathcal{P}^H A_0(x^{(s)}) = \sum_{k=1}^d \left( \sum_{i=1}^d [(\widehat{G}_\theta^{\text{reg}})^{-1}]_{ki} \widehat{\mathcal{M}}_i^{0, (N)}(t_n) \right) \phi_k(x^{(s)}; \theta_n^{\text{YY}}),
	\label{eq:empirical_horizontal_projection_galerkin}
\end{equation}
where $\widehat{\mathcal{M}}_i^{0, (N)}(t_n) = \frac{1}{N_{\text{sample}}} \sum_{r=1}^{N_{\text{sample}}} A_0(x^{(r)}) \phi_i(x^{(r)}; \theta_n^{\text{YY}})$ is the empirical drift score moment. Equation~\eqref{eq:empirical_horizontal_projection_galerkin} solves the normal equations $\langle A_0 - \mathcal{P}^H A_0, \phi_i \rangle_N = 0$, enforcing zero metric error along horizontal score directions.

\subsubsection*{Step 4: Vertical Score Residual Extraction \& Quarantining}
Subtracting the horizontal projection $\mathcal{P}^H A_0(x^{(s)})$ from the unconstrained drift operator $A_0(x^{(s)})$ isolates the pointwise \textbf{Vertical Residual Score Field} $S_{\text{res}}(t_n, x^{(s)}) \in \operatorname{SID}_f$:
\begin{equation}
	S_{\text{res}}(t_n, x^{(s)}) = A_0(x^{(s)}) - \mathcal{P}^H A_0(x^{(s)}).
	\label{eq:vertical_score_residual_quarantine}
\end{equation}
Mathematically, $S_{\text{res}}(t_n, x^{(s)})$ represents the exact component of the continuous DMZ flow that cannot be absorbed by the $d$-dimensional Lie algebra of the working family $\mathcal{M}_{\text{work}}$. By construction, $\langle S_{\text{res}}, \phi_i \rangle_N = 0$ for all $i = 1, \dots, d$, proving that $S_{\text{res}} \in (T_{p_\theta}\mathcal{M}_{\text{work}})^\perp \equiv \operatorname{SID}_f$. Rather than discarding $S_{\text{res}}$ (which causes null-space leakage in classical filters), the SMG module quarantines $S_{\text{res}}$ in the vertical fiber to drive the semiparametric bias update in Step 5.

\subsubsection*{Step 5: Semiparametric Bias-Correction Vector $\Delta_{n+1}^{\text{stat}}$ Derivation}

To account for the finite-sample information gap and the unclosed Lie algebra modes on the Orlicz bundle, we construct the semiparametric bias-correction vector $\Delta_{n+1}^{\text{stat}}(x)$ by projecting the empirical score defect onto the Statistically Indispensable Direction space $SID_f$.

\paragraph{1. Empirical Stein Score Defect Vector}
Let $\hat{S}_N(x) = \nabla_x \log \hat{p}_N(x)$ denote the non-parametric Stein score function estimated from $N$ sample observations on the active covariate manifold. Let $S_{\theta_{n+1}}(x) = \nabla_x \log p(x \mid \theta_{n+1})$ be the finite-dimensional classical density score associated with the closed Lie algebra $\mathfrak{g}_{\text{YY}}$.

The score defect vector field $\delta S_{n+1}(x) \in T_x \mathcal{M}$ is defined as:
\begin{equation}
	\delta S_{n+1}(x) = \hat{S}_N(x) - S_{\theta_{n+1}}(x).
\end{equation}

\paragraph{2. $SID_f$ Subspace Quarantining and Mode Projection}
Using the $g_f$-orthogonal bundle decomposition, we isolate the active residual vector field $v_{\text{OOD}}(x) \in SID_f$ from the classical horizontal distribution:
\begin{equation}
	v_{\text{OOD}}(x) = \mathbf{P}_{SID_f} \left[ \delta S_{n+1}(x) \right] = \sum_{\alpha \in \mathcal{A}_{\text{active}}} c_{\alpha, N} \phi_\alpha(x),
\end{equation}
where $\{\phi_\alpha(x)\}_{\alpha \in \mathcal{A}_{\text{active}}}$ forms an orthonormal basis for $SID_f$ under the Luxemburg norm $\|\cdot\|_{L_\Phi}$, and $c_{\alpha, N} = \langle \delta S_{n+1}, \phi_\alpha \rangle_{g_f}$ are the empirical expansion coefficients.

\paragraph{3. Riemannian Metric Gain and Connection Pullback}
The semiparametric correction $\Delta_{n+1}^{\text{stat}}(x)$ acts as a horizontal connection shift along the statistical fiber. Integrating the active connection form $\omega_f$ over the transition $[t_n, t_{n+1}]$, $\Delta_{n+1}^{\text{stat}}(x)$ is evaluated via the cotangent pullback of the active score defect:
\begin{equation}
	\Delta_{n+1}^{\text{stat}}(x) = \mathcal{K}_{n+1}(x) \cdot \mathbf{P}_{SID_f} \left[ \frac{1}{N} \sum_{i=1}^{N} \nabla_x \log \hat{p}_N(x_i) - \mathbb{E}_{p_{\theta_{n+1}}} \left[ \nabla_x \log p(X \mid \theta_{n+1}) \right] \right],
\end{equation}
where $\mathcal{K}_{n+1}(x) \in T_x \mathcal{M} \otimes T_x^* \mathcal{M}$ is the localized Fisher-Rao gain tensor given by:
\begin{equation}
	\mathcal{K}_{n+1}(x) = \left( I + \Gamma_{\text{SMG}}(x) \Delta t \right)^{-1} g_f^{ij}(x).
\end{equation}

\paragraph{4. Component-wise Tensor Formulation and Scaling}
In local Riemannian coordinates $x = (x^1, \dots, x^d)^T$, the $k$-th spatial component of the bias-correction vector field is expressed as:
\begin{equation}
	\left[ \Delta_{n+1}^{\text{stat}}(x) \right]^k = \sum_{j=1}^d g_f^{kj}(x) \left( \sum_{\alpha \in \mathcal{A}_{\text{active}}} \lambda_\alpha^{(N)} \phi_{\alpha, j}(x) \right) + \mathcal{O}_p\left( N^{1 - r/2} \right), \quad (r \ge 3)
\end{equation}
where $\lambda_\alpha^{(N)} = \mathcal{O}_p(N^{-1/2})$ governs the empirical variance scale of the $\alpha$-th active mode, and $r$ denotes the tensor valence order of the higher-order geometric connection.

\paragraph{5. Incorporation into Dynamic Filtering Drift}
The updated state drift operator in the SMG-Yau-Yau stochastic differential equation incorporates $\Delta_{n+1}^{\text{stat}}(x)$ to prevent manifold departure:
\begin{equation}
	\mathrm{d} x_{t} = \left[ f(x_t) + \mathbf{Q}_{n+1}(x_t) + \Delta_{n+1}^{\text{stat}}(x_t) \right] \mathrm{d}t + \sigma(x_t) \mathrm{d}W_t,
\end{equation}
where $\mathbf{Q}_{n+1}(x_t)$ represents the classical Yau-Yau algebraic drift operator. By Algebraic Collapse Theorem (Theorem \ref{thm:algebraic_collapse}) and Universal Tensor Valence Scaling Law Theorem (Theorem \ref{thm:tensor_valence_scaling}, as $N \to \infty$ or $\operatorname{dim}(\mathcal{E}) < \infty$, the connection dissolves and:
\begin{equation}
	\|\Delta_{n+1}^{\text{stat}}\|_{g_f} \xrightarrow{\mathcal{P}} 0,
\end{equation}
restoring exact structural reduction to the classical finite-dimensional filtering manifold.

\subsubsection*{Step 6: Total State Synthesis \& Active Acausal Tension ($\mathcal{T}_{\text{AAT}}$) Monitoring}
The final optimal semiparametrically bias-corrected parameter state estimate $\theta_{n+1}^{\text{SMG}} \in \mathbb{R}^d$ is synthesized by superimposing the statistical bias correction vector onto the nominal working parameter:
\begin{equation}
	\theta_{n+1}^{\text{SMG}} = \theta_{n+1}^{\text{YY}} + \Delta_{n+1}^{\text{stat}}.
	\label{eq:optimal_smg_synthesis}
\end{equation}
Simultaneously, the sample Active Acausal Tension scalar $\mathcal{T}_{\text{AAT}}(t_{n+1})$ is updated by integrating the empirical mean-square norm of the vertical score residual:
\begin{equation}
	\mathcal{T}_{\text{AAT}}(t_{n+1}) = \mathcal{T}_{\text{AAT}}(t_n) + \left( \frac{1}{N_{\text{sample}}} \sum_{s=1}^{N_{\text{sample}}} \left[ S_{\text{res}}(t_n, x^{(s)}) \right]^2 \right) \Delta t.
	\label{eq:empirical_taat_accumulation}
\end{equation}
If $\mathcal{T}_{\text{AAT}}(t_{n+1}) \ge \Theta_{\text{crit}} \triangleq \pi^2 / K_{\max}$, the topological capacity limit of the base space $\mathcal{B}_d$ is breached. The algorithm triggers Gauge Symmetry Breaking (GSB): promoting the principal vertical mode $v_{\text{dom}}$, expanding the base manifold dimension $d \leftarrow d+1$, appending a new coordinate parameter $\theta_{d+1} \leftarrow 0$, and resetting $\mathcal{T}_{\text{AAT}} \leftarrow 0$.

\begin{remark}[Mathematical Procedure for Calculating the Maximal Sectional Curvature $K_{\max}$]
	\label{rem:calculation_kmax}
	The constant $K_{\max}$ represents the global supremum of the sectional curvature $K(p, \sigma)$ over all parameter positions $p \in \mathcal{B}_d$ and all $2$-dimensional tangent planes $\sigma \subset T_p\mathcal{B}_d$. Its exact value is determined via a four-stage differential-geometric computation:
	\begin{enumerate}
		\item \textbf{Evaluation of the Riemannian Base Metric $g_{\mathcal{B}}$:}
		In local coordinates $p = (p_1, \dots, p_d)^T \in \mathcal{B}_d$, the metric tensor components $g_{ij}(p)$ correspond to the Fisher Information Metric matrix $G(p)$:
		\begin{equation}
			g_{ij}(p) = \mathbb{E}_{f_p}\left[ \phi_i(X; p) \phi_j(X; p) \right] = \int_{\mathcal{X}} \frac{\partial \ln f_p(x)}{\partial p_i} \frac{\partial \ln f_p(x)}{\partial p_j} f_p(x) \, dx,
		\end{equation}
		where $\phi_i(x; p) \triangleq \frac{\partial \ln f_p(x)}{\partial p_i}$ are the logarithmic score functions.
		
		\item \textbf{Computation of the Riemann-Christoffel Curvature Tensor $R_{ijk\ell}(p)$:}
		Using the Levi-Civita connection $\nabla^{(0)}$, the Christoffel symbols of the first kind are evaluated as $\Gamma_{ij, k}(p) = \mathbb{E}_{f_p}\left[ \left( \phi_{ij} + \frac{1}{2}\phi_i \phi_j \right) \phi_k \right]$, where $\phi_{ij}(x; p) \triangleq \frac{\partial^2 \ln f_p(x)}{\partial p_i \partial p_j}$. The fully lowered Riemann curvature tensor components $R_{ijk\ell}(p) \triangleq g_{\mathcal{B}}\left( R\left( \frac{\partial}{\partial p_k}, \frac{\partial}{\partial p_\ell} \right) \frac{\partial}{\partial p_j}, \frac{\partial}{\partial p_i} \right)$ expand explicitly to:
		\begin{equation}
			R_{ijk\ell}(p) = \frac{1}{4} \mathbb{E}_{f_p}\left[ \phi_{ik} \phi_{j\ell} - \phi_{i\ell} \phi_{jk} \right] + \sum_{m=1}^d \sum_{n=1}^d \left( \Gamma_{ik, m} g^{mn} \Gamma_{j\ell, n} - \Gamma_{i\ell, m} g^{mn} \Gamma_{jk, n} \right),
		\end{equation}
		where $g^{mn} = [g_{\mathcal{B}}^{-1}]_{mn}$ denotes the inverse metric matrix.
		
		\item \textbf{Local Maximization over the Grassmannian $\mathrm{Gr}(2, T_p\mathcal{B}_d)$:}
		At any fixed point $p \in \mathcal{B}_d$, the maximum sectional curvature $K_{\max}(p)$ across all $2$-planes $\sigma = \operatorname{span}\{u, v\} \subset T_p\mathcal{B}_d$ is defined by:
		\begin{equation}
			K_{\max}(p) \triangleq \max_{\substack{u, v \in T_p\mathcal{B}_d \\ u \wedge v \neq 0}} \frac{R_{ijk\ell}(p) u^i v^j u^k v^\ell}{\left(g_{ik}g_{j\ell} - g_{i\ell}g_{jk}\right) u^i v^j u^k v^\ell}.
		\end{equation}
		Under a local $g_{\mathcal{B}}$-orthonormal frame $\{e_1, \dots, e_d\}$, this reduces to a constrained optimization problem on the Stiefel manifold $V_2(\mathbb{R}^d) = \{ (u, v) \in \mathbb{R}^d \times \mathbb{R}^d \mid \|u\|=1, \|v\|=1, u^T v = 0 \}$:
		\begin{equation}
			K_{\max}(p) = \max_{(u, v) \in V_2(\mathbb{R}^d)} \sum_{a,b,c,d=1}^d R_{abcd}(p) u^a v^b u^c v^d \le \lambda_{\max}\left(\hat{R}(p)\right),
		\end{equation}
		where $\lambda_{\max}(\hat{R}(p))$ denotes the largest eigenvalue of the curvature operator $\hat{R}: \Lambda^2 T_p\mathcal{B}_d \to \Lambda^2 T_p\mathcal{B}_d$ acting on 2-forms.
		
		\item \textbf{Global Supremum Over the Base Manifold $\mathcal{B}_d$:}
		The global upper bound $K_{\max}$ is established by taking the supremum over the entire parameter space:
		\begin{equation}
			K_{\max} \triangleq \sup_{p \in \mathcal{B}_d} K_{\max}(p) = \sup_{p \in \mathcal{B}_d} \left( \max_{\sigma \subset T_p\mathcal{B}_d} K(p, \sigma) \right).
		\end{equation}
	\end{enumerate}
	The resulting constant $K_{\max}$ uniquely defines the conjugate point geodesic horizon $L_{\text{conj}} = \pi / \sqrt{K_{\max}}$, yielding the critical topological capacity limit $\Theta_{\text{crit}} \triangleq \pi^2 / K_{\max}$ for Gauge Symmetry Breaking (GSB).
\end{remark}
\subsubsection{Unified Algorithmic Execution Architecture}
\label{subsec:unified_algorithmic_execution}

Algorithm~\ref{alg:smg_yau_yau_bias_correction_patch} details the complete discrete-time execution pipeline for updating the nominal working parameter $\theta_{n+1}^{\text{YY}}$, evaluating the vertical score residual $S_{\text{res}}$, accumulating the bias correction vector $\Delta_{n+1}^{\text{stat}}$, and synthesizing the optimal state estimate $\theta_{n+1}^{\text{SMG}}$.

\begin{algorithm}[H]
	\caption{SMG-Yau-Yau Filter with Semiparametric Bias-Correction Patch}
	\label{alg:smg_yau_yau_bias_correction_patch}
	\begin{algorithmic}[1]
		\REQUIRE Nominal parameter $\theta_n^{\text{YY}} \in \mathbb{R}^d$, bias vector $\Delta_n^{\text{stat}} \in \mathbb{R}^d$, observation increment $\Delta Y_n = Y(t_{n+1}) - Y(t_n)$, step size $\Delta t$, particle ensemble $\{x^{(s)}\}_{s=1}^{N_{\text{sample}}} \sim f_{\text{true}}(t_n)$, regularization parameter $\lambda_{\text{reg}} > 0$, critical capacity threshold $\Theta_{\text{crit}}$.
		\ENSURE Bias-corrected state estimate $\theta_{n+1}^{\text{SMG}} \in \mathbb{R}^d$, updated bias vector $\Delta_{n+1}^{\text{stat}} \in \mathbb{R}^d$, and tension scalar $\mathcal{T}_{\text{AAT}}(t_{n+1})$.
		
		\STATE \COMMENT{\textbf{Step 1: Nominal Yau-Yau Working Parameter Update}}
		\STATE Execute standard working model SDE update via Eq.~\eqref{eq:nominal_yau_yau_discrete_step} to obtain nominal parameter vector $\theta_{n+1}^{\text{YY}} \in \mathbb{R}^d$.
		
		\STATE \COMMENT{\textbf{Step 2: Score Co-Frame \& Regularized Fisher Metric Construction}}
		\FOR{$i = 1, \dots, d$}
		\FOR{$s = 1, \dots, N_{\text{sample}}$}
		\STATE Evaluate logarithmic score basis: $\phi_i(x^{(s)}; \theta_n^{\text{YY}}) \leftarrow E_i(x^{(s)}) - \frac{1}{N_{\text{sample}}} \sum_{r=1}^{N_{\text{sample}}} E_i(x^{(r)})$.
		\ENDFOR
		\ENDFOR
		\FOR{$i = 1, \dots, d$}
		\FOR{$j = 1, \dots, d$}
		\STATE Compute sample Fisher metric entry: $[\widehat{G}_\theta]_{ij} \leftarrow \frac{1}{N_{\text{sample}}} \sum_{s=1}^{N_{\text{sample}}} \phi_i(x^{(s)}; \theta_n^{\text{YY}}) \phi_j(x^{(s)}; \theta_n^{\text{YY}})$.
		\ENDFOR
		\ENDFOR
		\STATE Apply Tikhonov regularization: $\widehat{G}_\theta^{\text{reg}} \leftarrow \widehat{G}_\theta + \lambda_{\text{reg}} I_{d \times d}$.
		\STATE Invert regularized metric tensor: $[(\widehat{G}_\theta^{\text{reg}})^{-1}] \leftarrow \operatorname{CholeskySolve}(\widehat{G}_\theta^{\text{reg}})$.
		
		\STATE \COMMENT{\textbf{Step 3: Unconstrained Drift Operator \& Empirical Horizontal Projection}}
		\FOR{$i = 1, \dots, d$}
		\STATE Evaluate drift score moment: $\widehat{\mathcal{M}}_i^{0, (N)}(t_n) \leftarrow \frac{1}{N_{\text{sample}}} \sum_{s=1}^{N_{\text{sample}}} \left( \frac{\mathcal{L}_0^* p(x^{(s)}; \theta_n^{\text{YY}})}{p(x^{(s)}; \theta_n^{\text{YY}})} \right) \phi_i(x^{(s)}; \theta_n^{\text{YY}})$.
		\ENDFOR
		\FOR{$s = 1, \dots, N_{\text{sample}}$}
		\STATE Compute pointwise drift score: $A_0(x^{(s)}) \leftarrow \frac{\mathcal{L}_0^* p(x^{(s)}; \theta_n^{\text{YY}})}{p(x^{(s)}; \theta_n^{\text{YY}})}$.
		\STATE Compute horizontal projection: $\mathcal{P}^H A_0(x^{(s)}) \leftarrow \sum_{k=1}^d \left( \sum_{i=1}^d [(\widehat{G}_\theta^{\text{reg}})^{-1}]_{ki} \widehat{\mathcal{M}}_i^{0, (N)}(t_n) \right) \phi_k(x^{(s)}; \theta_n^{\text{YY}})$.
		\STATE Isolate vertical score residual in $\operatorname{SID}_f$: $S_{\text{res}}(t_n, x^{(s)}) \leftarrow A_0(x^{(s)}) - \mathcal{P}^H A_0(x^{(s)})$.
		\ENDFOR
		
		\STATE \COMMENT{\textbf{Step 4: Semiparametric Bias Correction Vector Update $\Delta_{n+1}^{\text{stat}}$}}
		\FOR{$k = 1, \dots, d$}
		\STATE Compute cross-moment: $m_k(t_n) \leftarrow \frac{1}{N_{\text{sample}}} \sum_{s=1}^{N_{\text{sample}}} S_{\text{res}}(t_n, x^{(s)}) \cdot \phi_k(x^{(s)}; \theta_n^{\text{YY}})$.
		\STATE Compute adjustment rate: $\delta_k(t_n) \leftarrow \sum_{i=1}^d [(\widehat{G}_\theta^{\text{reg}})^{-1}]_{ki} m_i(t_n)$.
		\STATE Accumulate bias vector: $[\Delta_{n+1}^{\text{stat}}]_k \leftarrow [\Delta_n^{\text{stat}}]_k + \delta_k(t_n) \cdot \Delta t$.
		\ENDFOR
		
		\STATE \COMMENT{\textbf{Step 5: Final State Synthesis \& Active Tension Accumulation}}
		\STATE Synthesize optimal SMG parameter estimate: $\theta_{n+1}^{\text{SMG}} \leftarrow \theta_{n+1}^{\text{YY}} + \Delta_{n+1}^{\text{stat}}$.
		\STATE Accumulate Active Acausal Tension: $\mathcal{T}_{\text{AAT}}(t_{n+1}) \leftarrow \mathcal{T}_{\text{AAT}}(t_n) + \left( \frac{1}{N_{\text{sample}}} \sum_{s=1}^{N_{\text{sample}}} [S_{\text{res}}(t_n, x^{(s)})]^2 \right) \Delta t$.
		
		\IF{$\mathcal{T}_{\text{AAT}}(t_{n+1}) \ge \Theta_{\text{crit}}$}
		\STATE \COMMENT{\textbf{Step 6: Trigger Gauge Symmetry Breaking (GSB)}}
		\STATE Extract dominant vertical mode $v_{\text{dom}}$ from $S_{\text{res}}$, increment dimension $d \leftarrow d+1$, expand parameter vector $\theta_{d+1} \leftarrow 0$, and reset tension scalar $\mathcal{T}_{\text{AAT}} \leftarrow 0$.
		\ENDIF
		
		\RETURN $\theta_{n+1}^{\text{SMG}}$, $\Delta_{n+1}^{\text{stat}}$, and $\mathcal{T}_{\text{AAT}}(t_{n+1})$.
	\end{algorithmic}
\end{algorithm}
\subsubsection{Computational Complexity Analysis(incorporating semiparametric bias correction)}
\label{subsec:complexity_analysis}

The online computational operational count per time step of the unified semiparametric SMG-Yau-Yau filter (incorporating semiparametric bias correction) is analyzed step-by-step:

\begin{itemize}
	\item \textbf{Score Basis \& Expectation Integration (Step 2)}: Evaluating score basis functions $\{\phi_i(x^{(s)})\}_{i=1}^d$ and constructing the $d \times d$ sample Fisher metric matrix $\widehat{G}_\theta$ across $N_{\text{sample}}$ sample particles requires $\mathcal{O}(d^2 \cdot N_{\text{sample}} + d \cdot m \cdot N_{\text{sample}})$ scalar operations.
	\item \textbf{Fisher Metric Tensor Inversion (Step 2)}: Inverting the $d \times d$ symmetric positive-definite matrix $\widehat{G}_\theta^{\text{reg}}$ via Cholesky decomposition requires $\frac{1}{3} d^3 + \mathcal{O}(d^2)$ floating-point operations (FLOPs).
	\item \textbf{Base State Vector Update \& Bias Correction (Steps 3--4)}: Matrix-vector multiplications for horizontal projection $\mathcal{P}^H A_0$, residual cross-moments $m_k$, and bias rate updates $\delta_k$ require $\mathcal{O}(d^2 + d \cdot m)$ FLOPs.
	\item \textbf{Vertical Tension Evaluation (Step 5)}: Evaluating $S_{\text{res}}(x^{(s)})$ and integrating its squared norm across $N_{\text{sample}}$ particles requires $\mathcal{O}(d \cdot N_{\text{sample}})$ FLOPs.
\end{itemize}

\begin{theorem}[Overall Computational Complexity per Time Step]
	\label{thm:overall_complexity_bias_correction}
	Let $d = \operatorname{dim}(\mathcal{B})$ be the macroscopic base dimension, $m$ be the measurement vector dimension, and $N_{\text{sample}}$ be the number of spatial particle sample nodes. The total online computational complexity per time step of the SMG-Yau-Yau filter with semiparametric bias correction is strictly bounded by:
	\begin{equation}
		\mathcal{C}_{\text{SMG-Yau-Yau}} = \mathcal{O}\left( d^3 + d^2 \cdot N_{\text{sample}} + d \cdot m \cdot N_{\text{sample}} \right).
		\label{eq:complexity_bound_formula}
	\end{equation}
	In particular, because $d \ll \infty$, $\mathcal{C}_{\text{SMG-Yau-Yau}}$ is completely independent of the dimension of the underlying estimation Lie algebra $\operatorname{dim}(\mathcal{E}) = \infty$.
\end{theorem}

\begin{proof}[Proof of Theorem \ref{thm:overall_complexity_bias_correction}]
	Summing operational counts across Steps 1--5 yields total FLOPs $$T(d, m, N_{\text{sample}}) = \frac{1}{3}d^3 + C_1 d^2 N_{\text{sample}} + C_2 d m N_{\text{sample}} + \mathcal{O}(d^2),$$ establishing Equation~\eqref{eq:complexity_bound_formula}.
\end{proof}

\subsubsection{Practical Engineering Considerations and Industrial Adoption Strategy}
\label{subsec:engineering_considerations}

To deploy the SMG-Yau-Yau bias-correction filter in real-time embedded systems (e.g., DSPs, FPGAs, or GPU clusters), three key engineering strategies are implemented:

\begin{enumerate}
	\item \textbf{Fisher Metric Conditioning \& Tikhonov Regularization}: In regions of state space where density functions $f_p(x)$ become highly localized, $\widehat{G}_p$ may exhibit elevated condition numbers. Applying dynamic Tikhonov regularization $\widehat{G}_p^{\text{reg}} \triangleq \widehat{G}_p + \lambda_{\text{reg}} I_{d \times d}$ with $\lambda_{\text{reg}} = \max(10^{-8}, 10^{-4} \cdot \lambda_{\min}(\widehat{G}_p))$ guarantees non-singular Cholesky factorization across all execution steps.
	\item \textbf{Importance Sampling over Orlicz Domains}: For high-dimensional physical state domains $\mathcal{X} = \mathbb{R}^p$, importance sampling under Gaussian mixture proposal distributions $q(x) = \sum w_k \mathcal{N}(x; \mu_k, \Sigma_k)$ ensures particle concentration in high-probability density regions, reducing required sample sizes to $N_{\text{sample}} \sim 10^2$--$10^3$.
	\item \textbf{GPU Parallelization \& Non-Invasive API Integration}: Steps 2, 3, and 5 consist of independent pointwise integrand evaluations across $N_{\text{sample}}$ nodes, which are embarrassingly parallelizable on modern GPU hardware. Furthermore, the module operates as an external C++/Fortran API wrapper: it receives $\theta_n^{\text{YY}}$ from commercial core binaries, evaluates empirical moments over lightweight particle samples, and returns the corrected estimate $\theta_n^{\text{SMG}}$ without requiring modifications to legacy codebases.
\end{enumerate}

\section{Unification of Algebraic and Sample-Size Asymptotics: The Dual-Axis Collapse Mechanism}
\label{sec:dual_axis_unification}

The theoretical relationship between the Statistically Meaningful Geometry (SMG) filter \cite{cheng2026smg} and the classical Yau-Yau non-linear filter \cite{yau1999, yau1994finite} can be understood by examining how the active fiber bundle collapses. Rather than presenting conflicting descriptions, the algebraic degradation analysis ($\operatorname{dim}(\mathcal{E}) < \infty$ versus $\operatorname{dim}(\mathcal{E}) = \infty$) and the statistical large-sample asymptotic limit ($N \to \infty$) describe two orthogonal reduction mechanisms acting on the statistical manifold $\mathcal{M}$. 

We define the dual parameter space of structural complexity as:
\begin{equation}
	\mathcal{P}_{\text{struct}} := \left\{ (\operatorname{dim}(\mathcal{E}), N) \in (\mathbb{N} \cup \{\infty\}) \times \mathbb{N}^+ \right\}, \label{eq:dual_parameter_space}
\end{equation}
where $\operatorname{dim}(\mathcal{E})$ denotes the dimension of the estimation Lie algebra generated by the drift and observation operators \cite{brockett1981}, and $N$ represents the empirical sample size determining the information-geometric metric-topological structure \cite{cheng2026edge2}.

\subsection{Conceptual Architecture of the Dual-Axis Hierarchy}
\label{subsec:conceptual_architecture}

The SMG framework constructs an Orlicz fiber bundle over the space of probability measures, decomposing the tangent space $T_f\mathcal{M}$ at a density state $f$ into orthogonal subspaces \cite{pistone1995, cheng2026entropy}:
\begin{equation}
	T_f\mathcal{M} = \operatorname{SVD}\chi_f \oplus_{\perp} \operatorname{SID}_f, \label{eq:orlicz_tangent_decomp}
\end{equation}
where $\operatorname{SVD}\chi_f$ denotes the Statistically Visible Directions (governing the horizontal base space dynamics $\mathcal{B}$) and $\operatorname{SID}_f$ denotes the Statistically Invisible Directions (governing the vertical gauge fiber) \cite{lauritzen1987}. 

The geometry of this bundle is parametrized by the horizontal Ehresmann connection $1$-form $\omega_f$ and its associated curvature $2$-form $\Omega_f = d\omega_f + \frac{1}{2}[\omega_f, \omega_f]$. The collapse of the full SMG dynamics to the classical Yau-Yau filtering regime corresponds to the structural trivialization of the vertical gauge fiber:
\begin{equation}
	\operatorname{SID}_f \longrightarrow \{0\} \quad \implies \quad \omega_f \longrightarrow 0, \quad \Omega_f \longrightarrow 0. \label{eq:trivialization_condition}
\end{equation}

This reduction occurs along two distinct axes:
\begin{enumerate}
	\item \textbf{The Algebraic Axis ($\operatorname{dim}(\mathcal{E})$):} Operates at a fixed sample size $N$ and reflects the underlying functional degree of freedom of the operator algebra \cite{brockett1981}.
	\item \textbf{The Statistical Axis ($N$):} Operates under infinite algebraic complexity ($\operatorname{dim}(\mathcal{E}) = \infty$) and models large-sample concentration via thermodynamic cooling \cite{amari1985, cheng2026edge2}.
\end{enumerate}

\begin{figure}[htbp]
	\centering
	\begin{tikzpicture}[
		node distance=2.2cm and 3.2cm,
		box/.style={
			draw, 
			rectangle, 
			minimum width=3.4cm, 
			minimum height=1.3cm, 
			align=center, 
			font=\small,
			inner sep=6pt
		},
		axis/.style={-Stealth, thick},
		arrow/.style={-Stealth, semithick},
		label font/.style={font=\small}
		]
		
		\draw[axis] (0,0) -- (8.5,0) node[below right] {$N \to \infty$};
		\draw[axis] (0,0) -- (0,5.2) node[above] {$\operatorname{dim}(\mathcal{E})$};
		
		\node[left] at (0,4.2) {$(\infty)$};
		\node[left] at (0,1.2) {$(<\infty)$};
		
		\node[box] (smg) at (2.2,4.2) {Full Active SMG\\Filter Bundle};
		\node[box] (attractor) at (6.8,4.2) {Asymptotic Attractor\\$\mathcal{M}_\infty \equiv CS$};
		\node[box] (yau) at (2.2,1.2) {Classical Yau-Yau\\Wei-Norman ODEs};
		
		\draw[arrow] (smg) -- (attractor) 
		node[midway, above=15pt, label font] {$N^{1-r/2}$ Scaling};
		
		\draw[arrow] (smg) -- (yau) 
		node[midway, right=3pt, label font] {Algebraic Boundary};
		
		\draw[arrow] (yau.east) -- ++(2.8,0) |- (attractor.south);
		
	\end{tikzpicture}
	\caption{The Dual-Axis Collapse Diagram. The Active SMG Filter Bundle collapses to Classical Yau-Yau filtering via either the algebraic boundary ($\operatorname{dim}(\mathcal{E}) < \infty$) or the asymptotic statistical limit ($N \to \infty$).}
	\label{fig:dual_axis_diagram}
\end{figure}

\subsection{The Algebraic Axis: Structural Closure at Finite Dimensions}
\label{subsec:algebraic_axis}

In classical non-linear filtering theory, the Duncan-Mortensen-Zakai (DMZ) equation governs the unnormalized conditional density $\sigma(t,x)$ \cite{duncan1967, mortensen1967, zakai1969, yau1999}:
\begin{equation}
	d\sigma(t,x) = \mathcal{L}_0^* \sigma(t,x) dt + \sum_{j=1}^m h_j(x) \sigma(t,x) dY_j(t), \label{eq:dmz_equation_sec3}
\end{equation}
where $\mathcal{L}_0^*$ is the formal adjoint of the drift generator, $h_j(x)$ are observation functions, and $Y(t)$ is the observation process. 

The estimation Lie algebra $\mathcal{E}$ is defined as the real Lie algebra generated by the differential operators:
\begin{equation}
	\mathcal{E} := \operatorname{Lie}\left\{ \mathcal{L}_0^*, h_1, h_2, \dots, h_m \right\}. \label{eq:estimation_lie_algebra_sec3}
\end{equation}

\begin{theorem}[Algebraic Collapse Theorem]
	\label{thm:algebraic_collapse}
	Let $\mathcal{M}$ be the statistical manifold of filtering densities. If the estimation Lie algebra $\mathcal{E}$ is finite-dimensional, i.e., $\operatorname{dim}(\mathcal{E}) = d < \infty$, then:
	\begin{enumerate}
		\item[\rm (i)] The vertical gauge space is structurally trivial: $\operatorname{SID}_f = \{0\}$ for all $f \in \mathcal{M}$.
		\item[\rm (ii)] The Ehresmann connection $1$-form and curvature $2$-form vanish identically: $\omega_f \equiv 0$ and $\Omega_f \equiv 0$.
		\item[\rm (iii)] The coupled SMG filter equations decouple and collapse identically to the classical Wei-Norman finite-dimensional ODE system.
	\end{enumerate}
\end{theorem}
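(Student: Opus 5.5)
The proof reduces to Theorem~\ref{thm:wei_norman_exact}, which already handles the finite-dimensional case, and to the bundle identities set up in Section~\ref{sec:smg_yau_yau_reconstructed}. I would take the three assertions in order, each building on the last.

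For (i), Theorem~\ref{thm:wei_norman_exact}(i) says that when $\operatorname{dim}(\mathcal{E}) = d < \infty$ the DMZ solution stays on the $d$-dimensional Wei--Norman orbit $\Sigma_d$. Its zero-truncation clause gives $\mathcal{P}_{\mathcal{E}^\perp}(d\sigma/\sigma) \equiv 0$. So every admissible score velocity $V_{\mathrm{DMZ}}(t)$ lies in $\operatorname{span}\{E_1,\dots,E_d\}$, centred under $f_t$.

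Next, take the base $\mathcal{B}$ to be the Wei--Norman coordinate chart $g \in \mathbb{R}^d$, with local section $s(g) = \exp(g_1E_1)\cdots\exp(g_dE_d)\sigma(0,\cdot)$ as in (A4). Proposition~\ref{prop:inherited_score_regularity}(ii)(b) and Theorem~\ref{thm:quotient_base_geometry_and_curvature}(i) then show that the $d$ score functions $\phi_i = \partial_{g_i}\ln f_g$ form a basis of $\operatorname{SVD}\chi_f$. A dimension count therefore gives $\operatorname{SVD}\chi_f = T_f\Sigma_d$. The natural ambient space for the filter is then the orbit itself, since it is invariant under the flow.

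Lemma~\ref{lem:tangent_space_decomposition_sec2} gives $T_f\Sigma_d = \operatorname{SVD}\chi_f \oplus \operatorname{SID}_f$. Since the first summand already exhausts the space, $\operatorname{SID}_f = \{0\}$.

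The main obstacle sits here. On the full Orlicz manifold $\mathcal{M}$, $\ker(d\pi_f)$ is infinite-dimensional for any $d$-dimensional base. The statement "$\operatorname{SID}_f = \{0\}$ for all $f$" must therefore be read on the dynamically relevant sub-bundle, meaning the orbit $\Sigma_d$ reached by the filter. I would make this explicit by restricting the bundle to $\Sigma_d$, or equivalently by declaring $\operatorname{SID}_f$ to be the vertical part of the reachable tangent directions $\operatorname{span}\mathcal{E}$. I would then argue that nothing downstream sees the complementary directions, because $\omega_f(V_{\mathrm{DMZ}}) = 0$ holds in either reading.

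For (ii), Definition~\ref{def:ehresmann_connection} says $\omega_f$ takes values in $\operatorname{SID}_f = \{0\}$, so $\omega_f \equiv 0$ and $P^H_f = \mathcal{I}$. The curvature $\Omega_f(U,V) = \omega_f([P^HU,P^HV])$ then vanishes as well.

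There is a consistency check through Lemma~\ref{lem:lemma3_rigorous}(i): the horizontal distribution equals $T\Sigma_d$, which is integrable, and this again forces $\Omega_f = 0$. The same conclusion also follows algebraically from the closure $[E_i,E_j] = \sum_k C_{ij}^k E_k$, since brackets of horizontal fields never leave $\operatorname{span}\mathcal{E}$.

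For (iii), substitute $\omega_f = 0$ and $P^H = \mathcal{I}$ into the SMG-Yau-Yau system of Definition~\ref{def:smg_yau_yau_filter_sys_reconstructed}. The vertical equation \eqref{eq:vertical_gauge_sde_sec2} becomes trivial, and by \eqref{eq:tension_accumulation_thm} the tension satisfies $\mathcal{T}_{\mathrm{AAT}} \equiv 0$.

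The horizontal equation \eqref{eq:horizontal_base_evolution_thm} is now the exact Fisher-metric coordinate expression of $V_{\mathrm{DMZ}}$ in the basis $\phi_i$. To identify it with Wei--Norman, differentiate the product-of-exponentials ansatz. Moving each $E_k$ past the exponentials on its left, using the adjoint action $e^{g_iE_i}E_ke^{-g_iE_i} = \exp(g_i\,\mathrm{ad}_{E_i})E_k$, which closes in $\mathcal{E}$ by the structure constants, gives $\sum_k (M(g)\,dg)_k E_k$. Matching the coefficients of $E_k$ against $\mathcal{L}_0^* = \sum_k \alpha^0_k E_k$ and $h_l = \sum_k \alpha^l_k E_k$ yields $M(g)\,dg = \alpha^0\,dt + \sum_l \alpha^l \circ dY_l$.

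Finally, uniqueness of coordinates in the basis $\{E_k\}$, and hence in $\{\phi_k\}$ up to constants, shows that the Fisher-projected system and the Wei--Norman system define the same trajectory $p(t) = g(t)$. This is the same reasoning as Theorem~\ref{thm:classical_equivalence}(3) and Corollary~\ref{lem:collapse_SMG_to_Yau_Yau}.
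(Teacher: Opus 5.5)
Your skeleton is the paper's. Closure puts $V_{\mathrm{DMZ}}$ in the span generated by $\mathcal{E}$, so $\operatorname{SID}_f=\{0\}$, hence $\omega_f\equiv 0$, $P^H_f=\mathcal{I}$ and $\Omega_f\equiv 0$; substituting into the horizontal equation then gives Wei--Norman. You are right, and more careful than the paper, that $\ker(d\pi_f)$ cannot vanish on the full Orlicz manifold over a finite-dimensional base. The paper's proof goes straight from ``the velocities lie in a $d$-dimensional subspace'' to $\operatorname{SID}_f=(\operatorname{SVD}\chi_f)^{\perp}=\{0\}$, and restricting to the reachable orbit is the right way to make (i)--(ii) meaningful. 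Note, though, that on the full $\mathcal{M}$ the identity $\omega_f(V_{\mathrm{DMZ}})=0$ is not automatic ``in either reading''. It holds only if $\pi$ is chosen so that $T_f\Sigma_d$ is horizontal.

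The genuine gap is in the step that makes your dimension count work. You take $\mathcal{B}$ to be the Wei--Norman chart $g\in\mathbb{R}^d$ and invoke Proposition~\ref{prop:inherited_score_regularity}(ii)(b) to get $d$ independent scores $\phi_i=\partial_{g_i}\ln f_g$. But that proposition \emph{assumes} $s$ is a section of a submersion onto a $d$-dimensional base. That is, it assumes $g\mapsto f_g$ (normalized, since $T_f\mathcal{M}$ consists of centred functions) is an immersion of rank $d$, which is exactly what must be shown. Linear independence of the $E_k$ as operators does not give linear independence of the centred functions $E_k\sigma/\sigma-\mathbb{E}_f[E_k\sigma/\sigma]$. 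Your closing ``uniqueness of coordinates in $\{E_k\}$, hence in $\{\phi_k\}$'' conflates the two.

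This already fails for Kalman--Bucy. With $dX_t=dW_t$ and $dY_t=X_t\,dt+dV_t$ one has $\mathcal{L}_0^*=\frac12\partial_x^2-\frac12x^2$, $[\mathcal{L}_0^*,x]=\partial_x$, $[x,\partial_x]=-1$ and $[\mathcal{L}_0^*,\partial_x]=x$, so $d=\dim\mathcal{E}=4$. Yet the normalized orbit of a Gaussian is a family of Gaussians, of dimension at most $2$. The constant generator has zero centred score, and $\partial_x\sigma/\sigma=-(x-m)/v$ is affine in $x$. So the Fisher matrix of your $d$ scores is singular, \eqref{eq:horizontal_base_evolution_thm} is undefined in $g$-coordinates, and $p(t)=g(t)$ is false. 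Citing Theorem~\ref{thm:wei_norman_exact}(i) or Theorem~\ref{thm:classical_equivalence} does not close this, since they carry the same over-count: a $d$-dimensional orbit and $\dim\mathcal{B}=\dim\mathcal{E}$.

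The repair is to take $\mathcal{B}$, locally, to be the normalized orbit itself, of dimension $r\le d$, with $\pi$ a local diffeomorphism. Then $\operatorname{SID}_f=\{0\}$ and (ii) need no dimension count. Part (iii) then holds in the form that Wei--Norman is a lift of the SMG flow. Since $\partial_{g_k}\ln\sigma=\sum_j M_{jk}(g)\,E_j\sigma/\sigma$, the system $M(g)\,dg=\alpha^0dt+\sum_l\alpha^l\circ dY_l$ pushes forward under $\Pi(g)=\pi(f_g)$ to the Fisher-projected base equation, giving $p(t)=\Pi(g(t))$. Literal identity of the two systems needs the extra hypothesis $r=d$.
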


\begin{proof}[Proof of Theorem \ref{thm:algebraic_collapse}]
	When $\operatorname{dim}(\mathcal{E}) = d < \infty$, Philip Hall basis structures guarantee that any iterated commutator $[\mathcal{A}_i, \mathcal{A}_j]$ can be expressed as a finite linear combination $\sum_{k=1}^d c_{ij}^k \mathcal{A}_k$ with constant coefficients $c_{ij}^k$ \cite{wei1963}. Consequently, the score velocity vector fields generated by the dynamic trajectories lie entirely within a $d$-dimensional subspace of $T_f\mathcal{M}$. 
	
	Because the estimation manifold completely absorbs the algebraic closure of the drift and observation dynamics, there exist no unmonitored extra-algebraic modes. The orthogonal complement with respect to the Fisher-Rao / Orlicz metric satisfies $\operatorname{SID}_f = (\operatorname{SVD}\chi_f)^{\perp} = \{0\}$. 
	
	Since $\omega_f: T_f\mathcal{M} \to \operatorname{SID}_f$ is a projection onto the vertical space along the horizontal distribution, $\operatorname{SID}_f = \{0\}$ implies $\omega_f \equiv 0$. It immediately follows that $\Omega_f = d\omega_f + \frac{1}{2}[\omega_f, \omega_f] \equiv 0$.
\end{proof}

When $\operatorname{dim}(\mathcal{E}) = \infty$, which occurs under polynomial or non-linear state dynamics \cite{brockett1981}, classical finite-dimensional filtering breaks down. The infinite sequence of non-zero Lie commutators leaks energy into the vertical gauge fiber $\operatorname{SID}_f$. The SMG filter isolates this leakage by explicitly tracking horizontal motion on $\mathcal{B}$ while accumulating vertical gauge drift via the Active Acausal Tension (AAT) gauge rate $\mathcal{T}_{\text{AAT}}(t)$ \cite{pistone1995, cheng2026gsb}.

\subsection{The Statistical Axis: The Second Edge Theorem and Asymptotic Annihilation}
\label{subsec:statistical_axis}

Now consider the regime where $\operatorname{dim}(\mathcal{E}) = \infty$, but the observation sample size $N$ approaches infinity. Let $IG_N = (\Theta, G^{(N)}, \nabla^{(\alpha,N)})$ denote the sequence of $N$-sample information-geometric manifolds parametrized by $\Theta \subset \mathbb{R}^d$, equipped with the $N$-sample Fisher information metric $G^{(N)}$ and Amari $\alpha$-connection $\nabla^{(\alpha,N)}$ \cite{amari1985, cheng2026edge2}.

The empirical noise fluctuations act as an intrinsic geometric disturbance, generating Riemannian and affine connection curvature $\Omega_f^{(N)} \neq 0$. The Second Edge Theorem \cite{cheng2026edge2} characterizes the thermodynamic limit $N \to \infty$ as a structural quench governed by the inverse temperature scaling parameter:
\begin{equation}
	\beta_N := \frac{1}{\sqrt{N}} \longrightarrow 0 \quad \text{as } N \to \infty. \label{eq:beta_scaling}
\end{equation}

\begin{theorem}[Universal Tensor Valence Scaling Law]
	\label{thm:tensor_valence_scaling}
	Let $T^{(N)}$ be a smooth geometric tensor field of valence $r \ge 1$ constructed from $r$-th order covariant derivatives of the log-likelihood density under the Amari $\alpha$-connection $\nabla^{(\alpha,N)}$ on $IG_N$. As $N \to \infty$, the asymptotic norm of $T^{(N)}$ satisfies:
	\begin{equation}
		\| T^{(N)} \|_{G^{(N)}} = \mathcal{O}_p\left( N^{1 - \frac{r}{2}} \right). \label{eq:valence_scaling_law}
	\end{equation}
\end{theorem}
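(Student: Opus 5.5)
The plan is to reduce the statement to two elementary facts and multiply them: the raw size of an $r$-th order covariant derivative of the $N$-sample log-likelihood, and how the $G^{(N)}$-norm rescales an $r$-covariant tensor. Write $\ell_N(\theta;X_{1:N}) = \sum_{s=1}^N \ln p(X_s;\theta)$ for the $N$-sample log-likelihood on $\Theta\subset\mathbb{R}^d$. The $N$-sample Fisher metric is then $G^{(N)} = N\,G$, where $G$ is the single-observation Fisher matrix $G(p)$ of Theorem~\ref{thm:quotient_base_geometry_and_curvature}(i). The estimate will be local, on a compact neighbourhood $K\subset\Theta$.

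\textbf{Step 1 (connection invariance).} I will first check that the Amari $\alpha$-connection on $IG_N$ has Christoffel symbols of the first kind
\[
\Gamma^{(\alpha,N)}_{ij,k} = N\,\Gamma^{(\alpha)}_{ij,k}.
\]
This follows because each symbol is an expectation of a sum of $N$ i.i.d.\ terms of the form $\mathbb{E}[(\partial_i\partial_j \ell + \tfrac{1-\alpha}{2}\partial_i\ell\,\partial_j\ell)\partial_k\ell]$; the cross terms vanish by independence and the centering of scores. Raising an index with $(G^{(N)})^{-1} = N^{-1}G^{-1}$ then shows that the coefficients $\Gamma^{(\alpha,N)\,k}_{ij} = \Gamma^{(\alpha)\,k}_{ij}$ do not depend on $N$. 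Hence the operator $\nabla^{(\alpha,N)}$ equals $\nabla^{(\alpha)}$ and commutes with the sum over samples, so
\[
T^{(N)} = \big(\nabla^{(\alpha)}\big)^{r-1} d\ell_N = \sum_{s=1}^N \big(\nabla^{(\alpha)}\big)^{r-1} d\ln p(X_s;\cdot).
\]
Thus $T^{(N)}$ is an $N$-fold i.i.d.\ sum of a fixed $r$-covariant tensor $t(X_s)$.

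\textbf{Step 2 (raw size).} I will assume Proposition~\ref{prop:inherited_score_regularity}(ii). Since the Orlicz class $L^\Phi$ with $\Phi=\cosh-1$ has all moments, each $t(X)$ is integrable, locally uniformly in $\theta$. The law of large numbers, or the CLT for components whose mean vanishes, then gives componentwise $T^{(N)}_{i_1\cdots i_r} = N\,\mathbb{E}[t] + \mathcal{O}_p(\sqrt N) = \mathcal{O}_p(N)$.

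\textbf{Step 3 (metric rescaling).} For an $r$-covariant tensor,
\[
\|T\|_{G^{(N)}}^2 = (G^{(N)})^{i_1j_1}\cdots (G^{(N)})^{i_rj_r}\, T_{i_1\cdots i_r}T_{j_1\cdots j_r} = N^{-r}\,\|T\|_G^2 .
\]
Therefore $\|T^{(N)}\|_{G^{(N)}} = N^{-r/2}\,\|T^{(N)}\|_G$. Since $G$ is positive definite and continuous on $K$, $\|\cdot\|_G$ is equivalent to the Euclidean componentwise norm uniformly on $K$. Combining with Step 2 gives $\|T^{(N)}\|_{G^{(N)}} = N^{-r/2}\,\mathcal{O}_p(N) = \mathcal{O}_p(N^{1-r/2})$, which is the claimed law. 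As sanity checks, $r=2$ gives an $\mathcal{O}_p(1)$ metric-type object, $r=3$ gives the $\mathcal{O}_p(N^{-1/2})$ friction, and $r=4$ gives the $\mathcal{O}_p(N^{-1})$ curvature quoted in the abstract.

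\textbf{Main obstacle.} I expect the hard part to be making Step 1 honest. The quantity $T^{(N)}$ is random, built from observed derivatives, while the connection is built from expected ones; iterated covariant derivatives mix the two and produce products of sums. Such products could scale like $N^2$ rather than $N$, unless every product carries a Christoffel coefficient that is $N$-invariant by Step 1, as argued above. I would proceed by induction on $r$. The Leibniz expansion of $\nabla^{(\alpha)}$ applied to a valence-$(r-1)$ i.i.d.\ sum is a derivative of that sum (again an i.i.d.\ sum) minus $N$-free $\Gamma^{k}_{ij}$ contractions with that same sum. Every term therefore remains linear in $\ell_N$.

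A second, technical point is the $\mathcal{O}_p$ uniformity over $\theta\in K$. I would handle it by a dominated envelope: bound $\sup_K |t(X)|$ by an Orlicz-integrable function, using the exponential moments supplied by $\Phi=\cosh-1$, and then apply a uniform law of large numbers. Finally, I note that the statement is an upper bound. Centered components, such as the score at the true parameter, are even smaller, and this does not affect the conclusion.
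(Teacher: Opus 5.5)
The paper gives no proof of this law. It imports it from \cite{cheng2026edge2} and only records its consequences for $r=3$ and $r=4$. Your argument is therefore an independent, self-contained route, and in substance it is correct.

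Your three ingredients all hold:
\begin{itemize}
\item the coefficients $\Gamma^{(\alpha,N)\,k}_{ij}$ do not depend on $N$;
\item hence $(\nabla^{(\alpha)})^{r-1}d\ell_N$ is an i.i.d.\ sum with $\mathcal{O}_p(N)$ components;
\item $G^{(N)}=NG$ contributes the factor $N^{-r/2}$ to the norm of an $r$-covariant tensor.
\end{itemize}
Your ``main obstacle'' is milder than you fear. $\nabla^{(\alpha)}$ has deterministic, $N$-free coefficients, so linearity in $\ell_N$ is immediate and no induction is needed.

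The citation buys brevity and the link to the Cheeger--Gromov/Le Cam picture of that work. Your route buys a checkable argument, and it exposes what the statement leaves implicit.
\begin{itemize}
\item Steps~2--3 actually prove that \emph{any} $r$-covariant tensor with $\mathcal{O}_p(N)$ components has $G^{(N)}$-norm $\mathcal{O}_p(N^{1-r/2})$. This is what really covers the objects the remark invokes: the Amari--Chentsov tensor $N\,\mathbb{E}[\partial_i\ln p\,\partial_j\ln p\,\partial_k\ln p]$ and the lowered curvature $N R^{(\alpha)}_{ijk\ell}$. Neither is literally $(\nabla^{(\alpha)})^{r-1}d\ell_N$.
\item Conversely, ``constructed from'' must be read as linear in $\ell_N$. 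For example, $d\ell_N\otimes d\ell_N$ has components of exact order $N^2$ away from the true parameter, and its $G^{(N)}$-norm $\|d\ell_N\|_{G^{(N)}}^2\asymp N$ is not $\mathcal{O}_p(1)$.
\end{itemize}

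The one step you must repair is the integrability claim in Step~2. Proposition~\ref{prop:inherited_score_regularity}(ii) concerns only the first-order scores $\phi_i$, whereas $t(X)$ involves $\partial^k\ln p$ for every $k\le r$. You need one of the following:
\begin{itemize}
\item an explicit hypothesis $\mathbb{E}|\partial^k\ln p(X;\theta)|<\infty$ for $k\le r$, under the data-generating law; or
\item a derivation from $C^\infty$-smoothness of the section into the Pistone--Sempi manifold, together with the data law lying in the same maximal exponential model so the $L^\Phi$ classes agree. In the exponential chart, $\ln f_p$ differs from an $L^\Phi$-valued $C^\infty$ map by an $x$-independent constant, so all its parameter derivatives lie in $L^\Phi$.
\end{itemize}
The integrable envelope you use for uniformity over $K$ is likewise an added assumption. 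It does not follow from $\Phi=\cosh-1$.
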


\begin{remark}[Scaling Mechanics in \cite{cheng2026edge2}]
	\label{rem:edge2_theorem_clarification}
	In the formal architecture of \cite{cheng2026edge2}, the Universal Tensor Valence Scaling Law $\|T^{(N)}\|_{G^{(N)}} = \mathcal{O}_p\left(N^{1-\frac{r}{2}}\right)$ establishes that Cheeger-Gromov spatial manifold flattening ($IG_N \xrightarrow{\text{geom}} \mathcal{M}_\infty$) and Le Cam decision risk condensation ($IG_N \xrightarrow{\text{oper}} \mathcal{M}_\infty$) are dual manifestations of asymptotic phase transitions.
	
	The scaling behavior in Theorem \ref{thm:tensor_valence_scaling} provides the exact differential-geometric substrate:
	\begin{enumerate}
		\item \textbf{Connection Dissolution ($r=3$):} The connection $1$-form scales as $\mathcal{O}_p\left(N^{-1/2}\right)$, causing Amari affine connection friction to dissolve uniformly to zero Christoffel symbols \cite{cheng2026edge2}.
		\item \textbf{Curvature Annihilation ($r=4$):} Intrinsic Amari-Riemann curvature undergoes accelerated quadratic decay at rate $\mathcal{O}_p\left(N^{-1}\right)$, annihilating non-Euclidean holonomy \cite{cheng2026edge2}.
	\end{enumerate}
\end{remark}

Applying Theorem \ref{thm:tensor_valence_scaling} to the connection $1$-form $\omega_f^{(N)}$ ($r=3$) and the gauge curvature $2$-form $\Omega_f^{(N)}$ ($r=4$) yields the following asymptotic decay rates:

\begin{enumerate}
	\item \textbf{Connection Dissolution ($r=3$):} The affine connection friction $\omega_f^{(N)}$ scales as:
	\begin{equation}
		\| \omega_f^{(N)} \|_{G^{(N)}} = \mathcal{O}_p\left( N^{-\frac{1}{2}} \right) \xrightarrow{\quad P \quad} 0. \label{eq:connection_dissolution}
	\end{equation}
	\item \textbf{Curvature Annihilation ($r=4$):} The intrinsic manifold curvature $\Omega_f^{(N)}$ undergoes quadratic suppression:
	\begin{equation}
		\| \Omega_f^{(N)} \|_{G^{(N)}} = \mathcal{O}_p\left( N^{-1} \right) \xrightarrow{\quad P \quad} 0. \label{eq:curvature_annihilation}
	\end{equation}
	\item \textbf{Topological Trivialization:} Let $\mathrm{Hol}(\nabla^{(\alpha,N)})$ be the restricted holonomy group of $IG_N$, and $H^k(IG_N, \mathbb{R})$ be its $k$-th de Rham cohomology group. As $N \to \infty$:
	\begin{equation}
		\operatorname{Hol}(\nabla^{(\alpha,N)}) \xrightarrow{\quad d_{\mathrm{Lie}} \quad} \{ I_{d \times d} \}, \quad \text{and} \quad H^k(IG_N, \mathbb{R}) \cong \{0\} \quad \forall k \ge 1. \label{eq:topological_trivialization}
	\end{equation}
\end{enumerate}

Thus, the sample-size asymptotic limit $N \to \infty$ flattens the curved Orlicz bundle, forcing the vertical gauge space to trivialise: $\operatorname{SID}_f^{(N)} \xrightarrow{N\to\infty} \{0\}$.

\subsection{ Finite-Sample SMG-Yau-Yau Dynamics: Non-Parametric Geometry on Orlicz Bundles ($N < \infty$)}
\label{subsec:unified_dynamics}

The full coupled SMG-Yau-Yau non-linear filter operates on the density estimate $p(t, \theta)$ via coordinates $\theta^k$ on the base space $\mathcal{B}$, while tracking the vertical gauge discrepancy $\mathcal{T}_{\text{AAT}}(t)$ \cite{cheng2026gsb}. By the Horizontal Base Filter Dynamic equation \eqref{eq:horizontal_filter_ode_sec2} and the the Active Acausal Tension accumulation rate equation \eqref{eq:tension_accumulation_thm}:

\begin{align}
	dp_k(t) &= \left\langle d\pi_{f_t} \circ P_{f_t}^H(\mathcal{L}_0^*\sigma), \theta^k \right\rangle dt + \sum_{j=1}^m \left\langle d\pi_{f_t} \circ P_{f_t}^H(h_j\sigma), \theta^k \right\rangle dY_j(t), \label{eq:smg_base_update_sec3} \\
	\frac{d\mathcal{T}_{\text{AAT}}(t)}{dt} &= \left\| \omega_{f_t}\left(V_{\text{DMZ}}(t)\right) \right\|_{g_{f_t}}^2, \label{eq:smg_aat_rate_sec3}
\end{align}
where $P_{f}^H = I - \omega_f$ is the horizontal projection operator, $d\pi_f$ is the differential of the canonical base projection, and $V_{\text{DMZ}}(t)$ is the velocity vector field generated by equation \eqref{eq:dmz_equation_sec3}.

\begin{theorem}[Filtering Error Bound under Geometry]
	\label{thm:error_bound}
	Let $p_{\text{true}}(t)$ be the true conditional density, $p_{\text{SMG}}(t)$ be the density evolution generated by equations \eqref{eq:smg_base_update_sec3}--\eqref{eq:smg_aat_rate_sec3}, and $p_{\text{classical}}(t)$ be the uncorrected classical Yau-Yau filter projection. The mean square estimation error of the classical filter under infinite algebraic complexity ($\operatorname{dim}(\mathcal{E})=\infty$) at sample size $N$ satisfies:
	\begin{equation}
		\mathbb{E}\left[ \| p_{\text{true}}(t) - p_{\text{classical}}(t) \|_{g_f}^2 \right] \le C_1 e^{\lambda t} \int_0^t \left\| \Omega_{f(\tau)}^{(N)} \right\|_{g_f}^2 d\tau, \label{eq:error_bound_formula}
	\end{equation}
	where $C_1, \lambda > 0$ are constants depending on the domain compactness and operator bounds.
\end{theorem}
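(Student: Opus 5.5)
The plan is to reduce the bound to the Curvature-Bound Robustness estimate of Theorem~\ref{thm:theorem5_rigorous}, applied at sample size $N$. The first step identifies the classical estimate with a horizontal-lift trajectory. The classical Yau-Yau projection $p_{\text{classical}}(t)$ is the Fisher-orthogonal Galerkin trajectory of Theorem~\ref{prop:yau_yau_working_sde}. By Lemma~\ref{lem:orthogonal_filtering_decomposition} and the Quarantining Theorem~\ref{thm:quarantining_theorem_sec2}, its velocity agrees with the horizontal component $P^H_{f_t}(V_{\text{DMZ}})$. It therefore plays the role of $p_{\text{filter}}$ in Definition~\ref{def:filter_dynamics_and_distance}. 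The true density projects to $p_{\text{true}}(t)=\pi(f_{\text{true}}(t))$.

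The second step converts the ambient norm into base geodesic distance. I would split $p_{\text{true}}-p_{\text{classical}}$ via Lemma~\ref{lem:tangent_space_decomposition_sec2} into a horizontal part and a vertical part. For the horizontal part, Lemma~\ref{lem:metric_invariance_under_horizontal_lift} gives local comparability of $\|\cdot\|_{g_f}^2$ with $d_{\mathcal{B}}^2$, that is, with $\mathcal{E}_R(t)$. A compactness constant, absorbed into $C_1$, handles the non-infinitesimal displacement.

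The vertical part is the quarantined residual. By Lemma~\ref{lem:lemma3_rigorous}(ii), the holonomy displacement along the lifted path is $\epsilon^2\Omega_f(u^H,v^H)+\mathcal{O}(\epsilon^3)$. Its squared norm is therefore controlled by $\int\|\Omega_f\|^2$, with $\|V^H_{\mathcal{B}}\|$ bounded by the operator bounds, again absorbed into $C_1$.

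The third step is the main estimate. Applying Theorem~\ref{thm:theorem5_rigorous}(ii) with sectional curvature bound $\lambda$ gives
\begin{equation*}
\mathcal{E}_R(t)\le C_1 e^{\lambda t}\int_0^t \bigl\|\Omega_{f(\tau)}(V_{\mathcal{B}}^H(\tau),\cdot)\bigr\|^2_{g_f}\,d\tau .
\end{equation*}
I would then replace the pointwise operator norm by the full norm $\|\Omega^{(N)}_{f(\tau)}\|_{g_f}^2$ using Definition~\ref{def:velocity_lift_and_operator_norm}. This costs a factor $\sup_\tau\|V^H_{\mathcal{B}}(\tau)\|^2$, which is finite under the domain-compactness assumption. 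Taking expectations over the sample and observation noise, and using Fubini, moves $\mathbb{E}$ inside the time integral. This yields \eqref{eq:error_bound_formula}, with $C_1,\lambda$ depending only on curvature, compactness and operator bounds.

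I expect the main obstacle to be the first step, specifically the rigorous coupling of the classical trajectory to a horizontal lift. The classical filter is defined on a working family, not as $\pi$ of a lift, so the initial error and the Jacobi-field Gr\"onwall argument must be made uniform in $N$.

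I would first try to handle this by treating the discrepancy as a forcing term in the Jacobi inequality, Lemma~\ref{lem:jacobi_energy_bound}, with the initial error folded into $C_1$. I would then apply Gr\"onwall pathwise before taking expectations. Integrability of $e^{\lambda t}$ against the random curvature needs a moment bound on $\|\Omega^{(N)}\|$. For that I would appeal to the scaling of Theorem~\ref{thm:tensor_valence_scaling}.
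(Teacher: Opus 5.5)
Your reduction has a genuine gap, and it is in the step that ends up carrying all the weight: the vertical part of $p_{\text{true}}-p_{\text{classical}}$.

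Theorem~\ref{thm:theorem5_rigorous} controls only the base distance $d_{\mathcal{B}}^2$, but the statement bounds the full ambient $g_f$-norm. After you lift the base geodesic via Lemma~\ref{lem:metric_invariance_under_horizontal_lift}, you are left with the fiber offset between the lifted endpoint and the true density. Lemma~\ref{lem:lemma3_rigorous}(ii) says nothing about that offset. It describes the endpoint of the \emph{horizontal} lift of an $\epsilon$-small closed loop generated by commuting base fields. The true density is not a horizontal curve: its vertical velocity is $\omega_{f_t}(V_{\text{DMZ}}(t))$. The fiber offset accumulated over $[0,t]$ is therefore governed by this connection-form term itself (the quantity $\mathcal{T}_{\text{AAT}}$ integrates), not by its curvature.

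A concrete case shows the step cannot work. Take a flat connection ($\Omega\equiv 0$, i.e.\ $\operatorname{SVD}\chi$ integrable, as for a product-type submersion) with $\omega_f(V_{\text{DMZ}})\neq 0$. Then the vertical error is nonzero while $\int_0^t\|\Omega\|^2d\tau=0$, so the claim that its squared norm is ``controlled by $\int\|\Omega_f\|^2$'' cannot be obtained this way.

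Your Step~1 makes the problem unavoidable. The Galerkin trajectory of Theorem~\ref{prop:yau_yau_working_sde} takes expectations under $p(\cdot;\theta^{\text{YY}})$, not under $f_t$, so its velocity is not $P^H_{f_t}(V_{\text{DMZ}})$. If it were, then because $d\pi_{f_t}(V_{\text{DMZ}})=d\pi_{f_t}(P^H_{f_t}V_{\text{DMZ}})$ the base trajectories would coincide. That gives $\mathcal{E}_R\equiv 0$, so Theorem~\ref{thm:theorem5_rigorous} contributes nothing and the whole error is the vertical part you cannot bound.

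The paper never passes through base distance or holonomy; it works directly with $e(t)=p_{\text{true}}(t)-p_{\text{classical}}(t)$ in the ambient norm.
\begin{itemize}
\item It takes the generator splitting $\mathcal{L}^{(N)}_{\text{SMG}}=\mathcal{L}_{\text{classical}}+\Omega^{(N)}_{f(t)}$ of \eqref{eq:true_dynamics}--\eqref{eq:classical_dynamics}, which gives $\partial_t e=\mathcal{L}_{\text{classical}}e+\Omega^{(N)}_{f(t)}p_{\text{true}}$ with $e(0)=0$.
\item It bounds $\langle e,\mathcal{L}_{\text{classical}}e\rangle_{g_f}\le\frac{\lambda_0}{2}\|e\|_{g_f}^2$ by a G{\aa}rding-type estimate.
\item It treats the forcing term by Cauchy--Schwarz and Young, together with $\|\Omega^{(N)}p_{\text{true}}\|_{g_f}\le M_\infty\|\Omega^{(N)}\|_{g_f}$.
\item It closes with Gr\"onwall, giving $\lambda=\lambda_0+1$ and $C_1=M_\infty^2$.
\end{itemize}
Curvature thus enters only as a forcing operator in an energy estimate. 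That splitting is the ingredient your argument lacks. The paper asserts it without derivation, so this is where its nontrivial content sits.

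The energy route also avoids a second problem in yours. $\sup_\tau\|V^H_{\mathcal{B}}(\tau)\|$ is not a finite constant for a trajectory driven by $\circ\,dY_j$: domain compactness bounds $h_j$, not $dY/dt$. So it cannot be absorbed into a deterministic $C_1$.
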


\paragraph{Synthesis of the Two Axes.}
Equation \eqref{eq:error_bound_formula} synthesizes the two axes:
\begin{itemize}
	\item On the Algebraic Axis ($\operatorname{dim}(\mathcal{E}) < \infty$), $\Omega_f \equiv 0$, rendering the error bound strictly zero. Classical Yau-Yau filtering is exact \cite{yau1999}.
	\item On the Statistical Axis ($N \to \infty$), Theorem \ref{thm:tensor_valence_scaling} implies $\| \Omega_{f(\tau)}^{(N)} \|_{g_f}^2 = \mathcal{O}_p(N^{-2})$. Integrating over $[0, t]$ yields:
	\begin{equation}
		\mathbb{E}\left[ \| p_{\text{true}}(t) - p_{\text{classical}}(t) \|_{g_f}^2 \right] = \mathcal{O}_p\left( N^{-2} \right) \xrightarrow{\quad N\to\infty \quad} 0. \label{eq:asymptotic_error_decay}
	\end{equation}
\end{itemize}

\begin{corollary}[Fundamental Discrepancy Law Between SMG and Classical Yau-Yau Filtering]
	\label{cor:smg_vs_classical_yau_yau}
	The exact operational discrepancy between the SMG-Yau-Yau filtering framework and the classical Yau-Yau filtering framework is completely characterized by the system of operator evolution equations:
	\begin{align}
		\frac{\partial p_{\mathrm{true}}(t)}{\partial t} &= \mathcal{L}_{\mathrm{SMG}}^{(N)} p_{\mathrm{true}}(t) = \mathcal{L}_{\mathrm{classical}} p_{\mathrm{true}}(t) + \Omega_{f(t)}^{(N)} p_{\mathrm{true}}(t), \label{eq:146} \\
		\frac{\partial p_{\mathrm{classical}}(t)}{\partial t} &= \mathcal{L}_{\mathrm{classical}} p_{\mathrm{classical}}(t). \label{eq:147}
	\end{align}
	From \eqref{eq:146} and \eqref{eq:147}, the fundamental operator-level, algebraic, and geometric differences between the two filtering paradigms are derived as follows:
	
	\begin{enumerate}
		\item \textbf{Operator Difference Representation:} 
		The infinitesimal generator $\mathcal{L}_{\mathrm{SMG}}^{(N)}$ of the SMG-Yau-Yau filter decomposes into the classical Yau-Yau operator $\mathcal{L}_{\mathrm{classical}}$ plus an additive non-commutative geometric correction operator $\Omega_{f(t)}^{(N)}$:
		\begin{equation}
			\mathcal{L}_{\mathrm{SMG}}^{(N)} - \mathcal{L}_{\mathrm{classical}} = \Omega_{f(t)}^{(N)}. \label{eq:operator_difference}
		\end{equation}
		
		\item \textbf{The Finite Algebraic Complexity Regime ($\operatorname{dim}(\mathcal{E}) < \infty$):} 
		When the estimation Lie algebra $\mathcal{E}$ generated by the state drift and observation vector fields is finite-dimensional, the Lie algebraic closure condition holds. Consequently, the geometric perturbation tensor vanishes identically:
		\begin{equation}
			\Omega_{f(t)}^{(N)} \equiv 0 \implies \mathcal{L}_{\mathrm{SMG}}^{(N)} \equiv \mathcal{L}_{\mathrm{classical}}.
		\end{equation}
		In this setting, Classical Yau-Yau filtering is exact and identical to SMG-Yau-Yau filtering.
		
		\item \textbf{The Infinite Algebraic Complexity Regime ($\operatorname{dim}(\mathcal{E}) = \infty$):} 
		When $\operatorname{dim}(\mathcal{E}) = \infty$, the estimation algebra is non-solvable and infinite-dimensional. Classical Yau-Yau filtering truncates higher-order Lie commutators, causing an uncorrected drift bias field $\Omega_{f(t)}^{(N)} p_{\mathrm{true}}(t)$. In contrast, the SMG-Yau-Yau filter retains $\Omega_{f(t)}^{(N)}$ through the manifold connection back-reaction, guaranteeing exact density propagation under infinite algebraic complexity.
		
		\item \textbf{Finite-Sample Geometric Coupling vs. Asymptotic Convergence ($N \to \infty$):} 
		At finite sample size $N$, $\Omega_{f(t)}^{(N)}$ encodes the geometric curvature feedback of the underlying statistical manifold $(\mathcal{M}_f, g_f)$. By Theorem \ref{thm:error_bound} and the Universal Tensor Valence Scaling Law, $\left\| \Omega_{f(t)}^{(N)} \right\|_{g_f} = \mathcal{O}_p(N^{-1})$. Therefore, as $N \to \infty$:
		\begin{equation}
			\left\| \mathcal{L}_{\mathrm{SMG}}^{(N)} - \mathcal{L}_{\mathrm{classical}} \right\|_{g_f} = \mathcal{O}_p(N^{-1}) \xrightarrow{N \to \infty} 0,
		\end{equation}
		proving that SMG-Yau-Yau filtering provides the non-asymptotically exact finite-sample geometric extension that converges to Classical Yau-Yau filtering in the infinite-sample limit.
	\end{enumerate}
\end{corollary}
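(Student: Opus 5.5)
The corollary (Fundamental Discrepancy Law) is mostly definitional once the operator $\Omega_{f(t)}^{(N)}$ is interpreted correctly. The plan is therefore to fix that interpretation and then read off the four items from results already established.

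\textbf{Step 1: define the correction operator.} I will \emph{define} $\mathcal{L}_{\mathrm{SMG}}^{(N)}$ as the generator of the full, unprojected score flow $V_{\mathrm{DMZ}}$. I will \emph{define} $\mathcal{L}_{\mathrm{classical}}$ as the generator of the horizontally projected flow $P_{f_t}^H(V_{\mathrm{DMZ}})$, i.e.\ the Galerkin working SDE of Theorem~\ref{prop:yau_yau_working_sde}. Using the identity $\mathcal{I}=P_f^H+\omega_f$ from Lemma~\ref{lem:tangent_space_decomposition_sec2} and the decomposition of Theorem~\ref{thm:smg_yau_yau_spde_decomposition}, the difference of the two generators is exactly the vertical part $\omega_{f_t}(V_{\mathrm{DMZ}})$. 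I then identify this vertical part with the operator denoted $\Omega^{(N)}_{f(t)}$ in \eqref{eq:146}, so that \eqref{eq:operator_difference} holds by construction.

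This identification is the main obstacle. The paper's $\Omega_f$ is a curvature $2$-form, $\omega_f([P^HU,P^HV])$, not a $1$-form applied to $V_{\mathrm{DMZ}}$. I would try to bridge the two with Lemma~\ref{lem:vertical_connection_identity} together with Lemma~\ref{lem:lemma3_rigorous}(ii). The argument would be that the vertical component produced by the unclosed commutators $[E_i,E_j]$ is, to leading order, $\Omega_f$ evaluated on horizontal lifts. I expect this step to hold only as a notational identification, not as an exact identity.

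\textbf{Step 2: the finite-algebra regime.} Here I invoke Theorem~\ref{thm:algebraic_collapse} (equivalently Theorem~\ref{thm:classical_equivalence}). It gives $\mathrm{SID}_f=\{0\}$, hence $\omega_f\equiv0$ and $\Omega_f\equiv0$. Therefore $\mathcal{L}^{(N)}_{\mathrm{SMG}}=\mathcal{L}_{\mathrm{classical}}$, and \eqref{eq:146} and \eqref{eq:147} coincide.

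\textbf{Step 3: the infinite-algebra regime.} I use Theorem~\ref{thm:quarantining_theorem_sec2}, which shows that the classical projection annihilates the commutator modes. The uncorrected bias term is therefore $\Omega^{(N)}_{f(t)}p_{\mathrm{true}}$. Subtracting \eqref{eq:147} from \eqref{eq:146} gives the error equation
$$\partial_t e=\mathcal{L}_{\mathrm{classical}}e+\Omega^{(N)}_{f(t)}p_{\mathrm{true}},\qquad e:=p_{\mathrm{true}}-p_{\mathrm{classical}},$$
in which $\Omega^{(N)}_{f(t)}p_{\mathrm{true}}$ acts as the forcing term. I would solve it by Duhamel's formula. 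Bounding $\mathcal{L}_{\mathrm{classical}}$ by $\lambda/2$ and applying Gr\"onwall's inequality recovers the bound of Theorem~\ref{thm:error_bound}, which makes item 3 consistent with that theorem.

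\textbf{Step 4: the sample-size limit.} I apply Theorem~\ref{thm:tensor_valence_scaling} with valence $r=4$, which gives $\|\Omega^{(N)}_f\|=\mathcal{O}_p(N^{-1})$ as in \eqref{eq:curvature_annihilation}. Combined with Step 1, the operator difference is $\mathcal{O}_p(N^{-1})$. Integrating through Theorem~\ref{thm:error_bound} then yields \eqref{eq:asymptotic_error_decay}, which establishes convergence of the two filters as $N\to\infty$.
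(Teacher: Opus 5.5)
Your Steps~2--4 follow the paper's route. The paper never derives \eqref{eq:146}: it asserts it as a ``canonical decomposition'' in Step~1 of the proof of Theorem~\ref{thm:error_bound}. It then reads off item~2 from Theorem~\ref{thm:algebraic_collapse} and item~4 from Theorem~\ref{thm:tensor_valence_scaling} with valence $r=4$.

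The genuine gap is your Step~1, where you try to \emph{derive} \eqref{eq:146} by identifying the vertical component $\omega_{f_t}(V_{\mathrm{DMZ}})$ with $\Omega^{(N)}_{f(t)}$. This identity does not hold even to leading order; it fails at order one. The vector $\omega_{f_t}(V_{\mathrm{DMZ}})$ is the vertical part of a single tangent vector, and it vanishes exactly when $V_{\mathrm{DMZ}}$ is horizontal. By contrast, $\Omega_f(U,V)=\omega_f([P^H_fU,P^H_fV])$ depends only on the horizontal distribution, and by Lemma~\ref{lem:lemma3_rigorous}(i) it vanishes exactly when that distribution is integrable. These two conditions are independent:
\begin{itemize}
\item For a flat connection, e.g.\ a product $\mathcal{B}\times F$ with product metric, $\Omega\equiv 0$, yet $\omega(V_{\mathrm{DMZ}})\neq 0$ as soon as $V_{\mathrm{DMZ}}$ has a fiber component.
\item A horizontal $V_{\mathrm{DMZ}}$ has $\omega(V_{\mathrm{DMZ}})=0$ whatever the curvature.
\end{itemize}
The lemmas you cite cannot bridge this. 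Lemma~\ref{lem:lemma3_rigorous}(ii) is an $\epsilon^2$ holonomy statement for loops built from horizontal lifts and never involves $\omega(V_{\mathrm{DMZ}})$. Lemma~\ref{lem:vertical_connection_identity} concerns brackets of \emph{vertical} fields, which are not curvature either.

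This makes your argument internally inconsistent. Under your Step~1 definitions, the operator difference in \eqref{eq:operator_difference} is $p\mapsto p\,\omega_{f}(V_{\mathrm{DMZ}})$, built from the vector whose squared norm is $d\mathcal{T}_{\mathrm{AAT}}/dt$. That is a connection-type object, and the best the paper's own scaling law gives for it is \eqref{eq:connection_dissolution} ($r=3$), i.e.\ $\mathcal{O}_p(N^{-1/2})$. In Step~4 you instead apply \eqref{eq:curvature_annihilation} ($r=4$) to the curvature, which is a different object. So items~1 and~3 use ``$\Omega^{(N)}_{f(t)}=$ vertical component'', item~4 uses ``$\Omega^{(N)}_{f(t)}=$ curvature'', and the claimed $\mathcal{O}_p(N^{-1})$ rate does not follow.

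Step~3 has a second, independent problem. The generator of the horizontally projected flow is $p\mapsto p\,P^H(\mathcal{L}_0^*p/p)$ plus the Stratonovich observation terms, with the projector evaluated at the current normalized state. It is therefore nonlinear, so subtracting \eqref{eq:147} from \eqref{eq:146} does not give $\mathcal{L}_{\mathrm{classical}}e$. The Gr\"onwall step would need a Lipschitz estimate for the state-dependent projector, not a G{\aa}rding inequality for a fixed linear operator.

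To reproduce what the paper actually does, take \eqref{eq:146} as the \emph{definition} of $\Omega^{(N)}_{f(t)}$. You must then separately \emph{assume} that $\mathcal{L}_{\mathrm{classical}}$ is a fixed linear operator and that this $\Omega^{(N)}_{f(t)}$ obeys the valence-$4$ scaling. These are postulates, not consequences of the bundle geometry.
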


\subsection{ Asymptotic Limit Dynamics: Comparative Synthesis and Geometric Collapse ($N \to \infty$)}
\label{subsec:smg_yau_yau_comparative_synthesis}

To rigorously contextualize the finite-sample operational dynamics of the coupled Statistically Meaningful Geometry (SMG) Yau-Yau filter derived in Section \ref{subsec:unified_dynamics}, we contrast its geometric, dynamical, and error-theoretic properties against the asymptotic classical limit ($N \to \infty$). As established by the Second Edge Theorem, the active participation of the non-parametric statistical manifold $\mathcal{M} \subset L^\Phi_0(P_f)$ under finite sample sizes ($N < \infty$) degrades into a passive, flat Euclidean tangent space $T_p \mathcal{M}$ as $N \to \infty$. 

In the finite-sample regime, non-zero intrinsic curvature $R_{ijkl} \neq 0$ and non-metricity induced by dualistic $\alpha$-connections ($\nabla^{(\alpha)}$) necessitate direct integration of the continuous density along non-flat geodesics. Conversely, in the asymptotic limit, zero-curvature conditions collapse the non-linear filtering equations into classical Gaussian approximations (e.g., the Extended Kalman Filter).

Table \ref{tab:smg_yau_yau_comparison} provides a comprehensive side-by-side comparative mapping between the finite-sample SMG-Yau-Yau filter and the asymptotic classical limit across key structural dimensions.

\begin{table}[htbp]
	\centering
	\caption{Comparative Mapping: Finite-Sample SMG-Yau-Yau System vs. Asymptotic Classical Limit}
	\label{tab:smg_yau_yau_comparison}
	\small
	\begin{tabular}{p{3.2cm} p{5.8cm} p{5.8cm}}
		\hline
		\textbf{Structural Dimension} & \textbf{Finite-Sample SMG-Yau-Yau System ($N < \infty$)} & \textbf{Asymptotic Classical Limit ($N \to \infty$)} \\ \hline
		\textbf{Underlying Manifold} & Non-parametric Orlicz space $L_0^\Phi(P_f)$; curved statistical manifold $\mathcal{M}$. & Flat Euclidean tangent space $T_p\mathcal{M} \cong \mathbb{R}^d$. \\
		\textbf{Metric \& Connection} & Fisher-Rao metric $g$; dualistic $\alpha$-connections $\nabla^{(\alpha)}$ with non-zero curvature $R_{ijkl} \neq 0$. & Euclidean metric $\delta_{ij}$; Levi-Civita flat connection $\nabla^{(0)}$ ($R_{ijkl}=0$). \\
		\textbf{Density Evolution} & Operator-based continuous PDE integration on $\mathcal{M}$ via Yau-Yau system algebra. & Linearized Gaussian moment evolution; Extended Kalman Filter (EKF) equations. \\
		\textbf{Leaf Space Separation} & Explicit decomposition into $\operatorname{SVD}\chi$ (statistically visible) and $\operatorname{SID}$ (invisible) leaves. & Degenerate leaf space; all directions collapse to isotropic flat projection. \\
		\textbf{Gauge Symmetry} & Active local gauge symmetry $\mathcal{G}$; symmetry-breaking residuals dictate state update. & Trivial global translation gauge symmetry $SO(d)$; zero gauge residual. \\
		\textbf{Error Bounds} & Geometric error bounded by $\mathcal{O}(N^{-1/2}) + \mathcal{C}_\kappa(R, \nabla)$, explicit non-zero curvature term. & Asymptotic Gaussian error decay $\mathcal{O}(N^{-1/2})$; curvature contribution vanishes ($\mathcal{C}_\kappa \to 0$). \\ \hline
	\end{tabular}
\end{table}

\subsubsection{Detailed Narrative Analysis of Structural Regimes}
\label{subsubsec:detailed_narrative_analysis}

The structural divergence highlighted in Table \ref{tab:smg_yau_yau_comparison} originates from three fundamental mechanisms governing the evolution of probability densities over statistical manifolds:

\begin{enumerate}
	\item \textbf{Geometry of the Underlying Manifold and Connection Dynamics}: Under finite-sample constraints ($N < \infty$), the empirical data manifold cannot be treated as a linear vector space. The underlying space retains its non-parametric Orlicz architecture $L_0^\Phi(P_f)$, where the Fisher-Rao metric $g_{ij}(\theta) = \mathbb{E}_\theta\left[ \frac{\partial \ln p}{\partial \theta^i} \frac{\partial \ln p}{\partial \theta^j} \right]$ dictates path integration. The connection $\nabla^{(\alpha)}$ remains intrinsically curved. Consequently, parallel transport along state trajectories is path-dependent, requiring the Yau-Yau operator algebra to preserve the dynamic topology of the conditional density $\sigma(x,t)$. When $N \to \infty$, the Riemannian curvature tensor $R_{ijkl}$ vanishes exponentially relative to the tangent approximation error, causing the metric $g_{ij}$ to converge locally to the flat Euclidean identity $\delta_{ij}$. The active participation of the statistical manifold ceases, reducing the geometry to a passive coordinate chart.
	
	\item \textbf{Leaf Space Decomposition ($\operatorname{SVD}\chi$ vs. $\operatorname{SID}$)}: A critical feature of the finite-sample SMG-Yau-Yau formulation is its ability to disentangle information via structural leaf decomposition:
	\begin{equation}
		T_\theta \mathcal{M} = \operatorname{SVD}\chi \oplus \operatorname{SID}, \label{eq:leaf_space_decomposition_narrative}
	\end{equation}
	where Statistically Visible Directions ($\operatorname{SVD}\chi$) capture non-zero information gradient components, while Statistically Invisible Directions ($\operatorname{SID}$) represent kernel degeneracies induced by over-parameterization or finite-sample noise. The SMG-Yau-Yau filter restricts its state covariance update strictly along the $\operatorname{SVD}\chi$ leaf, preventing covariance inflation along unobservable manifold dimensions. In the asymptotic limit ($N \to \infty$), the boundary between $\operatorname{SVD}\chi$ and $\operatorname{SID}$ collapses as all sub-manifolds become fully identifiable, eliminating the need for projection operators.
	
	\item \textbf{Convergence of Error Bounds and Asymptotic Collapse}: The total estimation error of the SMG-Yau-Yau system exhibits a precise two-part decomposition:
	\begin{equation}
		\mathcal{E}_{\text{total}}(N) \le \underbrace{\frac{C_0}{\sqrt{N}}}_{\text{Sampling Variance}} \;+\; \underbrace{\mathcal{C}_\kappa\Big( \|R\|_\infty, \, \nabla^{(\alpha)} \Big)}_{\text{Geometric Curvature Residual}}, \label{eq:two_part_error_decomposition}
	\end{equation}
	where $\mathcal{C}_\kappa$ represents the residual error incurred by approximating finite-sample non-linear transport. While classical non-linear filters (such as EKF) assume $\mathcal{C}_\kappa \equiv 0$, leading to severe divergence in high-curvature, low-sample regimes, the finite-sample SMG-Yau-Yau system explicitly integrates $\mathcal{C}_\kappa$. As $N \to \infty$, $\mathcal{C}_\kappa \to 0$, naturally recovering the classical Kalman convergence rate as an asymptotic boundary case.
\end{enumerate}

\section{A New Data Assimilation Paradigm: SMG Signal-Noise Decomposition and Active Acausal Tension $(\mathcal{T}_{\mathrm{AAT}})$}

\label{sec:data_assimilation_smg}

Continuous-time data assimilation (DA) and stochastic state estimation—spanning Ensemble Kalman Filters (EnKF) \cite{evensen2003, jazwinski1970}, variational frameworks (4D-Var) \cite{courtier1994}, and continuous particle filters \cite{doucet2001}—face a persistent structural obstacle: the management of model misspecification, unmodeled fine-scale physical dynamics, and non-closed system operators. When the underlying physical state $X_t \in \mathbb{R}^p$ and observation process $Y_t \in \mathbb{R}^m$ are governed by non-linear stochastic diffusions:
\begin{align}
	dX_t &= f(X_t)dt + dW_t, \label{eq:sde_state_da} \\
	dY_t &= h(X_t)dt + dV_t, \label{eq:sde_obs_da}
\end{align}
the evolution of the unnormalized conditional probability density $\sigma(t, x)$ given the observation filtration $\mathcal{F}_t^Y = \sigma(Y_s, 0 \le s \le t)$ obeys the Stratonovich Duncan--Mortensen--Zakai (DMZ) stochastic partial differential equation \cite{duncan1967, mortensen1967, zakai1969}:
\begin{equation}
	d\sigma(t, x) = \mathcal{L}_0^* \sigma(t, x) \, dt + \sum_{j=1}^m h_j(x) \sigma(t, x) \circ dY_j(t), \label{eq:dmz_spde_da}
\end{equation}
where $\mathcal{L}_0^* = \frac{1}{2}\Delta_x - \nabla_x \cdot (f(x) \cdot)$ is the forward Kolmogorov (Fokker--Planck) drift generator.

In realistic physical domains, strong non-linearities in $f(x)$ or non-polynomial observation metrics $h(x)$ force the estimation algebra $\mathcal{E} \triangleq \operatorname{Lie}(\mathcal{L}_0^*, h_1, \dots, h_m)$ to explode infinitely ($\operatorname{dim}(\mathcal{E}) = \infty$). Consequently, the unconstrained conditional log-density cannot be confined to any fixed, low-dimensional parametric manifold. 

Classical DA methodologies address this model-space mismatch by forcibly truncating higher-order operator variations or injecting empirical covariance inflation \cite{anderson2001, whitaker2002}. However, such forced projections discard unresolved residual kinetic energy into an unmonitored null-space, discarding critical probability information and causing systematic filter divergence \cite{jazwinski1970, cheng2026entropy}. 

In this section, we re-formulate the Orlicz statistical fiber bundle framework $\mathcal{B}_{\mathrm{SMG}} = (\mathcal{M}, \mathcal{B}, \pi, \mathcal{V}, \mathcal{H}, \omega_f, g_f)$ into an explicit continuous-time data assimilation paradigm. We demonstrate how the DMZ flow canonically splits into an observable data assimilation signal and an unmodeled gauge noise, introduce the Active Acausal Tension $(\mathcal{T}_{\mathrm{AAT}})$ as an online monitor for model error accumulation, and establish the exact phase-transition mechanics of Gauge Symmetry Breaking (GSB) for automated model-space expansion.

\subsection{The Paradigm of Assimilation: Classical Forced Projection vs. SMG Fiber Decomposition}
\label{subsec:classical_vs_smg_assimilation}

To understand the structural superiority of Statistically Meaningful Geometry (SMG) over classical Data Assimilation (DA), we examine how both paradigms execute the core triadic action of assimilation: {\it swallowing} (ingesting incoming continuous observations), {\it digesting} (incorporating updates into state variables), and {\it unifying} (aligning model trajectories with physical reality).

\subsubsection{The Classical DA Approach: Static Capacity and Forced Truncation}
\label{subsubsec:classical_da_approach}

In classical frameworks—such as the Ensemble Kalman Filter (EnKF) \cite{evensen2003} or Variational 4D-Var \cite{courtier1994}—the system assumes a fixed, finite-dimensional state space $\mathcal{B}_d \subset \mathbb{R}^d$ governed by prescribed parametric assumptions (e.g., Gaussianity, low-order polynomial expansions, or low-rank ensemble representations):

\begin{enumerate}
	\item \textbf{Swallowing (Linearized Innovation)}: The filter ingests raw observational increments $dY_t - h(\hat{x}_t)dt$ and maps them into tangent updates via a static or low-rank gain operator (e.g., Kalman Gain matrix $K_t$).
	
	\item \textbf{Digesting (Forced Projection)}: Because the parameter space $\mathcal{B}_d$ cannot represent higher-order probability moments, non-linear operator commutators $[E_i, E_j] \notin \operatorname{span}(\mathcal{E}_{\mathrm{trunc}})$, or unmodeled fine-scale physics, classical DA forcibly projects the update vector onto $\mathcal{B}_d$. Residual score vectors falling outside $\mathcal{B}_d$ are projected into an unmonitored adjoint null-space $\operatorname{Null}(\mathcal{E}_{\mathrm{trunc}}^*)$ and discarded:
	\begin{equation}
		\mathcal{P}_{\operatorname{Null}(\mathcal{E}_{\mathrm{trunc}}^*)}\left( \frac{d\sigma(t,x)}{\sigma(t,x)} \right) \mapsto 0. \label{eq:classical_forced_truncation}
	\end{equation}
	
	\item \textbf{The Failure Mode (Choking on Complexity)}: When unmodeled physics generate significant residual energy, classical digestion breaks down. Discarding this kinetic energy leads to spurious correlations, covariance collapse, and \textbf{filter divergence}—where the digital model loses track of physical reality. To prevent collapse, classical DA relies on ad-hoc empirical interventions, such as covariance inflation multipliers $\left(1 + \delta\right) P_t$ or localized spatial tapering functions \cite{anderson2001, whitaker2002}.
\end{enumerate}

\subsubsection{The SMG Approach: Dual Fiber Splitting and Adaptive Structural Expansion}
\label{subsubsec:smg_da_approach}

The SMG framework replaces flat Euclidean state spaces with an infinite-dimensional Orlicz statistical fiber bundle $\mathcal{B}_{\mathrm{SMG}} = (\mathcal{M}, \mathcal{B}_d, \pi, \mathcal{V}, \mathcal{H}, \omega_f, g_f)$ \cite{cheng2026entropy, cheng2026smg}, reframing assimilation as a dynamic, non-lossy geometric process:

\begin{enumerate}
	\item \textbf{Swallowing (Unconstrained Score Velocity)}: Incoming observation streams drive the full, unconstrained logarithmic score velocity on the Orlicz manifold $\mathcal{M} = L_0^\Phi(P_{f_t})$:
	\begin{equation}
		V_{\mathrm{DMZ}}(t, x) \triangleq \sigma(t, x)^{-1} \, d\sigma(t, x) = \left(\frac{\mathcal{L}_0^* \sigma}{\sigma}\right) dt + \sum_{j=1}^m h_j(x) \circ dY_j(t) \in T_{f_t}\mathcal{M}, \label{eq:v_dmz_unconstrained_da}
	\end{equation}
	without premature truncation or Gaussian filtering.
	
	\item \textbf{Digesting (Canonical Signal-Noise Decomposition)}: Using the Ehresmann connection $1$-form $\omega_f$ and horizontal projector $P_f^H \triangleq \mathcal{I}_{T_f\mathcal{M}} - \omega_f$, SMG split-digests $V_{\mathrm{DMZ}}$ into two metric-orthogonal channels:
	\begin{equation}
		V_{\mathrm{DMZ}}(t, x) = \underbrace{P_{f_t}^H\left( V_{\mathrm{DMZ}}(t, x) \right)}_{\substack{\text{Macroscopic Signal } V_{\mathcal{H}} \\ \in \operatorname{SVD}\chi_{f_t}}} + \underbrace{\omega_{f_t}\left( V_{\mathrm{DMZ}}(t, x) \right)}_{\substack{\text{Microscopic Gauge Noise } V_{\mathcal{V}} \\ \in \operatorname{SID}_{f_t}}}. \label{eq:smg_dual_splitting_da}
	\end{equation}
	\begin{itemize}
		\item \textbf{Macroscopic Assimilation ($V_{\mathcal{H}} \in \operatorname{SVD}\chi_f$)}: The horizontal component is projected onto the base manifold $\mathcal{B}_d$ as a stable update vector that updates macroscopic state parameters without triggering numerical derivative explosions.
		\item \textbf{Microscopic Storage ($V_{\mathcal{V}} \in \operatorname{SID}_f$)}: Higher-order non-linearities, extra-algebraic operator expansions, and physical model misspecification are quarantined within the vertical fiber $\operatorname{SID}_f \equiv \ker(d\pi_f)$ and accumulated as stored energy via the \emph{Active Acausal Tension} functional $\mathcal{T}_{\mathrm{AAT}}(t)$.
	\end{itemize}
	
	\item \textbf{Unifying via Dynamic Expansion (Gauge Symmetry Breaking)}: When accumulated model misspecification stress reaches the topological capacity threshold:
	\begin{equation}
		\mathcal{T}_{\mathrm{AAT}}(t) \ge \Theta_{\mathrm{crit}} \triangleq \frac{\pi^2}{K_{\mathrm{max}}}, \label{eq:gsb_capacity_condition}
	\end{equation}
	SMG executes Gauge Symmetry Breaking (GSB) \cite{cheng2026gsb}. It extracts the dominant vertical mode $v_{\mathrm{dom}} \in \operatorname{SID}_f$ via functional eigen-decomposition and promotes it into a new base dimension ($d \to d + 1$). SMG expands its internal model capacity to absorb dynamics that were previously unmodeled.
\end{enumerate}

\subsubsection{Comparative Superiority of SMG over Classical DA}
\label{subsubsec:comparative_superiority_smg_da}

The mathematical, geometric, and operational differences between classical data assimilation schemes and the SMG Orlicz fiber bundle framework are summarized in Table~\ref{tab:classical_vs_smg}.

\begin{table}[h!]
	\centering
	\caption{Comparison of Classical Data Assimilation vs. SMG Fiber Bundle Framework}
	\label{tab:classical_vs_smg}
	\begin{tabular}{p{3.5cm} p{5.8cm} p{6.2cm}}
		\hline
		\textbf{Metric / Feature} & \textbf{Classical DA (EnKF / 4D-Var)} & \textbf{SMG Framework} \\ \hline
		\textbf{State Space Structure} & Fixed, flat Euclidean $\mathbb{R}^d$ & Infinite-dimensional Orlicz Bundle $\mathcal{B}_{\mathrm{SMG}}$ \\
		\textbf{Handling of Unmodeled Physics} & Discarded via truncation / null-space projection & Quarantined in vertical fiber $\operatorname{SID}_f$ and tracked \\
		\textbf{Information Loss} & Kinetic score energy is lost, causing filter divergence & Exact Pythagorean energy conservation $\|V_{\mathrm{DMZ}}\|_{g_f}^2 = \|V_{\mathcal{H}}\|_{g_f}^2 + \|V_{\mathcal{V}}\|_{g_f}^2$ \\
		\textbf{Model Error Diagnostic} & Empirical residual checking (post-hoc) & Real-time online monitor via Active Acausal Tension $\mathcal{T}_{\mathrm{AAT}}(t)$ \\
		\textbf{System Capacity} & Static dimension $d$; requires manual tuning/inflation & Dynamic dimension expansion ($d \to d+1$) via GSB phase transitions \\
		\textbf{Manifold Role} & Passive canvas for update vectors & Active participating manifold balancing Environment, System, and Connection \\ \hline
	\end{tabular}
\end{table}

\begin{theorem}[Metric Orthogonality and Energy Conservation of the Assimilation Flow]
	\label{thm:da_metric_orthogonality_energy_conservation}
	Let $V_{\mathrm{DMZ}}(t, x) = \sigma(t, x)^{-1} d\sigma(t, x) \in T_{f_t}\mathcal{M}$ be the unconstrained DMZ score velocity field on the non-parametric Orlicz statistical manifold $\mathcal{M}$. Under the non-parametric Fisher--Rao Riemannian metric tensor $g_f(u, v) = \mathbb{E}_f[u(X) v(X)]$, the horizontal signal field $V_{\mathcal{H}} \triangleq P_f^H(V_{\mathrm{DMZ}}) \in \operatorname{SVD}\chi_f$ and the vertical gauge noise field $V_{\mathcal{V}} \triangleq \omega_f(V_{\mathrm{DMZ}}) \in \operatorname{SID}_f$ are strictly metric-orthogonal across $\mathcal{M}$. 
	
	Consequently, the total score variance rate obeys exact Pythagorean power additivity:
	\begin{equation}
		\|V_{\mathrm{DMZ}}(t)\|_{g_{f_t}}^2 = \|P_{f_t}^H(V_{\mathrm{DMZ}}(t))\|_{g_{f_t}}^2 + \|\omega_{f_t}(V_{\mathrm{DMZ}}(t))\|_{g_{f_t}}^2, \label{eq:pythagorean_energy_conservation_da}
	\end{equation}
	guaranteeing zero dissipation of total statistical information.
\end{theorem}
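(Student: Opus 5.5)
The plan is to reduce the statement to the Fisher-orthogonal splitting of Lemma~\ref{lem:tangent_space_decomposition_sec2} together with the defining properties of the Ehresmann connection in Definition~\ref{def:ehresmann_connection}. Then the Pythagorean identity follows by expanding a squared norm.

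\textbf{Step 1: the two components land in the right subspaces.} Fix $t$ and write $V \triangleq V_{\mathrm{DMZ}}(t) \in T_{f_t}\mathcal{M}$. By Definition~\ref{def:ehresmann_connection}, $\omega_{f_t}$ takes values in $\operatorname{SID}_{f_t}$, so $V_{\mathcal{V}} = \omega_{f_t}(V) \in \operatorname{SID}_{f_t}$. Next, $\omega_{f_t}$ is the identity on $\operatorname{SID}_{f_t}$, so $\omega_{f_t}\circ\omega_{f_t} = \omega_{f_t}$. Hence
\[
\omega_{f_t}(P^H_{f_t}V) = \omega_{f_t}(V) - \omega_{f_t}(\omega_{f_t}(V)) = 0 .
\]
Since $\ker\omega_{f_t} = \operatorname{SVD}\chi_{f_t}$, this gives $V_{\mathcal{H}} = P^H_{f_t}V \in \operatorname{SVD}\chi_{f_t}$. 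By construction, $V = V_{\mathcal{H}} + V_{\mathcal{V}}$.

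\textbf{Step 2: the decomposition is orthogonal, and $\omega$ is the orthogonal projector.} The two pieces lie in $\operatorname{SVD}\chi_{f_t}$ and $\operatorname{SID}_{f_t}$ respectively. By Definition~\ref{def:smg_fiber_bundle}(6) (equivalently Assumption~\ref{ass:smg_filtering_submersion}(A3)), $\operatorname{SVD}\chi_{f_t} = (\operatorname{SID}_{f_t})^{\perp g_{f_t}}$. Therefore $g_{f_t}(V_{\mathcal{H}},V_{\mathcal{V}}) = \mathbb{E}_{f_t}[V_{\mathcal{H}}V_{\mathcal{V}}] = 0$. This is exactly Theorem~\ref{thm:theorem5_rigorous}(i), applied pointwise in $t$ and across all $f\in\mathcal{M}$.

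Lemma~\ref{lem:tangent_space_decomposition_sec2} makes the direct sum unique. Consequently $\omega_{f_t}$ coincides with the $g_{f_t}$-orthogonal projection onto $\operatorname{SID}_{f_t}$, so the split does not depend on any further choice.

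\textbf{Step 3: Pythagoras.} Expand
\[
\|V\|^2_{g_{f_t}} = \|V_{\mathcal{H}}\|^2_{g_{f_t}} + 2\,g_{f_t}(V_{\mathcal{H}},V_{\mathcal{V}}) + \|V_{\mathcal{V}}\|^2_{g_{f_t}} .
\]
The cross term vanishes by Step 2, which gives \eqref{eq:pythagorean_energy_conservation_da}. The phrase ``zero dissipation'' is then just the statement that no part of $\|V\|^2$ is discarded: the vertical term is exactly the integrand of $\mathcal{T}_{\mathrm{AAT}}$ in \eqref{eq:smg_aat_rate_sec3}.

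\textbf{Main obstacle.} The delicate point is giving meaning to $\|V_{\mathrm{DMZ}}(t)\|^2_{g_{f_t}}$ when $V_{\mathrm{DMZ}}$ is a stochastic differential $\frac{\mathcal{L}_0^*\sigma}{\sigma}\,dt + \sum_j h_j\circ dY_j$. I would handle this in two ways.

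First, I would interpret the identity componentwise, for the drift coefficient $\mathcal{L}_0^*\sigma/\sigma$ and for each $h_j$. By linearity of $P^H$ and $\omega$, Steps 1–3 apply verbatim to each coefficient field, which is consistent with the rate formula \eqref{eq:tension_accumulation_thm}. Alternatively, one can read the norm as a quadratic-variation rate, where only the $dY$ terms contribute; orthogonality is again inherited coefficient-wise.

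Second, I must ensure that each coefficient field, after centering under $f_t$, lies in $L^\Phi_0(P_{f_t})\cap L^2(P_{f_t})$, so that $g_{f_t}$ is finite. Orthogonality of the split in the Hilbert completion $L^2_0(P_{f_t})$ is then automatic, and closedness of $\operatorname{SID}_{f_t}$ (Assumption~\ref{ass:smg_filtering_submersion}(A1)) guarantees that the orthogonal projector exists.
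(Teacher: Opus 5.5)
Your proposal is correct and follows the paper's proof essentially verbatim: you place $P^H_{f_t}V_{\mathrm{DMZ}}$ and $\omega_{f_t}(V_{\mathrm{DMZ}})$ in $\operatorname{SVD}\chi_{f_t}$ and $\operatorname{SID}_{f_t}$ via Definition~\ref{def:ehresmann_connection} and Lemma~\ref{lem:tangent_space_decomposition_sec2}, expand the squared norm bilinearly, and kill the cross term using $\operatorname{SVD}\chi_{f_t}=(\operatorname{SID}_{f_t})^{\perp g_{f_t}}$. Your idempotence check $\omega_{f_t}\circ\omega_{f_t}=\omega_{f_t}$ and your coefficientwise reading of the stochastic norm only make explicit what the paper leaves implicit; one correction to the side remark is that the existence of the splitting comes from the connection and Lemma~\ref{lem:tangent_space_decomposition_sec2}, not from closedness of $\operatorname{SID}_{f_t}$, which holds in the Orlicz topology but not automatically in $L^2_0$.
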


\begin{proof}[Proof of Theorem \ref{thm:da_metric_orthogonality_energy_conservation}]
	By Lemma~\ref{lem:tangent_space_decomposition_sec2}, the Orlicz tangent space $T_{f_t}\mathcal{M}$ admits the Fisher-orthogonal direct sum decomposition $T_{f_t}\mathcal{M} = \operatorname{SVD}\chi_{f_t} \oplus_{\perp g_{f_t}} \operatorname{SID}_{f_t}$. By Definition~\ref{def:ehresmann_connection}, $P_{f_t}^H(V_{\mathrm{DMZ}}) \in \operatorname{SVD}\chi_{f_t}$ and $\omega_{f_t}(V_{\mathrm{DMZ}}) \in \operatorname{SID}_{f_t}$. 
	
	Expanding the squared Fisher--Rao metric norm of $V_{\mathrm{DMZ}}(t)$ via the bilinear form:
	\begin{align}
		\|V_{\mathrm{DMZ}}\|_{g_{f_t}}^2 &= g_{f_t}\left( P_{f_t}^H(V_{\mathrm{DMZ}}) + \omega_{f_t}(V_{\mathrm{DMZ}}), \, P_{f_t}^H(V_{\mathrm{DMZ}}) + \omega_{f_t}(V_{\mathrm{DMZ}}) \right) \nonumber \\
		&= \|P_{f_t}^H(V_{\mathrm{DMZ}})\|_{g_{f_t}}^2 + 2 g_{f_t}\left( P_{f_t}^H(V_{\mathrm{DMZ}}), \, \omega_{f_t}(V_{\mathrm{DMZ}}) \right) + \|\omega_{f_t}(V_{\mathrm{DMZ}})\|_{g_{f_t}}^2. \label{eq:bilinear_expansion_da}
	\end{align}
	Because $\operatorname{SVD}\chi_{f_t} \triangleq (\operatorname{SID}_{f_t})^{\perp g_{f_t}}$, the inner product cross-term satisfies $g_{f_t}(h, v) = 0$ for all $h \in \operatorname{SVD}\chi_{f_t}$ and $v \in \operatorname{SID}_{f_t}$. Hence, $g_{f_t}\left( P_{f_t}^H(V_{\mathrm{DMZ}}), \omega_{f_t}(V_{\mathrm{DMZ}}) \right) \equiv 0$, establishing Equation~\eqref{eq:pythagorean_energy_conservation_da} identically.
\end{proof}

\paragraph{Why SMG is Superior:}
\begin{enumerate}
	\item \textbf{Mathematical Completeness (Zero Information Leakage)}: Theorem~\ref{thm:da_metric_orthogonality_energy_conservation} proves that SMG preserves total information update energy without numerical dissipation. Classical DA discards vertical score components, losing the dynamic signals required to track non-linear state shifts.
	
	\item \textbf{Immunity to Filter Blindness}: By isolating model misspecification stress inside $\operatorname{SID}_f$, SMG prevents unmodeled physical noise from corrupting the macroscopic update $V_{\mathcal{H}} \in \operatorname{SVD}\chi_f$. The filter maintains asymptotic stability even under severe model error.
	
	\item \textbf{Objective Structural Adaptability}: Classical DA requires ad-hoc, manual tuning (e.g., inflation factors, localization radii). SMG derives its structural expansion criteria directly from the geometric curvature of the state manifold ($\Theta_{\mathrm{crit}}$), expanding base coordinates automatically when incoming reality exceeds current model capacity.
\end{enumerate}

\subsection{Gauge Energy Storage Tensor and Active Acausal Tension $(\mathcal{T}_{\mathrm{AAT}})$ for Model Error Monitoring}
\label{subsec:gauge_energy_storage_taat}

Rather than allowing unmodeled score variations to leak into an unmonitored null-space, the SMG framework continuously accumulates vertical projection deficits inside the local fiber $\mathcal{F}_{p(t)}$ as stored gauge energy.

\begin{definition}[Gauge Storage Tensor and Active Acausal Tension $(\mathcal{T}_{\mathrm{AAT}})$]
	\label{def:gauge_storage_tensor}
	The instantaneous \textbf{Gauge Storage Tensor} $\mathcal{G}_f \in \operatorname{Sym}^2(\operatorname{SID}_f^*)$\footnote{$\operatorname{Sym}^2(\operatorname{SID}_f^*)$ denotes the space of symmetric bilinear forms (or symmetric $(0,2)$-tensors) defined on the vertical tangent space $\operatorname{SID}_f$.} acting on vertical vectors $v_1, v_2 \in \operatorname{SID}_f$ is defined as the metric pull-back via the connection $1$-form $\omega_f$:
	\begin{equation}
		\mathcal{G}_f(v_1, v_2) \triangleq g_f\left( \omega_f(v_1), \, \omega_f(v_2) \right) = \int_{\mathcal{X}} \omega_f(v_1)(x) \, \omega_f(v_2)(x) \, f(x) \, \mu(dx). \label{eq:gauge_storage_tensor_def}
	\end{equation}
	The cumulative \textbf{Active Acausal Tension} $\mathcal{T}_{\mathrm{AAT}}(t) \in \mathbb{R}^+$ along an assimilation trajectory $f(\tau)$ over time $\tau \in [0, t]$ is the integrated temporal trace of this tensor driven by the DMZ velocity:
	\begin{equation}
		\mathcal{T}_{\mathrm{AAT}}(t) \triangleq \int_0^t \mathcal{G}_{f(\tau)}\left( V_{\mathrm{DMZ}}(\tau), \, V_{\mathrm{DMZ}}(\tau) \right) d\tau = \int_0^t \|\omega_{f(\tau)}(V_{\mathrm{DMZ}}(\tau))\|_{g_{f(\tau)}}^2 \, d\tau. \label{eq:aat_integrated_def}
	\end{equation}
\end{definition}

Definition~\ref{def:gauge_storage_tensor} provides a rigorous geometric diagnostic tool for continuous data assimilation systems. The scalar $\mathcal{T}_{\mathrm{AAT}}(t)$ acts as an online "{\it model error accumulator}": it tracks the cumulative kinetic energy stored in unmodeled physical modes, quantifying model misspecification stress without requiring knowledge of the unobservable ground-truth state.

\begin{lemma}[Monotonic Stress Accumulation under Model Misspecification]
	\label{lem:monotonic_stress_accumulation}
	Assume the physical system dynamics contain non-closed estimation operators ($\operatorname{dim}(\mathcal{E}) = \infty$) or unmodeled physical scale couplings. Then, the Active Acausal Tension $\mathcal{T}_{\mathrm{AAT}}(t)$ is a strictly monotonically increasing functional of assimilation time, satisfying the differential propagation law:
	\begin{equation}
		\frac{d\mathcal{T}_{\mathrm{AAT}}(t)}{dt} = \|\omega_{f(t)}(V_{\mathrm{DMZ}}(t))\|_{g_{f(t)}}^2 > 0, \quad \forall t > 0. \label{eq:aat_rate_law}
	\end{equation}
\end{lemma}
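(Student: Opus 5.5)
The rate law follows directly from Definition~\ref{def:gauge_storage_tensor}, so the work splits into two parts. First I obtain the derivative formula, and then I establish that the integrand is strictly positive.

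\textbf{Derivative formula.} By \eqref{eq:aat_integrated_def}, $\mathcal{T}_{\mathrm{AAT}}(t)$ is the integral of the non-negative function
\[
\tau \mapsto \|\omega_{f(\tau)}(V_{\mathrm{DMZ}}(\tau))\|_{g_{f(\tau)}}^2 .
\]
If this integrand is continuous in $\tau$, the fundamental theorem of calculus gives the identity in \eqref{eq:aat_rate_law} at every $t>0$. Continuity will follow from three facts: the smoothness of the flow $f(\tau)$ on $\mathcal{M}$, the smoothness of the connection $\omega$ as a bundle map, and the continuity of $g_f$ on $L_0^\Phi(P_f)$. Here $V_{\mathrm{DMZ}}$ should be read as the drift (rate) part $\mathcal{L}_0^*\sigma/\sigma$, matching \eqref{eq:tension_accumulation_thm}, so that the time derivative is meaningful. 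I would state this interpretation explicitly.

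\textbf{Strict positivity.} Since $g_f$ is positive-definite (Definition~\ref{def:fisher_rao_metric_sec2}), it suffices to show $\omega_{f(t)}(V_{\mathrm{DMZ}}(t)) \neq 0$. Equivalently, by Definition~\ref{def:ehresmann_connection}, $V_{\mathrm{DMZ}}(t) \notin \operatorname{SVD}\chi_{f(t)}$. I would argue by contradiction.

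Suppose $\omega_{f(t_0)}(V_{\mathrm{DMZ}}(t_0)) = 0$ at some $t_0 > 0$. Then $V_{\mathrm{DMZ}}(t_0) = P^H_{f(t_0)}V_{\mathrm{DMZ}}(t_0)$ lies in the span of the Fisher scores $\phi_1, \dots, \phi_d$ (Theorem~\ref{thm:quotient_base_geometry_and_curvature}(i)). The plan is to turn this into finite closure of the score generators. Differentiating the DMZ flow in time and using the commutator structure of $\mathcal{L}_0^*$ and the $h_j$, I would invoke Theorem~\ref{thm:triad_pathology_theorem}(iii): under $\operatorname{dim}(\mathcal{E}) = \infty$, any finite truncation has $v_{\mathrm{OOD}}(t) \neq 0$ for all $t > 0$. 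With the identification $\mathcal{E}_k \leftrightarrow \operatorname{SVD}\chi_f$, this gives $v_{\mathrm{OOD}}(t) = \omega_{f(t)}(V_{\mathrm{DMZ}}(t))$, which is the desired contradiction. The mis-specification hypothesis enters here: it guarantees that $V_{\mathrm{DMZ}}$ contains components generated by brackets outside any $d$-dimensional horizontal space. For the unmodeled-scale-coupling case, I would instead take the hypothesis to mean directly that the true score velocity has a nonzero component off $\operatorname{SVD}\chi_f$.

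\textbf{Main obstacle.} The hard step is upgrading this to a pointwise statement for every $t$. Theorem~\ref{thm:triad_pathology_theorem}(iii) as stated only guarantees positivity of the integral. Isolated zeros of the integrand would still leave $\mathcal{T}_{\mathrm{AAT}}$ strictly increasing, but they would break the strict inequality in \eqref{eq:aat_rate_law}. To rule them out, I would first try to use the non-vanishing pointwise form $v_{\mathrm{OOD}}(t) \in \operatorname{Null}(\mathcal{E}_k^*) \setminus \{0\}$ asserted in (iii). Failing that, I would fall back on analyticity in $t$ of the integrand, which forces zeros to be isolated. This fallback still yields strict monotonicity of $\mathcal{T}_{\mathrm{AAT}}$, and the rate inequality then holds almost everywhere.
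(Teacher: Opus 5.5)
Your derivative formula is fine. With $V_{\mathrm{DMZ}}$ read as the centred drift rate $\mathcal{L}_0^*\sigma/\sigma$, continuity of the integrand and the fundamental theorem of calculus give the identity in \eqref{eq:aat_rate_law}, and positive-definiteness of $g_f$ gives $d\mathcal{T}_{\mathrm{AAT}}/dt\ge 0$. That is all the paper's proof actually derives. Its remaining step is the bare assertion that under $\dim(\mathcal{E})=\infty$ ``spatial commutators continuously generate score variations outside $\operatorname{SVD}\chi_f$,'' so that $\omega_{f(t)}(V_{\mathrm{DMZ}}(t))\neq 0$ almost everywhere. Your proposal has the same skeleton, and your attempt to supply that missing step does not succeed.

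The strict inequality does not follow from the hypotheses.
\begin{itemize}
\item The dimension of $\mathcal{E}=\operatorname{Lie}(\mathcal{L}_0^*,h_1,\dots,h_m)$ depends only on the system. It does not depend on the initial density, the trajectory $f(t)$, or the submersion $\pi$.
\item Whether $\omega_{f(t)}(V_{\mathrm{DMZ}}(t))$ vanishes depends on one function $V_{\mathrm{DMZ}}(t)$ and on the $d$-dimensional space $\operatorname{SVD}\chi_{f(t)}=(\ker d\pi_{f(t)})^{\perp g_f}$. Nothing in (A1)--(A4) of Assumption~\ref{ass:smg_filtering_submersion} ties that space to $\mathcal{E}$ or prevents it from containing $V_{\mathrm{DMZ}}(t_0)$.
\item Independently of $\pi$, the centred drift score is already zero whenever $\sigma(t)$ is an eigenfunction of $\mathcal{L}_0^*$, whatever $\dim(\mathcal{E})$ is.
\end{itemize}

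So your contradiction step, converting horizontality at a single time $t_0$ into finite closure of the generators, has nothing to work with. One vector lying in a $d$-dimensional subspace says nothing about iterated brackets.

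Invoking Theorem~\ref{thm:triad_pathology_theorem}(iii) through the identification $\mathcal{E}_k\leftrightarrow\operatorname{SVD}\chi_f$ does not repair this, for three reasons. It adds a hypothesis that (A1)--(A4) do not contain. It concerns a different projection, onto $\operatorname{Null}(\mathcal{E}_k^*)$. And it is circular in substance, because the proof of (iii) rests on the same unproved sentence.

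The analyticity fallback also fails. $\sigma(t)$ is driven by the observation path through $h_j\sigma\circ dY_j$, so the integrand inherits the roughness of $Y$ and is in general not analytic, not even differentiable, in $t$. Even granting analyticity, you would still need the integrand not to vanish identically on intervals. That is the same missing fact, and it is exactly what plain strict monotonicity of $\mathcal{T}_{\mathrm{AAT}}$ already requires.

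The only way to close the argument is to add a non-degeneracy hypothesis on the trajectory itself. For example, assume $V_{\mathrm{DMZ}}(t)\notin\operatorname{SVD}\chi_{f(t)}$ for all $t>0$, or for a.e.\ $t$ if you only want strict monotonicity. With that hypothesis the lemma reduces to your FTC computation.
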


\begin{proof}[Proof of Lemma \ref{lem:monotonic_stress_accumulation}]
	By Definition~\ref{def:gauge_storage_tensor}, $\mathcal{T}_{\mathrm{AAT}}(t)$ is the time integral of a squared Fisher--Rao norm. Since $g_f$ is strictly positive-definite on $L_0^\Phi(P_f)$, the integrand satisfies $\|\omega_f(V_{\mathrm{DMZ}})\|_{g_f}^2 \ge 0$. Under $\operatorname{dim}(\mathcal{E}) = \infty$, spatial commutators continuously generate score variations outside the horizontal distribution $\operatorname{SVD}\chi_f$. Thus, the vertical projection $\omega_{f(t)}(V_{\mathrm{DMZ}}(t)) \neq 0$ almost everywhere in time, forcing $\frac{d\mathcal{T}_{\mathrm{AAT}}}{dt} > 0$ and establishing strict monotonicity.
\end{proof}

\subsection{Topological Capacity Limits, Model Breakdown, and Gauge Symmetry Breaking (GSB) Phase Transitions}
\label{subsec:topological_capacity_gsb}

While the vertical fiber $\operatorname{SID}_f$ serves as a dynamic storage space for unmodeled physical variations, a finite-dimensional model space $\mathcal{B}_d$ cannot accumulate vertical tension indefinitely. As continuous observations push the true physical state density along trajectories not captured by $\mathcal{B}_d$, the Active Acausal Tension $\mathcal{T}_{\mathrm{AAT}}(t)$ grows monotonically (Lemma~\ref{lem:monotonic_stress_accumulation}). In this subsection, we establish the topological and geometric limits of vertical energy storage, define the critical capacity bound $\Theta_{\mathrm{crit}}$, and formalize the Gauge Symmetry Breaking (GSB) phase transition mechanism that dynamically expands the base manifold dimension. \cite{cheng2026gsb}

\subsubsection{Topological Capacity Limit of the Statistical Fiber}
\label{subsubsec:capacity_limit}

The geometry of the Orlicz manifold $\mathcal{M} = L_0^\Phi(P_f)$ under the Fisher--Rao Riemannian metric tensor $g_f$ possesses intrinsic sectional curvature \cite{amari2000, pistone1995}. Let $K(\Pi_f)$ denote the sectional curvature of $\mathcal{M}$ along a two-plane $\Pi_f \subset T_f\mathcal{M}$ spanned by orthonormal tangent score vectors $u, v \in T_f\mathcal{M}$:
\begin{equation}
	K(\Pi_f) = g_f\left( R^g(u, v)v, \, u \right), \label{eq:sectional_curvature_def}
\end{equation}
where $R^g$ is the Riemann curvature tensor of the metric connection $\nabla^g$. On non-parametric statistical manifolds, positive sectional curvature reflects the geometric obstruction to global flat parallel transport of probability densities \cite{petersen2006}.

\begin{theorem}[Topological Capacity Bound of Vertical Fiber Storage]
	\label{thm:topological_capacity_bound}
	Let $\mathcal{M}$ be a non-parametric Orlicz statistical manifold whose sectional curvature is uniformly bounded above by $K_{\mathrm{max}} \in (0, \infty)$, i.e., $K(\Pi_f) \le K_{\mathrm{max}}$ for all $f \in \mathcal{M}$ and all two-planes $\Pi_f \subset T_f\mathcal{M}$. 
	
	The maximal continuous geodesic path length $L_{\mathrm{max}}$ along which vertical gauge stress can be coherently transport-coupled without encountering geometric conjugate points satisfies:
	\begin{equation}
		L_{\mathrm{max}} \le \frac{\pi}{\sqrt{K_{\mathrm{max}}}}. \label{eq:myers_conjugate_bound}
	\end{equation}
	Consequently, the maximal integrated vertical energy capacity—defined as the critical Active Acausal Tension threshold $\Theta_{\mathrm{crit}}$—is strictly bounded by:
	\begin{equation}
		\Theta_{\mathrm{crit}} \triangleq \frac{\pi^2}{K_{\mathrm{max}}}. \label{eq:theta_crit_formula}
	\end{equation}
	If $\mathcal{T}_{\mathrm{AAT}}(t) \ge \Theta_{\mathrm{crit}}$, parallel transport along the statistical bundle becomes singular, inducing filter divergence and catastrophic breakdown of horizontal state updates.
\end{theorem}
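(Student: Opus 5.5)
The plan is to read $L_{\max}$ as the paper uses it in Remark~\ref{rem:calculation_kmax}: the length of vertical-stress transport that is \emph{guaranteed coherent uniformly} over every admissible bundle geometry obeying $K(\Pi_f)\le K_{\max}$. I then bound this worst-case quantity from above by exhibiting a configuration that saturates the hypothesis. The proof has three stages: (a) the length bound \eqref{eq:myers_conjugate_bound}; (b) conversion of length into the energy threshold \eqref{eq:theta_crit_formula}; (c) the singularity and divergence conclusion.

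\textbf{Stage (a).} Fix a unit-speed geodesic $\gamma$ in $\mathcal{M}$ issuing in a vertical direction $v\in\operatorname{SID}_f$. Choose a two-plane $\Pi_f=\operatorname{span}\{\dot\gamma,e\}$ along which the sectional curvature attains the supremum $K_{\max}$; this is the plane realising $K_{\max}(p)$ in stage~3 of Remark~\ref{rem:calculation_kmax}. Take the normal Jacobi field $J(s)=\phi(s)E(s)$, with $E$ the parallel transport of $e$ and $J(0)=0$. Along this saturating plane, the Jacobi equation of Lemma~\ref{lem:modified_jacobi_equation}, with vertical forcing switched off, reduces to the scalar equation $\phi''+K_{\max}\phi=0$. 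Its solution is $\phi(s)=\sin(\sqrt{K_{\max}}\,s)/\sqrt{K_{\max}}$, which vanishes at $s=\pi/\sqrt{K_{\max}}$. So there is an admissible geometry, namely the constant-curvature model $S^2(1/\sqrt{K_{\max}})$ embedded along $\Pi_f$, for which a conjugate point occurs exactly at $\pi/\sqrt{K_{\max}}$. Since $L_{\max}$ must be valid uniformly over all geometries satisfying the hypothesis, including this saturating one, we conclude $L_{\max}\le\pi/\sqrt{K_{\max}}$. This is the Myers--Bonnet horizon $L_{\mathrm{conj}}$ quoted in the paper.

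\textbf{Stage (b).} Reparametrise the accumulated vertical trajectory $\tau\mapsto\omega_{f(\tau)}(V_{\mathrm{DMZ}}(\tau))$ to unit time. By Cauchy--Schwarz, its squared length is at most its energy:
\[
L^2=\Bigl(\int_0^1\|\dot c\|_{g_f}\,ds\Bigr)^2\le\int_0^1\|\dot c\|_{g_f}^2\,ds .
\]
By Definition~\ref{def:gauge_storage_tensor}, this energy is exactly $\mathcal{T}_{\mathrm{AAT}}$, and equality holds for the constant-speed (geodesic) realisation. The capacity is therefore the energy of a geodesic of length $L_{\max}$, which gives $\Theta_{\mathrm{crit}}=L_{\max}^2=\pi^2/K_{\max}$, as in \eqref{eq:theta_crit_formula}.

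\textbf{Stage (c).} Suppose $\mathcal{T}_{\mathrm{AAT}}(t)\ge\Theta_{\mathrm{crit}}$. By Lemma~\ref{lem:monotonic_stress_accumulation} this threshold is crossed at a finite time once it is reached. In the geodesic realisation, the vertical path then has length at least $\pi/\sqrt{K_{\max}}$, so in the saturating geometry it has passed the conjugate point found in (a). There $J$ vanishes, so $d\exp$ is singular and parallel transport along the bundle degenerates. Feeding this into Lemma~\ref{lem:jacobi_energy_bound} and Theorem~\ref{thm:theorem5_rigorous}, the Grönwall control of $\mathcal{E}_R$ relies on non-degeneracy of the geodesic variation, which is now lost. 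The horizontal update \eqref{eq:horizontal_filter_ode_sec2} no longer depends smoothly on its geodesic data, which yields the stated breakdown.

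The main obstacle is stage (a). An upper curvature bound alone normally gives the opposite Rauch-type conclusion, namely no conjugate points before $\pi/\sqrt{K_{\max}}$. The argument therefore rests entirely on the uniform, worst-case interpretation of $L_{\max}$ together with the existence of a plane where $K_{\max}$ is attained. I would state this reading explicitly, using that $K_{\max}$ is defined as a supremum (Remark~\ref{rem:calculation_kmax}) and approximating by planes with $K\ge K_{\max}-\varepsilon$, then letting $\varepsilon\to0$ if the supremum is not attained.
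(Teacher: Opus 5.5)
You have found the real obstruction, and the paper's own proof does not overcome it either. The paper inserts $J=\sin(\pi s/\ell)W$ into the index form and uses $K\le K_{\max}$ to get $I(J,J)\ge\frac{\ell}{2}\bigl(\pi^2/\ell^2-K_{\max}\bigr)$. It then asserts $I(J,J)<0$ for $\ell>\pi/\sqrt{K_{\max}}$, which a \emph{lower} bound on $I(J,J)$ cannot give. The Bonnet--Myers conclusion needs $K\ge K_{\min}>0$ on the planes containing $\dot\gamma$; the same test field then gives $I(J,J)\le\frac{\ell}{2}\bigl(\pi^2/\ell^2-K_{\min}\bigr)<0$ for $\ell>\pi/\sqrt{K_{\min}}$. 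Read literally for a given $\mathcal{M}$, the statement cannot hold. The hypothesis $K\le K_{\max}$ implies $K\le K'$ for every $K'\ge K_{\max}$, so the claim would force $L_{\max}\le\pi/\sqrt{K'}\to 0$. Yet Rauch/Morse--Schoenberg comparison rules out conjugate points before $\pi/\sqrt{K_{\max}}$. Your worst-case reading of $L_{\max}$ does make stage (a) true, indeed with equality by Rauch plus the round sphere. But it changes the theorem. The paper fixes $(\mathcal{M},g_f)$ once and for all (Definition~\ref{def:canonical_ambient_space}), so there is no family of admissible geometries to range over. Inside the given $\mathcal{M}$, your reduction to $\phi''+K_{\max}\phi=0$ is false. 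Saturation on one plane at one point does not make $\langle R(E,\dot\gamma)\dot\gamma,E\rangle\equiv K_{\max}$ along $\gamma$, nor does it keep $R(E,\dot\gamma)\dot\gamma$ parallel to $E$.

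The decisive failure is stage (c), which is a claim about the actual filter.
\begin{itemize}
\item You have only exhibited a conjugate point on the \emph{model sphere}. In $\mathcal{M}$ itself, the upper bound guarantees none before $\pi/\sqrt{K_{\max}}$, and possibly none at all.
\item Even where a conjugate point exists, it makes $d\exp$ singular, not parallel transport, which is a linear isometry along every curve.
\item The paper's alternative mechanism, $\ker d\pi_f\cap\ker\omega_f\neq\{0\}$, is impossible: this intersection is $\operatorname{SID}_f\cap\operatorname{SVD}\chi_f=\{0\}$ at every $f$ by Lemma~\ref{lem:tangent_space_decomposition_sec2}.
\end{itemize}
Stage (b) does not supply the missing link from tension to length either.
\begin{itemize}
\item The map $\tau\mapsto\omega_{f(\tau)}(V_{\mathrm{DMZ}}(\tau))$ is a field of vertical vectors over a moving base point, not the velocity of a curve.
\item $\mathcal{T}_{\mathrm{AAT}}(t)$ integrates over $[0,t]$, and rescaling to unit time multiplies the energy by $t$.
\item Cauchy--Schwarz, $L^2\le E$, runs the wrong way. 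So $\mathcal{T}_{\mathrm{AAT}}(t)\ge\Theta_{\mathrm{crit}}$ yields length at least $\pi/\sqrt{K_{\max}}$ only if the vertical motion is a constant-speed geodesic, which the filter trajectory is not.
\end{itemize}
Your unit-time parametrisation does avoid the paper's slip of writing $\int_0^{L}\|\dot\gamma\|^2\,ds=L^2$ for a unit-speed $\gamma$, where the integral equals $L$. Under the stated hypothesis, what is provable is the Rauch inequality $L_{\max}\ge\pi/\sqrt{K_{\max}}$. Any threshold-and-divergence statement needs a lower curvature bound, plus a separate argument tying $\mathcal{T}_{\mathrm{AAT}}$ to the length of a genuine geodesic.
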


\subsubsection{Spectral Expansion Mechanism and Gauge Symmetry Breaking (GSB)}
\label{subsubsec:spectral_expansion_gsb}

When $\mathcal{T}_{\mathrm{AAT}}(t)$ approaches $\Theta_{\mathrm{crit}}$, the filter must prevent geometric singularity by relieving stress in $\operatorname{SID}_f$. This is accomplished through \textbf{Gauge Symmetry Breaking (GSB)} \cite{cheng2026gsb}: transferring unmodeled physical variations from the vertical fiber directly into the base manifold $\mathcal{B}_d$ by promoting the dominant vertical noise mode to a new base coordinate.

\begin{definition}[Vertical Gauge Covariance Operator and Dominant Mode]
	\label{def:vertical_gauge_operator}
	Let $V_{\mathcal{V}}(\tau) = \omega_{f(\tau)}(V_{\mathrm{DMZ}}(\tau)) \in \operatorname{SID}_{f(\tau)}$ be the historical stream of vertical score vectors over $\tau \in [0, t]$. The \textbf{Vertical Gauge Covariance Operator} $K_{\mathcal{V}}: \operatorname{SID}_{f(t)} \to \operatorname{SID}_{f(t)}$ is the integral operator:
	\begin{equation}
		(K_{\mathcal{V}} v)(x) \triangleq \int_0^t \mathbb{E}_{f(t)}\left[ V_{\mathcal{V}}(\tau, X) \, v(X) \right] V_{\mathcal{V}}(\tau, x) \, d\tau. \label{eq:gauge_covariance_operator_def}
	\end{equation}
	Since $K_{\mathcal{V}}$ is a compact, positive-definite, trace-class operator on $L_0^\Phi(P_{f(t)})$, it admits an Hilbert-Schmidt spectral decomposition:
	\begin{equation}
		K_{\mathcal{V}}(x, x') = \sum_{k=1}^\infty \lambda_k \, v_k(x) \, v_k(x'), \quad \lambda_1 \ge \lambda_2 \ge \dots \ge 0, \label{eq:spectral_decomposition_kv}
	\end{equation}
	where $\lambda_k$ are singular values and $v_k(x) \in \operatorname{SID}_{f(t)}$ are orthonormal eigenfunctions under $g_{f(t)}$. The eigenfunction $v_{\mathrm{dom}}(x) \triangleq v_1(x)$ corresponding to the maximum eigenvalue $\lambda_1$ is designated as the \textbf{Dominant Unmodeled Physical Mode}.
\end{definition}

\begin{definition}[GSB Base Expansion Operator]
	\label{def:gsb_expansion_operator}
	Upon triggering the capacity condition $\mathcal{T}_{\mathrm{AAT}}(t) \ge \Theta_{\mathrm{crit}}$, the \textbf{Gauge Symmetry Breaking (GSB) Transformation} updates the base manifold representation via structural expansion:
	\begin{equation}
		\mathcal{B}_d \xrightarrow{\quad \mathrm{GSB} \quad} \mathcal{B}_{d+1} \triangleq \mathcal{B}_d \oplus \operatorname{span}\left\{ v_{\mathrm{dom}}(x) \right\}. \label{eq:gsb_base_expansion}
	\end{equation}
	The base dimension increments ($d \leftarrow d + 1$), and the horizontal projection subspace $\operatorname{SVD}\chi_{f(t)}$ expands to absorb $v_{\mathrm{dom}}$:
	\begin{equation}
		\operatorname{SVD}\chi_{f(t)}^{\mathrm{new}} = \operatorname{SVD}\chi_{f(t)}^{\mathrm{old}} \oplus \operatorname{span}\left\{ v_{\mathrm{dom}} \right\}. \label{eq:svdchi_expansion}
	\end{equation}
\end{definition}

\subsubsection{Instantaneous Stress Dissipation Theorem}
\label{subsubsec:stress_dissipation_theorem}

We now prove that the GSB transformation guarantees the immediate reduction of vertical gauge tension rate, restoring stable signal digestion on the expanded manifold.

\begin{theorem}[Instantaneous Stress Dissipation]
	\label{thm:instantaneous_stress_dissipation}
	Let $V_{\mathrm{DMZ}}(t)$ be the unconstrained DMZ score velocity vector field at time $t$. Let $\omega_{f(t)}^{\mathrm{old}}$ and $\omega_{f(t)}^{\mathrm{new}}$ denote the Ehresmann connection 1-forms evaluated before and immediately after the GSB base expansion $\mathcal{B}_d \to \mathcal{B}_{d+1}$. Then:
	\begin{enumerate}
		\item The instantaneous vertical energy production rate drops by exactly $\lambda_1$:
		\begin{equation}
			\left\| \omega_{f(t)}^{\mathrm{new}}\left(V_{\mathrm{DMZ}}(t)\right) \right\|_{g_{f(t)}}^2 = \left\| \omega_{f(t)}^{\mathrm{old}}\left(V_{\mathrm{DMZ}}(t)\right) \right\|_{g_{f(t)}}^2 - \lambda_1. \label{eq:instantaneous_rate_reduction}
		\end{equation}
		
		\item Resitting the cumulative tension counter $\mathcal{T}_{\mathrm{AAT}}(t) \leftarrow 0$ at the GSB phase transition point preserves complete geometric energy, transferring total metric energy $\lambda_1 \cdot \Delta t$ into the macroscopic state trajectory without information loss.
	\end{enumerate}
\end{theorem}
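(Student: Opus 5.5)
The plan is to realise the GSB expansion of Definition~\ref{def:gsb_expansion_operator} as an explicit modification of the Ehresmann connection. From the new connection we read off the change in vertical energy using the Pythagorean identity of Theorem~\ref{thm:da_metric_orthogonality_energy_conservation}. Finally we identify the removed energy with $\lambda_1$ through the spectral data of Definition~\ref{def:vertical_gauge_operator}.

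\textbf{Step 1: the new connection.} By \eqref{eq:svdchi_expansion} the new horizontal space is $\operatorname{SVD}\chi^{\mathrm{old}}_{f(t)}\oplus\operatorname{span}\{v_{\mathrm{dom}}\}$. Since $v_{\mathrm{dom}}\in\operatorname{SID}_{f(t)}$ is $g_{f(t)}$-orthogonal to $\operatorname{SVD}\chi^{\mathrm{old}}$ and normalised, the new vertical space is
\[
\operatorname{SID}^{\mathrm{new}}=\operatorname{SID}^{\mathrm{old}}\cap v_{\mathrm{dom}}^{\perp g_{f(t)}}.
\]
Conditions (i)--(ii) of Definition~\ref{def:ehresmann_connection} then force
\[
\omega^{\mathrm{new}}_{f(t)}=\omega^{\mathrm{old}}_{f(t)}-g_{f(t)}\bigl(\omega^{\mathrm{old}}_{f(t)}(\cdot),v_{\mathrm{dom}}\bigr)\,v_{\mathrm{dom}} .
\]
We check this by verifying that the right-hand side is idempotent, is the identity on $\operatorname{SID}^{\mathrm{new}}$, and annihilates the new horizontal space.

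\textbf{Step 2: the energy drop.} We apply the Pythagorean splitting of Lemma~\ref{lem:tangent_space_decomposition_sec2} inside $\operatorname{SID}^{\mathrm{old}}$ to the vector $\omega^{\mathrm{old}}(V_{\mathrm{DMZ}}(t))$, decomposing it along $v_{\mathrm{dom}}$ and its orthogonal complement. This gives
\[
\|\omega^{\mathrm{new}}(V_{\mathrm{DMZ}})\|^2=\|\omega^{\mathrm{old}}(V_{\mathrm{DMZ}})\|^2-c_t^2,\qquad c_t\triangleq g_{f(t)}\bigl(\omega^{\mathrm{old}}(V_{\mathrm{DMZ}}(t)),v_{\mathrm{dom}}\bigr).
\]
This step is routine.

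\textbf{Step 3 (the main obstacle): identifying $c_t^2$ with $\lambda_1$.} The integrated operator $K_{\mathcal V}$ of \eqref{eq:gauge_covariance_operator_def} only gives $\int_0^t c_\tau^2\,d\tau=\lambda_1$, which is not the pointwise identity claimed. To obtain the statement as worded, we read the GSB spectral decomposition at the rate level, at the trigger instant. There the relevant operator is $\frac{d}{dt}K_{\mathcal V}=V_{\mathcal V}(t)\otimes V_{\mathcal V}(t)$. This operator is rank one and positive. Its top eigenvalue is $\lambda_1=\|V_{\mathcal V}(t)\|^2_{g_{f(t)}}$, with eigenfunction $v_{\mathrm{dom}}=V_{\mathcal V}(t)/\|V_{\mathcal V}(t)\|$. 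Hence $c_t^2=\lambda_1$, and \eqref{eq:instantaneous_rate_reduction} follows. We would also state the consistency check: averaging over the window, the integrated version reproduces $\lambda_1$ from the cumulative kernel. Justifying that this rate-level reading is the one intended by Definition~\ref{def:vertical_gauge_operator} is the delicate point. The first attempt would be a Lebesgue-differentiation argument applied to $K_{\mathcal V}$ at the trigger time.

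\textbf{Step 4: part 2.} Total energy $\|V_{\mathrm{DMZ}}\|^2$ is independent of the splitting. By Theorem~\ref{thm:da_metric_orthogonality_energy_conservation} applied to both connections, the horizontal energy therefore increases by exactly $c_t^2=\lambda_1$. Integrating over a step $\Delta t$ shows that $\lambda_1\Delta t$ is transferred into the new base coordinate. Resetting $\mathcal T_{\mathrm{AAT}}\leftarrow0$ is bookkeeping only, because the removed tension reappears as horizontal kinetic energy along $v_{\mathrm{dom}}$ and the sum is conserved.
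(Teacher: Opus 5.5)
Your Steps 1, 2 and 4 follow the paper's argument. The paper also writes $\omega^{\mathrm{new}}_{f(t)}(V)=\omega^{\mathrm{old}}_{f(t)}(V)-\langle v_{\mathrm{dom}},V\rangle_{g_{f(t)}}v_{\mathrm{dom}}$, which is your map, since $\omega^{\mathrm{old}}$ is $g_{f(t)}$-self-adjoint and fixes $v_{\mathrm{dom}}$. It expands the square to get the drop $c_t^2$ and cites Theorem~\ref{thm:da_metric_orthogonality_energy_conservation} for part 2.

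You located the gap correctly, but it cannot be closed. With $\lambda_1$ and $v_{\mathrm{dom}}$ as in Definition~\ref{def:vertical_gauge_operator}, the identity $c_t^2=\lambda_1$ is false in general, so \eqref{eq:instantaneous_rate_reduction} cannot be proved. The paper crosses this step with the bare claim that ``evaluating the instantaneous contribution'' of $\lambda_1=\int_0^t c_\tau^2\,d\tau$ gives $c_t^2=\lambda_1$, which does not follow.

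Here is a counterexample. Freeze the geometry and let $V_{\mathcal V}(\tau)=a(\tau)\,v$ for a fixed unit $v\in\operatorname{SID}$. Then $K_{\mathcal V}=\bigl(\int_0^t a^2\,d\tau\bigr)\,v\otimes v$, so $v_{\mathrm{dom}}=v$ and $\lambda_1=\int_0^t a^2\,d\tau=\mathcal{T}_{\mathrm{AAT}}(t)$, whereas $c_t^2=a(t)^2$. Take $a\equiv1$. The trigger fires at $t=\Theta_{\mathrm{crit}}$, where the right-hand side of \eqref{eq:instantaneous_rate_reduction} equals $1-\Theta_{\mathrm{crit}}$. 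This is negative as soon as $\Theta_{\mathrm{crit}}>1$, while the actual new rate is $0$. The mismatch is structural: $\lambda_1$ is time-integrated, while $c_t^2\le\|\omega^{\mathrm{old}}(V_{\mathrm{DMZ}}(t))\|^2$ is instantaneous.

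Your rate-level reading does not repair this; it proves a different, degenerate statement. If $v_{\mathrm{dom}}=V_{\mathcal V}(t)/\|V_{\mathcal V}(t)\|$ and $\lambda_1=\|V_{\mathcal V}(t)\|^2$, then $\omega^{\mathrm{new}}(V_{\mathrm{DMZ}}(t))=0$, and (1) only says the vertical rate drops to zero. These are not the objects of Definition~\ref{def:vertical_gauge_operator}, nor the window-integrated kernel that Algorithm~\ref{alg:smg_gsb_da} computes.

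Lebesgue differentiation cannot connect the two readings. It controls $\frac{d}{dt}K_{\mathcal V}$, whose spectral data are unrelated to those of $K_{\mathcal V}$ itself. Your consistency check also fails. We have $\int_0^t c_\tau^2\,d\tau=\langle v_{\mathrm{dom}},K_{\mathcal V}v_{\mathrm{dom}}\rangle$, and this equals the top eigenvalue of $K_{\mathcal V}$ only if the instantaneous direction happens to be its top eigenvector.

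What your Steps 1--2 do establish is a cumulative version, under frozen geometry. That assumption is needed anyway: $V_{\mathcal V}(\tau)\in\operatorname{SID}_{f(\tau)}$ need not lie in $\operatorname{SID}_{f(t)}$, so even $v_{\mathrm{dom}}\in\operatorname{SID}^{\mathrm{old}}_{f(t)}$ is an assumption. The cumulative statement is
\[
\int_0^t\bigl\|\omega^{\mathrm{new}}_{f(t)}(V_{\mathrm{DMZ}}(\tau))\bigr\|_{g_{f(t)}}^2\,d\tau=\mathcal{T}_{\mathrm{AAT}}(t)-\lambda_1,
\]
with $\lambda_1$ the largest reduction obtainable by promoting a single direction. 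That is the correct content of (1). Correspondingly, in Step 4 the instantaneous horizontal gain is $c_t^2$, not $\lambda_1$, and the energy transferred over a step is not $\lambda_1\Delta t$.
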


\subsection{Empirical Construction and Algorithmic Realization of the GSB Estimator}
\label{subsec:empirical_construction_algorithm}

To render the continuous-time SMG fiber bundle theory computationally actionable, we now formulate a discretized sequential state estimation  implementation: the {\bf Continuous-Discrete SMG-GSB Data Assimilation Filter}.

\subsubsection{Complete Algorithmic Workflow}
\label{subsubsec:algorithmic_workflow}

Let the state probability density $f_t(x)$ be represented by an ensemble of $M$ discrete Monte Carlo particles $\{x_t^{(i)}\}_{i=1}^M$ with importance weights $w_t^{(i)} = 1/M$. The base manifold coordinate set is initialized with $d_0$ orthogonal basis functions $\{\phi_j(x)\}_{j=1}^{d_0}$. Algorithm~\ref{alg:smg_gsb_da} details the complete step-by-step procedure for signal-noise decomposition, online $\mathcal{T}_{\mathrm{AAT}}$ tracking, and automated GSB phase transition expansion.

\begin{algorithm}[H]
	\caption{Continuous-Discrete SMG-GSB Data Assimilation Filter}
	\label{alg:smg_gsb_da}
	\small 
	\begin{algorithmic}[1]
		\REQUIRE Initial ensemble $\{x_0^{(i)}\}_{i=1}^M$, initial basis $\{\phi_j\}_{j=1}^d$, time step $\Delta t$, capacity threshold $\Theta_{\mathrm{crit}} = \pi^2 / K_{\mathrm{max}}$, reset threshold $\epsilon_{\mathrm{reset}}$.
		\INITIALIZE Set tension $\mathcal{T}_{\mathrm{AAT}} \leftarrow 0$, time index $k \leftarrow 0$, active dimension $d \leftarrow d_0$.
		\FOR{each observation step $k = 1, 2, \dots$}
		\STATE \textbf{Predict Step}: Propagate particles via stochastic diffusion SDE \eqref{eq:sde_state_da}:
		\begin{equation*}
			\tilde{x}_k^{(i)} = x_{k-1}^{(i)} + f(x_{k-1}^{(i)}) \Delta t + \sqrt{\Delta t} \, w_k^{(i)}, \quad w_k^{(i)} \sim \mathcal{N}(0, I_p).
		\end{equation*}
		
		\STATE \textbf{Evaluate Score Velocity}: Compute unconstrained score update vector for each particle:
		\begin{equation*}
			V_{\mathrm{DMZ}, k}^{(i)} = \left[ h(\tilde{x}_k^{(i)}) - \bar{h}_k \right]^T R^{-1} \left( Y_k - h(\tilde{x}_k^{(i)})\Delta t \right), \quad \bar{h}_k = \frac{1}{M}\sum_{i=1}^M h(\tilde{x}_k^{(i)}).
		\end{equation*}
		
		\STATE \textbf{Construct Local Basis Matrix}: Form Gram matrix $G \in \mathbb{R}^{d \times d}$ and cross-covariance vector $b \in \mathbb{R}^d$:
		\begin{equation*}
			G_{jl} = \frac{1}{M}\sum_{i=1}^M \phi_j(\tilde{x}_k^{(i)}) \phi_l(\tilde{x}_k^{(i)}), \quad b_j = \frac{1}{M}\sum_{i=1}^M \phi_j(\tilde{x}_k^{(i)}) V_{\mathrm{DMZ}, k}^{(i)}.
		\end{equation*}
		
		\STATE \textbf{Compute Horizontal Signal Coefficient}: Solve linear system $G \alpha_k = b \implies \alpha_k = G^{-1} b \in \mathbb{R}^d$.
		
		\STATE \textbf{Execute Fiber Splitting}: Decompose score velocity for each particle $i = 1, \dots, M$:
		\begin{align*}
			\text{Horizontal Component:} \quad & V_{\mathcal{H}, k}^{(i)} = \sum_{j=1}^d \alpha_{k, j} \, \phi_j(\tilde{x}_k^{(i)}), \\
			\text{Vertical Gauge Noise:} \quad & V_{\mathcal{V}, k}^{(i)} = V_{\mathrm{DMZ}, k}^{(i)} - V_{\mathcal{H}, k}^{(i)}.
		\end{align*}
		
		\STATE \textbf{Update Active Acausal Tension}: Accumulate vertical metric energy:
		\begin{equation*}
			\Delta \mathcal{T}_k = \frac{1}{M} \sum_{i=1}^M \left( V_{\mathcal{V}, k}^{(i)} \right)^2 \Delta t, \quad \mathcal{T}_{\mathrm{AAT}} \leftarrow \mathcal{T}_{\mathrm{AAT}} + \Delta \mathcal{T}_k.
		\end{equation*}
		
		\IF{$\mathcal{T}_{\mathrm{AAT}} \ge \Theta_{\mathrm{crit}}$}
		\STATE \textbf{\underline{Trigger Gauge Symmetry Breaking (GSB)}}
		\STATE Construct empirical vertical covariance matrix $K_{\mathcal{V}} \in \mathbb{R}^{M \times M}$:
		\begin{equation*}
			[K_{\mathcal{V}}]_{i, i'} = \frac{1}{M} \sum_{\tau=k-W}^k V_{\mathcal{V}, \tau}^{(i)} V_{\mathcal{V}, \tau}^{(i')}.
		\end{equation*}
		\STATE Compute top eigenvector $v_{\mathrm{dom}} \in \mathbb{R}^M$ corresponding to maximum eigenvalue $\lambda_1$.
		\STATE Interpolate $v_{\mathrm{dom}}$ via Thin Plate Splines / Kernel Regression to define continuous function $\phi_{d+1}(x)$.
		\STATE \textbf{Expand Base Dimension}: $\{\phi_j\}_{j=1}^{d+1} \leftarrow \{\phi_j\}_{j=1}^d \cup \{\phi_{d+1}\}$, update $d \leftarrow d + 1$.
		\STATE \textbf{Reset Tension}: $\mathcal{T}_{\mathrm{AAT}} \leftarrow 0$.
		\ENDIF
		
		\STATE \textbf{Update State Estimate \& Resample}: Particle update via horizontal score:
		\begin{equation*}
			x_k^{(i)} = \tilde{x}_k^{(i)} + V_{\mathcal{H}, k}^{(i)} \cdot \nabla_x \hat{f}_k(\tilde{x}_k^{(i)}), \quad \hat{X}_k = \frac{1}{M} \sum_{i=1}^M x_k^{(i)}.
		\end{equation*}
		\ENDFOR
		\RETURN Updated state trajectories $\{\hat{X}_k\}_{k \ge 1}$ and active base dimension $d_k$.
	\end{algorithmic}
\end{algorithm}

\subsubsection{Computational Complexity Analysis}
\label{subsubsec:complexity_analysis}

The computational cost of Algorithm~\ref{alg:smg_gsb_da} splits into two operational regimes:

\begin{enumerate}
	\item \textbf{Standard Assimilation Step (No GSB Trigger)}:
	\begin{itemize}
		\item Predict step \& Score evaluation: $\mathcal{O}(M \cdot p + M \cdot m)$, where $p$ is state dimension and $m$ is observation dimension.
		\item Gram matrix construction $G$ \& solving $G \alpha = b$: $\mathcal{O}(M \cdot d^2 + d^3)$.
		\item Fiber splitting \& $\mathcal{T}_{\mathrm{AAT}}$ tension integration: $\mathcal{O}(M \cdot d)$.
	\end{itemize}
	Since $d \ll M$, the baseline per-timestep complexity is $\mathcal{O}(M d^2 + M p)$, which scales linearly with particle count $M$.
	
	\item \textbf{GSB Phase Transition Step (Triggered when $\mathcal{T}_{\mathrm{AAT}} \ge \Theta_{\mathrm{crit}}$)}:
	\begin{itemize}
		\item Constructing empirical kernel matrix $K_{\mathcal{V}}$ over sliding window $W$: $\mathcal{O}(W \cdot M^2)$.
		\item Eigen-decomposition for dominant mode $v_{\mathrm{dom}}$: Using Lanczos iterations or Randomized SVD \cite{evensen2003}, extraction of top mode $\lambda_1$ costs $\mathcal{O}(M^2 \log k)$.
		\item Continuous function interpolation via kernel smoothing: $\mathcal{O}(M^2)$.
	\end{itemize}
	Because GSB phase transitions occur infrequently (only when model misspecification stress accumulates to $\Theta_{\mathrm{crit}}$), the {\it amortized runtime complexity} remains $\mathcal{O}(M d^2 + M p)$, making SMG-GSB as computationally efficient as standard EnKF while offering non-parametric robustness against filter divergence.
\end{enumerate}

\subsubsection{Asymptotic Convergence Guarantees}
\label{subsubsec:convergence_guarantees}

We conclude by establishing the asymptotic consistency and bounded-error guarantees of the empirical GSB estimator.

\begin{theorem}[Asymptotic Convergence of the GSB Estimator]
	\label{thm:asymptotic_convergence_gsb}
	Let $f_t^*(x)$ be the true continuous-time conditional probability density governed by the infinite-dimensional DMZ SPDE \eqref{eq:dmz_spde_da}. Let $\hat{f}_{t, M, d_t}(x)$ denote the empirical density generated by Algorithm~\ref{alg:smg_gsb_da} with $M$ particles and dynamic base dimension $d_t$.
	
	Under mild regularity conditions (Lipschitz continuous drift $f(x)$, bounded observation functions $h(x)$, and $K_{\mathrm{max}} < \infty$), the mean-squared Hellinger distance $d_{\mathrm{Hel}}^2(f_t^*, \hat{f}_{t, M, d_t}) \triangleq \int_{\mathcal{X}} \left(\sqrt{f_t^*(x)} - \sqrt{\hat{f}_{t, M, d_t}(x)}\right)^2 dx$ satisfies the asymptotic error bound:
	\begin{equation}
		\limsup_{t \to \infty} \mathbb{E}\left[ d_{\mathrm{Hel}}^2\left(f_t^*, \, \hat{f}_{t, M, d_t}\right) \right] \le \mathcal{O}\left( \frac{1}{\sqrt{M}} \right) + \mathcal{O}\left( \frac{\Theta_{\mathrm{crit}}}{d_{\mathrm{max}}} \right), \label{eq:hellinger_convergence_bound}
	\end{equation}
	where $d_{\mathrm{max}} = \max_{0 \le \tau \le t} d_\tau$ is the maximum base dimension attained through GSB transitions. As particle count $M \to \infty$ and $d_{\mathrm{max}} \to \infty$, the estimator converges in probability to the exact physical filtering density $f_t^*(x)$.
\end{theorem}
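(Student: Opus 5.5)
The plan is to split the Hellinger error into a particle (sampling) part and a structural (representation) part with the triangle inequality, and bound each separately. Let $f_{t,d_t}^{\star}$ be the ideal $M=\infty$ density that Algorithm~\ref{alg:smg_gsb_da} would produce with the same adaptive sequence of bases. Since $d_{\mathrm{Hel}}$ is a metric, $(a+b)^2\le 2a^2+2b^2$ gives
\begin{equation*}
\mathbb{E}\bigl[d_{\mathrm{Hel}}^2(f_t^*,\hat f_{t,M,d_t})\bigr]\le 2\,\mathbb{E}\bigl[d_{\mathrm{Hel}}^2(f_t^*,f_{t,d_t}^{\star})\bigr]+2\,\mathbb{E}\bigl[d_{\mathrm{Hel}}^2(f_{t,d_t}^{\star},\hat f_{t,M,d_t})\bigr].
\end{equation*}

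For the sampling term I would use standard interacting-particle / mean-field propagation-of-chaos estimates. Lipschitz drift $f$ and bounded $h$ give uniform-in-time $L^2$ control of empirical moments at rate $M^{-1/2}$, provided the underlying filter is stable (forgetting of initial conditions). The empirical Gram matrix $G$ and moment vector $b$ concentrate at rate $M^{-1/2}$. The Tikhonov-regularized inverse is Lipschitz in its entries, so the horizontal coefficients $\alpha_k$ inherit the same rate. Passing from coefficient error to Hellinger distance through the kernel-smoothed density gives the $\mathcal{O}(M^{-1/2})$ term. Note that Hellinger is bounded by $L^1$, and an $L^1$ bound is more natural here than a squared one, which is why the rate is $M^{-1/2}$ rather than $M^{-1}$.

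For the structural term I would relate Hellinger distance to the Fisher--Rao geometry. The squared Hellinger distance is locally $\tfrac14$ of the squared Fisher--Rao length, so the discrepancy is controlled by the accumulated vertical velocity that the finite base fails to capture. By Theorem~\ref{thm:da_metric_orthogonality_energy_conservation}, that neglected energy is exactly the vertical term $\|\omega_f(V_{\mathrm{DMZ}})\|^2_{g_f}$, whose time integral is $\mathcal{T}_{\mathrm{AAT}}$. The GSB reset in Algorithm~\ref{alg:smg_gsb_da} keeps $\mathcal{T}_{\mathrm{AAT}}\le\Theta_{\mathrm{crit}}$ between transitions. By Theorem~\ref{thm:instantaneous_stress_dissipation}, each transition removes the top eigenvalue $\lambda_1$ of $K_{\mathcal V}$. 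Ordering the eigenvalues and using $\sum_k\lambda_k\le\Theta_{\mathrm{crit}}$, the residual after $d_{\max}$ promotions is at most $\sum_{k>d_{\max}}\lambda_k$, which is $\le \Theta_{\mathrm{crit}}/d_{\max}$ by the elementary bound $\lambda_k\le \Theta_{\mathrm{crit}}/k$ for decreasing nonnegative sequences.

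The main obstacle is the conversion from accumulated vertical energy to a time-uniform $\limsup$. Errors injected along the vertical fiber must not be amplified by the dynamics, which requires exponential stability of the nonlinear filter so that earlier defects are forgotten. I would first assume a contraction rate $\kappa>0$ of the DMZ flow in Hilbert projective or Hellinger metric and apply a Grönwall-type argument, as in the proof of Theorem~\ref{thm:theorem5_rigorous}. This makes the stationary bound proportional to the per-window residual, which gives the $\Theta_{\mathrm{crit}}/d_{\max}$ term. A second delicate point is that the eigenvalue estimate $\lambda_k\le\Theta_{\mathrm{crit}}/k$ presumes the covariance operators across successive windows share a spectrum. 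I would handle this by working with the cumulative operator $K_{\mathcal V}$ over the whole horizon. Convergence as $M,d_{\max}\to\infty$ then follows by Markov's inequality.
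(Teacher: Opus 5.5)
Your decomposition matches the paper's. It is a Hellinger triangle inequality that separates a Monte Carlo term, quoted from particle-filter theory, from a structural term controlled by the spectrum of $K_{\mathcal V}$. The paper's intermediate is the projection of the true DMZ solution onto $\mathcal{B}_{d_t}$, which is a static truncation error. Your $M=\infty$ run instead puts the error propagation into the structural term, which is why you need filter stability. The paper also silently absorbs your factor $2$ into the $\mathcal{O}$.

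The genuine gap is your derivation of the $\Theta_{\mathrm{crit}}/d_{\max}$ term. From $\lambda_1\ge\lambda_2\ge\dots\ge0$ and $\sum_k\lambda_k\le\Theta_{\mathrm{crit}}$ you do get $\lambda_k\le\Theta_{\mathrm{crit}}/k$. This does not give $\sum_{k>d_{\max}}\lambda_k\le\Theta_{\mathrm{crit}}/d_{\max}$: summing $\Theta_{\mathrm{crit}}/k$ over $k>d_{\max}$ diverges, and a trace bound forces no decay of the tail at all. For example, take $\lambda_k=\Theta_{\mathrm{crit}}/n$ for $k\le n$ and $\lambda_k=0$ otherwise, with $n=d_{\max}^2$. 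This sequence is non-increasing with trace $\Theta_{\mathrm{crit}}$, yet its tail beyond $d_{\max}$ is $\Theta_{\mathrm{crit}}(1-1/d_{\max})$. A $1/d_{\max}$ rate needs an explicit eigenvalue-decay hypothesis such as $\lambda_k\le C\,\Theta_{\mathrm{crit}}\,k^{-2}$, which gives $\sum_{k>d}\lambda_k\le C\,\Theta_{\mathrm{crit}}/d$. That is exactly what the paper's proof invokes as the ``spectral decay property of compact operators on Orlicz spaces''. It does not follow from compactness or trace class, which only make the tail tend to $0$ at an unspecified rate, so it has to enter as a hypothesis.

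Your remedy for the changing windows conflicts with the trace bound you need. The cumulative operator over the whole horizon has trace equal to the total vertical energy produced. Because the counter is reset at each transition, this is of order $(d_{\max}-d_0)\,\Theta_{\mathrm{crit}}$ rather than $\Theta_{\mathrm{crit}}$, so a decay hypothesis normalized by the trace returns only $\mathcal{O}(\Theta_{\mathrm{crit}})$.

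The same issue undermines your energy accounting. The reset keeps $\mathcal{T}_{\mathrm{AAT}}\le\Theta_{\mathrm{crit}}$, but every window still stores about $\Theta_{\mathrm{crit}}$ of vertical energy independently of $d$. So ``error $\lesssim$ neglected vertical energy'' yields only $\mathcal{O}(\Theta_{\mathrm{crit}})$, times a window-length factor from Cauchy--Schwarz when you pass from the Fisher--Rao length $\int\|\omega_f(V_{\mathrm{DMZ}})\|_{g_f}$ to $\mathcal{T}_{\mathrm{AAT}}$. Any $1/d_{\max}$ gain must come from a bound on the vertical production \emph{rate} after $d_{\max}$ promotions. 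That requires stationarity of the vertical stream plus the decay hypothesis. Theorem~\ref{thm:instantaneous_stress_dissipation} concerns only the rate at the instant of a transition, not in later windows.

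On the sampling side, the promoted bases and trigger times are built from the particles themselves, and Algorithm~\ref{alg:smg_gsb_da} has no likelihood weighting. The standard fixed-test-function particle-filter rates therefore do not transfer directly. You also need to say how $\hat f_{t,M,d_t}$ is a density. An atomic empirical measure is at maximal Hellinger distance from $f_t^*$, and kernel smoothing brings a bandwidth bias and a dimension-dependent rate, so $\mathcal{O}(M^{-1/2})$ is not automatic.
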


\begin{proof}[Proof of Theorem \ref{thm:asymptotic_convergence_gsb}]
	The total estimation error decomposes via the triangle inequality under the Hellinger metric into two orthogonal components: Monte Carlo sampling variance and structural truncation bias:
	\begin{equation}
		d_{\mathrm{Hel}}\left(f_t^*, \, \hat{f}_{t, M, d_t}\right) \le \underbrace{d_{\mathrm{Hel}}\left(f_t^*, \, f_{t, d_t}^*\right)}_{\text{Structural Truncation Bias } E_{\mathrm{trunc}}} + \underbrace{d_{\mathrm{Hel}}\left(f_{t, d_t}^*, \, \hat{f}_{t, M, d_t}\right)}_{\text{Monte Carlo Sampling Variance } E_{\mathrm{MC}}}, \label{eq:hellinger_triangle_inequality}
	\end{equation}
	where $f_{t, d_t}^*$ is the projection of the true DMZ solution onto the $d_t$-dimensional base bundle $\mathcal{B}_{d_t}$.
\begin{enumerate}
\item {\bf Sampling Variance Bound ($E_{\mathrm{MC}}$)}: Standard empirical process theory for particle filters \cite{doucet2001} guarantees that for $M$ independent particles, $\mathbb{E}\left[ E_{\mathrm{MC}}^2 \right] \le \frac{C_1}{\sqrt{M}}$, where $C_1 < \infty$ depends only on observation noise covariance $R$.
	
\item {\bf Truncation Bias Bound ($E_{\mathrm{trunc}}$)}: By Theorem~\ref{thm:topological_capacity_bound} and Theorem~\ref{thm:instantaneous_stress_dissipation}, each GSB transition triggered at $\mathcal{T}_{\mathrm{AAT}} = \Theta_{\mathrm{crit}}$ expands $\mathcal{B}_d \to \mathcal{B}_{d+1}$ by absorbing the dominant eigenmode $v_{\mathrm{dom}}$ with eigenvalue $\lambda_1$. By the spectral decay property of compact operators on Orlicz spaces \cite{cheng2026entropy}, the residual unmodeled variance satisfies $\sum_{k=d+1}^\infty \lambda_k \le \frac{C_2 \cdot \Theta_{\mathrm{crit}}}{d}$. Taking expectations over time yields $E_{\mathrm{trunc}}^2 \le \frac{C_2 \cdot \Theta_{\mathrm{crit}}}{d_{\mathrm{max}}}$.
\end{enumerate}	
	Combining both bounds and taking the limit superior as $t \to \infty$ yields Equation~\eqref{eq:hellinger_convergence_bound}, completing the proof.
\end{proof}


\appendix
\section{Proof Appendix: Complete Long Proofs of Theorems, Lemmas, and Propositions}

\begin{proof}[Proof of Theorem \ref{thm:wei_norman_exact}]
	We structure the proof into two distinct components: (A) Derivation of the closed coordinate differential system via Wei-Norman expansion, and (B) Verification of the Zero Information Loss properties.
	
	\subsubsection*{Part A: Wei-Norman Parameterization and Finite Closure}
	
	We structure the rigorous proof into six formal mathematical steps.
	
	\paragraph{Step 1: Lie Algebraic Basis and Expansion of System Operators.}
	Since $\operatorname{dim}(\mathcal{E}) = d < \infty$, let $\{E_1, E_2, \dots, E_d\}$ be an ordered vector space basis for the estimation algebra $\mathcal{E}$ \cite{yau1999}. Lie algebraic closure implies that for all $i, j \in \{1, \dots, d\}$, the commutator bracket $[E_i, E_j] \triangleq E_i E_j - E_j E_i$ satisfies \cite{yau1999}:
	\begin{equation}
		[E_i, E_j] = \sum_{k=1}^d C_{ij}^k E_k, \label{eq:structure_constants}
	\end{equation}
	where $C_{ij}^k \in \mathbb{R}$ are the constant structure constants of the Lie algebra $\mathcal{E}$ relative to $\{E_k\}_{k=1}^d$ \cite{yau1999}.
	
	Because $\mathcal{L}_0^* \in \mathcal{E}$ and $h_l(x) \in \mathcal{E}$ for all $l \in \{1, \dots, m\}$, these differential operators can be uniquely expanded in terms of the Lie algebra basis \cite{yau1999}:
	\begin{equation}
		\mathcal{L}_0^* = \sum_{k=1}^d \alpha_k^0 E_k, \quad \text{and} \quad h_l(x) = \sum_{k=1}^d \alpha_k^l E_k \quad (l = 1, \dots, m), \label{eq:basis_expansions}
	\end{equation}
	where $\alpha^0 = (\alpha_1^0, \dots, \alpha_d^0)^T \in \mathbb{R}^d$ and $\alpha^l = (\alpha_1^l, \dots, \alpha_d^l)^T \in \mathbb{R}^d$ ($l=1,\dots,m$) are constant coefficient vectors \cite{yau1999}.
	
	\paragraph{Step 2: Stratonovich Differentiation of the Wei-Norman Operator Product.} 
	Define the dynamic state transformation operator $U(t): C^\infty(\mathbb{R}^p) \to C^\infty(\mathbb{R}^p)$ by the Wei-Norman product-of-exponentials ansatz \cite{wei1963}:
	\begin{equation}
		U(t) \triangleq \prod_{i=1}^d \exp\left(g_i(t) E_i\right) = \exp\left(g_1(t) E_1\right) \exp\left(g_2(t) E_2\right) \cdots \exp\left(g_d(t) E_d\right), \label{eq:operator_U_def}
	\end{equation}
	with initial condition $g(0) = (0, \dots, 0)^T \in \mathbb{R}^d$, so that $U(0) = \mathcal{I}$ (the identity operator) and $\sigma(t,x) = U(t) \sigma(0,x)$ \cite{wei1963}.
	
	We differentiate $U(t)$ with respect to time under Stratonovich stochastic calculus. Applying the Leibniz product rule across the $d$ ordered operator exponentials:
	\begin{equation}
		d U(t) = \sum_{i=1}^d \left( \prod_{j=1}^{i-1} \exp\left(g_j(t) E_j\right) \right) \cdot d\left( \exp\left(g_i(t) E_i\right) \right) \cdot \left( \prod_{k=i+1}^d \exp\left(g_k(t) E_k\right) \right). \label{eq:leibniz_product_rule}
	\end{equation}
	Since $E_i$ is a time-invariant spatial differential operator, the Stratonovich differential of the individual exponential factor $\exp\left(g_i(t) E_i\right)$ is given exactly by:
	\begin{equation}
		d\left( \exp\left(g_i(t) E_i\right) \right) = E_i \, \exp\left(g_i(t) E_i\right) \circ dg_i(t). \label{eq:exponential_diff_identity}
	\end{equation}
	Substituting \eqref{eq:exponential_diff_identity} into \eqref{eq:leibniz_product_rule} yields:
	\begin{equation}
		d U(t) = \sum_{i=1}^d dg_i(t) \left( \prod_{j=1}^{i-1} \exp\left(g_j(t) E_j\right) \right) E_i \left( \prod_{k=i}^d \exp\left(g_k(t) E_k\right) \right). \label{eq:dU_expanded_1}
	\end{equation}
	
	To factor out the total operator $U(t)$ to the right, observe that the tail product can be written as:
	\begin{equation}
		\prod_{k=i}^d \exp\left(g_k(t) E_k\right) = \left( \prod_{j=i-1}^1 \exp\left(-g_j(t) E_j\right) \right) U(t),
	\end{equation}
	where $\prod_{j=i-1}^1 \exp\left(-g_j E_j\right) \triangleq \exp\left(-g_{i-1} E_{i-1}\right) \cdots \exp\left(-g_1 E_1\right)$. Inserting this expression into \eqref{eq:dU_expanded_1} gives:
	\begin{equation}
		d U(t) = \sum_{i=1}^d dg_i(t) \left[ \left( \prod_{j=1}^{i-1} \exp\left(g_j(t) E_j\right) \right) E_i \left( \prod_{j=i-1}^1 \exp\left(-g_j(t) E_j\right) \right) \right] U(t). \label{eq:dU_factored}
	\end{equation}
	Applying $d U(t)$ to the initial state $\sigma(0,x)$ and utilizing $\sigma(t,x) = U(t) \sigma(0,x)$, we establish:
	\begin{equation}
		d\sigma(t,x) = \sum_{i=1}^d dg_i(t) \, \Xi_i(g(t)) \, \sigma(t,x), \label{eq:dsigma_in_terms_of_Xi}
	\end{equation}
	where the conjugated operator field $\Xi_i(g(t))$ is defined by:
	\begin{equation}
		\Xi_i(g) \triangleq \left( \prod_{j=1}^{i-1} \exp\left(g_j \operatorname{Ad}_{E_j}\right) \right) E_i = \exp\left(g_1 E_1\right) \cdots \exp\left(g_{i-1} E_{i-1}\right) E_i \exp\left(-g_{i-1} E_{i-1}\right) \cdots \exp\left(-g_1 E_1\right). \label{eq:Xi_def}
	\end{equation}
	
	\paragraph{Step 3: Adjoint Action Decomposition and Solvability Matrix $M(g)$}
	By the classical Baker--Campbell--Hausdorff (BCH) adjoint identity for operator Lie algebras \cite{wei1963}, the adjoint action of an exponential operator $\exp(A)$ on an operator $B$ is expressed via the exponential mapping of the inner derivation $\operatorname{ad}_A(B) \triangleq [A, B]$:
	\begin{equation}
		\operatorname{Ad}_{\exp(A)}(B) \triangleq \exp(A) B \exp(-A) = \exp(\operatorname{ad}_A)(B) = \sum_{n=0}^\infty \frac{1}{n!} \operatorname{ad}_A^n(B). \label{eq:bch_adjoint_identity}
	\end{equation}
	Applying \eqref{eq:bch_adjoint_identity} iteratively across the product in \eqref{eq:Xi_def} yields:
	\begin{equation}
		\Xi_i(g) = \exp\left(g_1 \operatorname{ad}_{E_1}\right) \exp\left(g_2 \operatorname{ad}_{E_2}\right) \cdots \exp\left(g_{i-1} \operatorname{ad}_{E_{i-1}}\right) (E_i). \label{eq:Xi_ad_product}
	\end{equation}
	
	Because $\mathcal{E} = \operatorname{span}_{\mathbb{R}}\{E_1, \dots, E_d\}$ is a $d$-dimensional Lie algebra closed under $[\cdot, \cdot]$, for each $j \in \{1, \dots, d\}$, the linear derivation $\operatorname{ad}_{E_j}: \mathcal{E} \to \mathcal{E}$ maps $\mathcal{E}$ strictly into itself \cite{yau1999}. In the basis $\{E_1, \dots, E_d\}$, $\operatorname{ad}_{E_j}$ is represented by a $d \times d$ matrix $[ad(E_j)] \in \mathbb{R}^{d \times d}$ whose components are defined by the structure constants:
	\begin{equation}
		\left[ ad(E_j) \right]_{kl} = C_{jl}^k. \label{eq:ad_matrix_representation}
	\end{equation}
	
	Since matrix exponentials of finite-dimensional matrices are real analytic, $\exp\left(g_j [ad(E_j)]\right) \in \mathbb{R}^{d \times d}$ maps $\mathcal{E}$ back into $\mathcal{E}$ for any parameter $g_j \in \mathbb{R}$. Consequently, the composite operator $\Xi_i(g)$ remains strictly bounded within the finite basis span of $\mathcal{E}$ for all $i \in \{1, \dots, d\}$ \cite{wei1963}:
	\begin{equation}
		\Xi_i(g) = \sum_{k=1}^d M_{ki}(g(t)) \, E_k, \label{eq:Xi_matrix_expansion}
	\end{equation}
	where $M(g) \triangleq [M_{ki}(g)]_{k,i=1}^d \in \mathbb{R}^{d \times d}$ is an analytic matrix-valued function of $g(t) = (g_1(t), \dots, g_d(t))^T$, governed entirely by the structure constants $C_{jk}^l$ \cite{wei1963, yau1999}.
	
	\paragraph{Step 4: Structural Properties of $M(g)$ at the Origin}
	We examine the explicit boundary structure of the matrix function $M(g)$:
	\begin{enumerate}
		\item For $i = 1$, Equation \eqref{eq:Xi_ad_product} reduces to $\Xi_1(g) = E_1$. Thus, $M_{k1}(g) = \delta_{k1}$, meaning the first column of $M(g)$ is constant: $M_{\cdot 1} = (1, 0, \dots, 0)^T$.
		\item At the initial state $g = 0 \in \mathbb{R}^d$, all exponential operators evaluate to the identity map: $\Xi_i(0) = E_i$. Thus:
		\begin{equation}
			M_{ki}(0) = \delta_{ki} \implies M(0) = I_{d \times d}, \label{eq:M_zero_identity}
		\end{equation}
		where $I_{d \times d}$ is the $d \times d$ identity matrix \cite{wei1963}.
	\end{enumerate}
	
	\paragraph{Step 5: Equating Spatial Operator Coefficients}
	Substitute the basis expansion \eqref{eq:Xi_matrix_expansion} into the score velocity evolution equation \eqref{eq:dsigma_in_terms_of_Xi}:
	\begin{equation}
		d\sigma(t,x) = \sum_{i=1}^d dg_i(t) \left( \sum_{k=1}^d M_{ki}(g(t)) E_k \right) \sigma(t,x) = \sum_{k=1}^d \left( \sum_{i=1}^d M_{ki}(g(t)) \, dg_i(t) \right) E_k \, \sigma(t,x). \label{eq:dsigma_wei_norman_final}
	\end{equation}
	
	Simultaneously, substitute the basis expansions of $\mathcal{L}_0^*$ and $h_l(x)$ from \eqref{eq:basis_expansions} directly into the Stratonovich DMZ SPDE equation \eqref{eq:dmz_spde_sec12} \cite{yau1999}:
	\begin{align}
		d\sigma(t,x) &= \left( \sum_{k=1}^d \alpha_k^0 E_k \right) \sigma(t,x) \, dt + \sum_{l=1}^m \left( \sum_{k=1}^d \alpha_k^l E_k \right) \sigma(t,x) \circ dY_l(t) \nonumber \\
		&= \sum_{k=1}^d \left( \alpha_k^0 \, dt + \sum_{l=1}^m \alpha_k^l \circ dY_l(t) \right) E_k \, \sigma(t,x). \label{eq:dsigma_dmz_basis_expanded}
	\end{align}
	
	Since $\{E_1, E_2, \dots, E_d\}$ forms a linearly independent basis of linear differential operators acting on the dense domain of smooth conditional density states $C^\infty(\mathbb{R}^p)$, equating spatial operator coefficients in \eqref{eq:dsigma_wei_norman_final} and \eqref{eq:dsigma_dmz_basis_expanded} for each $k \in \{1, \dots, d\}$ yields \cite{yau1999}:
	\begin{equation}
		\sum_{i=1}^d M_{ki}(g(t)) \, dg_i(t) = \alpha_k^0 \, dt + \sum_{l=1}^m \alpha_k^l \circ dY_l(t), \quad k = 1, \dots, d. \label{eq:coefficient_matching_system}
	\end{equation}
	In vector-matrix notation, Equation \eqref{eq:coefficient_matching_system} is expressed as \cite{wei1963, yau1999}:
	\begin{equation}
		M(g(t)) \, dg(t) = \alpha^0 \, dt + \sum_{l=1}^m \alpha^l \circ dY_l(t). \label{eq:matrix_system_uninverted}
	\end{equation}
	
	\paragraph{Step 6: Invertibility of $M(g)$ and Local Existence of Parameter SDEs}
	From Step 4, $M(0) = I_{d \times d}$, so $\det M(0) = 1 \neq 0$ \cite{wei1963}. Since the mapping $g \mapsto M(g)$ is real analytic, the matrix determinant function $g \mapsto \det M(g)$ is continuous.
	
	By the inverse function theorem, there exists an open neighborhood $D \subset \mathbb{R}^d$ containing the origin $g = 0$ such that $M(g)$ is non-singular ($\det M(g) \neq 0$) for all $g \in D$ \cite{wei1963}.
	
	Therefore, the matrix inverse $M(g(t))^{-1}$ exists and is analytic for all $g(t) \in D$. Left-multiplying Equation \eqref{eq:matrix_system_uninverted} by $M(g(t))^{-1}$ establishes the closed, non-singular system of Stratonovich stochastic differential equations for the parameter vector $g(t) \in \mathbb{R}^d$ \cite{wei1963, yau1999}:
	\begin{equation}
		dg(t) = M(g(t))^{-1} \alpha^0 \, dt + \sum_{l=1}^m M(g(t))^{-1} \alpha^l \circ dY_l(t), \quad g(0) = 0. \label{eq:final_closed_parameter_sde}
	\end{equation}
	This completes the rigorous proof of Part A of Theorem \ref{thm:wei_norman_exact}.
	
	\subsubsection*{Part B: Verification of Zero Information Loss}
	
	We structure the proof of Part B into three explicit, self-contained mathematical steps corresponding to the three assertions of the theorem.
	
	\paragraph{Step 1: Rigorous Derivation of Zero Model Truncation Residual.}
	
	Define the spatial log-score velocity operator field $V(t, x) \in C^\infty(\mathbb{R}^p)$ by:
	\begin{equation}
		V(t, x) \triangleq \sigma(t, x)^{-1} \, d\sigma(t, x) = \frac{d\sigma(t, x)}{\sigma(t, x)}, \label{eq:log_score_def}
	\end{equation}
	where $\sigma(t, x)$ is the unnormalized conditional probability density governed by the Duncan-Mortensen-Zakai (DMZ) stochastic partial differential equation.
	
	From the Wei-Norman operator factorization established in Part A, the exact Stratonovich differential of the state density $\sigma(t, x) = \left( \prod_{i=1}^d \exp(g_i(t) E_i) \right) \sigma(0, x)$ expands in terms of the Lie algebra basis $\{E_1, E_2, \dots, E_d\}$ as:
	\begin{equation}
		d\sigma(t, x) = \sum_{k=1}^d \left( \sum_{i=1}^d M_{ki}(g(t)) \, dg_i(t) \right) E_k \, \sigma(t, x) = \sum_{k=1}^d \left( M(g(t)) \, dg(t) \right)_k E_k \, \sigma(t, x), \label{eq:dsigma_wei_norman_step1}
	\end{equation}
	where $M(g(t)) \in \mathbb{R}^{d \times d}$ is the invertible Wei-Norman solvability matrix generated by the structure constants $C_{ij}^k$ of the finite estimation algebra $\mathcal{E} \triangleq \operatorname{Lie}(\mathcal{L}_0^*, h_1, \dots, h_m) = \operatorname{span}_{\mathbb{R}}\{E_1, \dots, E_d\}$.
	
	Substituting \eqref{eq:dsigma_wei_norman_step1} directly into the definition of the log-score velocity operator \eqref{eq:log_score_def} yields:
	\begin{equation}
		V(t, x) = \frac{1}{\sigma(t, x)} \sum_{k=1}^d \left( M(g(t)) \, dg(t) \right)_k E_k \, \sigma(t, x) = \sum_{k=1}^d \left( M(g(t)) \, dg(t) \right)_k E_k. \label{eq:V_exact_expansion}
	\end{equation}
	
	Since $E_k \in \mathcal{E} = \operatorname{span}_{\mathbb{R}}\{E_1, \dots, E_d\}$ for every $k \in \{1, \dots, d\}$, the operator sum in \eqref{eq:V_exact_expansion} constitutes a linear combination of the basis elements of $\mathcal{E}$ with time-dependent scalar coefficients $(M(g(t)) dg(t))_k \in \mathbb{R}$. Consequently, the velocity field resides strictly within the estimation algebra for all $t \ge 0$:
	\begin{equation}
		V(t, x) \in \operatorname{span}_{\mathbb{R}}\{E_1, \dots, E_d\} \equiv \mathcal{E}, \quad \forall t \ge 0. \label{eq:V_in_E}
	\end{equation}
	
	Let $\mathcal{H} = L^2(\mathbb{R}^p, P_f)$ be the Hilbert space of density score variations equipped with the inner product $\langle u, v \rangle_{P_f} = \int_{\mathbb{R}^p} u(x) v(x) f(x) \, dx$. The Hilbert space admits the orthogonal direct sum decomposition $\mathcal{H} = \mathcal{E} \oplus_{\perp} \mathcal{E}^\perp$.
	
	Let $\mathcal{P}_{\mathcal{E}^\perp}: \mathcal{H} \to \mathcal{E}^\perp$ denote the canonical orthogonal projection operator onto the orthogonal complement space $\mathcal{E}^\perp = \{ w \in \mathcal{H} \mid \langle w, E_k \rangle_{P_f} = 0, \, \forall k=1,\dots,d \}$. Applying $\mathcal{P}_{\mathcal{E}^\perp} = \mathcal{I} - \mathcal{P}_{\mathcal{E}}$ to the velocity field $V(t, x)$:
	\begin{align}
		\mathcal{P}_{\mathcal{E}^\perp}\left( V(t, x) \right) &= \mathcal{P}_{\mathcal{E}^\perp} \left( \sum_{k=1}^d \left( M(g(t)) \, dg(t) \right)_k E_k \right) \nonumber \\
		&= \sum_{k=1}^d \left( M(g(t)) \, dg(t) \right)_k \underbrace{\mathcal{P}_{\mathcal{E}^\perp}(E_k)}_{= 0 \text{ since } E_k \in \mathcal{E}} \equiv 0, \quad \forall t \ge 0. \label{eq:P_E_perp_zero}
	\end{align}
	
	Equation \eqref{eq:P_E_perp_zero} establishes that the model reduction operator error vanishes identically across all spatial coordinates $x \in \mathbb{R}^p$ and time $t \ge 0$. No spatial commutator modes or higher-order operator variations are truncated, omitted, or projected out, proving Assertion (ii.1).
	
	\paragraph{Step 2: Proof of Exact Finite Sufficient Statistics via Fisher-Neyman Factorization}
	
	Let $\Phi: \mathbb{R}^d \to C^\infty(\mathbb{R}^p)$ be the smooth non-linear parameterization mapping defined by the Wei-Norman product of operator exponentials acting on the initial state:
	\begin{equation}
		\Phi(g) \triangleq \left( \prod_{i=1}^d \exp\left(g_i E_i\right) \right) \sigma(0, x) = \exp\left(g_1 E_1\right) \exp\left(g_2 E_2\right) \cdots \exp\left(g_d E_d\right) \sigma(0, x). \label{eq:Phi_mapping_def}
	\end{equation}
	
	By Part A of Theorem 1, $\sigma(t, x) = \Phi(g(t))$ is the exact, unapproximated, closed-form solution to the Stratonovich DMZ SPDE driven by the observation filtration $\mathcal{F}_t^Y = \sigma(Y(s), 0 \le s \le t)$.
	
	Taking the natural logarithm of the density expression $\sigma(t, x) = \Phi(g(t))$, the exponential operator action resolves into:
	\begin{equation}
		\ln \sigma(t, x) = \sum_{k=1}^d g_k(t) E_k(x) + c(t) + \ln \sigma(0, x), \label{eq:log_density_factorization}
	\end{equation}
	where $c(t)$ is a spatial constant independent of $x$.
	
	Equation \eqref{eq:log_density_factorization} demonstrates that the unnormalized conditional probability density measure belongs strictly to a $d$-parameter exponential family of distributions generated by the minimal score operators $\{E_1(x), \dots, E_d(x)\}$.
	
	By the Fisher-Neyman Factorization Theorem, a statistic $T(Y_{[0,t]}) = g(t) \in \mathbb{R}^d$ is minimal sufficient for the conditional probability measure $\sigma(t, x) \mid \mathcal{F}_t^Y$ if and only if the conditional probability density factors into a product of a spatial base measure and a parameter-dependent kernel depending on $Y_{[0,t]}$ purely through $g(t)$. 
	
	Consequently, for any arbitrary measurable functional $F: C^\infty(\mathbb{R}^p) \to \mathbb{R}$ acting on the density space, the conditional expectation given the complete observation history $\mathcal{F}_t^Y$ reduces identically to the conditional expectation given the finite parameter vector $g(t)$:
	\begin{equation}
		\mathbb{E}\left[ F[\sigma(t, \cdot)] \;\middle|\; \mathcal{F}_t^Y \right] = \mathbb{E}\left[ F[\Phi(g(t))] \;\middle|\; g(t) \right]. \label{eq:sufficient_statistic_expectation}
	\end{equation}
	
	Equation \eqref{eq:sufficient_statistic_expectation} proves that $g(t) \in \mathbb{R}^d$ captures the entirety of the statistical information contained within the continuous observation history $\mathcal{F}_t^Y$. All conditional probability moments $\mathbb{E}[X^k \mid \mathcal{F}_t^Y]$, cumulants, and spatial marginals can be reconstructed exactly from $g(t)$ without approximation error or loss of statistical sufficiency, proving Assertion (ii.2).
	
	\paragraph{Step 3: Proof of Zero Null-Space Energy Leakage.}
	
	Let $\mathcal{H} = L^2(\mathbb{R}^p, P_f)$ be the Hilbert space of density variations. Define the adjoint null-space $\operatorname{Null}(\mathcal{E}^*) \subset \mathcal{H}$ as the orthogonal complement of the estimation algebra span:
	\begin{equation}
		\operatorname{Null}(\mathcal{E}^*) \triangleq \left\{ w \in \mathcal{H} \;\middle|\; \langle w, E_k \rangle_{P_f} = 0, \;\; \forall k = 1, \dots, d \right\}. \label{eq:null_space_def}
	\end{equation}
	
	Let $G \in \mathbb{R}^{d \times d}$ be the non-singular Gram matrix of the basis $\{E_1, \dots, E_d\}$ under the inner product $\langle \cdot, \cdot \rangle_{P_f}$, with elements $G_{ij} \triangleq \langle E_i, E_j \rangle_{P_f}$. The orthogonal projection operator $\mathcal{P}_{\operatorname{Null}(\mathcal{E}^*)}: \mathcal{H} \to \operatorname{Null}(\mathcal{E}^*)$ is given explicitly by:
	\begin{equation}
		\mathcal{P}_{\operatorname{Null}(\mathcal{E}^*)}(w) = w - \sum_{i=1}^d \sum_{j=1}^d \langle w, E_i \rangle_{P_f} [G^{-1}]_{ij} E_j, \quad \forall w \in \mathcal{H}. \label{eq:null_space_projector_def}
	\end{equation}
	
	Define the dynamic out-of-distribution error residue $v_{\text{OOD}}(t) \in \operatorname{Null}(\mathcal{E}^*)$ as the orthogonal projection of the instantaneous DMZ density differential $d\sigma(t, x)$ into the adjoint null-space:
	\begin{equation}
		v_{\text{OOD}}(t) \triangleq \mathcal{P}_{\operatorname{Null}(\mathcal{E}^*)}\left( d\sigma(t, x) \right). \label{eq:v_ood_def}
	\end{equation}
	
	Substituting the exact basis expansion \eqref{eq:dsigma_wei_norman_step1} into \eqref{eq:v_ood_def} and applying the linearity of the projection operator $\mathcal{P}_{\operatorname{Null}(\mathcal{E}^*)}$ yields:
	\begin{align}
		v_{\text{OOD}}(t) &= \mathcal{P}_{\operatorname{Null}(\mathcal{E}^*)}\left( \sum_{k=1}^d \left( M(g(t)) \, dg(t) \right)_k E_k \, \sigma(t, x) \right) \nonumber \\
		&= \sum_{k=1}^d \left( M(g(t)) \, dg(t) \right)_k \mathcal{P}_{\operatorname{Null}(\mathcal{E}^*)}(E_k \, \sigma(t, x)). \label{eq:v_ood_expanded}
	\end{align}
	
	By definition of the estimation algebra $\mathcal{E}$, $E_k \in \mathcal{E}$ for each $k \in \{1, \dots, d\}$. Therefore, for any element $w \in \operatorname{Null}(\mathcal{E}^*)$, the inner product vanishes identically:
	\begin{equation}
		\langle E_k \, \sigma(t, x), \, w \rangle_{P_f} = 0, \quad \forall w \in \operatorname{Null}(\mathcal{E}^*). \label{eq:inner_prod_null_zero}
	\end{equation}
	
	Hence, $E_k \, \sigma(t, x) \in \operatorname{span}_{\mathbb{R}}\{E_1, \dots, E_d\}$, which implies that its orthogonal projection onto the adjoint null-space evaluates to zero:
	\begin{equation}
		\mathcal{P}_{\operatorname{Null}(\mathcal{E}^*)}(E_k \, \sigma(t, x)) \equiv 0, \quad \forall k \in \{1, \dots, d\}. \label{eq:projector_on_basis_zero}
	\end{equation}
	
	Substituting \eqref{eq:projector_on_basis_zero} into \eqref{eq:v_ood_expanded} yields:
	\begin{equation}
		v_{\text{OOD}}(t) = \sum_{k=1}^d \left( M(g(t)) \, dg(t) \right)_k \cdot 0 \equiv 0, \quad \forall t \ge 0. \label{eq:v_ood_identically_zero}
	\end{equation}
	
	To evaluate the cumulative energy leakage rate into the adjoint null-space over an arbitrary time horizon $[0, t]$, we integrate the squared Hilbert space norm $\| v_{\text{OOD}}(\tau) \|_{\mathcal{H}}^2$:
	\begin{equation}
		\int_0^t \| v_{\text{OOD}}(\tau) \|_{\mathcal{H}}^2 d\tau = \int_0^t 0 \, d\tau = 0, \quad \forall t \ge 0. \label{eq:integrated_energy_leakage_zero}
	\end{equation}
	
	Equation \eqref{eq:integrated_energy_leakage_zero} establishes that zero probability mass, spectral variance, or entropy leaks into unmonitored higher-order operator modes, proving Assertion (ii.3) and completing the proof of Part B of Theorem \ref{thm:wei_norman_exact}. 
\end{proof}

\begin{proof}[Proof of Theorem \ref{thm:triad_pathology_theorem}]
	We structure the mathematical proof sequentially across the three distinct assertions of Theorem~\ref{thm:triad_pathology_theorem}.
	
	\subsubsection*{Part (i): Proof of Genericity of Degree Unboundedness}
	Without loss of generality, let $p = 1$ to analyze spatial order mechanics. The formal adjoint of the Fokker-Planck drift-diffusion generator $\mathcal{L}_0^*$ acting on smooth density states $\sigma(t, x) \in C^\infty(\mathbb{R})$ is expressed as:
	\begin{equation}
		\mathcal{L}_0^* = \frac{1}{2} \frac{\partial^2}{\partial x^2} - \frac{\partial}{\partial x} \left( f(x) \cdot \right) - \frac{1}{2} h(x)^2.
		\label{eq:adjoint_generator_1d_proof}
	\end{equation}
	Let $f(x) = ax + \epsilon x^d$ ($d \ge 3, \epsilon \neq 0$) and $h(x) = cx \in \mathcal{E}_0$. We evaluate the initial commutator brackets in the ascending Lie filtration sequence $\{\mathcal{E}_k\}_{k=0}^\infty$ defined in \eqref{eq:operator_filtration}:
	\begin{align}
		E_1 &\triangleq cx \in \mathcal{E}_0, \label{eq:proof_E1_detail} \\
		E_2 &\triangleq [\mathcal{L}_0^*, E_1] = \left[ \frac{1}{2}\partial_{xx} - \partial_x((ax + \epsilon x^d)\cdot) - \frac{1}{2}c^2 x^2, \, cx \right] = c \partial_x - acx - \epsilon c d x^{d-1} \in \mathcal{E}_1, \label{eq:proof_E2_detail} \\
		E_3 &\triangleq [E_1, E_2] = [cx, \, c \partial_x - acx - \epsilon c d x^{d-1}] = -c^2 \mathcal{I} + \epsilon c^2 d (d-1) x^{d-2} \in \mathcal{E}_2. \label{eq:proof_E3_detail}
	\end{align}
	Next, we evaluate the third-stage Lie bracket $[\mathcal{L}_0^*, E_3] \in \mathcal{E}_3$. Expanding the commutator action on an arbitrary smooth test density $\phi(x) \in C^\infty(\mathbb{R})$:
	\begin{equation}
		[\mathcal{L}_0^*, \, x^{d-2}] \phi = \frac{1}{2} \left[ \partial_{xx}, x^{d-2} \right] \phi - \left[ \partial_x (f(x) \cdot), x^{d-2} \right] \phi.
	\end{equation}
	By Leibniz's product rule for differential operators:
	\begin{align}
		\left[ \partial_x (f(x) \cdot), x^{d-2} \right] \phi &= \partial_x \left( (ax + \epsilon x^d) x^{d-2} \phi \right) - x^{d-2} \partial_x \left( (ax + \epsilon x^d) \phi \right) \nonumber \\
		&= (d-2) ax^{d-2} \phi + \epsilon (d-2) x^{2d-3} \phi + (ax + \epsilon x^d) x^{d-2} \partial_x \phi - x^{d-2} (ax + \epsilon x^d) \partial_x \phi \nonumber \\
		&= (d-2) ax^{d-2} \phi + \epsilon (d-2) x^{2d-3} \phi. \label{eq:product_rule_expansion_detail}
	\end{align}
	Notice that the leading spatial monomial term spawned in $\mathcal{E}_3$ possesses polynomial coordinate order $2d - 3$.
	
	For any differential operator $A = \sum_{a, b} c_{a,b} x^a \partial_x^b \in \mathcal{D}(\mathbb{R})$, define its maximal spatial polynomial degree by $D(A) \triangleq \max \{ a \mid c_{a,b} \neq 0 \}$. Define the filtration maximum degree index $D_k \triangleq \max_{A \in \mathcal{E}_k} D(A)$. Evaluating the bracket $[\mathcal{L}_0^*, x^m \partial_x^n]$ yields:
	\begin{equation}
		\left[ -\epsilon \partial_x (x^d \cdot), \, x^m \partial_x^n \right] = \epsilon (d - 1 + n) x^{d + m - 1} \partial_x^n + \dots
	\end{equation}
	Since $d \ge 3$, the maximal spatial degree of spatial polynomial coefficients strictly grows at each filtration step:
	\begin{equation}
		D_{k+1} \ge D_k + (d - 2) \ge D_k + 1 > D_k, \quad \forall k \ge 0. \label{eq:degree_growth_inequality_detail}
	\end{equation}
	Because linear differential operators with distinct polynomial derivative degrees are linearly independent in $\mathcal{D}(\mathbb{R})$, $\mathcal{E}_k$ is a proper vector subspace of $\mathcal{E}_{k+1}$ ($\mathcal{E}_k \subsetneq \mathcal{E}_{k+1}$) for every $k \in \mathbb{N}_0$. Therefore, no finite stabilization index $k^* < \infty$ can exist such that $\mathcal{E}_{k^*} = \mathcal{E}_{k^*+1}$. This forces $k^* = \infty$ and $\operatorname{dim}(\mathcal{E}) = \sum_{k=0}^\infty \operatorname{dim}(\mathcal{E}_k / \mathcal{E}_{k-1}) = \infty$, establishing Assertion~(i).
	
	\subsubsection*{Part (ii): Proof of Algebraic Fragility under Structural Perturbation}
	Let $\mathcal{E}^{(0)} \triangleq \operatorname{Lie}(\mathcal{L}_0^*, h) = \operatorname{span}_{\mathbb{R}}\{E_1, \dots, E_{d_0}\}$ be a closed $d_0$-dimensional estimation algebra ($d_0 < \infty$). Finite algebraic closure implies the existence of constant structure constants $C_{ij}^k \in \mathbb{R}$ satisfying $[E_i, E_j] = \sum_{k=1}^{d_0} C_{ij}^k E_k$ \cite{wei1963, yau1999}.
	
	Now consider a perturbed generator $\mathcal{L}_\epsilon^* \triangleq \mathcal{L}_0^* + \epsilon \eta(x) \nabla$, where $\eta(x) \in C^\infty(\mathbb{R}^p)$ is an arbitrary smooth vector field containing non-linear components of degree $\ge 2$, and $\epsilon \neq 0$. The initial Lie bracket of the perturbed algebra $\mathcal{E}^{(\epsilon)}$ evaluates to:
	\begin{equation}
		E_2^{(\epsilon)} \triangleq [\mathcal{L}_\epsilon^*, h] = [\mathcal{L}_0^*, h] + \epsilon [\eta(x) \nabla, h] = E_2^{(0)} + \epsilon \left( \eta(x) \cdot \nabla h(x) \right). \label{eq:perturbed_bracket_1_detail}
	\end{equation}
	Because $\eta(x)$ is non-linear, the scalar term $\psi(x) \triangleq \eta(x) \cdot \nabla h(x)$ is a non-linear multiplication operator on $C^\infty(\mathbb{R}^p)$. Evaluating the secondary commutator $[E_1, E_2^{(\epsilon)}]$ yields $[h, E_2^{(0)} + \epsilon \psi] = E_3^{(0)}$. However, evaluating $[\mathcal{L}_\epsilon^*, \psi(x)]$ introduces derivative operators:
	\begin{equation}
		[\mathcal{L}_0^*, \psi(x)] = -\frac{1}{2} \Delta \psi - (\nabla \psi) \cdot \nabla + \dots
	\end{equation}
	This injects derivative operators with spatial coefficient functions $\nabla \psi(x)$ that do not belong to $\operatorname{span}_{\mathbb{R}}\{E_1, \dots, E_{d_0}\}$. By Part~(i), repeated Lie bracketing against $\mathcal{L}_\epsilon^*$ acts as a continuous spatial differentiation engine, inflating derivative orders infinitely.
	
	Because finite Lie algebraic closure requires exact structural cancellation ($\sum_{k=1}^{d_0} C_{ij}^k E_k$), any non-zero perturbation parameter $\epsilon \neq 0$ breaks these exact balance identities. Consequently, $\operatorname{dim}(\mathcal{E}_k^{(\epsilon)}) \to \infty$ as $k \to \infty$ for all $\epsilon \neq 0$. This proves that finite Lie algebraic closure is topologically un-dense and algebraically fragile under arbitrary smooth non-linear perturbations, establishing Assertion~(ii).
	
	\subsubsection*{Part (iii): Proof of Adjoint Null-Space Energy Leakage under Truncation}
	Let $\mathcal{H} = L^2(\mathbb{R}^p, P_f)$ be the Hilbert space of density score variations equipped with inner product $\langle u, v \rangle_{P_f} = \int_{\mathbb{R}^p} u(x) v(x) f(x) \, dx$. Suppose an estimator artificially imposes a finite truncation cutoff index $k < \infty$, restricting state parameter updates to $\mathcal{E}_k = \operatorname{span}_{\mathbb{R}}\{E_1, \dots, E_{d_k}\} \subsetneq \mathcal{E}$. By the generalized projection theorem on Hilbert spaces, $\mathcal{H}$ admits the orthogonal direct sum decomposition:
	\begin{equation}
		\mathcal{H} = \operatorname{span}_{\mathbb{R}}(\mathcal{E}_k) \oplus_{\perp} \operatorname{Null}(\mathcal{E}_k^*), \label{eq:hilbert_orthogonal_split_detail}
	\end{equation}
	where $\operatorname{Null}(\mathcal{E}_k^*) \triangleq \left\{ w \in \mathcal{H} \;\middle|\; \langle w, E_i \rangle_{P_f} = 0, \, \forall E_i \in \mathcal{E}_k \right\}$.
	
	By Theorem~\ref{thm:wei_norman_exact}, the unconstrained DMZ log-score velocity field $V_{\text{true}}(t, x) \triangleq \sigma(t, x)^{-1} d\sigma(t, x)$ expands in the infinite Lie algebra basis $\mathcal{E} = \bigcup_{m=0}^\infty \mathcal{E}_m$. Under decomposition \eqref{eq:hilbert_orthogonal_split_detail}, $V_{\text{true}}(t)$ splits uniquely into:
	\begin{equation}
		V_{\text{true}}(t, x) = V_{\mathcal{E}_k}(t, x) + v_{\text{OOD}}(t, x), \quad \text{where } V_{\mathcal{E}_k}(t) \in \operatorname{span}_{\mathbb{R}}(\mathcal{E}_k), \;\; v_{\text{OOD}}(t) \in \operatorname{Null}(\mathcal{E}_k^*).
	\end{equation}
	Because $\operatorname{dim}(\mathcal{E}) = \infty$ (from Part~(i)), non-closed Lie commutators $[E_i, E_j] \in \mathcal{E} \setminus \mathcal{E}_k$ continuously generate score variations outside $\mathcal{E}_k$. Hence, $v_{\text{OOD}}(t, x) \not\equiv 0$.
	
	Applying the orthogonal projection operator $\mathcal{P}_{\mathcal{E}_k}: \mathcal{H} \to \operatorname{span}_{\mathbb{R}}(\mathcal{E}_k)$ to the out-of-distribution component $v_{\text{OOD}}(t)$:
	\begin{equation}
		\mathcal{P}_{\mathcal{E}_k}(v_{\text{OOD}}(t)) = \sum_{i=1}^{d_k} \sum_{j=1}^{d_k} \underbrace{\langle v_{\text{OOD}}(t), E_i \rangle_{P_f}}_{= 0 \text{ by definition of } \operatorname{Null}(\mathcal{E}_k^*)} [G^{-1}]_{ij} E_j \equiv 0 \in \operatorname{span}_{\mathbb{R}}(\mathcal{E}_k),
	\end{equation}
	where $G_{ij} = \langle E_i, E_j \rangle_{P_f}$ is the non-singular Gram matrix on $\mathcal{E}_k$. 
	
	Consequently, the internal parameter update ODEs parameterizing $\mathcal{E}_k$ receive zero driving feedback force from $v_{\text{OOD}}(t)$. Integrating the norm of the unmonitored residual over any time horizon $[0, t]$ ($t > 0$):
	\begin{equation}
		\int_0^t \|v_{\text{OOD}}(\tau)\|_{\mathcal{H}}^2 d\tau > 0,
	\end{equation}
	since the norm $\|v_{\text{OOD}}(\tau)\|_{\mathcal{H}}^2$ over $[0, t]$, remains non-zero almost everywhere because non-closed Lie commutators continuously generate score variations outside $\mathcal{E}_k$.
	
	Because physical non-linearities continuously pump variance energy into unclosed commutator modes in $\operatorname{Null}(\mathcal{E}_k^*)$, residual state variance accumulates unmonitored. This produces stealth error accumulation and completes the proof of Theorem~\ref{thm:triad_pathology_theorem}.
\end{proof}

\begin{proof}[Proof of Lemma \ref{lem:tangent_space_decomposition_sec2}]
	Because $\pi: \mathcal{M} \to \mathcal{B}$ is a smooth submersion between Banach-Orlicz manifolds, its differential $d\pi_f: T_f\mathcal{M} \to T_{\pi(f)}\mathcal{B}$ is a continuous, bounded, surjective linear operator. Its kernel $\ker(d\pi_f) = \operatorname{SID}_f$ is closed and convex in $L_0^\Phi(P_f)$. By the generalized projection theorem on reflexive Hilbert-Orlicz structures under the inner product $g_f(u, v) = \mathbb{E}_f[u(X)v(X)]$, every tangent score vector $w \in T_f\mathcal{M}$ decomposes uniquely into $w = h^* + v^*$, where $v^* = \arg\min_{v \in \operatorname{SID}_f} \|w - v\|_{g_f}^2 \in \operatorname{SID}_f$ and $h^* = w - v^* \in \operatorname{SVD}\chi_f = \operatorname{SID}_f^{\perp g_f}$. Positive-definiteness of $g_f$ implies $\operatorname{SVD}\chi_f \cap \operatorname{SID}_f = \{0\}$, establishing the exact direct sum \eqref{eq:direct_sum_split_sec2}.
\end{proof}

\begin{proof}[Proof of Proposition \ref{prop:inherited_score_regularity}]
	We divide the proof into four sequential steps.
	
	\paragraph{Step 1: Proof of Assertion (i) --- Parametric Submanifold Embeddings.}
	Differentiating the gauge fixing identity $\pi \circ s = \operatorname{id}_U$ at $p \in U$ using the differential chain rule yields:
	\begin{equation}
		d\pi_{s(p)} \circ ds_p = \operatorname{id}_{T_p\mathcal{B}}. 
		\label{eq:proof_chain_rule}
	\end{equation}
	Equation~\eqref{eq:proof_chain_rule} implies that the linear map $ds_p: T_p\mathcal{B} \to T_{s(p)}\mathcal{M}$ possesses a left inverse, forcing $ds_p$ to be injective (a linear immersion). Since $s: U \to \mathcal{M}$ is a smooth injection with an injective differential, $\mathcal{S}_U = s(U)$ is a $d$-dimensional smooth parametric submanifold embedded in the ambient manifold $\mathcal{M}$. Denoting $f_p \triangleq s(p) \in \mathcal{M}$ yields the concrete local parameterization $p \mapsto f_p(x)$.
	
	\paragraph{Step 2: Proof of Assertion (ii)(a) --- Tangent Membership in Orlicz and $L^2$ Spaces.}
	Because $s: U \subset \mathbb{R}^d \to \mathcal{M}$ is $C^\infty$ smooth into the Pistone--Sempi Orlicz manifold $\mathcal{M}$, any coordinate curve $\gamma_i(t) = s(p + t e_i)$ is a smooth curve in $\mathcal{M}$. By construction of the Pistone--Sempi manifold \cite{pistone1995}, the tangent vector to $\gamma_i(t)$ at $t=0$ is given by the Fréchet derivative of the logarithmic density:
	\begin{equation}
		\left. \frac{d}{dt} \ln f_{p + t e_i}(x) \right|_{t=0} = \frac{\partial \ln f_p(x)}{\partial p_i} \equiv \phi_i(x; p).
		\label{eq:frechet_log_derivative}
	\end{equation}
	Since the tangent space $T_{f_p}\mathcal{M}$ is topologically isomorphic to the centered Orlicz space $L_0^\Phi(P_{f_p})$ under the Luxemburg norm $\| \cdot \|_{L^\Phi(P_{f_p})}$, it follows directly that $\phi_i(\cdot; p) \in L_0^\Phi(P_{f_p})$. Furthermore, for the Young function $\Phi(u) = \cosh(u) - 1$, there exists a positive constant $C > 0$ such that $u^2 \le C \Phi(u)$ for all $u \in \mathbb{R}$, establishing the continuous topological embedding $L_0^\Phi(P_{f_p}) \hookrightarrow L^2(P_{f_p})$. Hence, $\phi_i(\cdot; p) \in L_0^\Phi(P_{f_p}) \cap L^2(P_{f_p})$ with finite variance under $g_f$.
	
	\paragraph{Step 3: Proof of Assertion (ii)(b) \& (c) --- Linear Independence and Commutativity.}
	To prove linear independence, suppose $\sum_{i=1}^d c_i \phi_i(x; p) = 0$ in $T_{f_p}\mathcal{M}$. Applying the differential $ds_p$, this corresponds to:
	\begin{equation}
		ds_p\left( \sum_{i=1}^d c_i \frac{\partial}{\partial p_i} \right) = 0.
		\label{eq:linear_indep_kernel}
	\end{equation}
	Since $ds_p$ is injective by Step 1, $\sum_{i=1}^d c_i \frac{\partial}{\partial p_i} = 0 \in T_p\mathcal{B}$, which forces $c_1 = \dots = c_d = 0$.
	
	For commutativity, Fréchet differentiability in the Luxemburg norm implies that:
	\begin{equation}
		\lim_{h \to 0} \left\| \frac{\ln f_{p + h e_i} - \ln f_p}{h} - \phi_i(\cdot; p) \right\|_{L^\Phi(P_{f_p})} = 0.
		\label{eq:frechet_limit_orlicz}
	\end{equation}
	By Orlicz--Hölder duality, convergence in $L^\Phi(P_{f_p})$ implies convergence in $L^1(P_{f_p})$. Thus, integration over $\mathcal{X}$ and differentiation with respect to $p_i$ commute under Lebesgue's Dominated Convergence Theorem.
	
	\paragraph{Step 4: Proof of Assertion (iii) --- Amari Information Geometry Identification.}
	By Assumption~\rm{(A3)}, the differential $d\pi_{f_p}$ restricts to an isometry from $\operatorname{SVD}\chi_{f_p}$ onto $T_p\mathcal{B}$. Because $s(p) \in \mathcal{F}_p = \pi^{-1}(p)$, the image $ds_p(T_p\mathcal{B})$ is transversal to the vertical fiber $\ker(d\pi_{f_p}) = \operatorname{SID}_{f_p}$, identifying $ds_p(T_p\mathcal{B}) \equiv \operatorname{SVD}\chi_{f_p}$.
	
	The pulled-back Riemannian metric $s^* g_f$ on $T_p\mathcal{B}$ evaluates on coordinate basis vectors as:
	\begin{equation}
		(s^* g_f)\left( \frac{\partial}{\partial p_i}, \frac{\partial}{\partial p_j} \right) = g_{f_p}\left( ds_p\left(\frac{\partial}{\partial p_i}\right), ds_p\left(\frac{\partial}{\partial p_j}\right) \right) = g_{f_p}(\phi_i, \phi_j) = \mathbb{E}_{f_p}\left[ \phi_i(X; p) \phi_j(X; p) \right].
		\label{eq:pullback_metric_identity}
	\end{equation}
	Equation~\eqref{eq:pullback_metric_identity} identifies $s^* g_f \equiv G(p)$ as Amari's classical Fisher Information Metric matrix, completing the proof.
\end{proof}

\begin{proof}[Proof of Lemma \ref{lem:orthogonal_filtering_decomposition}]
	We structure the proof into three mathematical steps.
	
	\paragraph{Step 1: Metric-Orthogonal Tangent Space Decomposition.}
	At any density state $f_t = \sigma(t, \cdot) / \int \sigma(t, x) d\mu \in \mathcal{M}$, the tangent space $T_{f_t}\mathcal{M}$ is topologically isomorphic to $L_0^\Phi(P_{f_t})$. By Assumption~\ref{ass:smg_filtering_submersion}, the differential $d\pi_{f_t}: T_{f_t}\mathcal{M} \to T_{\pi(f_t)}\mathcal{B}$ is a bounded, surjective linear operator between Banach-Orlicz spaces.
	
	Its kernel $\operatorname{SID}_{f_t} = \ker(d\pi_{f_t})$ is a closed, convex subspace of $L_0^\Phi(P_{f_t})$. Under the Hilbert-Orlicz inner product $g_{f_t}(u, v) = \mathbb{E}_{f_t}[u v]$, the orthogonal projection theorem guarantees that $T_{f_t}\mathcal{M}$ admits the unique direct sum decomposition:
	\begin{equation}
		T_{f_t}\mathcal{M} = \operatorname{SVD}\chi_{f_t} \oplus_{\perp g_{f_t}} \operatorname{SID}_{f_t}, \label{eq:proof_direct_sum}
	\end{equation}
	where $\operatorname{SVD}\chi_{f_t} = (\operatorname{SID}_{f_t})^{\perp g_{f_t}}$.
	
	\paragraph{Step 2: Projection Operators via Connection 1-Form.}
	By Definition~\ref{def:canonical_ambient_space}, the Ehresmann connection $1$-form $\omega_{f_t}: T_{f_t}\mathcal{M} \to \operatorname{SID}_{f_t}$ acts as the canonical vertical projection operator satisfying $\omega_{f_t}\big|_{\operatorname{SID}_{f_t}} = \mathcal{I}_{\operatorname{SID}_{f_t}}$ and $\ker(\omega_{f_t}) = \operatorname{SVD}\chi_{f_t}$.
	
	The horizontal projection operator $P_{f_t}^H: T_{f_t}\mathcal{M} \to \operatorname{SVD}\chi_{f_t}$ is defined by $P_{f_t}^H \triangleq \mathcal{I}_{T_{f_t}\mathcal{M}} - \omega_{f_t}$. Applying $\mathcal{I}_{T_{f_t}\mathcal{M}} = P_{f_t}^H + \omega_{f_t}$ directly to the score velocity $V_{\mathrm{DMZ}}(t, x) = \frac{\mathcal{L}_0^* \sigma}{\sigma} dt + \sum_{j=1}^m h_j \circ dY_j(t)$ yields:
	\begin{align}
		X_H(t, x) &\triangleq P_{f_t}^H\left( V_{\mathrm{DMZ}}(t, x) \right) = P_{f_t}^H\left( \frac{\mathcal{L}_0^* \sigma}{\sigma} \right) dt + \sum_{j=1}^m P_{f_t}^H(h_j) \circ dY_j(t) \in \operatorname{SVD}\chi_{f_t}, \label{eq:proof_XH_def} \\
		X_V(t, x) &\triangleq \omega_{f_t}\left( V_{\mathrm{DMZ}}(t, x) \right) = \omega_{f_t}\left( \frac{\mathcal{L}_0^* \sigma}{\sigma} \right) dt + \sum_{j=1}^m \omega_{f_t}(h_j) \circ dY_j(t) \in \operatorname{SID}_{f_t}. \label{eq:proof_XV_def}
	\end{align}
	By construction, $X_H(t, x) + X_V(t, x) = V_{\mathrm{DMZ}}(t, x)$ identically.
	
	\paragraph{Step 3: Push-Forward Isolation and Closure of Base Dynamics.}
	Applying the push-forward differential $d\pi_{f_t}$ to the decomposed score velocity \eqref{eq:dmz_score_split_lemma}:
	\begin{equation}
		d\pi_{f_t}\left( V_{\mathrm{DMZ}}(t, \cdot) \right) = d\pi_{f_t}\left( X_H(t, \cdot) \right) + d\pi_{f_t}\left( X_V(t, \cdot) \right).
	\end{equation}
	By Assumption~\ref{ass:smg_filtering_submersion} (A2), $X_V(t, \cdot) \in \operatorname{SID}_{f_t} = \ker(d\pi_{f_t})$, which implies:
	\begin{equation}
		d\pi_{f_t}\left( X_V(t, \cdot) \right) \equiv 0 \in T_{p(t)}\mathcal{B}, \quad \forall t \ge 0. \label{eq:vertical_pushforward_zero}
	\end{equation}
	Consequently, $d\pi_{f_t}\left( V_{\mathrm{DMZ}}(t, \cdot) \right) = d\pi_{f_t}\left( X_H(t, \cdot) \right)$. Because $d\pi_{f_t}$ restricts to an isometry from $\operatorname{SVD}\chi_{f_t}$ onto $T_{p(t)}\mathcal{B}$, $X_H(t, \cdot)$ projects to an exact finite-dimensional stochastic differential equation for the macroscopic trajectory $p(t) = \pi(f_t) \in \mathcal{B}$. All extra-algebraic operator variations generated by unclosed Lie brackets are quarantined inside $X_V(t, \cdot) \in \operatorname{SID}_{f_t}$, proving the lemma.
\end{proof}

\begin{proof}[Proof of Theorem \ref{thm:quotient_base_geometry_and_curvature}]
	We execute the proof in three sequential steps.
	
	\paragraph{Step 1: Proof of Assertion (i) --- Fisher Metric Identification}
	Since $\pi: \mathcal{M} \to \mathcal{B}$ is a Riemannian submersion, the differential $d\pi_{f_p}\big|_{\operatorname{SVD}\chi_{f_p}}: \operatorname{SVD}\chi_{f_p} \to T_p\mathcal{B}$ is an isometry by definition \cite{o2014semi, Chentsov1982}. 
	
	Let $\gamma(t)$ be a smooth curve on $\mathcal{B}$ with $\gamma(0) = p$ and $\dot{\gamma}(0) = \frac{\partial}{\partial p_i}$. Its horizontal lift $\tilde{\gamma}(t)$ in $\mathcal{M}$ satisfies $\tilde{\gamma}(t) = f_{\gamma(t)}$. Differentiating the logarithmic density along $\tilde{\gamma}(t)$ yields the unique horizontal lift vector:
	\begin{equation}
		\left( \frac{\partial}{\partial p_i} \right)^H (x) = \left. \frac{d}{dt} \ln f_{\gamma(t)}(x) \right|_{t=0} = \frac{\partial \ln f_p(x)}{\partial p_i} \equiv \phi_i(x; p). \label{eq:proof_horizontal_lift_expression}
	\end{equation}
	Calculating the inner product of two horizontal lifts under the ambient Fisher-Rao metric $g_{f_p}$ on $\mathcal{M}$:
	\begin{align}
		g_{\mathcal{B}, p}\left( \frac{\partial}{\partial p_i}, \frac{\partial}{\partial p_j} \right) &= g_{f_p}\left( \left(\frac{\partial}{\partial p_i}\right)^H, \left(\frac{\partial}{\partial p_j}\right)^H \right) \nonumber \\
		&= \int_{\mathbb{R}^p} \phi_i(x; p) \phi_j(x; p) f_p(x) \, dx = \mathbb{E}_{f_p}\left[ \phi_i(X; p) \phi_j(X; p) \right] \equiv [G(p)]_{ij}.
	\end{align}
	This proves equation \eqref{eq:fisher_metric_equivalence_sec22}. Uniqueness follows from Chentsov's Theorem \cite{Chentsov1982}, which identifies the Fisher Information Metric as the unique Riemannian metric on statistical spaces invariant under probabilistic Markov transformations.
	
	\paragraph{Step 2: Proof of Assertion (ii) --- Intrinsic Riemann Curvature Calculation}
	Under the Levi-Civita connection $\nabla^{(0)}$ induced by $G(p)$, the Christoffel symbols of the first kind are defined by \cite{docarmo1992}:
	\begin{equation}
		\Gamma_{ij, k}(p) \triangleq \frac{1}{2} \left( \frac{\partial G_{jk}}{\partial p_i} + \frac{\partial G_{ik}}{\partial p_j} - \frac{\partial G_{ij}}{\partial p_k} \right). \label{eq:christoffel_first_kind_def_sec22}
	\end{equation}
	Differentiating $G_{ij}(p) = \int_{\mathbb{R}^p} \frac{\partial \ln f}{\partial p_i} \frac{\partial \ln f}{\partial p_j} f \, dx$ with respect to $p_k$ under the integral sign:
	\begin{align}
		\frac{\partial G_{ij}}{\partial p_k} &= \int_{\mathbb{R}^p} \left( \frac{\partial^2 \ln f}{\partial p_i \partial p_k} \frac{\partial \ln f}{\partial p_j} f + \frac{\partial \ln f}{\partial p_i} \frac{\partial^2 \ln f}{\partial p_j \partial p_k} f + \frac{\partial \ln f}{\partial p_i} \frac{\partial \ln f}{\partial p_j} \frac{\partial f}{\partial p_k} \right) dx \nonumber \\
		&= \mathbb{E}_{f_p}\left[ \phi_{ik} \phi_j + \phi_i \phi_{jk} + \phi_i \phi_j \phi_k \right]. \label{eq:metric_derivative_explicit_sec22}
	\end{align}
	Substituting \eqref{eq:metric_derivative_explicit_sec22} into \eqref{eq:christoffel_first_kind_def_sec22} and applying index permutations yields:
	\begin{equation}
		\Gamma_{ij, k}(p) = \mathbb{E}_{f_p}\left[ \left( \phi_{ij} + \frac{1}{2} \phi_i \phi_j \right) \phi_k \right]. \label{eq:christoffel_explicit_score_sec22}
	\end{equation}
	
	The standard Gauss-Codazzi formula for the Riemann-Christoffel curvature tensor $R_{ijk\ell}$ in local coordinates is \cite{docarmo1992}:
	\begin{equation}
		R_{ijk\ell} = \frac{1}{2} \left( \frac{\partial^2 G_{i\ell}}{\partial p_j \partial p_k} + \frac{\partial^2 G_{jk}}{\partial p_i \partial p_\ell} - \frac{\partial^2 G_{ik}}{\partial p_j \partial p_\ell} - \frac{\partial^2 G_{j\ell}}{\partial p_i \partial p_k} \right) + \sum_{m,n=1}^d [G^{-1}]_{mn} \left( \Gamma_{ik, m} \Gamma_{j\ell, n} - \Gamma_{i\ell, m} \Gamma_{jk, n} \right). \label{eq:riemann_standard_formula_sec22}
	\end{equation}
	Differentiating $\frac{\partial G_{i\ell}}{\partial p_k}$ with respect to $p_j$ produces second-order expectation expansions involving third logarithmic derivatives $\phi_{ijk}$. Upon substituting these into \eqref{eq:riemann_standard_formula_sec22}, all third-derivative terms $\phi_{ijk}$ cancel identically due to permutation symmetry. Collecting the remaining second-derivative quadratic terms yields:
	\begin{equation}
		\frac{1}{2} \left( \frac{\partial^2 G_{i\ell}}{\partial p_j \partial p_k} + \frac{\partial^2 G_{jk}}{\partial p_i \partial p_\ell} - \frac{\partial^2 G_{ik}}{\partial p_j \partial p_\ell} - \frac{\partial^2 G_{j\ell}}{\partial p_i \partial p_k} \right) = \frac{1}{4} \mathbb{E}_{f_p}\left[ \phi_{ik} \phi_{j\ell} - \phi_{i\ell} \phi_{jk} \right],
	\end{equation}
	which establishes equation \eqref{eq:riemann_curvature_explicit_sec22}. Generic multivariate non-exponential families possess non-zero second derivatives $\phi_{ij} \not\equiv 0$, establishing that $R_{ijk\ell}(p) \neq 0$ generically on $\mathcal{B}$.
	
	\paragraph{Step 3: Proof of Assertion (iii) --- Spatial Stein Score Incompatibility}
	We establish geometric incompatibility through two fundamental structural mismatches:
	\begin{enumerate}
		\item \textbf{Domain and Range Mismatch}: The spatial Stein score $S_x(x; p) \triangleq \nabla_x \ln f_p(x)$ is a vector mapping $S_x: \mathbb{R}^p \to \mathbb{R}^p$ representing a spatial vector field on physical state space $\mathcal{X} = \mathbb{R}^p$. In contrast, elements $v \in T_p\mathcal{B} \cong \operatorname{SVD}\chi_{f_p}$ are scalar-valued score functions $v: \mathbb{R}^p \to \mathbb{R}$ belonging to $L_0^\Phi(P_{f_p})$. Thus, $S_x(\cdot; p) \notin T_p\mathcal{B}$.
		\item \textbf{Violation of Base Diffeomorphism Covariance}: Under a smooth spatial diffeomorphism $y = T(x)$, the spatial Stein score transforms inhomogeneously by picking up Jacobian determinant terms:
		\begin{equation}
			S_x(x; p) = J_T(x)^T S_y(T(x); p) + \nabla_x \ln \left| \det J_T(x) \right|. \label{eq:stein_transform_law}
		\end{equation}
		Consequently, attempting to construct an inner product on $T_p\mathcal{B}$ via spatial gradients yields $\mathbb{E}_{f_p}\left[ S_x(X; p) S_x(X; p)^T \right] \neq \mathbb{E}_{f_p}\left[ S_y(Y; p) S_y(Y; p)^T \right]$, violating parameter-coordinate covariance on $\mathcal{B}$. Therefore, spatial Stein scores cannot parameterize geometric updates on the quotient base space $\mathcal{B}$.
	\end{enumerate}
	This completes the proof of Theorem~\ref{thm:quotient_base_geometry_and_curvature}.
\end{proof}

\begin{proof}[Proof of Theorem \ref{thm:smg_yau_yau_spde_decomposition}]
	Dividing the DMZ SPDE \eqref{eq:dmz_stratonovich_repeat} by $\sigma(t, x)$ yields the unconstrained score velocity field:
	\begin{equation}
		V_{\text{DMZ}}(t, x) = \frac{d\sigma(t, x)}{\sigma(t, x)} = \left( \frac{\mathcal{L}_0^*\sigma(t, x)}{\sigma(t, x)} \right) dt + \sum_{j=1}^m h_j(x) \circ dY_j(t).
	\end{equation}
	At time $t$, $f_t \in \mathcal{M}$ defines the Orlicz tangent space $T_{f_t}\mathcal{M} \cong L_0^\Phi(P_{f_t})$. By Lemma \ref{lem:tangent_space_decomposition_sec2}, the horizontal projector $P_{f_t}^H$ and connection $1$-form $\omega_{f_t} = \mathcal{I} - P_{f_t}^H$ are bounded linear operators acting on $L_0^\Phi(P_{f_t})$. 
	
	Applying $P_{f_t}^H$ and $\omega_{f_t}$ separately to the drift score generator $A_0(x) \triangleq \frac{\mathcal{L}_0^*\sigma}{\sigma}$ and observation functions $h_j(x)$ yields:
	\begin{align}
		P_{f_t}^H\left( V_{\text{DMZ}} \right) &= P_{f_t}^H(A_0) \, dt + \sum_{j=1}^m P_{f_t}^H(h_j) \circ dY_j(t), \\
		\omega_{f_t}\left( V_{\text{DMZ}} \right) &= \omega_{f_t}(A_0) \, dt + \sum_{j=1}^m \omega_{f_t}(h_j) \circ dY_j(t).
	\end{align}
	Adding these two components recovers $V_{\text{DMZ}}(t, x)$ identically, establishing \eqref{eq:score_decomposition_full}.
\end{proof}


\begin{proof}[Proof of Theorem \ref{thm:smg_yau_yau_filter_optimality_and_dynamics}]
	We prove Part (I) and Part (II) sequentially from first principles of information geometry and submersion bundle theory.
	
	\subsubsection*{Part (I): Proof of Optimality and Filter Characterization ($p(t) = \pi(f_t) \in \mathcal{B}$)}
	
	By Definition of the SMG statistical fiber bundle $\mathcal{B}_{\text{SMG}}$ \cite{cheng2026entropy}, the Orlicz tangent space at state density $f_t \in \mathcal{M}$ undergoes an exact Fisher-orthogonal direct sum decomposition:
	\begin{equation}
		T_{f_t}\mathcal{M} = \operatorname{SVD}\chi_{f_t} \oplus_{\perp g} \operatorname{SID}_{f_t},
		\label{eq:tangent_decomposition}
	\end{equation}
	where $\operatorname{SID}_{f_t} \equiv \ker(d\pi_{f_t})$ is the vertical gauge subbundle and $\operatorname{SVD}\chi_{f_t} \equiv \operatorname{SID}_{f_t}^{\perp g_f}$ is the horizontal subbundle. 
	
	Every unconstrained score velocity vector field $W \in T_{f_t}\mathcal{M}$ decomposes uniquely into horizontal and vertical projections:
	\begin{equation}
		W = P_{f_t}^H(W) + \omega_{f_t}(W), \quad \text{with} \quad P_{f_t}^H(W) \in \operatorname{SVD}\chi_{f_t}, \;\; \omega_{f_t}(W) \in \operatorname{SID}_{f_t}.
		\label{eq:orthogonal_score_split}
	\end{equation}
	Now, consider any macroscopic parameter velocity vector $v \in T_{p(t)}\mathcal{B}$ on the base manifold. By definition of the smooth Riemannian submersion $\pi$, there exists a unique horizontal lift $v^H \in \operatorname{SVD}\chi_{f_t}$ satisfying $d\pi_{f_t}(v^H) = v$. 
	
	Substituting \eqref{eq:orthogonal_score_split} into the score variance discrepancy objective function $\mathcal{E}_{g_{f_t}}(v)$ defined in \eqref{eq:variational_principle_filter}, applying the horizontal-vertical identity $\mathcal{I} = P_{f_t}^H + \omega_{f_t}$ and expanding the Fisher-Rao metric tensor bilinear form yields:
	\begin{align}
		\mathcal{E}_{g_{f_t}}(v) 
		&= \left\| V_{\text{DMZ}}(t) - v^H \right\|_{g_{f_t}}^2 \label{eq:discrepancy_start} \\[6pt]
		&= \Bigl\| \left( P_{f_t}^H(V_{\text{DMZ}}(t)) + \omega_{f_t}(V_{\text{DMZ}}(t)) \right) - v^H \Bigr\|_{g_{f_t}}^2 
		&&\text{\small (Apply identity $\mathcal{I} = P_{f_t}^H + \omega_{f_t}$)} \label{eq:identity_split} \\[6pt]
		&= \Biggl\| \underbrace{P_{f_t}^H(V_{\text{DMZ}}(t)) - v^H}_{\in \operatorname{SVD}\chi_{f_t}} \;\;+\;\; \underbrace{\omega_{f_t}(V_{\text{DMZ}}(t))}_{\in \operatorname{SID}_{f_t}} \Biggr\|_{g_{f_t}}^2 
		&&\text{\small (Group into horizontal/vertical parts).} \label{eq:grouped_subspaces}
	\end{align}
	Applying the fundamental definition of the norm induced by a Riemannian metric tensor $g_{f_t}(\cdot, \cdot)$, we obtain 	
	\begin{align}	
		&= g_{f_t}\left( \left( P_{f_t}^H(V_{\text{DMZ}}(t)) - v^H \right) + \omega_{f_t}(V_{\text{DMZ}}(t)), \right. \nonumber \\
		&\qquad\quad \left. \left( P_{f_t}^H(V_{\text{DMZ}}(t)) - v^H \right) + \omega_{f_t}(V_{\text{DMZ}}(t)) \right) \label{eq:inner_product_form} \\[6pt]
		&= g_{f_t}\left( P_{f_t}^H(V_{\text{DMZ}}(t)) - v^H, \; P_{f_t}^H(V_{\text{DMZ}}(t)) - v^H \right) \nonumber \\
		&\quad + 2\,g_{f_t}\left( P_{f_t}^H(V_{\text{DMZ}}(t)) - v^H, \; \omega_{f_t}(V_{\text{DMZ}}(t)) \right) 
		&&\text{\small (Bilinear inner product expansion)} \nonumber \\
		&\quad + g_{f_t}\left( \omega_{f_t}(V_{\text{DMZ}}(t)), \; \omega_{f_t}(V_{\text{DMZ}}(t)) \right) \label{eq:bilinear_expansion} \\[6pt]
		&= \left\| P_{f_t}^H(V_{\text{DMZ}}(t)) - v^H \right\|_{g_{f_t}}^2 + 0 + \left\| \omega_{f_t}(V_{\text{DMZ}}(t)) \right\|_{g_{f_t}}^2 
		&&\text{\small (Cross-term vanishes: $\operatorname{SVD}\chi_{f_t} \perp_{g_{f_t}} \operatorname{SID}_{f_t}$)} \label{eq:orthogonality_elimination} \\[6pt]
		&= \left\| P_{f_t}^H(V_{\text{DMZ}}(t)) - v^H \right\|_{g_{f_t}}^2 + \left\| \omega_{f_t}(V_{\text{DMZ}}(t)) \right\|_{g_{f_t}}^2. \label{eq:discrepancy_final}
	\end{align}
	
	Since $P_{f_t}^H(V_{\text{DMZ}}(t)) \in \operatorname{SVD}\chi_{f_t}$ and $v^H \in \operatorname{SVD}\chi_{f_t}$, their difference $\Big( P_{f_t}^H(V_{\text{DMZ}}(t)) - v^H \Big) \in \operatorname{SVD}\chi_{f_t}$. Conversely, $\omega_{f_t}(V_{\text{DMZ}}(t)) \in \operatorname{SID}_{f_t}$. By the strict Fisher-metric orthogonality \eqref{eq:tangent_decomposition}, $g_{f_t}(h, w) = 0$ for all $h \in \operatorname{SVD}\chi_{f_t}$ and $w \in \operatorname{SID}_{f_t}$. Applying the Pythagorean identity on Hilbert-Orlicz space reduces \eqref{eq:discrepancy_start} strictly to:
	\begin{equation}
		\mathcal{E}_{g_{f_t}}(v) = \left\| P_{f_t}^H(V_{\text{DMZ}}(t)) - v^H \right\|_{g_{f_t}}^2 + \left\| \omega_{f_t}(V_{\text{DMZ}}(t)) \right\|_{g_{f_t}}^2.
		\label{eq:pythagorean_objective}
	\end{equation}
	
	Notice that the second term $\left\| \omega_{f_t}(V_{\text{DMZ}}(t)) \right\|_{g_{f_t}}^2$ depends purely on the vertical operator dynamics in $\operatorname{SID}_{f_t}$ and is completely independent of the choice of base velocity $v \in T_{p(t)}\mathcal{B}$. 
	
	Since $g_{f_t}$ is strictly positive-definite on $L_0^\Phi(P_f)$, the horizontal variance discrepancy term satisfies
	\begin{equation}
		\left\| P_{f_t}^H(V_{\text{DMZ}}(t)) - v^H \right\|_{g_{f_t}}^2 \ge 0,
	\end{equation}
	with equality holding if and only if the horizontal lift vector $v^H$ equals the horizontal projection of the unconstrained DMZ flow:
	\begin{equation}
		v^H_{\text{opt}} = P_{f_t}^H\left( V_{\text{DMZ}}(t) \right) \in \operatorname{SVD}\chi_{f_t}.
		\label{eq:optimal_horizontal_lift}
	\end{equation}
	
	By definition of the Riemannian submersion $\pi: \mathcal{M} \to \mathcal{B}$, the push-forward differential $d\pi_{f_t}$ maps the horizontal tangent space $\operatorname{SVD}\chi_{f_t} = (\ker d\pi_{f_t})^{\perp g_f}$ isomorphically onto the base tangent space $T_{p(t)}\mathcal{B}$. Applying $d\pi_{f_t}$ to both sides of Equation~\eqref{eq:optimal_horizontal_lift} yields:
	\begin{equation}
		d\pi_{f_t}\left( v^H_{\text{opt}} \right) = d\pi_{f_t} \left( P_{f_t}^H\left( V_{\text{DMZ}}(t) \right) \right).
		\label{eq:pushforward_applied}
	\end{equation}
	
	Recalling that $d\pi_{f_t}(v^H) \triangleq v = \dot{p}(t) \in T_{p(t)}\mathcal{B}$ specifies the instantaneous velocity of the macroscopic parameter state $p(t) = \pi(f_t)$, Equation~\eqref{eq:pushforward_applied} establishes the unique optimal base tangent vector:
	\begin{equation}
		\dot{p}(t) = d\pi_{f_t}(v^H) = d\pi_{f_t} \left( P_{f_t}^H\left( V_{\text{DMZ}}(t) \right) \right).
		\label{eq:optimal_base_velocity_proven}
	\end{equation}
	
	Integrating $\dot{p}(t)$ over time establishes that $p(t) = \pi(f_t) \in \mathcal{B}$ is the {\bf uniquely optimal macroscopic non-linear filter} that minimizes instantaneous score approximation error under the Fisher-Rao metric. This completes the proof of Part (I).
	
	\subsubsection*{Part (II): Proof of Governing Differential Equations \eqref{eq:horizontal_base_evolution_thm} and \eqref{eq:tension_accumulation_thm}}
	
	\paragraph{Proof of Equation \eqref{eq:horizontal_base_evolution_thm} (Horizontal Base Evolution)}
	
	Let $p_i$ ($i = 1, \dots, d$) be the macroscopic parameter coordinates along the coordinate axes of the identifiable $d$-dimensional Riemannian statistical base manifold $(\mathcal{B}, g_{\mathcal{B}})$, and let $\{\phi_1(f_t), \dots, \phi_d(f_t)\}$ be the local score basis for $\operatorname{SVD}\chi_{f_t}$, defined by logarithmic score functions along the base coordinate axes:
	\begin{equation}
		\phi_i(f_t) \triangleq \frac{\partial \ln f_t}{\partial p_i} = \frac{1}{f_t} \frac{\partial f_t}{\partial p_i} \in \operatorname{SVD}\chi_{f_t}, \quad i = 1, \dots, d.
		\label{eq:score_basis_def}
	\end{equation}
	
	The unique optimal horizontal score update vector $v^H = P_{f_t}^H(V_{\text{DMZ}}(t)) \in \operatorname{SVD}\chi_{f_t}$ derived in \eqref{eq:optimal_base_velocity_proven} can be expressed in terms of the score basis $\{\phi_k(f_t)\}_{k=1}^d$ as:
	\begin{equation}
		P_{f_t}^H(V_{\text{DMZ}}(t)) = \sum_{k=1}^d dp_k(t) \phi_k(f_t).
		\label{eq:basis_expansion_vH}
	\end{equation}
	
	To solve for the scalar differential updates $dp_k(t)$, take the Fisher-Rao inner product $g_{f_t}(\cdot, \cdot)$ of the unconstrained DMZ flow $V_{\text{DMZ}}(t)$ with an arbitrary basis vector $\phi_i(f_t)$:
	\begin{equation}
		g_{f_t}\left( V_{\text{DMZ}}(t), \, \phi_i(f_t) \right) = g_{f_t}\left( P_{f_t}^H(V_{\text{DMZ}}(t)) + \omega_{f_t}(V_{\text{DMZ}}(t)), \; \phi_i(f_t) \right).
		\label{eq:inner_product_setup}
	\end{equation}
	
	Since $\phi_i(f_t) \in \operatorname{SVD}\chi_{f_t}$ and $\omega_{f_t}(V_{\text{DMZ}}(t)) \in \operatorname{SID}_{f_t} \equiv \operatorname{SVD}\chi_{f_t}^{\perp g_f}$, the vertical inner product vanishes identically: $g_{f_t}\left( \omega_{f_t}(V_{\text{DMZ}}(t)), \, \phi_i(f_t) \right) = 0$. Substituting \eqref{eq:basis_expansion_vH} into \eqref{eq:inner_product_setup} gives:
	\begin{equation}
		g_{f_t}\left( V_{\text{DMZ}}(t), \, \phi_i(f_t) \right) = g_{f_t}\left( \sum_{k=1}^d dp_k(t) \phi_k(f_t), \; \phi_i(f_t) \right) = \sum_{k=1}^d dp_k(t) g_{f_t}(\phi_k(f_t), \phi_i(f_t)).
		\label{eq:metric_matrix_appearance}
	\end{equation}
	
	By definition of the localized Riemannian Fisher metric matrix $[G_p]_{ik} = g_{f_t}(\phi_i, \phi_k) = \mathbb{E}_{f_t}[\phi_i \phi_k]$, equation \eqref{eq:metric_matrix_appearance} becomes:
	\begin{equation}
		g_{f_t}\left( V_{\text{DMZ}}(t), \, \phi_i(f_t) \right) = \sum_{k=1}^d [G_p]_{ik} dp_k(t).
		\label{eq:system_linear_eqs}
	\end{equation}
	
	Now expand the left-hand side of \eqref{eq:system_linear_eqs} using the stochastic differential definition of $V_{\text{DMZ}}(t) = \frac{\mathcal{L}_0^*\sigma}{\sigma} dt + \sum_{j=1}^m h_j \circ dY_j(t)$ under the Fisher-Rao expectation $g_{f_t}(u, v) = \mathbb{E}_{f_t}[u v]$:
	\begin{equation}
		g_{f_t}\left( V_{\text{DMZ}}(t), \, \phi_i(f_t) \right) = \mathbb{E}_{f_t}\left[ \frac{\mathcal{L}_0^*\sigma}{\sigma} \phi_i(f_t) \right] dt + \sum_{j=1}^m \mathbb{E}_{f_t}\left[ h_j \phi_i(f_t) \right] \circ dY_j(t).
		\label{eq:lhs_expectation_expanded}
	\end{equation}
	
	Equating \eqref{eq:system_linear_eqs} and \eqref{eq:lhs_expectation_expanded} yields the matrix equation:
	\begin{equation}
		\sum_{k=1}^d [G_p]_{ik} dp_k(t) = \mathbb{E}_{f_t}\left[ \frac{\mathcal{L}_0^*\sigma}{\sigma} \phi_i(f_t) \right] dt + \sum_{j=1}^m \mathbb{E}_{f_t}\left[ h_j \phi_i(f_t) \right] \circ dY_j(t).
		\label{eq:matrix_equation_derived}
	\end{equation}
	
	Because $(\mathcal{B}, g_{\mathcal{B}})$ is an identifiable statistical manifold, $[G_p]$ is strictly positive-definite and invertible. Multiplying \eqref{eq:matrix_equation_derived} by the inverse matrix elements $[G_p^{-1}]_{ki}$ and summing over $i = 1, \dots, d$ yields:
	\begin{equation}
		dp_k(t) = \sum_{i=1}^d \left[ G_p^{-1} \right]_{ki} \left( \mathbb{E}_{f_t} \left[ \frac{\mathcal{L}_0^*\sigma}{\sigma} \phi_i(f_t) \right] dt + \sum_{j=1}^m \mathbb{E}_{f_t} \left[ h_j \phi_i(f_t) \right] \circ dY_j(t) \right).
	\end{equation}
	This rigorously completes the derivation of Equation \eqref{eq:horizontal_base_evolution_thm}.
	
	\paragraph{Proof of Equation \eqref{eq:tension_accumulation_thm} ($\mathcal{T}_{\text{AAT}}$ Accumulation Rate)}
	
	By definition of the Active Acausal Tension $\mathcal{T}_{\text{AAT}}(t)$ \cite{cheng2026entropy}, it tracks the cumulative kinetic score energy stored within the vertical fiber subbundle $\operatorname{SID}_{f_t} = \ker(d\pi_{f_t})$ resulting from non-closed operator variations of the drift generator $\mathcal{L}_0^*$.
	
	The vertical drift score component $V_{\mathcal{V}, 0}(t) \in \operatorname{SID}_{f_t}$ is isolated by the Ehresmann connection 1-form $\omega_{f_t}$:
	\begin{equation}
		V_{\mathcal{V}, 0}(t) \triangleq \omega_{f_t}\left( \frac{\mathcal{L}_0^*\sigma(t, \cdot)}{\sigma(t, \cdot)} \right) = \left( \mathcal{I} - P_{f_t}^H \right) \left( \frac{\mathcal{L}_0^*\sigma}{\sigma} \right) = \frac{\mathcal{L}_0^*\sigma}{\sigma} - P_{f_t}^H \left( \frac{\mathcal{L}_0^*\sigma}{\sigma} \right).
		\label{eq:vertical_drift_score_def}
	\end{equation}
	
	The rate of tension accumulation $\frac{d\mathcal{T}_{\text{AAT}}(t)}{dt}$ is defined by the squared Fisher-Rao metric norm $\| V_{\mathcal{V}, 0}(t) \|_{g_{f_t}}^2$ on $L_0^\Phi(P_f)$:
	\begin{equation}
		\frac{d\mathcal{T}_{\text{AAT}}(t)}{dt} \triangleq \left\| \omega_{f_t}\left( \frac{\mathcal{L}_0^*\sigma}{\sigma} \right) \right\|_{g_{f_t}}^2.
		\label{eq:norm_squared_tension_setup}
	\end{equation}
	
	Expanding the squared norm under the Fisher-Rao inner product $g_{f_t}(u, u) = \mathbb{E}_{f_t}[u^2]$:
	\begin{equation}
		\left\| \omega_{f_t}\left( \frac{\mathcal{L}_0^*\sigma}{\sigma} \right) \right\|_{g_{f_t}}^2 = \mathbb{E}_{f_t} \left[ \left( \omega_{f_t}\left( \frac{\mathcal{L}_0^*\sigma}{\sigma} \right) \right)^2 \right] = \mathbb{E}_{f_t} \left[ \left( \frac{\mathcal{L}_0^*\sigma}{\sigma} - P_{f_t}^H\left( \frac{\mathcal{L}_0^*\sigma}{\sigma} \right) \right)^2 \right].
		\label{eq:tension_derived_proven}
	\end{equation}
	Equation \eqref{eq:tension_derived_proven} matches Equation \eqref{eq:tension_accumulation_thm} exactly. This completes the proof of Theorem~\ref{thm:smg_yau_yau_filter_optimality_and_dynamics}.
\end{proof}

\begin{proof}[Proof of Theorem \ref{thm:classical_equivalence}] 
	Assume $\dim(\mathcal{E}) = d < \infty$. By the classical Lie-algebraic reduction theorem of Wei and Norman, the unnormalized conditional probability density $\sigma(t,x)$ governed by the Stratonovich Duncan-Mortensen-Zakai (DMZ) SPDE evolves strictly within a finite $d$-dimensional Lie-group orbit $\Sigma_d \subset C^\infty(\mathbb{R}^p)$ parameterized by the product of exponentials ansatz:
	\begin{equation}
		\sigma(t,x) = \exp\left(g_1(t) E_1\right) \exp\left(g_2(t) E_2\right) \cdots \exp\left(g_d(t) E_d\right) \sigma(0,x).
	\end{equation}
	Because the algebra closes under spatial Lie brackets ($[E_i, E_j] = \sum_{k=1}^d C_{ij}^k E_k$), the log-score velocity field $V_{\mathrm{DMZ}}(t,x) \triangleq \sigma(t,x)^{-1} d\sigma(t,x)$ resides entirely within the finite basis span:
	\begin{equation}
		V_{\mathrm{DMZ}}(t,x) \in \mathrm{span}_{\mathbb{R}}\{E_1, \dots, E_d\} \equiv \operatorname{SVD}\chi_{f_t}.
	\end{equation}
	
	To prove (i), observe that the base manifold $(\mathcal{B}, g_{\mathcal{B}})$ possesses dimension $d = \dim(\mathcal{E})$. Since $d\pi_{f_t}$ maps the horizontal tangent space $\operatorname{SVD}\chi_{f_t}$ isomorphically onto $T_{p(t)}\mathcal{B}$, there are no remaining linearly independent score variations in $T_{f_t}\mathcal{M}$ that project to zero under $d\pi_{f_t}$. By Lemma~\ref{lem:tangent_space_decomposition_sec2}, the Fisher-orthogonal complement satisfies:
	\begin{equation}
		\operatorname{SID}_{f_t} = (\operatorname{SVD}\chi_{f_t})^{\perp g_f} = \{0\}.
	\end{equation}
	Because $\omega_{f_t} : T_{f_t}\mathcal{M} \to \operatorname{SID}_{f_t}$ is a projection onto the vertical space, $\operatorname{SID}_{f_t} = \{0\}$ forces $\omega_{f_t} \equiv 0$ and $P_{f_t}^H = \mathcal{I} - \omega_{f_t} \equiv \mathcal{I}$. The curvature 2-form $\Omega_f(U,V) = \omega_f([P_f^H U, P_f^H V])$ vanishes identically.
	
	To prove (ii), substituting $\omega_{f_t} \equiv 0$ into the definition of Active Acausal Tension yields:
	\begin{equation}
		\frac{d\mathcal{T}_{\mathrm{AAT}}(t)}{dt} = \|\omega_{f_t}(V_{\mathrm{DMZ}}(t))\|_{g_{f_t}}^2 = \|0\|_{g_{f_t}}^2 = 0.
	\end{equation}
	Integrating over $[0,t]$ confirms $\mathcal{T}_{\mathrm{AAT}}(t) \equiv 0$. The dynamic residual $v_{\mathrm{OOD}}(t) = \mathcal{P}_{\mathcal{E}^\perp}(d\sigma/\sigma) \equiv 0$ by Part B of Theorem~\ref{thm:wei_norman_exact}.
	
	To prove (iii), substitute $P_{f_t}^H \equiv \mathcal{I}$ into the horizontal base update equation. The push-forward score evaluation against the coframe $\theta^k$ simplifies to:
	\begin{equation}
		dp_k(t) = \left\langle d\pi_{f_t}(\mathcal{L}_0^* \sigma), \theta^k \right\rangle dt + \sum_{j=1}^m \left\langle d\pi_{f_t}(h_j \sigma), \theta^k \right\rangle \circ dY_j(t).
	\end{equation}
	Equating spatial differential operator coefficients across the Lie algebra basis $\{E_1, \dots, E_d\}$ yields the linear vector system $M(g(t)) dg(t) = \alpha^0 dt + \sum_{l=1}^m \alpha^l \circ dY_l(t)$. Since $M(0) = I_{d \times d}$, $\det M(g) \neq 0$ locally, guaranteeing local existence of the exact classical Wei-Norman trajectory on $\mathcal{B}$.
\end{proof}

\begin{proof}[Proof of Theorem  \ref{prop:yau_yau_working_sde}]
	The proof proceeds in three steps: constructing the unconstrained score flow, enforcing the Fisher-orthogonal projection, and evaluating the residual score.
	
	\paragraph{Step 1: Unconstrained DMZ Score Flow.}
	Dividing Equation \eqref{eq:dmz_stratonovich_prop} by the unnormalized density $\sigma(t,x)$, the true logarithmic score differential $d\ln \sigma(t,x) = \sigma(t,x)^{-1} d\sigma(t,x)$ decomposes as:
	\begin{equation}
		d\ln \sigma(t,x) = \left( \frac{\mathcal{L}_0^* \sigma(t,x)}{\sigma(t,x)} \right) dt + \sum_{j=1}^m h_j(x) \circ dY_j(t).
		\label{eq:true_dmz_score_flow}
	\end{equation}
	
	\paragraph{Step 2: Projection onto the Tangent Space $T_{p_\theta}\mathcal{M}_{\text{work}}$.}
	On the working exponential manifold $\mathcal{M}_{\text{work}}$, any instantaneous variation of the log-density is constrained to lie in $T_{p_\theta}\mathcal{M}_{\text{work}} = \operatorname{span}\{\phi_1(x;\theta), \dots, \phi_d(x;\theta)\}$:
	\begin{equation}
		d\ln p(x; \theta_t^{\text{YY}}) = \sum_{k=1}^d d\theta_k^{\text{YY}}(t) \, \phi_k(x; \theta_t^{\text{YY}}).
		\label{eq:working_tangent_score}
	\end{equation}
	
	Evaluating Equation \eqref{eq:true_dmz_score_flow} at the current working state $p(x; \theta_t^{\text{YY}})$, the working parameter update $d\theta_t^{\text{YY}}$ is determined by enforcing the Galerkin orthogonality condition \cite{ciarlet1978}: the error differential $d\ln p(x; \theta_t^{\text{YY}}) - d\ln \sigma(t,x)\vert_{p = p_{\theta^{\text{YY}}}}$ must be orthogonal to all tangent basis vectors $\phi_i(x;\theta_t^{\text{YY}})$ with respect to the $L_2(P_{\theta_t^{\text{YY}}})$ inner product $\langle u, v \rangle_{L_2(p_\theta)} \triangleq \mathbb{E}_{p_\theta}[u v]$:
	\begin{equation}
		\left\langle \sum_{k=1}^d d\theta_k^{\text{YY}}(t) \, \phi_k - \left[ \left( \frac{\mathcal{L}_0^* p}{p} \right) dt + \sum_{j=1}^m h_j \circ dY_j(t) \right], \; \phi_i \right\rangle_{L_2(P_{\theta_t^{\text{YY}}})} = 0, \quad \forall i = 1, \dots, d.
		\label{eq:orthogonality_condition}
	\end{equation}
	
	Expanding the inner product via linearity of expectation:
	\begin{equation*}
		\sum_{k=1}^d d\theta_k^{\text{YY}}(t) \, \mathbb{E}_{p(\cdot;\theta_t^{\text{YY}})}\left[ \phi_k(x;\theta_t^{\text{YY}}) \phi_i(x;\theta_t^{\text{YY}}) \right] 	
	\end{equation*}
	\begin{equation}
		= \mathbb{E}_{p(\cdot;\theta_t^{\text{YY}})}\left[ \left(\frac{\mathcal{L}_0^* p}{p}\right) \phi_i(x;\theta_t^{\text{YY}}) \right] dt + \sum_{j=1}^m \mathbb{E}_{p(\cdot;\theta_t^{\text{YY}})}\left[ h_j \phi_i(x;\theta_t^{\text{YY}}) \right] \circ dY_j(t).
		\label{eq:expanded_inner_product}
	\end{equation}
	
	By definition, the expectation $\mathbb{E}_{p_\theta}[\phi_k \phi_i]$ corresponds to the $(i,k)$-th entry of the Fisher information matrix $[G_\theta]_{ik}$. Thus, Equation \eqref{eq:expanded_inner_product} forms a linear system:
	\begin{equation}
		\sum_{k=1}^d [G_\theta]_{ik} \, d\theta_k^{\text{YY}}(t) = \mathbb{E}_{p(\cdot;\theta_t^{\text{YY}})}\left[ \left(\frac{\mathcal{L}_0^* p}{p}\right) \phi_i(x;\theta_t^{\text{YY}}) \right] dt + \sum_{j=1}^m \mathbb{E}_{p(\cdot;\theta_t^{\text{YY}})}\left[ h_j \phi_i(x;\theta_t^{\text{YY}}) \right] \circ dY_j(t).
	\end{equation}
	
	Since $p(x;\theta)$ is a minimal exponential family, $G_\theta$ is positive definite and invertible. Multiplying both sides by $[G_\theta^{-1}]_{ki}$ and summing over $i = 1, \dots, d$:
	
	$$\sum_{i=1}^d [G_\theta^{-1}]_{\ell i} \left( \sum_{k=1}^d [G_\theta]_{ik} \, d\theta_k^{\text{YY}}(t) \right)$$
	$$ = \sum_{i=1}^d [G_\theta^{-1}]_{\ell i} \left( \mathbb{E}_{p(\cdot;\theta_t^{\text{YY}})}\left[ \left(\frac{\mathcal{L}_0^* p}{p}\right) \phi_i(x; \theta_t^{\text{YY}})\right] dt + \sum_{j=1}^m \mathbb{E}_{p(\cdot;\theta_t^{\text{YY}})}\left[ h_j \phi_i(x; \theta_t^{\text{YY}})\right] \circ dY_j(t) \right).$$ 
	Exchange the order of summation on the left-hand side$$\sum_{k=1}^d \left( \sum_{i=1}^d [G_\theta^{-1}]_{\ell i} [G_\theta]_{ik} \right) d\theta_k^{\text{YY}}(t)$$  
	$$ = \sum_{i=1}^d [G_\theta^{-1}]_{\ell i} \left( \mathbb{E}_{p(\cdot;\theta_t^{\text{YY}})}\left[ \left(\frac{\mathcal{L}_0^* p}{p}\right) \phi_i(x; \theta_t^{\text{YY}})\right] dt + \sum_{j=1}^m \mathbb{E}_{p(\cdot;\theta_t^{\text{YY}})}\left[ h_j \phi_i(x; \theta_t^{\text{YY}})\right] \circ dY_j(t) \right).$$	
	By definition of matrix inversion, $\sum_{i=1}^d [G_\theta^{-1}]_{\ell i} [G_\theta]_{ik} = [G_\theta^{-1} G_\theta]_{\ell k} = \delta_{\ell k}$, where $\delta_{\ell k}$ is the Kronecker delta ($\delta_{\ell k} = 1$ if $\ell = k$ and $0$ otherwise):
	$$
	\small 
	\sum_{k=1}^d \delta_{\ell k} \, d\theta_k^{\text{YY}}(t) = \sum_{i=1}^d [G_\theta^{-1}]_{\ell i} \left( \mathbb{E}_{p(\cdot;\theta_t^{\text{YY}})}\left[ \left(\frac{\mathcal{L}_0^* p}{p}\right) \phi_i(x; \theta_t^{\text{YY}})\right] dt + \sum_{j=1}^m \mathbb{E}_{p(\cdot;\theta_t^{\text{YY}})}\left[ h_j \phi_i(x; \theta_t^{\text{YY}})\right] \circ dY_j(t) \right).$$
	Collapse the sum on the left-hand side using $\sum_{k=1}^d \delta_{\ell k} \, d\theta_k^{\text{YY}}(t) = d\theta_\ell^{\text{YY}}(t)$ and relabel the free index $\ell$ to $k$:
	$$d\theta_k^{\text{YY}}(t) = \sum_{i=1}^d [G_\theta^{-1}]_{ki} \left( \mathbb{E}_{p(\cdot;\theta_t^{\text{YY}})}\left[ \left(\frac{\mathcal{L}_0^* p}{p}\right) \phi_i(x; \theta_t^{\text{YY}})\right] dt + \sum_{j=1}^m \mathbb{E}_{p(\cdot;\theta_t^{\text{YY}})}\left[ h_j \phi_i(x; \theta_t^{\text{YY}})\right] \circ dY_j(t) \right)$$
	yields Equation \eqref{eq:classical_working_sde} identically.
	
	\paragraph{Step 3: Residual Score Field and Semiparametric Bias.}
	Subtracting the projected score differential \eqref{eq:working_tangent_score} from the true score differential \eqref{eq:true_dmz_score_flow} defines the instantaneous residual score field:
	\begin{equation}
		dS_{\text{res}}(t,x) \triangleq d\ln \sigma(t,x) - d\ln p(x;\theta_t^{\text{YY}}).
	\end{equation}
	By construction of the projection \eqref{eq:orthogonality_condition}, $\mathbb{E}_{p_{\theta^{\text{YY}}}}[dS_{\text{res}}(t,X) \cdot \phi_i(X;\theta_t^{\text{YY}})] = 0$ for all $i = 1, \dots, d$, establishing $dS_{\text{res}}(t,x) \in (T_{p_\theta}\mathcal{M}_{\text{work}})^\perp$. When $\dim(\mathcal{E}) = \infty$, the unclosed Lie brackets $[E_i, E_j]$ ensure $dS_{\text{res}}(t,x) \neq 0$ dynamically, driving $p(x;\theta_t^{\text{YY}})$ away from $f_{\text{true}}(t,x)$ and creating persistent semiparametric bias.
\end{proof}

\begin{proof}[Proof of Lemma \ref{lem:Equivalence_Residual_Fields}]
	We prove equation \eqref{eq:equation_154} through the duality between infinite-dimensional geometric fiber projections and statistical score decomposition.
	
	\paragraph{Step 1: Unconstrained Continuous Drift Score Generator.}
	Under the unconstrained Fokker-Planck / DMZ operator $\mathcal{L}_0^*$, the infinite-dimensional density dynamic evolves as $\frac{\partial p(x; \theta_t)}{\partial t} = \mathcal{L}_0^* p(x; \theta_t)$. Expressing this dynamic in terms of log-density score velocity in $T_{p_{\theta_t}}\mathcal{M} \subset L_0^\Phi(P_{p_{\theta_t}})$ defines the unconstrained continuous drift score generator $U_{\text{drift}}(t, x)$:
	\begin{equation}
		U_{\text{drift}}(t, x) \triangleq \frac{\partial \ln p(x; \theta_t)}{\partial t}_{\text{unconstrained}} = \frac{1}{p(x; \theta_t)} \frac{\partial p(x; \theta_t)}{\partial t} = \frac{\mathcal{L}_0^* p(x; \theta_t)}{p(x; \theta_t)}.
		\label{eq:u_drift}
	\end{equation}
	
	\paragraph{Step 2: Working Model Parametric Projection.}
	The finite-dimensional Yau-Yau filtering solver constrains the density evolution to the parametric working submanifold $\mathcal{M}_{\text{work}}$ parametrized by $\theta(t) \in \mathbb{R}^d$. By the chain rule, the projected parametric velocity vector field $V_{\text{YY}}(t, x)$ on $T_{\theta_t}\mathcal{M}_{\text{work}}$ is given by:
	\begin{equation}
		V_{\text{YY}}(t, x) = \frac{\partial \ln p(x; \theta_t)}{\partial t}_{\text{parametric}} = \sum_{k=1}^d \frac{\partial \ln p(x; \theta_t)}{\partial \theta_k} \frac{d\theta_k^{YY}(t)}{dt} = \sum_{k=1}^d \dot{\theta}_k^{YY}(t) \phi_k(x; \theta_t),
		\label{eq:v_yy}
	\end{equation}
	where $\phi_k(x; \theta_t) \equiv \frac{\partial \ln p(x; \theta_t)}{\partial \theta_k}$ is the Fisher Score function.
	
	\paragraph{Step 3: Vertical Residual Score Field.}
	By Definition 8, the geometric vertical residual score field $V_{\text{res}}(t, x) \in L_0^\Phi(P_{p_{\theta_t}})$ measures the orthogonal difference between the unconstrained continuous drift score generator $U_{\text{drift}}(t, x)$ and its working model parametric projection $V_{\text{YY}}(t, x)$:
	\begin{equation}
		V_{\text{res}}(t, x) \triangleq U_{\text{drift}}(t, x) - V_{\text{YY}}(t, x) = \frac{\mathcal{L}_0^* p(x; \theta_t)}{p(x; \theta_t)} - \sum_{k=1}^d \dot{\theta}_k^{YY}(t) \phi_k(x; \theta_t).
		\label{eq:v_res_def}
	\end{equation}
	
	\paragraph{Step 4: Equivalence with $S_{\text{res}}(t, x)$.}
	From the statistical perspective of Statistical Fiber Theory, $S_{\text{res}}(t, x)$ represents the exact instantaneous rate of unmodeled score deviation that cannot be captured by the finite-dimensional Lie algebra closure of the Yau-Yau filter. Because both $S_{\text{res}}(t, x)$ and $V_{\text{res}}(t, x)$ measure the vertical connection projection $\omega_{f(t)}(U_{\text{drift}}(t))$ onto the vertical fiber $\operatorname{SID}_{p_{\theta_t}}$, they are definitionally and functionally identical:
	\begin{equation}
		S_{\text{res}}(t, x) \equiv V_{\text{res}}(t, x).
	\end{equation}
	
	Substituting \eqref{eq:v_res_def} into this equivalence yields equation \eqref{eq:equation_154}:
	\begin{equation}
		S_{\text{res}}(t, x) \equiv V_{\text{res}}(t, x) \triangleq \frac{\mathcal{L}_0^* p(x; \theta_t)}{p(x; \theta_t)} - \sum_{k=1}^d \dot{\theta}_k^{YY}(t) \phi_k(x; \theta_t).
	\end{equation}
	This completes the proof of this Lemma.
\end{proof}

\begin{proof}[Proof of Theorem \ref{thm:semiparametric_bias_correction}]
	We structure the proof into three logical steps: (1) Constructing the semiparametric efficient score, (2) Evaluating the parameter bias via Taylor expansion on the statistical manifold, and (3) Identifying the Fisher projection operator.
	
	\paragraph{Step 1: Efficient Influence Function and Score Decomposition.}
	Under model misspecification, the true state density $f_{\text{true}}(t,x) \in \mathcal{M}$ decomposes relative to the working family $\mathcal{M}_{\text{work}}$ as:
	\begin{equation}
		\ln f_{\text{true}}(t, x) = \ln p(x; \theta_t^{\text{YY}}) + \eta(t, x),
	\end{equation}
	where $\eta(t, x) \in \operatorname{SID}_{p_\theta}$ is an infinite-dimensional nuisance score field satisfying $\mathbb{E}_{p_\theta}[\eta(t, X) \phi_i(X; \theta)] = 0$ for all $i = 1, \dots, d$.
	
	The efficient score function $S_{\text{eff}}(x; \theta)$ for estimating parameter $\theta$ in the presence of non-parametric nuisance field $\eta$ is obtained by taking the Fisher-orthogonal projection of the true log-score onto the tangent space of $\mathcal{M}_{\text{work}}$ \cite{amari1985}:
	\begin{equation}
		S_{\text{eff}, i}(x; \theta) = \phi_i(x; \theta) - \Pi_{\operatorname{SID}}\left( \phi_i(x; \theta) \right) \equiv \phi_i(x; \theta),
	\end{equation}
	since working score functions $\phi_i \in \operatorname{SVD}\chi_{p_\theta}$ are by construction Fisher-orthogonal to $\operatorname{SID}_{p_\theta}$.
	
	\paragraph{Step 2: Moment Expansion of Misspecified Parameter Drift.}
	Let $D_{\text{KL}}\left( f_{\text{true}}(t) \,\parallel\, p(\cdot; \theta) \right)$ be the Kullback-Leibler divergence between the ground truth state density and the working density. The semiparametrically optimal parameter estimate $\theta_t^{\text{SMG}}$ minimizes this divergence:
	\begin{equation}
		\theta_t^{\text{SMG}} = \arg\min_{\theta \in \Theta} \int_{\mathcal{X}} f_{\text{true}}(t, x) \ln \left( \frac{f_{\text{true}}(t, x)}{p(x; \theta)} \right) \mu(dx).
		\label{eq:kl_minimization_objective}
	\end{equation}
	The first-order necessary estimating equation for $\theta_t^{\text{SMG}}$ is:
	\begin{equation}
		\Psi(\theta_t^{\text{SMG}}) \triangleq \mathbb{E}_{f_{\text{true}}(t)}\left[ \nabla_\theta \ln p(X; \theta_t^{\text{SMG}}) \right] = \mathbf{0} \in \mathbb{R}^d.
		\label{eq:estimating_equation_zero}
	\end{equation}
	
	Performing a Fréchet Taylor expansion of the estimating function $\Psi(\theta)$ around the classical working trajectory $\theta_t^{\text{YY}}$:
	\begin{equation}
		\mathbf{0} = \Psi\left(\theta_t^{\text{SMG}}\right) = \Psi\left(\theta_t^{\text{YY}}\right) + \left[ \nabla_\theta \Psi\left(\theta_t^{\text{YY}}\right) \right] \left( \theta_t^{\text{SMG}} - \theta_t^{\text{YY}} \right) + \mathcal{O}\left( \|\theta_t^{\text{SMG}} - \theta_t^{\text{YY}}\|^2 \right).
		\label{eq:taylor_expansion_estimating_eq}
	\end{equation}
	
	Evaluating the Jacobian matrix $\nabla_\theta \Psi(\theta_t^{\text{YY}})$:
	\begin{equation}
		\left[ \nabla_\theta \Psi\left(\theta_t^{\text{YY}}\right) \right]_{ij} = \mathbb{E}_{f_{\text{true}}(t)}\left[ \frac{\partial^2 \ln p(X; \theta_t^{\text{YY}})}{\partial \theta_i \partial \theta_j} \right] = -\left[ G_{\theta_t^{\text{YY}}} \right]_{ij} + \mathcal{O}\left( \|\eta\| \right),
		\label{eq:fisher_jacobian_relation}
	\end{equation}
	where $G_\theta$ is the Riemannian Fisher information matrix on $\mathcal{M}_{\text{work}}$.
	
	Substituting \eqref{eq:fisher_jacobian_relation} into \eqref{eq:taylor_expansion_estimating_eq} and solving for the parameter discrepancy $\Delta_t^{\text{stat}} \triangleq \theta_t^{\text{SMG}} - \theta_t^{\text{YY}}$ yields:
	\begin{equation}
		\Delta_t^{\text{stat}} = \left[ G_{\theta_t^{\text{YY}}}^{-1} \right] \Psi\left(\theta_t^{\text{YY}}\right) + \mathcal{O}\left( \|\Delta_t^{\text{stat}}\|^2 \right).
		\label{eq:delta_stat_linear_system}
	\end{equation}
	
	\paragraph{Step 3: Identification with Integrated Residual Score Moments.}
	Evaluating the components of the moment vector $\Psi(\theta_t^{\text{YY}})$:
	\begin{equation}
		\left[ \Psi\left(\theta_t^{\text{YY}}\right) \right]_i = \mathbb{E}_{f_{\text{true}}(t)}\left[ \phi_i\left(X; \theta_t^{\text{YY}}\right) \right].
	\end{equation}
	Expressing the temporal accumulation of this score discrepancy along the dynamic trajectory driven by the vertical residual score field $V_{\text{res}}(\tau, x)$ from Definition~\ref{def:vertical_residual_score}:
	\begin{equation}
		\mathbb{E}_{f_{\text{true}}(t)}\left[ \phi_i\left(X; \theta_t^{\text{YY}}\right) \right] = \int_0^t \mathbb{E}_{f_{\text{true}}(\tau)}\left[ V_{\text{res}}(\tau, X) \cdot \phi_i\left(X; \theta_\tau^{\text{YY}}\right) \right] d\tau.
		\label{eq:score_moment_integration}
	\end{equation}
	
	Substituting \eqref{eq:score_moment_integration} into \eqref{eq:delta_stat_linear_system} establishes Equation \eqref{eq:statistical_difference_formula} identically, completing the proof.
\end{proof}


\begin{proof}[Proof of Theorem \ref{thm:Geometric_Statistical_Bias_Correction_Duality}]
	The proof proceeds in four sequential steps.
	
	\paragraph{Step 1: Identification of the Vertical Residual Score Field.}
	By definition of the Ehresmann connection $1$-form $\omega_{f(t)}$, the unconstrained DMZ velocity field $V_{\text{DMZ}}(t) \in T_{f_t}\mathcal{M}$ is decomposed into its horizontal and vertical components:
	\begin{equation}
		V_{\text{DMZ}}(t) = P_{f_t}^H \left( V_{\text{DMZ}}(t) \right) + \omega_{f(t)} \left( V_{\text{DMZ}}(t) \right)
	\end{equation}
	where $P_{f_t}^H$ is the horizontal projection operator onto $T_p\mathcal{B} \cong \operatorname{SVD}\chi_{f_t}$. We denote the vertical projection by $S_{\text{res}}(t, x) \in \operatorname{SID}_{f_t}$:
	\begin{equation}
		S_{\text{res}}(t, x) \equiv \omega_{f(t)}\left( V_{\text{DMZ}}(t) \right) = \frac{\mathcal{L}_0^* p(x; \theta_t)}{p(x; \theta_t)} - P_{f_t}^H\left( \frac{\mathcal{L}_0^* p}{p} \right)
		\label{eq:vertical_field}
	\end{equation}
	where $\mathcal{L}_0^*$ is the unnormalized forward Kolmogorov operator. Equation~\eqref{eq:vertical_field} explicitly isolates the score velocity components orthogonal to the working parametric model family.
	
	\paragraph{Step 2: Differentiation of the Cumulative Bias Integral.}
	The cumulative semiparametric statistical bias correction vector $\Delta_t^{\text{stat}} \in \mathbb{R}^d$ accumulated from time $0$ to $t$ under model misspecification is defined by the path integral:
	\begin{equation}
		\left[ \Delta_t^{\text{stat}} \right]_k = \int_0^t \sum_{i=1}^d \left[ G_\theta^{-1}(\theta_\tau) \right]_{ki} \mathbb{E}_{f_{\text{true}}(\tau)}\left[ S_{\text{res}}(\tau, X) \cdot \phi_i(X; \theta_\tau) \right] d\tau + \mathcal{O}\left( \|\Delta_t^{\text{stat}}\|^2 \right)
		\label{eq:bias_integral}
	\end{equation}
	Differentiating Equation~\eqref{eq:bias_integral} with respect to continuous time $t$ by application of the Fundamental Theorem of Calculus yields the instantaneous rate of change:
	\begin{equation}
		\frac{d}{dt} \left[ \Delta_t^{\text{stat}} \right]_k = \sum_{i=1}^d \left[ G_\theta^{-1}(\theta_t) \right]_{ki} \mathbb{E}_{f_t}\left[ S_{\text{res}}(t, X) \cdot \phi_i(X; \theta_t) \right]
		\label{eq:derivative_step}
	\end{equation}
	
	\paragraph{Step 3: Conversion to the Non-Parametric Fisher-Rao Inner Product.}
	Recall that for any two square-integrable functions $u, v \in L^2(f_t)$, the expectation of their pointwise product under density $f_t$ equals the fiberwise Fisher-Rao metric evaluation:
	\begin{equation}
		\mathbb{E}_{f_t}\left[ u(X) \cdot v(X) \right] = \int_{\mathbb{R}^n} u(x) v(x) f_t(x) \, dx = \left\langle u, v \right\rangle_{g_{f_t}}
	\end{equation}
	Applying this identity to the expectation term in Equation~\eqref{eq:derivative_step} for $u(x) = S_{\text{res}}(t, x)$ and $v(x) = \phi_i(x; \theta_t)$, we obtain:
	\begin{equation}
		\mathbb{E}_{f_t}\left[ S_{\text{res}}(t, X) \cdot \phi_i(X; \theta_t) \right] = \left\langle S_{\text{res}}(t, \cdot), \, \phi_i(\cdot; \theta_t) \right\rangle_{g_{f_t}}
		\label{eq:inner_prod_step}
	\end{equation}
	
	\paragraph{Step 4: Substitution and Final Synthesis.}
	Substituting the vertical expression $S_{\text{res}}(t, \cdot) = \omega_{f(t)}\left( V_{\text{DMZ}}(t) \right)$ from Equation~\eqref{eq:vertical_field} into Equation~\eqref{eq:inner_prod_step}, and replacing the result back into Equation~\eqref{eq:derivative_step}, gives:
	\begin{equation}
		\frac{d}{dt} \left[ \Delta_t^{\text{stat}} \right]_k = \sum_{i=1}^d \left[ G_\theta^{-1}(\theta_t) \right]_{ki} \left\langle \omega_{f(t)}\left( V_{\text{DMZ}}(t) \right), \, \phi_i(\cdot; \theta_t) \right\rangle_{g_{f_t}}
	\end{equation}
	This confirms statement~\eqref{eq:corollary2} identically and completes the proof.
\end{proof}

\begin{proof}[Proof of Theorem \ref{thm:error_bound}]
	Let $(\mathcal{M}_f, g_f)$ be the non-parametric Orlicz/Fisher-Rao statistical manifold of conditional probability densities equipped with the Riemannian metric $g_f$. 
	
	\paragraph{Step 1: Decomposition of Dynamic Density Operators}
	The evolution of the true conditional probability density $p_{\mathrm{true}}(t) \in \mathcal{M}_f$ governed by equations \eqref{eq:smg_base_update_sec3}--\eqref{eq:smg_aat_rate_sec3} is driven by the full infinite-dimensional geometric filter operator $\mathcal{L}_{\mathrm{SMG}}^{(N)}$, which admits a canonical decomposition into the classical finite-complexity Yau-Yau filtering operator $\mathcal{L}_{\mathrm{classical}}$ and the geometric perturbation operator $\Omega_{f(t)}^{(N)}$ induced by infinite algebraic complexity ($\operatorname{dim}(\mathcal{E}) = \infty$):
	\begin{equation}
		\frac{\partial p_{\mathrm{true}}(t)}{\partial t} = \mathcal{L}_{\mathrm{SMG}}^{(N)} p_{\mathrm{true}}(t) = \mathcal{L}_{\mathrm{classical}} p_{\mathrm{true}}(t) + \Omega_{f(t)}^{(N)} p_{\mathrm{true}}(t). \label{eq:true_dynamics}
	\end{equation}
	Conversely, the uncorrected classical Yau-Yau projection $p_{\mathrm{classical}}(t)$ satisfies the unperturbed operational equation:
	\begin{equation}
		\frac{\partial p_{\mathrm{classical}}(t)}{\partial t} = \mathcal{L}_{\mathrm{classical}} p_{\mathrm{classical}}(t). \label{eq:classical_dynamics}
	\end{equation}
	
	\paragraph{Step 2: Error Evolution Equation}
	Define the filtering error density field $e(t) \equiv p_{\mathrm{true}}(t) - p_{\mathrm{classical}}(t) \in T_{p(t)}\mathcal{M}_f$. Subtracting \eqref{eq:classical_dynamics} from \eqref{eq:true_dynamics} yields the non-homogeneous linear evolution PDE for $e(t)$:
	\begin{equation}
		\frac{\partial e(t)}{\partial t} = \mathcal{L}_{\mathrm{classical}} e(t) + \Omega_{f(t)}^{(N)} p_{\mathrm{true}}(t), \quad e(0) = 0. \label{eq:error_pde}
	\end{equation}
	
	\paragraph{Step 3: Energy Identity in $L^2(g_f)$ Norm}
	Taking the Riemannian inner product $\langle \cdot, \cdot \rangle_{g_f}$ of equation \eqref{eq:error_pde} with $e(t)$ on $\mathcal{M}_f$ gives:
	\begin{equation}
		\frac{1}{2} \frac{d}{dt} \|e(t)\|_{g_f}^2 = \left\langle e(t), \mathcal{L}_{\mathrm{classical}} e(t) \right\rangle_{g_f} + \left\langle e(t), \Omega_{f(t)}^{(N)} p_{\mathrm{true}}(t) \right\rangle_{g_f}. \label{eq:energy_identity}
	\end{equation}
	
	\paragraph{Step 4: Operator Bounds and Cauchy-Schwarz Estimates}
	By domain compactness $\mathbb{K} \subset \mathbb{R}^d$ and elliptic operator estimates, $\mathcal{L}_{\mathrm{classical}}$ satisfies a G{\aa}rding-type inequality on $\mathcal{M}_f$, guaranteeing the existence of a constant $\lambda_0 > 0$ such that:
	\begin{equation}
		\left\langle e(t), \mathcal{L}_{\mathrm{classical}} e(t) \right\rangle_{g_f} \le \frac{\lambda_0}{2} \|e(t)\|_{g_f}^2. \label{eq:garding}
	\end{equation}
	Applying Cauchy-Schwarz followed by Young's $\varepsilon$-inequality ($ab \le \frac{\varepsilon}{2} a^2 + \frac{1}{2\varepsilon} b^2$) with $\varepsilon = 1$:
	\begin{equation}
		\left\langle e(t), \Omega_{f(t)}^{(N)} p_{\mathrm{true}}(t) \right\rangle_{g_f} \le \|e(t)\|_{g_f} \left\| \Omega_{f(t)}^{(N)} p_{\mathrm{true}}(t) \right\|_{g_f} \le \frac{1}{2} \|e(t)\|_{g_f}^2 + \frac{1}{2} \left\| \Omega_{f(t)}^{(N)} p_{\mathrm{true}}(t) \right\|_{g_f}^2. \label{eq:young}
	\end{equation}
	Since $p_{\mathrm{true}}(t)$ is a normalized conditional probability density bounded on compact domains, there exists $M_{\infty} = \sup_{t \ge 0} \|p_{\mathrm{true}}(t)\|_{\infty} < \infty$ such that $\left\| \Omega_{f(t)}^{(N)} p_{\mathrm{true}}(t) \right\|_{g_f}^2 \le M_{\infty}^2 \left\|\Omega_{f(t)}^{(N)}\right\|_{g_f}^2$.
	
	\paragraph{Step 5: Differential Inequality and Integral Resolution}
	Substituting \eqref{eq:garding} and \eqref{eq:young} into \eqref{eq:energy_identity} yields:
	\begin{equation}
		\frac{d}{dt} \|e(t)\|_{g_f}^2 \le (\lambda_0 + 1) \|e(t)\|_{g_f}^2 + M_{\infty}^2 \left\|\Omega_{f(t)}^{(N)}\right\|_{g_f}^2.
	\end{equation}
	Setting $\lambda \equiv \lambda_0 + 1$ and $C_1 \equiv M_{\infty}^2$, we obtain the differential inequality:
	\begin{equation}
		\frac{d}{dt} \left( e^{-\lambda t} \|e(t)\|_{g_f}^2 \right) \le C_1 e^{-\lambda t} \left\|\Omega_{f(t)}^{(N)}\right\|_{g_f}^2.
	\end{equation}
	Integrating both sides over $\tau \in [0, t]$ under the initial condition $e(0) = 0$:
	\begin{equation}
		e^{-\lambda t} \|e(t)\|_{g_f}^2 \le C_1 \int_0^t e^{-\lambda \tau} \left\|\Omega_{f(\tau)}^{(N)}\right\|_{g_f}^2 d\tau \le C_1 \int_0^t \left\|\Omega_{f(\tau)}^{(N)}\right\|_{g_f}^2 d\tau.
	\end{equation}
	Multiplying by $e^{\lambda t}$ and taking the stochastic expectation operator $\mathbb{E}[\cdot]$ over sample realizations:
	\begin{equation}
		\mathbb{E}\left[ \|p_{\mathrm{true}}(t) - p_{\mathrm{classical}}(t)\|_{g_f}^2 \right] \le C_1 e^{\lambda t} \int_0^t \left\| \Omega_{f(\tau)}^{(N)} \right\|_{g_f}^2 d\tau, \label{eq:proof_final_bound}
	\end{equation}
	which establishes \eqref{eq:error_bound_formula}.
\end{proof}

\begin{proof}[Proof of Theorem \ref{thm:topological_capacity_bound}]
	Let $\gamma: [0, \ell] \to \mathcal{M}$ be a smooth unit-speed geodesic segment in $\mathcal{M}$ parameterized by arc length $s \in [0, \ell]$, driven entirely by vertical gauge score fields $V_{\mathcal{V}}(s) = \omega_{\gamma(s)}(\dot{\gamma}(s))$. Consider a vector field $W(s)$ along $\gamma(s)$ obtained by parallel transporting an orthonormal unit vector $W_0 \in T_{\gamma(0)}\mathcal{M}$ (orthogonal to $\dot{\gamma}(0)$) along $\gamma$, and construct the normal variation vector field $J(s) = \sin\left(\frac{\pi s}{\ell}\right) W(s)$.
	
	By the Index Form theorem for Riemannian manifolds \cite{petersen2006}, the second variation of energy $I(J, J)$ along $\gamma$ is given by:
	\begin{equation}
		I(J, J) = \int_0^\ell \left( \left\| \nabla_{\dot{\gamma}}^g J \right\|_{g_{\gamma}}^2 - R^g(\dot{\gamma}, J, \dot{\gamma}, J) \right) ds. \label{eq:index_form_expansion}
	\end{equation}
	Evaluating the covariant derivative $\nabla_{\dot{\gamma}}^g J(s) = \frac{\pi}{\ell} \cos\left(\frac{\pi s}{\ell}\right) W(s)$ and substituting the sectional curvature lower bound condition $R^g(\dot{\gamma}, J, \dot{\gamma}, J) = K(\operatorname{span}\{\dot{\gamma}, W\}) \|J\|_{g_{\gamma}}^2 \le K_{\mathrm{max}} \sin^2\left(\frac{\pi s}{\ell}\right)$, we obtain:
	\begin{align}
		I(J, J) &= \int_0^\ell \left( \frac{\pi^2}{\ell^2} \cos^2\left(\frac{\pi s}{\ell}\right) - K(\operatorname{span}\{\dot{\gamma}, W\}) \sin^2\left(\frac{\pi s}{\ell}\right) \right) ds \nonumber \\
		&\ge \int_0^\ell \left( \frac{\pi^2}{\ell^2} \cos^2\left(\frac{\pi s}{\ell}\right) - K_{\mathrm{max}} \sin^2\left(\frac{\pi s}{\ell}\right) \right) ds \nonumber \\
		&= \frac{\ell}{2} \left( \frac{\pi^2}{\ell^2} - K_{\mathrm{max}} \right). \label{eq:index_form_inequality}
	\end{align}
	By Rauch comparison and Myers' theorem \cite{petersen2006}, if $\ell > \frac{\pi}{\sqrt{K_{\mathrm{max}}}}$, then $I(J, J) < 0$, implying the presence of a conjugate point along $\gamma(s)$ before $s = \ell$. At a conjugate point, the differential of parallel transport $d\pi_{\gamma(s)}$ degenerates ($\ker(d\pi) \cap \ker(\omega) \neq \{0\}$), destroying the transversal separation between horizontal signal space $\operatorname{SVD}\chi_f$ and vertical fiber space $\operatorname{SID}_f$. 
	
	Integrating the squared velocity norm along the critical path length $\ell = L_{\mathrm{max}} = \frac{\pi}{\sqrt{K_{\mathrm{max}}}}$ yields:
	\begin{equation}
		\Theta_{\mathrm{crit}} = \int_0^{L_{\mathrm{max}}} \|\dot{\gamma}(s)\|_{g_\gamma}^2 ds = L_{\mathrm{max}}^2 = \frac{\pi^2}{K_{\mathrm{max}}}, \label{eq:theta_crit_derivation}
	\end{equation}
	completing the proof.
\end{proof}


\begin{proof}[Proof of Theorem \ref{thm:instantaneous_stress_dissipation}]
	By Definition~\ref{def:gsb_expansion_operator}, the horizontal projector transforms as $P_{f(t)}^{H, \mathrm{new}} = P_{f(t)}^{H, \mathrm{old}} + \langle v_{\mathrm{dom}}, \cdot \rangle_{g_{f(t)}} v_{\mathrm{dom}}$. Since the connection $1$-form satisfies $\omega_{f(t)} = \mathcal{I} - P_{f(t)}^H$, we have:
	\begin{equation}
		\omega_{f(t)}^{\mathrm{new}}(V_{\mathrm{DMZ}}) = \omega_{f(t)}^{\mathrm{old}}(V_{\mathrm{DMZ}}) - \left\langle v_{\mathrm{dom}}, \, V_{\mathrm{DMZ}} \right\rangle_{g_{f(t)}} v_{\mathrm{dom}}. \label{eq:omega_new_expansion}
	\end{equation}
	Taking the squared $g_{f(t)}$-norm on both sides:
	\begin{align}
		\left\| \omega_{f(t)}^{\mathrm{new}}(V_{\mathrm{DMZ}}) \right\|_{g_f}^2 &= \left\| \omega_{f(t)}^{\mathrm{old}}(V_{\mathrm{DMZ}}) - \left\langle v_{\mathrm{dom}}, \, V_{\mathrm{DMZ}} \right\rangle_{g_f} v_{\mathrm{dom}} \right\|_{g_f}^2 \nonumber \\
		&= \left\| \omega_{f(t)}^{\mathrm{old}}(V_{\mathrm{DMZ}}) \right\|_{g_f}^2 - 2 \left\langle v_{\mathrm{dom}}, \, V_{\mathrm{DMZ}} \right\rangle_{g_f} \left\langle \omega_{f(t)}^{\mathrm{old}}(V_{\mathrm{DMZ}}), \, v_{\mathrm{dom}} \right\rangle_{g_f} \nonumber \\
		&\quad + \left\langle v_{\mathrm{dom}}, \, V_{\mathrm{DMZ}} \right\rangle_{g_f}^2 \|v_{\mathrm{dom}}\|_{g_f}^2. \label{eq:norm_expansion_gsb}
	\end{align}
	Since $v_{\mathrm{dom}} \in \operatorname{SID}_{f(t)}^{\mathrm{old}}$, we have $\omega_{f(t)}^{\mathrm{old}}(v_{\mathrm{dom}}) = v_{\mathrm{dom}}$. By metric self-adjointness of projections, $\left\langle \omega_{f(t)}^{\mathrm{old}}(V_{\mathrm{DMZ}}), \, v_{\mathrm{dom}} \right\rangle_{g_f} = \left\langle V_{\mathrm{DMZ}}, \, v_{\mathrm{dom}} \right\rangle_{g_f}$. Furthermore, $\|v_{\mathrm{dom}}\|_{g_f} = 1$. Substituting these identities into Equation~\eqref{eq:norm_expansion_gsb}:
	\begin{align}
		\left\| \omega_{f(t)}^{\mathrm{new}}(V_{\mathrm{DMZ}}) \right\|_{g_f}^2 &= \left\| \omega_{f(t)}^{\mathrm{old}}(V_{\mathrm{DMZ}}) \right\|_{g_f}^2 - 2 \left\langle v_{\mathrm{dom}}, \, V_{\mathrm{DMZ}} \right\rangle_{g_f}^2 + \left\langle v_{\mathrm{dom}}, \, V_{\mathrm{DMZ}} \right\rangle_{g_f}^2 \nonumber \\
		&= \left\| \omega_{f(t)}^{\mathrm{old}}(V_{\mathrm{DMZ}}) \right\|_{g_f}^2 - \left\langle v_{\mathrm{dom}}, \, V_{\mathrm{DMZ}} \right\rangle_{g_f}^2. \label{eq:norm_reduction_intermediate}
	\end{align}
	By Definition~\ref{def:vertical_gauge_operator}, $\lambda_1 = \left\langle v_{\mathrm{dom}}, \, K_{\mathcal{V}} v_{\mathrm{dom}} \right\rangle_{g_f} = \int_0^t \left\langle V_{\mathcal{V}}(\tau), \, v_{\mathrm{dom}} \right\rangle_{g_f}^2 d\tau$. Evaluating the instantaneous contribution yields $\left\langle v_{\mathrm{dom}}, \, V_{\mathrm{DMZ}}(t) \right\rangle_{g_f}^2 = \lambda_1$. Substituting this into Equation~\eqref{eq:norm_reduction_intermediate} completes the proof of statement (1). Statement (2) follows directly from Theorem~\ref{thm:da_metric_orthogonality_energy_conservation}.
\end{proof}

\end{document}